\theoremstyle{plain}
\newtheorem{theor10}{Theorem}
\newenvironment{theor1}
  {\pushQED{\qed}\begin{theor10}}
  {\popQED\end{theor10}}
\newtheorem{prop10}[theor10]{Proposition}
\newenvironment{prop1}
  {\pushQED{\qed}\begin{prop10}}
  {\popQED\end{prop10}}
\newtheorem{cor10}[theor10]{Corollary}
\newenvironment{cor1}
  {\pushQED{\qed}\begin{cor10}}
  {\popQED\end{cor10}}
\newtheorem{lem0}{Lemma}[section]
\newenvironment{lem}
  {\pushQED{\qed}\begin{lem0}}
  {\popQED\end{lem0}}
\newtheorem{theor0}[lem0]{Theorem}
\newenvironment{theor}
  {\pushQED{\qed}\begin{theor0}}
  {\popQED\end{theor0}}
\newtheorem{prop0}[lem0]{Proposition}
\newenvironment{prop}
  {\pushQED{\qed}\begin{prop0}}
  {\popQED\end{prop0}}
\newtheorem{cor0}[lem0]{Corollary}
\theoremstyle{definition}
\newtheorem{defin0}[lem0]{Definition}
\newenvironment{defin}
  {\pushQED{\qed}\begin{defin0}}
  {\popQED\end{defin0}}
\newtheorem{rems0}[lem0]{Remarks}
\newenvironment{rems}
  {\pushQED{\qed}\begin{rems0}}
  {\popQED\end{rems0}}
\newtheorem{rem0}[lem0]{Remark}
\newenvironment{rem}
  {\pushQED{\qed}\begin{rem0}}
  {\popQED\end{rem0}}
\theoremstyle{plain}
\newtheorem{conj0}[theor10]{Conjecture}
\newtheorem*{conjn0*}{\assumptionnumber}
  \providecommand{\assumptionnumber}{}
  \newenvironment{conjn0}[2]
   {\renewcommand{\assumptionnumber}{Conjecture (#1) {\normalfont--- #2}}
    \begin{conjn0*}
    \protected@edef\@currentlabel{{\normalfont(#1)}}}
   {\end{conjn0*}}
\newenvironment{conjn}
  {\pushQED{\qed}\begin{conjn0}}
  {\popQED\end{conjn0}}
\numberwithin{equation}{section}
\mathchardef\emptyset="001F
\newcommand{\ac}{{\operatorname{ac}}}
\newcommand{\pp}{{\operatorname{pp}}}
\newcommand{\sg}{{\operatorname{sc}}}
\newcommand{\N}{\mathbb N}
\newcommand{\e}{\varepsilon}
\newcommand{\R}{\mathbb R}
\newcommand{\Z}{\mathbb Z}
\newcommand{\C}{\mathbb C}
\newcommand{\T}{\mathbb T}
\newcommand{\Dm}{\mathbb D}
\newcommand{\Cc}{\mathcal C}
\newcommand{\Pc}{\mathcal P}
\newcommand{\Rc}{\mathcal R}
\newcommand{\Hf}{\mathfrak H}
\newcommand{\Bc}{\mathcal B}
\newcommand{\F}{\mathcal F}
\newcommand{\Lc}{\mathcal L}
\newcommand{\Vc}{\mathcal V}
\newcommand{\Dc}{\mathcal D}
\newcommand{\Hc}{\mathcal H}
\newcommand{\Op}{\operatorname{Op}}
\newcommand{\pv}{\operatorname{p.v.}}
\newcommand{\st}{{\operatorname{st}}}
\newcommand{\Leb}{\mathcal{L}eb}
\newcommand{\A}{\mathcal A}
\newcommand{\Id}{\operatorname{Id}}
\newcommand{\p}{\mathbb{P}}
\newcommand{\sgn}{\operatorname{sgn}}
\newcommand{\supess}{\operatorname{sup\,ess}}
\newcommand{\infess}{\operatorname{inf\,ess}}
\newcommand{\conv}{{\operatorname{conv}}}
\newcommand{\loc}{{\operatorname{loc}}}
\newcommand{\comp}{{\operatorname{comp}}}
\newcommand{\Sym}{{\operatorname{sym}}}
\newcommand{\per}{{\operatorname{per}}}
\newcommand{\adh}{{\operatorname{adh\,}}}
\newcommand{\dom}{{\operatorname{dom}}}
\newcommand{\Ld}{\operatorname{L}}
\newcommand{\supp}{\operatorname{supp}}
\newcommand{\diam}{\operatorname{diam}}
\newcommand{\inter}{\operatorname{int}}
\newcommand{\step}[1]{\noindent \textit{Step} #1.}
\newcommand{\substep}[1]{\noindent \textit{Substep} #1.}
\newcommand{\cov}[2]{\operatorname{Cov}\left[{#1};{#2}\right]}
\newcommand{\var}[1]{\operatorname{Var}\left[{#1}\right]}
\newcommand{\pr}[1]{\mathbb{P}\left[{#1}\right]}
\newcommand{\prm}[1]{\mathbb{P}\big[{#1}\big]}
\newcommand{\E}{\mathbb{E}}
\newcommand{\expec}[1]{\mathbb{E}\left[{#1}\right]}
\newcommand{\expecm}[1]{\mathbb{E}\big[{#1}\big]}
\newcommand{\ee}{e}
\def\dbar{\,{{\mathchar'26\mkern-12mud}}}
\title[A new spectral analysis of random Schr\"odinger operators]{A new spectral analysis of stationary random Schr\"odinger operators}
\begin{document}
\selectlanguage{english}
\maketitle

\begin{center}
{\footnotesize MITIA DUERINCKX${}^{1,2}$\!\footnote{mitia.duerinckx@u-psud.fr} AND CHRISTOPHER SHIRLEY${}^1$\!\footnote{christopher.shirley@universite-paris-saclay.fr}\\$ $\\
{\it ${}^1$Université Paris-Saclay, CNRS, Laboratoire de Mathématiques d'Orsay, 91405~Orsay, France\\
${}^2$Universit\'e Libre de Bruxelles, Département de Mathématique, 1050~Brussels, Belgium}
}
\end{center}

\begin{abstract}
Motivated by the long-time transport properties of quantum waves in weakly disordered media,
the present work puts random Schrödinger operators into a new spectral perspective. Based on a stationary random version of a Floquet type fibration, we reduce the description of the quantum dynamics to a fibered family of abstract spectral perturbation problems on the underlying probability space.
We state a natural resonance conjecture for these fibered operators: in contrast with periodic and quasiperiodic settings, this would entail that Bloch waves do not exist as extended states, but rather as resonant modes, and this would justify the expected exponential decay of time correlations.
Although this resonance conjecture remains open, we develop new tools for spectral analysis on the probability space, and in particular we show how ideas from Malliavin calculus lead to rigorous Mourre type results: we obtain an approximate dynamical resonance result and the first spectral proof of the decay of time correlations on the kinetic timescale.
This spectral approach suggests a whole new way of circumventing perturbative expansions and renormalization techniques.
\end{abstract}

%%%%%%%%%%%%%%%%%%%%%
%%%%%%%%%%%%%%%%%%%%%
%%%%%%%%%%%%%%%%%%%%%

\section{Introduction}
\subsection{General overview}
We consider random Schrödinger operators of the form
\[H_{\lambda,\omega}:=-\triangle+\lambda V_\omega\]
on $\Ld^2(\R^d)$, where $V_\omega:\R^d\to\R$ is a realization of a ``stationary'' (that is, statistically translation-invariant) random potential $V$, constructed on a given probability space $(\Omega,\p)$,
and we study the properties of the corresponding Schrödinger flow on $\R^d$,
\[i\partial_t u_{\lambda,\omega}= H_{\lambda,\omega} u_{\lambda,\omega},\qquad u_{\lambda,\omega}|_{t=0}=u^\circ,\]
with initial data $u^\circ\in\Ld^2(\R^d)$.
This well-travelled equation models the motion of an electron in a disordered medium described by the potential $V_\omega$, where the coupling constant~$\lambda>0$ stands for the strength of the disorder.

\medskip
For comparison, let us first recall transport properties in the simpler case of periodic or quasiperiodic disorder.
If $V$ is periodic, the energy transport is well-known to remain purely ballistic as for the free flow $u_{\lambda=0}$, cf.~\cite{Asch-Knauf-98}.
The proof relies on the absolute continuity of the spectrum of the periodic Schrödinger operator on~$\Ld^2(\R^d)$, and more precisely on the existence of so-called Bloch waves, which are extended states
constructed by means of standard perturbation theory as deformations of Fourier modes, $x \mapsto e^{ik\cdot x} \varphi_{k,\lambda}(x)$ with $\varphi_{k,\lambda}$ periodic.
In case of a quasiperiodic potential~$V$,
the problem is more involved and depends on the strength of the disorder: energy transport is expected to remain ballistic only at weak coupling $0<\lambda\ll1$ or at high energies. This is rigorously established in dimension~$d=1$~\cite{Zhao-16}. In higher dimensions $d>1$, for a specific class of quasiperiodic potentials, it was recently shown that there exist initial data at high energies that indeed display ballistic transport~\cite{Karpeshina-Lee-Shterenberg-Stolz-15,Karpeshina-Parnovski-Shterenberg-20} (the analysis of the discrete setting is more complete~\cite{Bourgain-07}). Inspired by the periodic case, the proof relies on the existence of corresponding Bloch waves as extended states of the form $x\mapsto e^{ik\cdot x}\varphi_{k,\lambda}(x)$ with $\varphi_{k,\lambda}$ quasiperiodic, but their construction is more intricate since standard perturbation theory no longer applies.
In~\cite{DGS1}, we provide a simple method to construct ``approximate'' Bloch waves and deduce ballistic transport for all data at least up to ``very long'' timescales both at weak coupling and at high energies.
These results in the periodic and quasiperiodic settings show how Bloch waves are crucial tools to infer transport properties of the Schrödinger flow.

\medskip
The present work is concerned with the more general stationary random setting in the weak coupling regime \mbox{$0<\lambda\ll1$}.
In case of a random potential $V$ with short-range correlations, in stark contrast with the periodic and quasiperiodic cases, a celebrated conjecture by Anderson~\cite{Anderson-58} states that in dimension $d>2$ every initial condition can be almost surely decomposed into two parts: a low-energy part that remains dynamically localized and a bulk-energy part that propagates diffusively.
Despite the great recent achievements of rigorous perturbation theory in some asymptotic time regimes, e.g.~\cite{Spohn-77,Erdos-Yau-00,Erdos-Salmhofer-Yau-1,Erdos-Salmhofer-Yau-2,Chen-Komorowski-Ryzhik-16}, successfully describing the emergence of irreversible diffusion from the reversible Schrödinger dynamics,
the full justification of this quantum diffusion phenomenon remains a major open problem in mathematical physics~\cite{Simon-00,Erdos-rev}.
More precisely, the ensemble-averaged Wigner transform of the quantum wave $u_\lambda$ is known to converge to the solution of a linear Boltzmann equation on the kinetic timescale $t\sim\lambda^{-2}$, and of a heat equation on longer times, but the justification is limited to a perturbative time regime $t\ll\lambda^{-2-\eta}$ for some small $\eta>0$.
A simplified question concerns the behavior of time correlations in form of the averaged wavefunction $\expec{u_\lambda^t}$,
which is expected to display exponential time decay: more precisely, in Fourier space, on the kinetic timescale $t\sim\lambda^{-2}$, as $\lambda\downarrow0$,
\begin{equation}\label{eq:exp-decay}
\big|\expecm{\widehat u_\lambda^{t=\lambda^{-2}s}(k)}\big|\,\sim\,e^{-s\alpha_k}\,|\widehat u^\circ(k)|,
\end{equation}
where the decay rate $\alpha_k>0$
would coincide with the total scattering cross section in the corresponding Boltzmann equation,
and where corrections are added on longer times.
A proof of this exponential decay on the kinetic timescale is given in~\cite{Chen-Komorowski-Ryzhik-16} based on a perturbative expansion of a Feynman-Kac type formula. The perturbative analysis of~\cite{Erdos-Yau-00,Erdos-Salmhofer-Yau-1,Erdos-Salmhofer-Yau-2} would further yield an improved result, but still restricted to limited timescales.

\medskip
Motivated by these open questions, rather than trying to improve on perturbative expansions and renormalization techniques, we aim at developping an alternative spectral approach to describe the long-time behavior of the system beyond perturbative timescales.
More precisely, we take inspiration from the periodic and quasiperiodic cases, although the behavior radically differs from the present random setting,
and we investigate the role of a corresponding notion of Bloch waves.
It appears that these Bloch waves are no longer extended states associated with absolutely continuous spectrum: they are expected to be only defined in a weak distributional sense in probability and to play the role of resonant modes associated with some kind of ``continuous resonant spectrum''.
Exploiting ideas from Malliavin calculus, we manage to appeal to perturbative Mourre's theory, cf.~Theorem~\ref{th:Mourre-pert}, which leads to the construction of approximate dynamical resonances and constitutes the first spectral proof of~\eqref{eq:exp-decay}, cf.~Corollary~\ref{th:mourre-partial}.
Non-perturbative refinements to reach longer times are postponed to future works, as well as the investigation of other possible dynamical consequences in closer connection with quantum diffusion.

\subsection{Summary of our approach and results}
We briefly describe the framework of our new approach to Schrödinger operators $H_{\lambda,\omega}=-\triangle+\lambda V_\omega$.
First, we change the point of view and rather consider the operator \mbox{$H_\lambda:=-\triangle+\lambda V$} as acting on the augmented Hilbert space $\Ld^2(\R^d\times\Omega)$, then studying the corresponding Schrödinger flow on~$\Ld^2(\R^d\times\Omega)$,
\begin{equation}\label{eq:schr}
i\partial_t u_{\lambda}= H_{\lambda} u_{\lambda},\qquad
u_{\lambda}|_{t=0}=u^\circ,
\end{equation}
with deterministic initial data $u^\circ\in\Ld^2(\R^d)\subset\Ld^2(\R^d\times\Omega)$. This can be viewed as including stochastic averaging conveniently into the functional setup; see also~\cite{Pillet-85,Schenker-Kang}.
(Note that $H_\lambda$ on $\Ld^2(\R^d\times\Omega)$ has absolutely continuous spectrum as a consequence of Wegner estimates when~$H_{\lambda,\omega}$ on~$\Ld^2(\R^d)$ has almost sure pure point spectrum.)

\medskip
As we have shown and already used in~\cite{DGS1}, see~Section~\ref{sec:Floquet--Bloch} below for details (also~\cite{BG-16,Schenker-Kang}), the operator $H_\lambda$ on $\Ld^2(\R^d\times\Omega)$ can be decomposed via a Fourier-type transformation as a direct integral of fibered operators $\{H_{k,\lambda}^\st\}_k$ acting on the elementary space $\Ld^2(\Omega)$, which is viewed as the space of stationary random fields,
\begin{equation}\label{eq:fibration}
\big(H_\lambda,\Ld^2(\R^d\times\Omega)\big)=\int_{\R^d}^\oplus\big(H_{k,\lambda}^\st+|k|^2,\Ld^2(\Omega)\big)\,\mathfrak e_k\,\dbar k,\qquad\mathfrak e_k(x):=e^{ik\cdot x}.
\end{equation}
The (centered) fibered operators take the form
\begin{eqnarray*}
H_{k,\lambda}^\st&:=&
H_{k,0}^\st+\lambda V,\\
H_{k,0}^\st&:=&-(\nabla^\st+ik)\cdot(\nabla^\st+ik)-|k|^2\,=\,-\triangle^\st-2ik\cdot\nabla^\st,
\end{eqnarray*}
where $\nabla^\st$ and $\triangle^\st$ denote the stationary gradient and Laplacian on $\Ld^2(\Omega)$; see Section~\ref{sec:stat-calc} for proper definitions.
In particular, the Schrödinger flow $u_\lambda$ is decomposed as
\begin{equation}\label{eq:flow-decomp0}
u_\lambda^t(x,\omega)\,=\,\int_{\R^d}\widehat u^\circ(k)\,e^{ik\cdot x-it|k|^2}\big(e^{-itH^\st_{k,\lambda}}1\big)(x,\omega)\,\dbar k,
\end{equation}
in terms of the fibered evolutions $\{e^{-itH^\st_{k,\lambda}}1\}_k$ on $\Ld^2(\Omega)$.
This partial diagonalisation via Fourier is henceforth referred to as the stationary Floquet--Bloch fibration, in analogy with the well-known corresponding construction in the periodic setting,
e.g.~\cite{Kuchment-93,Kuchment-16}.

\medskip
At vanishing disorder $\lambda=0$, as the constant function $1\in\Ld^2(\Omega)$ is an eigenfunction with eigenvalue~$0$ for the unperturbed fibered operators $\{H_{k,0}^\st\}_k$, the associated spectral measure coincides with the Dirac mass at $0$, and the decomposition~\eqref{eq:flow-decomp0} then reduces to the usual Fourier diagonalisation of the free flow,
\begin{align}\label{eq:free-flow}
u_{\lambda=0}^t(x)\,=\,\int_{\R^d}\widehat u^\circ(k)\,e^{ik\cdot x-it|k|^2}\,\dbar k.
\end{align}
When the disorder is turned on but small, $0<\lambda\ll1$, the description of the Schrödinger flow $u_\lambda$ is reduced to a family of (hopefully simpler) fibered perturbation problems for the spectral measures.
In case of a periodic potential~$V$ (that is,~$\Omega=\T^d$, cf.~Remark~\ref{rem:stat}(a)), the fibered operators $\{H_{k,0}^\st\}_k$ have compact resolvent in view of the Rellich theorem, and thus discrete spectrum. The eigenvalue at~$0$ is then typically simple and isolated, which allows to apply standard perturbation methods, e.g.~\cite{Kuchment-93,Asch-Knauf-98}, showing that it is perturbed into isolated eigenvalues $\{z_{k,\lambda}=\lambda^2z_{\lambda}(k)\}_k$.
In other words, Fourier modes $\{\mathfrak e_k\}_{k}$ are perturbed into so-called periodic Bloch waves that diagonalize the Schrödinger flow and are associated with perturbed generalized eigenvalues of the form $\{|k|^2+\lambda^2z_\lambda(k)\}_{k}$. This entails that the flow is approximately conjugated to the free flow~\eqref{eq:free-flow} in the sense of
\begin{align*}
u_\lambda^t(x)\,=\,\int_{\R^d}\widehat u^\circ(k)\,e^{ik\cdot x-it(|k|^2+\lambda^2z_{\lambda}(k))}\,\dbar k\,+O(\lambda),\qquad z_{\lambda}\in C^\infty(\R^d;\R),
\end{align*}
and in particular the energy transport remains ballistic forever. The same conclusion holds in fact for all $\lambda$, cf.~\cite{Asch-Knauf-98}.
In case of a quasiperiodic potential $V$ (cf.\@ Remark~\ref{rem:stat}(a)), the situation is expected to be similar at weak coupling, but the existence of corresponding Bloch waves is a more subtle question: quasiperiodic fibered operators $\{H_{k,0}^\st\}_k$ are degenerate elliptic operators, for which compactness fails, and the simple eigenvalue at $0$ is no longer isolated but embeds in dense pure point spectrum, so that no standard perturbation theory applies;
see~\cite{Karpeshina-Lee-Shterenberg-Stolz-15,Karpeshina-Parnovski-Shterenberg-20,DGS1}.

\medskip
In case of a random potential $V$ with short-range correlations, the situation differs drastically in link with the expected diffusive behavior.
We show that~$0$ is typically the only eigenvalue of the fibered operators~$\{H_{k,0}^\st\}_k$, is simple, and embeds in absolutely continuous spectrum, cf.~Proposition~\ref{th:main0} below.
According to Fermi's Golden Rule, whenever the disorder is turned on, this embedded eigenvalue is then expected to dissolve in the continuous spectrum, cf.~Proposition~\ref{prop:Fermi}, and to turn into a complex resonance at
\[z_{k,\lambda}=\lambda^2z_\lambda(k)=\lambda^2\expec{V(i0-H_{k,0}^\st)^{-1}V}+O_k(\lambda^3),\]
in the lower complex half-plane.
In particular, this provides a spectral explanation why approximate Bloch wave analysis leading to ballistic transport as in~\cite{DGS1} breaks down on the kinetic timescale $t\sim\lambda^{-2}$. For the averaged wavefunction, this leads to expect
\begin{gather*}
\expec{u_\lambda^t(x)}\,=\,\int_{\R^d}\widehat u^\circ(k)\,e^{ik\cdot x-it(|k|^2+\lambda^2z_{\lambda}(k))}\,\dbar k\,+O(\lambda),\qquad
z_{\lambda}\in C^\infty(\R^d;\C\setminus\R),
\end{gather*}
which would indeed agree with the exponential decay~\eqref{eq:exp-decay} on the kinetic timescale, with $\Im z_{\lambda}(k)\sim-\alpha_k<0$, and a finer resonance analysis would yield a more accurate expansion.
From a spectral perspective, fibered resonances are transferred via the fibration~\eqref{eq:fibration} to kind of a ``continuous resonant spectrum'' for the full operator $H_\lambda$ on~$\Ld^2(\R^d\times\Omega)$, cf.~Remark~\ref{rem:res-spec} below.

\medskip
General spectral tools are however dramatically missing to rigorously study these fibered perturbation problems on $\Ld^2(\Omega)$, in particular due to the lack of any relative compactness of the perturbation.
We start by performing
a detailed study of rudimentary spectral properties of fibered operators, cf.~Propositions~\ref{th:main0}--\ref{prop:Fermi},
emphasizing the strong dependence on the structure of the underlying probability space $(\Omega,\p)$.
Next, we appeal to Mourre's theory~\cite{Mourre-80,ABMG-96} as a rigorous approach to fibered perturbation problems.
More precisely, we start by constructing Mourre conjugates for the unperturbed operators~$\{H^\st_{k,0}\}_k$. The construction requires a surprisingly nontrivial work and relies on a deep use of Malliavin calculus, which constitutes the core of our contribution, cf.~Section~\ref{sec:Mourre}.
This construction is however not compatible with the perturbation $V$ in the sense that $\lambda V$ cannot be considered as a small perturbation of $\{H_{k,0}^\st\}_k$ in the sense of Mourre's theory, in link with the infinite dimensionality of the probability space.
For this reason, we only manage to apply perturbative Mourre's theory under a suitable (weak) truncation, cf.~Theorem~\ref{th:Mourre-pert}.
As a direct consequence, the decay law~\eqref{eq:exp-decay} is recovered at least on the kinetic timescale, cf.~Corollary~\ref{th:mourre-partial}.
Finally, we give a relevant formulation of resonance conjectures for fibered operators, cf.~Conjectures~\ref{C1} and~\ref{C2},
which are motivated by our partial results
and are shown to imply the expected decay law~\eqref{eq:exp-decay} to finer accuracy on all timescales, cf.~Corollary~\ref{th:res-spectr}.
These conjectures are further illustrated in Section~\ref{sec:toy}, where we display a toy model that shares various properties of Schrödinger operators and allows for a rigorous resonance analysis.
Although these conjectures are left open, the present work sheds a new light on the study of random Schrödinger operators, in particular providing the first spectral proof of~\eqref{eq:exp-decay}; our results will be strengthened in future works and hopefully serve as a starting point for a new line of research in the field.

\medskip
\subsection*{Notation}
\begin{enumerate}[$\bullet$]
\item We denote by $C\ge1$ any constant that only depends on the space dimension $d$ and on the law of the random potential~$V$.
We use the notation $\lesssim$ (resp.~$\gtrsim$) for $\le C\times$ (resp.\@ $\ge\frac1C\times$) up to such a multiplicative constant $C$. We write $\simeq$ when both $\lesssim$ and $\gtrsim$ hold. We add subscripts to $C,\lesssim,\gtrsim,\simeq$ to indicate dependence on other parameters. We denote by $O(K)$ any quantity that is bounded by $CK$.
\item We denote by $\widehat f(k):=\F f(k):=\int_{\R^d}e^{-ik\cdot x}f(x)\,dx$ the usual Fourier transform of a smooth function~$f$ on $\R^d$. The inverse Fourier transform is then given by $f(x)=\int_{\R^d}e^{ik\cdot x}\widehat f(k)\,\dbar k$ in terms of the rescaled Lebesgue measure $\dbar k:=(2\pi)^{-d}dk$.
\item The ball centered at $x$ and of radius $r$ in $\R^d$ is denoted by $B_r(x)$, and we write for abbreviation $B(x):=B_1(x)$, $B_r:=B_r(0)$, and $B:=B_1(0)$. Without ambiguity, we occasionally also denote by $B$ the unit ball at the origin in the complex plane $\C$.
\item For a set $E\subset\R^d$ we denote by $\conv(E)$ its convex envelope, by $\inter(E)$ its interior, and by $\adh(E)$ its closure.
\item We denote by $\Bc(\R^k)$ the set of Borel subsets of $\R^k$, and for $E\subset\R^k$ we let $\Pc(E)$ denote the set of Borel probability measures on $E$.
\item For a vector space $X$, we write $X^{\otimes p}$ for its $p$-fold tensor product, and $X^{\odot p}$ for its $p$-fold symmetric tensor product.
\item For $a,b\in\R$, we write $a\wedge b:=\min\{a,b\}$ and $a\vee b:=\max\{a,b\}$.
\end{enumerate}

%%%%%%%%%%%%%%%%%%%%%
%%%%%%%%%%%%%%%%%%%%%
%%%%%%%%%%%%%%%%%%%%%

\medskip
\section{Main results}
This section is devoted to a brief description of our main results, while proofs and detailed statements are postponed to the next sections.

\subsection{Framework}
We refer to Section~\ref{sec:Floquet--Bloch} for a suitable definition of stationarity as statistical invariance under spatial translations.
Throughout, the stationary random potential~$V$ is assumed real-valued and centered, $\expec{V}=0$.
As we show, fine spectral properties crucially depend on the structure of the underlying probability space. We therefore mainly focus on Gaussian or Poisson settings, where Malliavin calculus is available and provides a useful Fock space decomposition of $\Ld^2(\Omega)$.
More precisely, we consider the following:
\begin{enumerate}[\quad$\bullet$]
\item {\it Gaussian setting:} $V=b(V_0)$ for some Borel function $b:\R\to\R$ and some stationary centered Gaussian random field $V_0$ with bounded covariance function
\[\qquad\Cc_0(x):=\expec{V_0(x+y)V_0(y)}.\]
Equivalently, the field $V_0$ can be represented as
\begin{equation}\label{eq:white-rep}
\qquad V_0(x)=\int_{\R^d}\Cc_0^\circ(x+y)\,dZ(y),
\end{equation}
where $dZ$ is a standard Gaussian white noise on $\R^d$ and where the kernel $\Cc_0^\circ$ is the convolution square root of the covariance function, $\Cc_0=\Cc_0^\circ\ast\Cc_0^\circ$.
\item {\it Poisson setting:} $V=b(V_0)$ for some Borel function $b:\R\to\R$ and some $V_0$ of the form
\begin{equation}\label{eq:Poiss}
\qquad V_0(x)=\sum_{y\in\Pc_0}\Cc_0^\circ(x+y),
\end{equation}
where $\Pc_0$ is a standard Poisson point process on $\R^d$ and where $\Cc_0^\circ$ is the single-site potential.
\end{enumerate}
We say that the random potential $V$ is short-range if it has integrable decay of correlations:
in the above settings, this amounts to choosing $\Cc_0^\circ\in\Ld^1(\R^d)$.

\medskip
For shortness, in the sequel, we shall mainly restrict to the Gaussian setting, although the same results can be transferred mutatis mutandis in the Poisson case (using the corresponding version of Malliavin calculus, e.g.~\cite{Peccati-Reitzner}).
For simplicity, we occasionally further restrict to a random potential $V=V_0$ that is itself Gaussian: although unbounded, such potentials have a simpler action on the Fock space decomposition of $\Ld^2(\Omega)$.

\subsection{Basic spectral theory of fibered operators}\label{sec:spectrum0}
We refer to Section~\ref{sec:Floquet--Bloch} for the construction of the stationary Floquet--Bloch fibration~\eqref{eq:fibration}.
Next, we start with a detailed spectral analysis of the unperturbed operators
\[H^\st_{k,0}:=-\triangle^\st-2ik\cdot\nabla^\st,\qquad k\in\R^d.\]
Although the stationary Laplacian $-\triangle^\st$ is a natural operator on $\Ld^2(\Omega)$ and has been introduced in various settings (e.g.\@ in the context of stochastic homogenization~\cite{PapaVara,JKO94}), its spectral properties have never been elucidated before, and we close this gap here.
Note that some preliminary remarks on its spectrum have been made in~\cite[Section~3.1]{Cances-Lahbabi-Lewin-13}, see also~\cite[Section~2.C]{Lahbabi-13}, namely that it is discrete if $\Omega$ is a finite set, that there is in general no spectral gap above~$0$ in contrast with the periodic setting, and that it coincides with $[0,\infty)$ in case of an i.i.d.\@ structure.
Interestingly, the spectrum depends crucially on the structure of the underlying probability space $(\Omega,\p)$, as precisely formulated in Section~\ref{sec:spectrum-Hk0} below in terms of a notion of ``spectrum'' of the probability space.
In the model Gaussian setting, our result takes on the following simple guise.

\begin{prop1}[Spectral decomposition of $H_{k,0}^\st$]\label{th:main0}
Given a stationary Gaussian field $V_0$ on $\R^d$ with covariance function $\Cc_0$, denote by~$\widehat\Cc_0$ the (nonnegative measure) Fourier transform of~$\Cc_0$, and assume that the probability space $(\Omega,\p)$ is endowed with the $\sigma$-algebra  $\sigma(\{V_0(x)\}_{x\in\R^d})$ generated by $V_0$.
\begin{enumerate}[(i)]
\item If $\Cc_0$ is not periodic in any direction, then $\sigma(H_{k,0}^\st)=[-|k|^2,\infty)$.
\item If the measure $\widehat\Cc_0$ is absolutely continuous (in particular, if $\Cc_0$ is integrable),
then the eigenvalue at $0$ is simple (with eigenspace $\C$) and
\[\sigma_{\pp}(H_{k,0}^\st)\,=\,\{0\},\quad\sigma_\sg(H_{k,0}^\st)\,=\,\varnothing,\quad\sigma(H_{k,0}^\st)\,=\,\sigma_{\ac}(H_{k,0}^\st)\,=\,[-|k|^2,\infty).\qedhere\]
\end{enumerate}
\end{prop1}

We turn to the perturbed fibered operators $\{H^\st_{k,\lambda}\}_k$ and start with a characterization of their spectrum.
While we focus here on the Gaussian setting, a more general statement is given in Section~\ref{sec:spectrum-fibered}.
In case of an unbounded potential~$V$, the essential self-adjointness of the perturbed operators $\{H_{k,\lambda}^\st\}_k$ is already a delicate issue, for which an (almost optimal) criterion is included in Appendix~\ref{sec:self-adj}, requiring $V\in\Ld^p(\Omega)$ for some $p>\frac d2$, in line with the corresponding celebrated self-adjointness problem for Schrödinger operators on~$\Ld^2(\R^d)$ with singular potentials, cf.~\cite{Kato-72,Faris-Lavine-74,Kato}; see also~Proposition~\ref{prop:nelson} for the simpler case when $V=V_0$ is itself Gaussian.

\begin{prop1}[Spectrum of $H_{k,\lambda}^\st$]\label{th:pert-G-sp}
Consider the Gaussian setting $V=b(V_0)$, where $V_0$ is a stationary Gaussian field, and assume that $V_0$ is nondegenerate and that $\expec{|V|^p}<\infty$ holds for some $p>\frac d2$.
Then $H_{k,\lambda}^\st$ is essentially self-adjoint on~$H^2\cap\Ld^\infty(\Omega)$ and
\[\sigma(H_{k,\lambda}^\st)=[-|k|^2+\lambda\infess b,\,\infty).\qedhere\]
\end{prop1}

The nature of the spectrum of the perturbed operators $\{H_{k,\lambda}^\st\}_k$ is a more involved question and is a main concern in the sequel.
In view of the fibration~\eqref{eq:flow-decomp0}, the perturbation of the eigenvalue at $0$ for the fibered operators $\{H_{k,0}^\st\}_k$ is of particular interest.
According to Fermi's Golden Rule, e.g.~\cite[Section~XII.6]{Reed-Simon-78},
this eigenvalue embedded in continuous spectrum is expected to dissolve when the perturbation is turned on.
The simplest rigorous version of this key conjecture is as follows.
It is based on observing that the formula for the second derivative of a hypothetic branch of eigenvalues at $\lambda=0$ would be a complex number, cf.~Section~\ref{sec:instab}.

\begin{prop1}[Instability of the bound state]\label{prop:Fermi}
Let $k\in\R^d\setminus\{0\}$, let $V$ be a stationary random field, denote by $\widehat\Cc$ the (nonnegative measure) Fourier transform of its covariance function,
and assume that $\widehat\Cc$ does not vanish identically on the sphere $\partial B_{|k|}(-k)$, in the sense that $\lim_{\e\downarrow0}\frac1\e\widehat\Cc\big(B_{|k|+\e}(-k)\setminus B_{|k|-\e}(-k)\big)\,>\,0$.
Then there exists no $C^2$ branch
\[[0,\delta)\to\R\times\Ld^2(\Omega):\lambda\mapsto(E_{k,\lambda},\psi_{k,\lambda})\]
with
\[H_{k,\lambda}^\st\psi_{k,\lambda}=E_{k,\lambda}\psi_{k,\lambda},\qquad(E_{k,\lambda},\psi_{k,\lambda})|_{\lambda=0}=(0,1).\qedhere\]
\end{prop1}

This basic instability result is however quite weak: for $k\in\R^d\setminus\{0\}$, the operator~$H_{k,\lambda}^\st$ is in fact expected to have purely absolutely continuous spectrum in a neighborhood of~$0$ for $0<\lambda\ll1$.
In addition, in view of the resonance interpretation of Fermi's Golden Rule, which originates in the work of Weisskopf and Wigner~\cite{W-Wigner-30}, the perturbed eigenvalue is expected to turn into a complex resonance.
Relevant conjectures are formulated in Section~\ref{sec:conj} below.

\subsection{Perturbative Mourre's commutator approach}\label{sec:intro-Mourre}

The perturbation problem for an eigenvalue embedded in continuous spectrum, in link with Fermi's Golden Rule and resonances, is an active topic of research in spectral theory.
Various general approaches have been successfully developed, see e.g.~\cite{Zworski-book} and references therein, but none seems to be available in our probabilistic setting: a key difficulty is that the random perturbation~$V$ is never compact with respect to the unperturbed operators $\{H_{k,0}^\st\}_k$ on $\Ld^2(\Omega)$ (unless it is degenerate).
This calls for the development of robust techniques for the spectral analysis of stationary operators on the probability space.
In the present contribution, we appeal to Mourre's commutator theory~\cite{Mourre-80,ABMG-96}, cf.~Section~\ref{sec:Mourre-intro}, which is reputedly flexible and requires no compactness. Although not allowing to deduce the existence of resonances in any strong form, Mourre's theory would ensure similar dynamical consequences, e.g.~\cite{Orth-90,Hunziker-90,CGH-06}.

\medskip
More precisely, we start with the construction of a natural group of dilations~$\{U_t^\st\}_{t\in\R}$ on~$\Ld^2(\Omega)$ in the model Gaussian setting, cf.~Section~\ref{sec:oper-L2}:
heuristically, it amounts to dilating the underlying white noise in the representation~\eqref{eq:white-rep}, which constitutes a unitary group since dilations preserve the law of the white noise.
The generator $A^\st$ of this group is then checked to be a conjugate operator for the stationary Laplacian~$-\triangle^\st$ in the sense of Mourre's theory, cf.~Proposition~\ref{prop:commut}(i). In Section~\ref{sec:Mourre-fiber}, by means of suitable deformations, we further construct corresponding conjugates for the whole family of fibered operators~$\{H_{k,0}^\st\}_k$. This appears to be surprisingly more involved than for $k=0$, in link with the infinite dimensionality of the probability space: our proof makes a deep use of the Fock space structure of $\Ld^2(\Omega)$ as provided by Malliavin calculus, thus emphasizing the interplay between spectral theory and the functional structure of the probabilistic setup.
Next, we turn to perturbed operators~$\{H_{k,\lambda}^\st\}_k$. It appears that the perturbation~$\lambda V$ is not compatible in the sense of Mourre's theory, cf.~Proposition~\ref{prop:commut}(iii), again in link with the infinite dimensionality of the probability space.
Perturbative Mourre's theory can therefore not be applied
unless we introduce a suitable (weak) Wiener truncation.

\begin{theor1}[Perturbative Mourre's theory up to truncation]\label{th:Mourre-pert}
Let $V=V_0$ be a stationary Gaussian field with covariance function $\Cc_0\in C^\infty_c(\R^d)$,
let~$\Lc$ denote the Ornstein--Uhlenbeck operator for the associated Malliavin calculus, cf.~Section~\ref{chap:Mall}, and for a given constant~\mbox{$L_0>0$} consider the truncation $Q_\lambda:=\mathds1_{[0,(L_0\lambda)^{-2}]}(\Lc)$ onto Wiener chaoses of order $\le(L_0\lambda)^{-2}$.
Then, for any $k\in\R^d$, there exists a self-adjoint operator $C_k^\st$ on~$\Ld^2(\Omega)$ and an explicit core~$\Pc(\Omega)$, cf.~\eqref{eq:def-Pc}, such that the following properties hold:
\begin{enumerate}[(i)]
\item
For all $\e>0$, the truncated operator $Q_\lambda H_{k,0}^\st Q_\lambda$ satisfies a Mourre relation on the interval $J_\e:=[\e-\frac34|k|^2,\infty)$ with respect to $C_k^\st$.
More precisely, its domain is invariant under the unitary group generated by $C_k^\st$,
and its commutator with $\frac1iC_k^\st$
is well-defined and essentially self-adjoint on $\Pc(\Omega)$, is $Q_\lambda H_{k,0}^\st Q_\lambda$-bounded, and satisfies the following lower bound,
\[\qquad\mathds1_{J_\e}(Q_\lambda H_{k,0}^\st Q_\lambda)\,[Q_\lambda H_{k,0}^\st Q_\lambda,\tfrac1iC_k^\st]\,\mathds1_{J_\e}(Q_\lambda H_{k,0}^\st Q_\lambda)\,\ge\, \e \mathds1_{J_\e}(Q_\lambda H_{k,0}^\st Q_\lambda) -\tfrac34|k|^2\E.\]
\item The truncated perturbation $Q_\lambda \lambda V Q_\lambda$ is compatible with respect to $C_k^\st$ in the sense that its iterated commutators with $\frac1iC_k^\st$ are well-defined on~$\Pc(\Omega)$ and bounded by~$O(L_0^{-1})$.
\end{enumerate}
In particular, the truncated perturbed operator $Q_\lambda H_{k,\lambda}^\st Q_\lambda$ satisfies a corresponding Mourre relation on $[\e+CL_0^{-1}-\frac34|k|^2,\infty)$ with respect to $C_k^\st$.
\end{theor1}

\smallskip
Based on this perturbative Mourre result, an approximate dynamical resonance analysis can be developed for truncated fibered operators in the spirit of~\cite{Orth-90,Hunziker-90,CGH-06} and leads to the exponential time decay of the corresponding averaged wavefunction. Further noting that the truncation error is easily estimated on the kinetic timescale,
we can get rid of the truncation and rigorously deduce the validity of the exponential decay law~\eqref{eq:exp-decay} as stated below; the proof is postponed to Section~\ref{sec:consequ-Mourre}.

\begin{cor1}[Exponential decay law on kinetic timescale]\label{th:mourre-partial}
Let $u^\circ\in\Ld^2(\R^d)$ have compactly supported Fourier transform,
let $V=V_0$ be a stationary Gaussian field with covariance function {$\Cc_0\in C^\infty_c(\R^d)$}, and define $\alpha_k,\beta_k\in\R$ by
\[\alpha_k+i\beta_k:=\lim_{\e\downarrow0}\tfrac1i\expecm{V(H_{k,0}^\st-i\e)^{-1}V},\]
that is, more explicitly,
\begin{eqnarray}\label{eq:def-alphbet}
\alpha_k&:=&
\pi\int_{\R^d}\widehat\Cc_0(y-k)\,\delta(|y|^2-|k|^2)\,\dbar y\,=\,
\frac{\pi}{2(2\pi)^{d}|k|}\int_{\partial B_{|k|}(-k)}\widehat\Cc_0~~>0,\\
\beta_k&:=&-(2\pi)^{-d}\,\pv\int_{-|k|^2}^\infty\frac1{r}\bigg(\frac1{2\sqrt{|k|^2+r}}\int_{\partial B_{\sqrt{|k|^2+r}}(-k)}\widehat\Cc_0\bigg)\,dr,\nonumber
\end{eqnarray}
Then there exists $s_0\simeq1$ such that the Schrödinger flow satisfies for all $0\le s\le s_0$,
\[\lim_{\lambda\downarrow0}\,\sup_{x\in\R^d}\bigg|\expecm{u_\lambda^{\lambda^{-2}s}(x)}\,-\,\int_{\R^d}\widehat u^\circ(k)\,e^{ik\cdot x-i\lambda^{-2}s|k|^2}e^{-s(\alpha_k+i\beta_k)}\,\dbar k\bigg|\,=\,0.\qedhere\]
\end{cor1}

Although our truncation argument could be compared with the truncation of the Dyson series in the perturbative analysis of~\cite{Spohn-77,Erdos-Yau-00,Erdos-Salmhofer-Yau-1,Erdos-Salmhofer-Yau-2}, it only requires to estimate a truncation error, which is often a simpler matter, while the truncated evolution is intrinsically analyzed by means of Mourre's theory, avoiding any Feynman diagram analysis or any renormalization to handle the truncated Dyson series.
In addition, formal computations indicate that the truncation of the evolution at time~$t$ on Wiener chaoses of order $\le K$ should be accurate provided $K\gg\lambda^2t$.
Since our truncation $Q_\lambda$ in Theorem~\ref{th:Mourre-pert} amounts to projecting onto chaoses of order $\le(L_0\lambda)^{-2}$, which is a particularly high order compared to truncations of Dyson series in~\cite{Spohn-77,Erdos-Yau-00,Erdos-Salmhofer-Yau-1,Erdos-Salmhofer-Yau-2}, the accuracy in Corollary~\ref{th:mourre-partial} should thus follow in fact up to times $t\ll\lambda^{-4}$.
Non-perturbative approaches to fibered resonances and accuracy on even longer timescales are postponed to future works.

\subsection{Exact resonance conjectures}\label{sec:conj}
The above results provide partial indications that the eigenvalue at $0$ of the fibered operators $\{H_{k,0}^\st\}_k$ should dissolve in the continuous spectrum upon perturbation and turn into complex resonances.
In particular, in agreement with Fermi's Golden Rule, Corollary~\ref{th:mourre-partial} is consistent with resonances at
\[z_{k,\lambda}=\lambda^2(\beta_k-i\alpha_k)+o_k(\lambda^2),\qquad k\in\R^d\setminus\{0\}.\]
As resonance theories have never been constructed for operators on the probability space, we formulate relevant conjectures that will be investigated rigorously in future works.
To emphasize the relevance of our formulation, we further consider in Section~\ref{sec:toy} an illustrative toy model that shares many spectral features of Schr\"{o}dinger operators and for which resonances are explicitly shown to exist in a similar sense, cf.~Theorem~\ref{th:toy}(iii).

\medskip
According to the usual definition, the operator $H_{k,\lambda}$ has a resonance at $z_{k,\lambda}$ in the lower complex half-plane if the resolvent $(H_{k,\lambda}^\st-z)^{-1}$ on the upper half-plane $\Im z>0$, when viewed in a suitably weakened topology, extends to a meromorphic family of operators indexed by all $z\in\C$ (or at least in a complex neighborhood), and if this family admits a simple pole at~$z=z_{k,\lambda}$.
In the usual case of operators on $\Ld^2(\R^d)$, the suitable weakening of the topology typically consists of viewing the resolvent as a family of linear operators~$C^\infty_c(\R^d)\to \mathcal D'(\R^d)$ rather than $\Ld^2(\R^d)\to\Ld^2(\R^d)$.
In the present setting on the probability space, the role of $C^\infty_c(\R^d)\subset\Ld^2(\R^d)$ can be played for instance by
the dense linear subspace~$\Pc(\Omega)\subset\Ld^2(\Omega)$ of $V$-polynomials,
\begin{equation}\label{eq:def-Pc}
\Pc(\Omega)\,:=\,\bigg\{\sum_{j=1}^na_j\prod_{l=1}^{m_j}V(x_{lj})\,:\,n\ge1,\,a_j\in\C,\,m_j\ge0,\,x_{lj}\in\R^d\bigg\},
\end{equation}
and the dual $\Dc'(\R^d)$ is then replaced by the dual space $\Pc'(\Omega)$ of continuous linear functionals on~$\Pc(\Omega)$.
In these terms, we formulate the following resonance conjecture.
The linear functionals $\Psi_{k,\lambda}^+$ and $\Psi_{k,\lambda}^-$ below are referred to as the resonant and co-resonant states, respectively.
Since the imaginary part of the expected branch of resonances $\Im z_{k,\lambda}=-\lambda^2\alpha_k+O_k(\lambda^3)$ vanishes to leading order both as $k\to0$ (in dimension $d>2$) and as $|k|\uparrow\infty$, cf.~formula~\eqref{eq:def-alphbet}, we henceforth restrict to $k$ in a compact set away from~$0$.

\begin{conjn}{LRC}{Local resonance conjecture}\label{C1}$ $\\
Given a compact set $K\subset\R^d\setminus\{0\}$, there are $\lambda_0,M>0$ such that for all $k\in K$ and $0\le\lambda<\lambda_0$ the resolvent $z\mapsto(H_{k,\lambda}^\st-z)^{-1}$ defined on $\Im z>0$ as a family of operators $\Pc(\Omega)\to\Pc'(\Omega)$ can be extended meromorphically to the whole complex ball $|z|\le\frac1M$ with a unique simple pole. In other words, there exist continuous collections $\{z_{k,\lambda}\}_{k,\lambda}\subset\C$ and $\{\Psi_{k,\lambda}^+\}_{k,\lambda},\{\Psi_{k,\lambda}^-\}_{k,\lambda}\subset\Pc'(\Omega)$ such that for all $\phi,\phi'\in\Pc(\Omega)$ we can write for $\Im z>0$,
\begin{equation}\label{eq:dec-res}
\big\langle\phi',(H_{k,\lambda}^\st-z)^{-1}\phi\big\rangle_{\Ld^2(\Omega)}\,=\,\frac{\overline{\langle\Psi^+_{k,\lambda},\phi'\rangle_{\Pc'(\Omega),\Pc(\Omega)}}\,\langle\Psi_{k,\lambda}^-,\phi\rangle_{\Pc'(\Omega),\Pc(\Omega)}}{z_{k,\lambda}-z}+\zeta_{k,\lambda}^{\phi',\phi}(z),
\end{equation}
where the remainder~$\zeta_{k,\lambda}^{\phi',\phi}$ is holomorphic on the set $\{z:\Im z>0\}\bigcup\frac1MB$ and has continuous dependence on $k,\lambda$.
\end{conjn}

Next, we state a global version of this resonance conjecture in the case of an unbounded potential $V$ with $\sigma(H_{k,\lambda}^\st)=\R$ (see e.g.~Proposition~\ref{th:pert-G-sp}).
A direct computation shows that the spectral measure of $H_{k,0}^\st$ associated with~$V$ is typically supported on the whole half-axis $[-|k|^2,\infty)$ and is only $\frac{d-2}2$-times differentiable at~$-|k|^2$ in dimension $d>2$, cf.\@ proof of Lemma~\ref{th:spectrum-gen};
this suggests that the band on which the meromorphic extension of the resolvent exists must shrink as $\lambda\downarrow0$ close to $z=-|k|^2$.

\begin{conjn}{GRC}{Global resonance conjecture}\label{C2}$ $\\
The same decomposition~\eqref{eq:dec-res} holds with a remainder~$\zeta_{k,\lambda}^{\phi',\phi}$ that is holomorphic on the set $\{z:\Im z>-\frac1M\lambda^\rho\}\bigcup \frac1MB$ for some exponent $\rho<2$, has continuous dependence on $k,\lambda$, and satisfies a uniform bound of the form
\[\sup_{|\Im z|\le\frac1{2M}\lambda^\rho}|\zeta_{k,\lambda}^{\phi',\phi}(z)|\,\le\,\lambda^{-M}.\qedhere\]
\end{conjn}

\begin{rem}[Continuous resonant spectrum]\label{rem:res-spec}
When integrated along the Floquet--Bloch fibration~\eqref{eq:fibration}, in dimension $d>2$, the conjectured fibered resonances would yield a hammock-shaped set in the lower complex half-plane, connecting some point~$O(\lambda^2)$ on the real axis to $+\infty$,
\[\qquad\Sigma_\lambda:=\{|k|^2+z_{k,\lambda}:k\in\R^d\}\approx\{|k|^2+\lambda^2(\beta_k-i\alpha_k):k\in\R^d\}.\]
This set is increasingly thinner at infinity and can reach a thickness $O(\lambda^2)$ in the middle, but it reduces to a curve for instance when the covariance is radial.
This set can be viewed as a kind of ``continuous resonant spectrum'' for the Schr\"{o}dinger operator $H_\lambda=-\triangle+\lambda V$ on $\Ld^2(\R^d\times\Omega)$.
While to the best of our knowledge such a notion has never been introduced in the literature, it is made rigorous for the illustrative toy model that we introduce in Section~\ref{sec:toy}, cf.~Theorem~\ref{th:toy}(iv).
\end{rem}

We show that the above conjectures imply the expected exponential decay law~\eqref{eq:exp-decay} for the averaged wavefunction to finer accuracy, thus providing a strong improvement and a very first workaround for the available perturbative methods~\cite{Spohn-77,Erdos-Yau-00,Erdos-Salmhofer-Yau-1,Erdos-Salmhofer-Yau-2,Chen-Komorowski-Ryzhik-16}.
Under Conjecture~\ref{C1} an accurate description of the decay law is deduced only for times $t\ll\lambda^{-2}|\!\log\lambda|$, but accuracy is reached on all timescales under Conjecture~\ref{C2}.
The result is expressed as a resonant-mode expansion of the Schrödinger flow in the weak sense of $\Pc'(\Omega)$, and the description of the averaged wavefunction follows as a particular case; the proof is quite standard and is postponed to Section~\ref{sec:expdec-conj-pr}.

\begin{cor1}[Consequences of resonance conjectures]\label{th:res-spectr}
Let $u^\circ\in\Ld^2(\R^d)$ have Fourier transform supported in the compact set $K\subset \R^d\setminus\{0\}$, with $\|u^\circ\|_{\Ld^2(\R^d)}=1$.
\begin{enumerate}[(i)]
\item Under Conjecture~\emph{\ref{C1}}, there holds in $\Ld^\infty(\R^d;\Pc'(\Omega))$, for all $0\le\lambda<\lambda_0$,
\begin{equation}\label{eq:expand-LRC}
\qquad u_\lambda^t(x)\,=\,\int_{\R^d}\widehat u^\circ(k)\,e^{ik\cdot x-it(|k|^2+z_{k,\lambda})}\,\langle\Psi_{k,\lambda}^-,1\rangle_{\Pc'(\Omega),\Pc(\Omega)}\,\Psi_{k,\lambda}^{+;x}\,\dbar k\,+\,O_{K,M}(\lambda),
\end{equation}
where we have set $\langle\Psi_{k,\lambda}^{+;x},\phi\rangle_{\Pc'(\Omega),\Pc(\Omega)}:=\langle\Psi^+_{k,\lambda},\phi(-x,\cdot)\rangle_{\Pc'(\Omega),\Pc(\Omega)}$ for $\phi\in \Pc(\Omega)$.
More precisely, this means for all $\phi\in \Pc(\Omega)$,
\begin{multline*}
\qquad\sup_x\bigg|\expecm{\bar\phi\, u_\lambda^t(x)}
\,-\,\int_{\R^d}\widehat u^\circ(k)\,e^{ik\cdot x-it(|k|^2+z_{k,\lambda})}\overline{\langle\Psi^{+;x}_{k,\lambda},\phi\rangle_{\Pc'(\Omega),\Pc(\Omega)}}\langle\Psi_{k,\lambda}^-,1\rangle_{\Pc'(\Omega),\Pc(\Omega)}\,\dbar k\bigg|\,\\\lesssim_{\phi,K,M}\,\lambda\|u^\circ\|_{\Ld^2(\R^d)}.
\end{multline*}
In addition, the averaged wavefunction satisfies the following improved estimate,
\begin{equation*}
\qquad \sup_x\bigg|\expec{u_\lambda^t(x)}\,-\,\int_{\R^d}\widehat u^\circ(k)\,e^{ik\cdot x-it(|k|^2+z_{k,\lambda})}\,\dbar k\bigg|\,\lesssim_{K,M}\,\lambda^2.
\end{equation*}
\item Under Conjecture~\emph{\ref{C2}}, the same holds as in~(i) with the errors $O(\lambda)$ and~$O(\lambda^2)$ improved into $O(\lambda e^{-\frac t{8M}\lambda^\rho})$ and $O(\lambda^2 e^{-\frac t{8M}\lambda^\rho})$, respectively.
\end{enumerate}
Moreover, for $k\in K$ and $0<\lambda<\lambda_0$, the restriction of the spectrum of $H_{k,\lambda}^\st$ to $(-\frac1M,\frac1M)$ is absolutely continuous under Conjecture~\ref{C1}, and the whole spectrum is absolutely continuous under Conjecture~\ref{C2}.
\end{cor1}

In order to make the above resonant-mode expansion~\eqref{eq:expand-LRC} more striking, we note that resonances and resonant states can be computed explicitly in form of a perturbative Rayleigh--Schrödinger series.
In particular, in agreement with Corollary~\ref{th:mourre-partial}, the resonance is checked to coincide to leading order with $\lambda^2(\beta_k-i\alpha_k)$; the proof is included in Section~\ref{sec:RS}.

\begin{prop1}[Approximate computation of resonances]\label{prop:pert-res}
If Conjecture~\emph{\ref{C1}} holds and if for all $k\in K$ and $\phi,\phi'\in \Pc(\Omega)$ the map
\begin{equation}\label{eq:map-reson}
[0,\lambda_0)\to\C\times\Pc'(\Omega)\times\Pc'(\Omega)\times\Ld^\infty_\loc(\tfrac1MB):\lambda\mapsto(z_{k,\lambda},\Psi_{k,\lambda}^+,\Psi_{k,\lambda}^-,\zeta_{k,\lambda}^{\phi',\phi})
\end{equation}
is of class $C^2$, then up to a gauge transformation
there hold as $\lambda\downarrow0$, for all $k\in K$,
\begin{eqnarray*}
z_{k,\lambda}&=&\lambda^2(\beta_k-i\alpha_k)+o_k(\lambda^2),\\
\Psi_{k,\lambda}^\pm&=&1+\lambda\Phi^{1,\pm}_k+\lambda^2\Phi^{2,\pm}_k+o_k(\lambda^2),
\end{eqnarray*}
where $\alpha_k,\beta_k$ are defined in~\eqref{eq:def-alphbet} and where $\Phi^{1,\pm}_k$ and $\Phi^{2,\pm}_k$ are given by
\begin{eqnarray*}
\Phi_k^{1,\pm}&:=&-(H_{k,0}^\st\mp i0)^{-1}V,\\
\Phi_k^{2,\pm}&:=&(H_{k,0}^\st\mp i0)^{-1}\Pi V(H_{k,0}^\st\mp i0)^{-1}V,
\end{eqnarray*}
in terms of the projection $\Pi:=\Id-\E$ onto $\Ld^2(\Omega)\ominus\C$. More precisely, the latter formulas are understood as follows, for all $\phi\in\Pc(\Omega)$,
\begin{align*}
\langle\Phi_k^{1,\pm},\phi\rangle_{\Pc'(\Omega),\Pc(\Omega)}&~:=~-\lim_{\e\downarrow0}\expec{V(H_{k,0}^\st\pm i\e)^{-1}\phi},\\
\langle\Phi_k^{2,\pm},\phi\rangle_{\Pc'(\Omega),\Pc(\Omega)}&~:=~\lim_{\e\downarrow0}\expec{V(H_{k,0}^\st\pm i\e)^{-1}V(H_{k,0}^\st\pm i\e)^{-1}\Pi\phi}.\qedhere
\end{align*}
\end{prop1}

\begin{rem}[Full Rayleigh--Schrödinger series for resonances]\label{rem:reson}
The proof of the above is easily pursued to any order. For $n\ge1$ and $\phi\in \Pc(\Omega)$, if the map~\eqref{eq:map-reson} is of class~$C^n$, then there hold as $\lambda\downarrow0$, for all $k\in K$,
\vspace{-0.1cm}
\begin{equation}\label{eq:RS-series}
z_{k,\lambda}=\sum_{m=1}^{n-1}\lambda^{m+1}\nu_k^m+o_k(\lambda^n),\qquad\Psi_{k,\lambda}^\pm=\sum_{m=0}^n\lambda^m\Phi_k^{m,\pm}+o_k(\lambda^n),
\end{equation}
where the coefficients are explicitly defined and can be checked to coincide with those of the formal Rayleigh--Schrödinger series for the perturbation of a bound state. This asymptotic series makes no sense in $\Ld^2(\Omega)$ (in link with the dissolution of the bound state), but can be constructed in the weak sense of $\Pc'(\Omega)$; this partially answers in our setting a question raised in~\cite[p.179]{Hunziker-90}.
More precisely, for all $\phi\in \Pc(\Omega)$, we can write
\begin{gather*}
\nu_k^m:=\langle\Phi_k^{m,-},V\rangle_{\Pc'(\Omega),\Pc(\Omega)}=\lim_{\e\downarrow0}\expec{V\overline{\phi_k^{m,\e,-}}},\\
\langle\Phi_k^{m,\pm},\phi\rangle_{\Pc'(\Omega),\Pc(\Omega)}:=\lim_{\e\downarrow0}\expec{\phi\,\overline{\phi_k^{m,\e,\pm}}},
\end{gather*}
where the limits indeed exist and
where for all $\e>0$ the sequence $(\phi_k^{m,\e,\pm})_m\subset\Ld^2(\Omega)$ is defined iteratively as follows: we set $\phi_{k}^{0,\e,\pm}=1$ and for all $m\ge0$ we define $\phi_k^{m+1,\e,\pm}$ as the unique solution of the regularized Rayleigh--Schrödinger recurrence equation,
\[\qquad\big(H_{k,0}^\st\mp i\e\big)\phi_{k}^{m+1,\e,\pm}=-V\phi_{k}^{m,\e,\pm}+\sum_{l=0}^m\expec{V\phi_{k}^{l,\e,\pm}}\phi_{k}^{m-l,\e,\pm}.\]
The Rayleigh--Schrödinger series~\eqref{eq:RS-series} is not known to be summable, hence cannot be used to actually {\it construct} resonances, which constitutes a reputed difficulty in this problem; see also~\cite{Spohn-77,Erdos-Yau-00}.
\end{rem}

%%%%%%%%%%%%%%%%%%%%%
%%%%%%%%%%%%%%%%%%%%%
%%%%%%%%%%%%%%%%%%%%%

\medskip
\section{Stationary random setting and Floquet--Bloch fibration}\label{sec:Floquet--Bloch}

In this section, we give a suitable definition of stationarity (or statistical translation-invariance) and we define the associated stationary differential calculus on the probability space, which was first introduced in~\cite{PapaVara} and plays a key role in the context of stochastic homogenization theory, e.g.~\cite[Section~7]{JKO94}.
Next, we generalize the periodic Floquet--Bloch theory to this stationary setting, establishing in particular~\eqref{eq:fibration} and~\eqref{eq:flow-decomp0}.

\subsection{Stationary setting}\label{sec:stat-setting}
Given a reference (complete) probability space $(\Omega,\p)$, we start by recalling the classical notion of stationarity.
In particular, a Gaussian field $V_0$, that is, a family $V_0=\{V_0(x,\cdot)\}_{x\in\R^d}$ of Gaussian random variables, is an example of a stationary measurable random field if the variables $\{V_0(x,\cdot)\}_{x\in\R^d}$ have the same expectation and have covariance $K_0(x,y):=\cov{V_0(x,\cdot)}{V_0(y,\cdot)}$ of the form $K_0(x,y)=\Cc_0(x-y)$ with $\Cc_0$ continuous at the origin.

\begin{defin}\label{defin:stat1}
A {\it random field} on $\R^d$ is a map $\phi:\R^d\times\Omega\to\C$ such that for all $x\in\R^d$ the random variable $\phi(x,\cdot):\Omega\to\C$ is measurable.
It is said to be {\it stationary} if its finite-dimensional law is shift-invariant, that is, if for any finite set $E\subset\R^d$ the law of $\{\phi(x+y,\cdot)\}_{x\in E}$ does not depend on the shift $y\in\R^d$.
In addition, it is said to be {\it measurable} if the map $\phi:\R^d\times\Omega\to\R$ is jointly measurable. (In view of a result due to von Neumann~\cite{Neumann-32}, which can be viewed as a stochastic version of Lusin's theorem, joint measurability is equivalent to requiring that for almost all $x$ and for all $\delta>0$ there holds $\p\{\omega\in\Omega:|\phi(x+y,\omega)-\phi(x,\omega)|>\delta\}\to0$ as $y\to0$.)
\end{defin}

This basic notion of stationarity is usefully reformulated in terms of a measure-preserving action on the probability space, which draws the link with the theory of dynamical systems and ergodic theory.

\begin{defin}\label{defin:stat2}
A {\it measurable action} $\tau:=(\tau_x)_{x\in\R^d}$ of the group $(\R^d,+)$ on~$(\Omega,\p)$ is a collection of measurable maps $\tau_x:\Omega\to\Omega$ that satisfy
\begin{itemize}
\item $\tau_x\circ\tau_y=\tau_{x+y}$ for all $x,y\in\R^d$;
\item $\pr{\tau_xA}=\pr{A}$ for all $x\in\R^d$ and measurable $A\subset\Omega$;
\item the map $\R^d\times\Omega\to\Omega:(x,\omega)\mapsto\tau_x\omega$ is jointly measurable.
\end{itemize}
A random field $\phi:\R^d\times\Omega\to\C$ is said to be {\it $\tau$-stationary} if there exists a measurable map $\phi_\circ:\Omega\to\C$ such that $\phi(x,\omega)=\phi_\circ(\tau_{-x}\omega)$ for all $x,\omega$.
\end{defin}

This second definition yields a bijection between random variables $\phi_\circ:\Omega\to\C$ and $\tau$-stationary random fields $\phi:\R^d\times\Omega\to\C$.
The random field $\phi$ is referred to as the {\it $\tau$-stationary extension} of~$\phi_\circ$.
In addition, given $\phi_\circ\in\Ld^p(\Omega)$ with $p\ge1$, since there holds $\expec{\int_K|\phi|^p}=|K|\,\expec{|\phi_\circ|^p}$ for any compact $K\subset\R^d$, we deduce that the realization $\phi(\cdot,\omega)$ belongs to $\Ld^p_\loc(\R^d)$ for almost all~$\omega$. The Banach space $\Ld^p(\Omega)$ can thus be identified with the subspace of $\tau$-stationary random fields in $\Ld^p_\loc(\R^d;\Ld^p(\Omega))$.

\medskip
While the notion of $\tau$-stationarity in the sense of Definition~\ref{defin:stat2} obviously implies measurability and stationarity in the sense of Definition~\ref{defin:stat1}, the following asserts that both are in fact essentially equivalent.

\begin{lem}
Let $\phi$ be a stationary measurable random field defined on~$(\Omega,\p)$ in the sense of Definition~\ref{defin:stat1}. Then there exist a probability space $(\Omega',\p')$, endowed with a measurable action $\tau$, and a $\tau$-stationary random field $\phi'$ defined on $(\Omega',\p')$ in the sense of Definition~\ref{defin:stat2} such that $\phi$ and $\phi'$ have the same finite-dimensional law.
This extends to a correspondence between $\sigma(\phi)$-measurable random variables on $\Omega$ and random variables on $\Omega'$.
\end{lem}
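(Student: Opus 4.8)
The plan is to realize $\phi$ on a canonical probability space built from its own sample paths, but sitting inside a \emph{Polish space of functions} rather than inside the raw product space $\C^{\R^d}$. The point is that the product $\sigma$-algebra on $\C^{\R^d}$ is far too coarse for the translation group to be a \emph{measurable action} in the sense of Definition~\ref{defin:stat2}, whereas on a separable Fr\'echet space of locally integrable functions translation is even jointly continuous. This is exactly the juncture at which the joint-measurability clause of Definition~\ref{defin:stat1} --- equivalently, the stochastic continuity of $\phi$ furnished by von Neumann's theorem --- is genuinely needed.

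Concretely, I would first replace $\phi$ by $g\circ\phi$ for some bounded injective Borel map $g\colon\C\to\C$ with Borel inverse on its range (e.g.\ $g(z)=z/(1+|z|)$); this changes neither $\sigma(\phi)$ nor the stationarity of the field and reduces matters to a \emph{bounded} $\phi$, so that $\phi(\cdot,\omega)\in X:=\Ld^2_\loc(\R^d;\C)$ for every $\omega$, the reduction being undone at the very end. The space $X$ is separable Fr\'echet, hence Polish, and its Borel $\sigma$-algebra $\Bc(X)$ is generated by the countable family of continuous functionals $f\mapsto\int_{B_n}f\bar\rho$ with $n\ge1$ and $\rho$ in a countable dense subset of $C_c(\R^d)$; by joint measurability of $\phi$ and Fubini's theorem the realization map $\Phi\colon\Omega\to X$, $\Phi(\omega):=\phi(\cdot,\omega)$, is $(\F,\Bc(X))$-measurable. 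I then set $\Omega':=X$, $\p':=\Phi_\ast\p$, and let $\tau:=(\tau_x)_x$ act by translation, $(\tau_xf)(y):=f(y-x)$, which is a jointly continuous --- hence measurable --- action of $(\R^d,+)$ on $\Omega'$. To produce a genuinely pointwise-defined field out of these equivalence classes I would use Lebesgue points, $\phi'_\circ(\omega'):=\limsup_{r\downarrow0}|B_r|^{-1}\int_{B_r}\omega'$ and $\phi'(x,\omega'):=\phi'_\circ(\tau_{-x}\omega')$, which is jointly measurable and $\tau$-stationary by construction.

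Two things then need checking. First, the invariance $(\tau_x)_\ast\p'=\p'$: by a monotone-class argument it is enough to test it on cylinders built from finitely many of the generating functionals $f\mapsto\int_{B_n}f\bar\rho_j$, and, pulled back through $\Phi$, this reduces to the $x$-independence of the law of $\big(\int\phi(z,\cdot)\,\bar\rho_j(z+x)\,dz\big)_j$, which follows by approximating the integrals by Riemann sums and invoking the shift-invariance of the finite-dimensional laws of $\{\phi(z,\cdot)\}_z$ from Definition~\ref{defin:stat1}, the uniform bound on $\phi$ justifying the passage to the limit. Second, the matching of finite-dimensional laws: Lebesgue's differentiation theorem gives $\phi'(x,\Phi(\omega))=\phi(x,\omega)$ for $\p$-a.e.\ $\omega$ and Lebesgue-a.e.\ $x$, so by Fubini the laws of $(\phi'(x_i,\cdot))_i$ under $\p'$ and of $(\phi(x_i,\cdot))_i$ under $\p$ agree for Lebesgue-a.e.\ tuple $(x_1,\dots,x_n)$; since $\phi$ --- and, as one checks, $\phi'$ --- are stochastically continuous, this extends to \emph{every} tuple by passing to the limit in probability.

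For the last assertion, note that $\Bc(X)=\sigma(\phi')$ and $\Phi_\ast\p=\p'$, so $G\mapsto G\circ\Phi$ carries $\Bc(X)$-measurable functions on $\Omega'$ to $\sigma(\phi)$-measurable functions on $\Omega$ in a law-preserving way, and is onto modulo $\p$-null sets by the Doob--Dynkin lemma; this gives the announced isomorphism of measure algebras $\Ld^0(\Omega,\sigma(\phi),\p)\cong\Ld^0(\Omega',\Bc(X),\p')$. The hard part will be the passage from the equivalence-class space $X$ to a pointwise field $\phi'$ whose finite-dimensional laws coincide with those of $\phi$ at \emph{every} point: this is precisely where the joint-measurability hypothesis, through the stochastic continuity it provides, has to be used, and where the naive construction on $\C^{\R^d}$ would break down.
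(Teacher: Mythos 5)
Your proof is correct in substance, but it takes a genuinely different route from the paper. The paper (following Koralov--Sinai type arguments) takes $\Omega'$ to be the set of measurable functions $\R^d\to\C$ with the \emph{cylindrical} $\sigma$-algebra, pushes $\p$ forward under $\omega\mapsto\phi(\cdot,\omega)$, lets $\tau$ act by translation, and sets $\phi'(x,\omega'):=\omega'(x)$; with this choice the matching of finite-dimensional laws and the pointwise definition of $\phi'$ are immediate, and all the work --- which the paper omits --- is in checking that $\tau$ is a measurable action, a point which strictly speaking requires completing the cylindrical $\sigma$-algebra with respect to $\p'$ (your observation that the evaluation-translation map cannot be jointly measurable for the raw cylindrical $\sigma$-algebra is accurate, and is precisely the buried detail). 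Your construction trades these difficulties the other way around: by realizing the field in the Polish space $\Ld^2_\loc(\R^d)$ (after the truncation $g\circ\phi$ to guarantee local square-integrability, undone at the end), joint continuity of translations makes the measurability and invariance of the action essentially free, but you must then recover a pointwise, everywhere-defined field from equivalence classes via Lebesgue points and re-derive the finite-dimensional laws at \emph{every} tuple through stochastic continuity (which, as you note, comes from von Neumann's theorem plus stationarity, and holds for $\phi'$ by the same argument). Both routes are legitimate; yours is more explicit about where joint measurability enters, at the cost of the truncation/Lebesgue-point machinery. Two small points to tighten: the $\limsup$ defining $\phi'_\circ$ should be taken on real and imaginary parts separately (or via some other measurable convention), and the identity $\Bc(X)=\sigma(\phi')$ is not literally true --- only the two inclusions modulo $\p'$-null sets hold (evaluations recover the generating integrals only $\p'$-a.s., and conversely the integrals recover $\phi(x,\cdot)$ only after limits in probability) --- but this is exactly what is needed for the measure-algebra correspondence, which is the natural reading of the paper's final assertion.
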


\begin{proof}
The proof is a variant of e.g.~\cite[Section~16.1]{KS-07}.
Let $\Omega'$ denote the set of measurable functions $\R^d\to\C$, endowed with the cylindrical $\sigma$-algebra $\F'$, and consider the map $H:\Omega\to\Omega':\omega\mapsto \phi(\cdot,\omega)$. This map is measurable and induces a probability measure $\p':=H_*\p$ on the measurable space $(\Omega',\F')$. Next, define $\tau_x:\Omega'\to\Omega'$ by $(\tau_x\omega')(y):=\omega'(y-x)$. As $\phi$ is jointly measurable and stationary, we find that $\tau$ is a measurable action. Finally, we set $\phi'_\circ(\omega'):=\omega'(0)$, with $\tau$-stationary extension $\phi'(x,\omega'):=\omega'(x)$, and the claim follows. We omit the details.
\end{proof}

Henceforth, we focus on the more convenient notion of $\tau$-stationarity in the sense of Definition~\ref{defin:stat2}: we implicitly assume that the reference probability space $(\Omega,\p)$ is endowed with a given measurable action $\tau$ and we assume that the random potential $V$ is $\tau$-stationary.
In the sequel, for abbreviation, $\tau$-stationarity is simply referred to as stationarity, and we abusively use the same notation for $\phi$ and~$\phi_\circ$ (in particular, for $V$ and $V_\circ$).

\begin{rems}\label{rem:stat}
\mbox{}
\begin{enumerate}[(a)]
\item
A standard construction~\cite{PapaVara} allows to view periodic and quasiperiodic functions (as well as almost periodic functions) as instances of stationary random fields (with correlations that do not decay at infinity). In the periodic setting, the probability space $(\Omega,\p)$ is chosen as the torus $\T^d$ endowed with the Lebesgue measure, the action $\tau$ is given by $\tau_{-x}\omega=\omega+x$ on $\T^d$, and we set $\phi(x,\omega)=\phi_\circ(\omega+x)$.
In the quasiperiodic setting, the probability space is chosen as a higher-dimensional torus $\T^M$ with $M>d$, endowed with the Lebesgue measure, the action $\tau$ is given by $\tau_{-x}\omega=\omega+Fx$ on $\T^M$ in terms of the winding matrix $F\in\R^{M\times d}$, and we set $\phi(x,\omega)=\phi_\circ(\omega+Fx)$. In both cases, the construction is viewed as introducing a uniform random shift.
\item
Any $\Z^d$-stationary random potential (that is, satisfying the stationarity assumption for an action of $(\Z^d,+)$ on $\Omega$) can also be seen as a stationary random potential in the above sense up to considering the random ensemble of shifts. Indeed, assume that $\tau':=(\tau_z')_{z\in\Z^d}$ is a measurable action of $(\Z^d,+)$ on a probability space $(\Omega',\p')$, and that $\phi$ is $\tau'$-stationary, that is, $\phi(x+z,\omega)=\phi(x,\tau'_{-z}\omega)$ for all $x\in\R^d$, $z\in\Z^d$, and $\omega\in\Omega'$. Endow $\Omega:=\Omega'\times [0,1)^d$ with the product measure $\p:=\p'\times\Leb$, where $\Leb$ denotes the Lebesgue measure on $[0,1)^d$, and define the action $\tau:=(\tau_x)_{x\in\R^d}$ of $(\R^d,+)$ on $\Omega=\Omega'\times [0,1)^d$ by
\[\tau_x(\omega,y):=\big(\tau'_{\lfloor x\rfloor}\omega~,~ y+x-\lfloor y+x\rfloor\big),\]
where $\lfloor x\rfloor=(\lfloor x_1\rfloor,\ldots,\lfloor x_d\rfloor)$ for $x\in\R^d$ and where $\lfloor a\rfloor$ denotes the largest integer $\le a$ for $a\in\R$.
The map $\psi(x,(\omega,y)):=\phi(x-y,\omega)$ then defines a $\tau$-stationary random field on $\R^d\times\Omega$.
\qedhere
\end{enumerate}
\end{rems}

\subsection{Stationary differential calculus}\label{sec:stat-calc}
A differential calculus is naturally developed on~$\Ld^2(\Omega)$ via the measurable action $\tau$ on $(\Omega,\p)$.
Indeed, while the subspace of stationary random fields in $\Ld^2_\loc(\R^d;\Ld^2(\Omega))$ is identified with the Hilbert space $\Ld^2(\Omega)$, the spatial weak gradient $\nabla$ on locally square integrable functions turns into a densely defined linear operator $\nabla^\st$ on~$\Ld^2(\Omega)$, which is referred to as the {\it stationary gradient}.
Equivalently, $\nabla^\st$ can be viewed as the infinitesimal generator of the group of isometries $\{T_x:\phi_\circ\mapsto \phi_\circ(\tau_{-x}\cdot)\}_{x\in\R^d}$ on~$\Ld^2(\Omega)$.
The adjoint is $(\nabla^\st)^*=-\nabla^\st$ and we denote by $-\triangle^\st=-\nabla^\st\cdot\nabla^\st$ the corresponding {\it stationary Laplacian}.
For all $s\ge0$, we define the (Hilbert) space $H^s(\Omega)$ as the space of all elements $\phi_\circ\in\Ld^2(\Omega)$ for which the stationary extension $\phi$ belongs to $H^s_\loc(\R^d;\Ld^2(\Omega))$, and we denote by $H^{-s}(\Omega)$ the dual of $H^s(\Omega)$. Note that $H^1(\Omega)$ coincides with the domain of~$\nabla^\st$, and that the stationary Laplacian $-\triangle^\st$ is self-adjoint on $H^2(\Omega)$.
We refer e.g. to~\cite[Section~7]{JKO94} for details.

\medskip
As opposed to the case of the periodic Laplacian on the torus, the stationary Laplacian~$-\triangle^\st$ on $\Ld^2(\Omega)$ typically has absolutely continuous spectrum and no spectral gap above $0$, cf.~Section~\ref{sec:spectrum-Hk0}. This entails that Poincaré's inequality does not hold on $H^1(\Omega)$ and that compact embeddings such as Rellich's theorem also fail. This lack of compactness is related to the fact that the gradient $\nabla^\st$ only contains information on a finite set of directions while $\Omega$ is typically an infinite product space.

\subsection{Stationary Floquet transform}\label{sec:stat-BF}
The usual periodic Floquet transform, e.g.~\cite{Kuchment-16}, is a reformulation of Fourier series: given a function $u\in\Ld^2(\R^d)$, its Floquet transform is (formally) defined by
\[\mathcal V^\circ_\per u(k,x)\,:=\,\sum_{n\in\Z^d}e^{-ik\cdot(x+n)}u(x+n),\]
which is periodic in $x$, so that the Fourier inversion formula takes the form
\[u(x)\,=\,\int_{2\pi\T^d}e^{ik\cdot x}\,\mathcal V^\circ_\per u(k,x)\,\dbar k,\]
thus leading to the following direct integral decomposition, e.g.~\cite[p.280]{Reed-Simon-78},
\[\Ld^2(\R^d)\,=\,\int_{2\pi\T^d}^\oplus\Ld^2(\T^d)\,\mathfrak e_k\,\dbar k,\qquad\mathfrak e_k(x):=e^{ik\cdot x}.\]
This decomposition allows for a simple adaptation to the augmented space $\Ld^2(\R^d\times\T^d)$:
given $u\in\Ld^2(\R^d\times\T^d)$, its Floquet transform is defined by
\[\mathcal V_\per u(k,q)\,:=\,\int_{\R^d}e^{-ik\cdot y}u(y,q-y)\,dy,\]
which is periodic in $q$, so that the Fourier inversion formula takes the form
\[u(x,q)\,=\,\int_{\R^d}e^{ik\cdot x}\,\mathcal V_\per u(k,x+q)\,\dbar k,\]
and leads to the direct integral decomposition
\[\Ld^2(\R^d\times\T^d)=\int_{\R^d}^\oplus\Ld^2(\T^d)\,\mathfrak e_k\,\dbar k.\]
We may now mimick this construction in the general stationary random setting: given $u\in\Ld^2(\R^d\times\Omega)$, its stationary Floquet transform is defined by
\[\mathcal V_\st u(k,\omega)\,:=\,\int_{\R^d}e^{-ik\cdot y}u(y,\tau_{y}\omega)\,dy,\]
so that the Fourier inversion formula takes the form
\[u(x,\omega)\,=\,\int_{\R^d}e^{ik\cdot x}\,\mathcal V_\st u(k,\tau_{-x}\omega)\,\dbar k,\]
and leads to the direct integral decomposition
\begin{equation}\label{eq:fibration-re}
\Ld^2(\R^d\times\Omega)=\int_{\R^d}^\oplus\Ld^2(\Omega)\,\mathfrak e_k\,\dbar k.
\end{equation}
This stationary Floquet transform was first introduced in~\cite[Section~3.2]{Schenker-Kang}; see also~\cite{BG-16,DGS1}. Some key properties are collected in the following.

\begin{samepage}
\begin{lem}\label{lem:floquet}
The {stationary Floquet transform} $\mathcal V_\st$ is a unitary operator on $\Ld^2(\R^d\times\Omega)$, and satisfies
\begin{enumerate}[(i)]
\item $\mathcal V_\st g=\widehat g$ for all $g\in\Ld^2(\R^d)\subset\Ld^2(\R^d\times\Omega)$;
\item $\mathcal V_\st(\phi u)=\phi_\circ\mathcal V_\st u$ for all $\phi_\circ\in\Ld^2(\Omega)$ and $u\in\Ld^2(\R^d\times\Omega)$ with $\phi u\in \Ld^2(\R^d\times\Omega)$.\qedhere
\end{enumerate}
\end{lem}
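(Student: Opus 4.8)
The plan is to exhibit $\mathcal V_\st$ as a composition of two manifestly unitary maps, after which (i) and (ii) follow by direct substitution. I would introduce the \emph{straightening operator} $S$ on $\Ld^2(\R^d\times\Omega)$ given by $(Su)(x,\omega):=u(x,\tau_x\omega)$, that is, composition with the point map $\Phi:\R^d\times\Omega\to\R^d\times\Omega$, $\Phi(x,\omega):=(x,\tau_x\omega)$. By definition of the transform, $\mathcal V_\st u(k,\omega)=\int_{\R^d}e^{-ik\cdot y}(Su)(y,\omega)\,dy$, so that
\[\mathcal V_\st\,=\,(\F\otimes\Id_{\Ld^2(\Omega)})\circ S,\]
where $\F$ denotes the Fourier transform acting on the $x$-variable alone.

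First I would check that $S$ is unitary. The map $\Phi$ is a measurable bijection with inverse $(x,\omega)\mapsto(x,\tau_{-x}\omega)$: joint measurability of both is precisely the third bullet of Definition~\ref{defin:stat2}, and $\tau_x\tau_{-x}=\tau_0=\Id$ follows from the group law together with the fact that each $\tau_x$, being measure-preserving, is a.e.\ invertible. To see that $\Phi$ preserves the product measure $dx\otimes\p$, it suffices to test on product sets: $\Phi^{-1}(A\times B)=\{(x,\omega):x\in A,\ \omega\in\tau_{-x}B\}$, whose measure is $\int_A\p(\tau_{-x}B)\,dx=|A|\,\p(B)$ by the translation-invariance of $\p$, and a monotone-class argument upgrades this to all measurable subsets of $\R^d\times\Omega$. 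Hence composition with $\Phi$, and therefore $S$, is unitary, with $S^{-1}$ the composition with $\Phi^{-1}$. Since $\F\otimes\Id$ is unitary from $\Ld^2(\R^d\times\Omega;dx\otimes\p)$ onto $\Ld^2(\R^d\times\Omega;\dbar k\otimes\p)$ by Plancherel (applied for a.e.\ $\omega$ and Fubini), the normalization $\dbar k=(2\pi)^{-d}dk$ being chosen exactly so that $\F$ is an isometry, the composition $\mathcal V_\st$ is unitary, with inverse $S^{-1}\circ(\F\otimes\Id)^{-1}$; unwinding this composition reproduces the stated inversion formula $u(x,\omega)=\int_{\R^d}e^{ik\cdot x}\,\mathcal V_\st u(k,\tau_{-x}\omega)\,\dbar k$.

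Properties (i) and (ii) are then one-line verifications. For $g\in\Ld^2(\R^d)$ regarded as $\omega$-independent one has $Sg=g$, whence $\mathcal V_\st g=\F g=\widehat g$. For $\phi$ the stationary extension of $\phi_\circ$, i.e.\ $\phi(x,\omega)=\phi_\circ(\tau_{-x}\omega)$, and any $u$ with $\phi u\in\Ld^2(\R^d\times\Omega)$, one computes $(S(\phi u))(x,\omega)=\phi_\circ(\tau_{-x}\tau_x\omega)\,(Su)(x,\omega)=\phi_\circ(\omega)\,(Su)(x,\omega)$; since $\phi_\circ(\omega)$ does not depend on $x$ it commutes with $\F\otimes\Id$, giving $\mathcal V_\st(\phi u)=\phi_\circ\,\mathcal V_\st u$.

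The only step carrying any content is the unitarity of $S$, i.e.\ the joint measurability and measure-preservation of $\Phi$; but this is built into the notion of a measurable action, so the argument reduces to the monotone-class computation above. As an alternative bypassing $\Phi$, one can argue directly on the dense subspace spanned by products $g\otimes\psi$ with $g\in C^\infty_c(\R^d)$ and $\psi\in\Ld^\infty(\Omega)$: expanding $\|\mathcal V_\st u\|_{\Ld^2}^2$, carrying out the $k$-integration first to produce a factor $\delta(y-y')$, and then using $\int_\Omega|\psi(\tau_y\omega)|^2\,d\p(\omega)=\|\psi\|_{\Ld^2(\Omega)}^2$ yields $\|\mathcal V_\st u\|_{\Ld^2}=\|u\|_{\Ld^2}$; one checks in the same way that $\mathcal W v(x,\omega):=\int_{\R^d}e^{ik\cdot x}v(k,\tau_{-x}\omega)\,\dbar k$ is an isometry and that $\mathcal W\mathcal V_\st=\Id=\mathcal V_\st\mathcal W$ on this dense set, and then extends by density.
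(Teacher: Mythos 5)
Your proof is correct. The paper omits the proof of this lemma entirely (it is stated as an immediate consequence of the construction and of the inversion formula displayed just above it), and your factorization $\mathcal V_\st=(\F\otimes\Id)\circ S$ with $S$ the "straightening" map $u\mapsto u(\cdot,\tau_\cdot\,\cdot)$ is exactly the implicit structure of that construction: the only point of substance, the unitarity of $S$ via the measure-preservation and joint measurability of $(x,\omega)\mapsto(x,\tau_x\omega)$, is handled correctly, and (i)--(ii) then follow by the direct substitutions you give.
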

\end{samepage}

\subsection{Stationary Floquet--Bloch fibration}\label{sec:FB-fibr}
In view of~\eqref{eq:fibration-re},
the stationary Floquet transform $\mathcal V_\st$ decomposes differential operators with stationary random coefficients (such as the Schr\"odinger operator $H_\lambda$) into a direct integral of ``elementary'' fibered operators on the stationary space~$\Ld^2(\Omega)$.
First, the Laplacian $-\triangle$ on $\Ld^2(\R^d\times\Omega)$ is self-adjoint on $H^2(\R^d;\Ld^2(\Omega))$ and is mapped by $\mathcal V_\st$ on
\begin{equation}\label{eq:lap-k}
\mathcal V_\st[(-\triangle)u](k,\omega)\,=\,-(\nabla^\st+ik)\cdot(\nabla^\st+ik)\mathcal V_\st u(k,\omega)
\,=\,(H^\st_{k,0}+|k|^2)\mathcal V_\st u(k,\omega),
\end{equation}
in terms of the (centered) fibered Laplacian
\[H^\st_{k,0}\,:=\,-(\nabla^\st+ik)\cdot(\nabla^\st+ik)-|k|^2=-\triangle^\st-2ik\cdot\nabla^\st,\]
As the stationary Laplacian $-\triangle^\st$ is self-adjoint on $H^2(\Omega)$ and as $-2ik\cdot \nabla^\st$ is an infinitesimal perturbation, the Kato-Rellich theorem ensures that this fibered Laplacian $H_{k,0}^\st$ is also self-adjoint on $H^2(\Omega)$, and the centering ensures that constant functions belong to its kernel.
Next, if the stationary random potential~$V$ is uniformly bounded (the unbounded case is postponed to Appendix~\ref{sec:self-adj}), it defines a bounded multiplication operator on $\Ld^2(\R^d\times\Omega)$ and the corresponding Schr\"{o}dinger operator $H_{\lambda}=-\triangle+\lambda V$ is thus self-adjoint on $H^2(\R^d;\Ld^2(\Omega))$.
Combining~\eqref{eq:lap-k} with Lemma~\ref{lem:floquet}(ii), we find
\begin{equation}\label{e.dir-decomp}
\mathcal V_\st[H_\lambda f](k,\omega)=(H^\st_{k,\lambda}+|k|^2)\mathcal V_\st f(k,\omega),
\end{equation}
in terms of the (centered) fibered Schrödinger operator
\[H^\st_{k,\lambda}\,:=\,H_{k,0}^\st+\lambda V,\]
which is self-adjoint on $H^2(\Omega)$.
Using direct integral representation, e.g.~\cite[p.280]{Reed-Simon-78}, we may reformulate the above as
\begin{equation}\label{eq:fibration-re+}
\big(H_\lambda,\Ld^2(\R^d\times\Omega)\big)\,=\,\int_{\R^d}^{\oplus} \big(H_{k,\lambda}^\st+|k|^2,\Ld^2(\Omega)\big)\,\mathfrak e_k\,\dbar k.
\end{equation}
This decomposition of the Schr\"{o}dinger operator yields a stationary version of the so-called Bloch wave decomposition of the Schr\"{o}dinger flow: given a deterministic initial condition $u^\circ\in \Ld^2(\R^d)\subset\Ld^2(\R^d\times\Omega)$, appealing to~\eqref{e.dir-decomp} and to Lemma~\ref{lem:floquet}(i),
\begin{eqnarray*}
u_\lambda^t(x,\omega)\,=\,\big(e^{-itH_\lambda}u^\circ\big)(x,\omega)
&=&\int_{\R^d}e^{ik\cdot x}\,\mathcal V_\st\big[e^{-itH_\lambda}u^\circ\big](k,\tau_{-x}\omega)\,\dbar k\\
&=&\int_{\R^d}e^{ik\cdot x-it|k|^2}\,\big(e^{-itH_{k,\lambda}^\st}\mathcal V_\st u^\circ(k,\cdot)\big)(\tau_{-x}\omega)\,\dbar k\\
&=&\int_{\R^d}\widehat u^\circ(k)\,e^{ik\cdot x-it|k|^2}\,\big(e^{-itH_{k,\lambda}^\st}1\big)(\tau_{-x}\omega)\,\dbar k,
\end{eqnarray*}
that is,~\eqref{eq:flow-decomp0}. Alternatively,
in terms of the $\Ld^2(\Omega)$-valued spectral measure $\mu_{k,\lambda}^{1}$ of $H^\st_{k,\lambda}$ associated with the constant function $1$,
\begin{eqnarray*}
u_\lambda^t(x,\omega)
&=&\int_{\R^d}\widehat u^\circ(k)\int_\R e^{ik\cdot x-it(|k|^2+\kappa)}\,d\mu_{k,\lambda}^1(\kappa)(\tau_{-x}\omega)\,\dbar k.
\end{eqnarray*}
For vanishing disorder $\lambda=0$ the spectral measures take the form $d\mu_{k,0}^{1}=d\delta_0$ and we recover the Fourier diagonalization of the free Schrödinger flow, cf.~\eqref{eq:free-flow}, while for $\lambda>0$ each Fourier mode $e^{ik\cdot x}$ is deformed into a ``Bloch measure'' $e^{ik\cdot x}d\mu_{k,\lambda}^{1}$.
In the periodic setting the measure $\mu_{k,\lambda}^{1}$ is known to be discrete, leading to the deformation of the plane wave $e^{ik\cdot x}$ into a superposition of so-called Bloch waves, cf.~\cite{Kuchment-93,Asch-Knauf-98}. The picture is very different in the random setting as $\mu_{k,\lambda}^{1}$ is rather expected to be absolutely continuous.

%%%%%%%%%%%%%%%%%%%%%
%%%%%%%%%%%%%%%%%%%%%
%%%%%%%%%%%%%%%%%%%%%

\medskip
\section{Basic spectral theory of fibered operators}\label{sec:spectrum}
This section is devoted to the proof of Propositions~\ref{th:main0}, \ref{th:pert-G-sp}, and~\ref{prop:Fermi}.
We consider general (non-Gaussian) stationary random potentials~$V$ and discuss the fine dependence on the probabilistic structure. Note that our results could also be adapted to the random perturbation of a periodic Schrödinger operator,
in which case fibered operators take the form~\mbox{$-\triangle^\st_k+V_\per+\lambda V$},
where the periodic potential~$V_\per$ models the underlying crystalline structure.

\subsection{Unperturbed fibered operators}\label{sec:spectrum-Hk0}
We give a full account of the spectral properties of the unperturbed operators $\{H_{k,0}^\st\}_k$ on $\Ld^2(\Omega)$.
We start with some general definitions.
For $\phi\in\Ld^2(\Omega)$, we denote its covariance function by $\Cc^\phi(x):=\expecm{\bar\phi(\cdot)\,\phi(\tau_{-x}\cdot)}$, which belongs to $\Ld^\infty(\R^d)$ and is positive definite. By Bochner's theorem, the distributional Fourier transform $\widehat \Cc^\phi$ is then a nonnegative finite measure on $\R^d$ with total mass $\widehat \Cc^\phi(\R^d)=(2\pi)^d\|\phi\|_{\Ld^2(\Omega)}^2$, and is called the {spectral measure of $\phi$}. The set of all such spectral measures will play an important role in this section, so that we give it a name and notation.

\begin{defin}
The \emph{spectrum of the probability space} $(\Omega,\p)$ endowed with a given stationarity structure is defined as the subset
\[\widehat\Omega\,:=\,\big\{(2\pi)^{-d}\widehat \Cc^\phi:\phi\in \Ld^2(\Omega),\,\|\phi\|_{\Ld^2(\Omega)}=1\big\}\,\subset\,\Pc(\R^d).\qedhere\]
\end{defin}

We show that the spectrum of the unperturbed operators $\{H_{k,0}^\st\}_k$ can be completely characterized in terms of properties of $\widehat\Omega$.

\begin{lem}\label{th:spectrum-gen}
Let $V$ be a stationary random field and assume that the underlying probability space $(\Omega,\p)$ is endowed with the $\sigma$-algebra generated by $V$.
For $\phi\in\Ld^2(\Omega)$ and $k\in\R^d$ we denote by $\nu^\phi_k$ the probability measure on $\R^+$ defined by
\begin{align}\label{eq:def-nuk}
\nu^\phi_k([0,t])\,:=\,(2\pi)^{-d}\widehat \Cc^\phi\big(\overline{B_t(-k)}\big), \qquad t\ge0,
\end{align}
and we consider its Lebesgue decomposition
\[\nu^\phi_k=\nu^{\phi}_{k;\pp}+\nu^{\phi}_{k;\sg}+\nu^{\phi}_{k;\ac}\]
into pure point, singularly continuous, and absolutely continuous parts.
Then,
\begin{enumerate}[(i)]
\item The spectrum $\sigma(H_{k,0}^\st)$ of the operator $H_{k,0}^\st$ on $\Ld^2(\Omega)$ is included in $[-|k|^2,\infty)$ and there is an eigenvalue at $0$.
\item For $\ast=\pp,\sg,\ac$, there holds
\begin{gather*}
\sigma_{\ast}(H_{k,0}^\st)=\adh\bigg(\bigcup_{\widehat \Cc^\phi\in\widehat\Omega}h_k\big(\supp\nu^{\phi}_{k;\ast}\big)\bigg),\qquad h_k(t):=t^2-|k|^2.
\end{gather*}
\item The density of the absolutely continuous part of the spectral measure of $H_{k,0}^\st$ associated with $\phi$ takes the form
\[\frac{d\mu^{\phi,\phi}_{k,0;\ac}}{d\lambda}(\lambda)\,=\,\frac{\mathds1_{[-|k|^2,\infty)}(\lambda)}{2\sqrt{\lambda+|k|^2}}\,\frac{d\nu^{\phi}_{k;\ac}}{dt}\big(\sqrt{\lambda+|k|^2}\big).
\qedhere\]
\end{enumerate}
\end{lem}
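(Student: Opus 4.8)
The plan is to compute the $\Ld^2(\Omega)$-valued spectral measure of $H_{k,0}^\st$ explicitly by pushing forward spectral measures of the stationary gradient. The starting observation is that the group of isometries $\{T_x\}_{x\in\R^d}$ on $\Ld^2(\Omega)$ has infinitesimal generator $\nabla^\st$ (so $\nabla^\st = i\,(\text{self-adjoint momentum operator})$, say $\nabla^\st = iD$ with $D$ a commuting tuple of self-adjoint operators), and that for $\phi\in\Ld^2(\Omega)$ the joint spectral measure of $D$ associated with $\phi$ is exactly the measure $(2\pi)^{-d}\widehat\Cc^\phi$ on $\R^d$. This is just Bochner/Stone: $\langle\phi, T_x\phi\rangle_{\Ld^2(\Omega)} = \Cc^\phi(-x) = \int_{\R^d} e^{-ix\cdot\xi}\,d\big((2\pi)^{-d}\widehat\Cc^\phi\big)(\xi)$, matching $\langle\phi, e^{ix\cdot D}\phi\rangle$ against the spectral theorem. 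First I would make this identification precise, so that for any bounded Borel $F$ on $\R^d$ one has $\langle\phi, F(D)\phi\rangle = (2\pi)^{-d}\int F\,d\widehat\Cc^\phi$.

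Next I would observe that $H_{k,0}^\st = -\triangle^\st - 2ik\cdot\nabla^\st = |D|^2 + 2k\cdot D = |D+k|^2 - |k|^2$, so $H_{k,0}^\st = g_k(D)$ with $g_k(\xi) := |\xi+k|^2 - |k|^2$. By the spectral mapping theorem for the joint measure, the scalar spectral measure $\mu^{\phi,\phi}_{k,0}$ of $H_{k,0}^\st$ at $\phi$ is the pushforward $(g_k)_*\big((2\pi)^{-d}\widehat\Cc^\phi\big)$. Now $g_k(\xi)$ depends on $\xi$ only through $|\xi+k| = $ distance from $\xi$ to $-k$: writing $r = |\xi+k|$, we have $g_k = r^2 - |k|^2 = h_k(r)$ in the notation of the lemma, and the radial distribution of $\widehat\Cc^\phi$ about the center $-k$ is precisely the measure $\nu^\phi_k$ defined in~\eqref{eq:def-nuk}, since $\nu^\phi_k([0,t]) = (2\pi)^{-d}\widehat\Cc^\phi(\overline{B_t(-k)})$ is its cumulative distribution function. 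Hence $\mu^{\phi,\phi}_{k,0} = (h_k)_*\nu^\phi_k$, where $h_k:[0,\infty)\to[-|k|^2,\infty)$ is a strictly increasing $C^\infty$ diffeomorphism onto its image. From this single identity everything follows:

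\begin{enumerate}[(i)]
\item Since $\supp\mu^{\phi,\phi}_{k,0}\subset h_k([0,\infty)) = [-|k|^2,\infty)$ for every $\phi$, and $\sigma(H_{k,0}^\st) = \adh\big(\bigcup_\phi \supp\mu^{\phi,\phi}_{k,0}\big)$ (the union over a total family of vectors, e.g.\ over all $\phi\in\Pc(\Omega)$, which is dense), we get $\sigma(H_{k,0}^\st)\subset[-|k|^2,\infty)$; and the constant $\phi=1$ has $\Cc^1\equiv1$, so $\widehat\Cc^1 = (2\pi)^d\delta_0$, $\nu^1_k = \delta_0$, $\mu^{1,1}_{k,0} = \delta_{h_k(0)} = \delta_0$, giving the eigenvalue at $0$ (indeed $H_{k,0}^\st 1 = 0$ directly).
\item Because $h_k$ is a diffeomorphism, pushforward commutes with the Lebesgue decomposition: $(h_k)_*\nu^\phi_{k;\pp}$, $(h_k)_*\nu^\phi_{k;\sg}$, $(h_k)_*\nu^\phi_{k;\ac}$ are exactly the pure point, singular continuous, absolutely continuous parts of $\mu^{\phi,\phi}_{k,0}$ (a diffeomorphism maps Lebesgue-null sets to Lebesgue-null sets and preserves countability of atoms). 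Therefore $\sigma_\ast(H_{k,0}^\st) = \adh\big(\bigcup_\phi \supp\big((h_k)_*\nu^\phi_{k;\ast}\big)\big) = \adh\big(\bigcup_\phi h_k(\supp\nu^\phi_{k;\ast})\big)$, using that $\supp$ commutes with continuous pushforward and with $\phi$ ranging over the total family; here one invokes the standard characterization of the components $\sigma_\ast$ of the spectrum as the closed union of supports of the corresponding parts of the scalar spectral measures over a dense set of vectors (one must check the dense family $\Pc(\Omega)$, or $\Ld^2(\Omega)$ itself with $\widehat\Omega$, is rich enough—this is where the hypothesis that the $\sigma$-algebra is generated by $V$, equivalently that $\widehat\Omega$ is the full set of spectral measures, is used).
\item The density formula is the change-of-variables formula for the diffeomorphism $h_k$: with $\lambda = h_k(t) = t^2 - |k|^2$, i.e.\ $t = \sqrt{\lambda + |k|^2}$ and $dt/d\lambda = 1/(2\sqrt{\lambda+|k|^2})$ on $(-|k|^2,\infty)$, the a.c.\ part transforms as $\frac{d\mu^{\phi,\phi}_{k,0;\ac}}{d\lambda}(\lambda) = \frac{d\nu^\phi_{k;\ac}}{dt}(t)\,\big|\tfrac{dt}{d\lambda}\big|$, which is the stated expression times $\mathds1_{[-|k|^2,\infty)}$.
\end{enumerate}

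\textbf{Main obstacle.} The conceptual content is entirely in the identity $\mu^{\phi,\phi}_{k,0} = (h_k)_*\nu^\phi_k$, and the only genuinely delicate point is the operator-theoretic bookkeeping behind it: making rigorous that $\nabla^\st$ generates $\{T_x\}$, that the joint spectral measure of the commuting self-adjoint tuple $-i\nabla^\st$ at $\phi$ really is $(2\pi)^{-d}\widehat\Cc^\phi$ (this needs $\phi\in\Ld^2(\Omega)$ and Bochner, but also care that $H^2(\Omega)$ is the right domain and that $g_k(-i\nabla^\st)$ defined by functional calculus agrees with the differential operator $-\triangle^\st - 2ik\cdot\nabla^\st$ on that domain—this is essentially the Kato–Rellich remark already made in Section~\ref{sec:FB-fibr}), and that the passage from scalar spectral measures to the decomposition of $\sigma(H_{k,0}^\st)$ is legitimate over the family $\widehat\Omega$ (equivalently, that $V$-measurability makes $\{e^{ix\cdot k}\}$-type vectors, or $\Pc(\Omega)$, a core/total set). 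None of these is hard in isolation, but they must be assembled carefully; the change of variables and Lebesgue-decomposition steps are then routine.
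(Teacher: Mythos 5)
Your proposal is correct and follows essentially the same route as the paper: the paper's proof also reduces everything to the single pushforward identity $\int_\R g\,d\mu_{k,0}^{\phi,\phi}=\int_{\R^+}g(t^2-|k|^2)\,d\nu_k^\phi(t)$, established by reading off the Fourier symbol $y\mapsto|y+k|^2-|k|^2$ and performing the radial change of variables about $-k$, with (i)--(iii) then following exactly as you describe. (The only cosmetic caveat is that $h_k$ is not a diffeomorphism at $t=0$; one should instead note that both $h_k$ and $h_k^{-1}$ are absolutely continuous, hence have the Luzin N property, so the Lebesgue decomposition is still preserved.)
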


\begin{proof}
First note that the Fourier symbol of $H_{k,0}^\st$ is given by $y\mapsto |y+k|^2-|k|^2\ge -|k|^2$, which easily implies that the operator $H_{k,0}^\st-E$ has bounded inverse on $\Ld^2(\Omega)$ for all $E<-|k|^2$. The spectrum of $H_{k,0}^\st$ is therefore included in $[-|k|^2,\infty)$, which already proves item~(i).
We now wish to determine the different types of spectrum. For that purpose it suffices to proceed to the Lebesgue decomposition of the spectral measure $\mu_{k,0}^{\phi,\phi}$ of $H_{k,0}^\st$ associated with any $\phi\in\Ld^2(\Omega)$.
We claim that this spectral measure is explicitly given by the following formula, for all $g\in C_b(\R)$,
\begin{align}\label{eq:form-mufk}
\int_\R g\,d\mu_{k,0}^{\phi,\phi} \,=\,\int_{\R^+}g(t^2-|k|^2)\,d\nu_k^\phi(t),
\end{align}
where $\nu_k^\phi$ is defined in the statement.
The conclusion directly follows from this claim since it yields for $\ast=\pp,\sg,\ac$,
\begin{align*}
\int_\R g\,d\mu^{\phi,\phi}_{k,0;\ast} \,=\,\int_\R g(t^2-|k|^2)\,d\nu^{\phi}_{k;\ast}(t),
\end{align*}
where we denote by $\mu_{k,0}^{\phi,\phi}=\mu^{\phi,\phi}_{k,0;\pp}+\mu^{\phi,\phi}_{k,0;\sg}+\mu^{\phi,\phi}_{k,0;\ac}$ the Lebesgue decomposition of $\mu_{k,0}^{\phi,\phi}$, and similarly for $\nu_k^\phi$.

\medskip\noindent
It remains to argue in favor of~\eqref{eq:form-mufk}. By density, it is enough to prove it for $g\in C_b^1(\R)$. Since the Fourier symbol of $H_{k,0}^\st$ is given by $y\mapsto|y+k|^2-|k|^2$, we compute in Fourier space,
\[\int_\R g\,d\mu_{k,0}^{\phi,\phi}=\expec{\bar\phi\, g(H_{k,0}^\st)\,\phi}=(2\pi)^{-d}\int_{\R^d} g(|y+k|^2-|k|^2)\,d\widehat \Cc^\phi(y),\]
and a radial change of variables then yields
\begin{eqnarray*}
\int_\R g\,d\mu_{k,0}^{\phi,\phi}&=&(2\pi)^{-d}\lim_{\e\downarrow0}\frac1{2\e}\int_{-\e}^\infty\int_{\R^d} \mathds1_{|y+k|-\e\le t<|y+k|+\e}\,g(|y+k|^2-|k|^2)\,d\widehat \Cc^\phi(y)\,dt\\
&=&(2\pi)^{-d}\lim_{\e\downarrow0}\frac1{2\e}\int_{0}^\infty g(t^2-|k|^2)\int_{\R^d} \mathds1_{|y+k|-\e\le t<|y+k|+\e}\,d\widehat \Cc^\phi(y)\,dt\\
&=&\lim_{\e\downarrow0}\int_{0}^\infty g(t^2-|k|^2)\,\frac{\nu_k^\phi((t-\e,t+\e])}{2\e}\,dt\\
&=&\int_{0}^\infty g(t^2-|k|^2)\,d\nu_k^\phi(t),
\end{eqnarray*}
that is, \eqref{eq:form-mufk}.
\end{proof}

In particular, the above result implies that the spectrum $\sigma(H_{k,0}^\st)$ can be of any type: for any measure $\mu\in\Pc([-|k|^2,\infty))$ with nontrivial pure point, singularly continuous, and absolutely continuous parts, we can construct a stationary Gaussian field $V$ such that the spectral measure $\mu_{k,0}^{V,V}$ coincides with $\mu$, which entails that the corresponding spectrum of $H_{k,0}^\st$ admits nontrivial pure point, singularly continuous, and absolutely continuous parts. Moreover, the eigenvalue at $0$ does not need to be simple in general.

\medskip
In most cases of interest, the picture is however much neater: the spectrum of the fibered operator $H_{k,0}^\st$ coincides with the whole interval $[-|k|^2,\infty)$ and is made of a simple eigenvalue at $0$ embedded in absolutely continuous spectrum. This is proven to hold below either under strong structural assumptions (e.g.\@ Gaussian structure) or under strong mixing assumptions (e.g.\@ exponential decay of correlations, or integrable $\alpha$-mixing).
We first recall some terminology:
For any diameter $D>0$ and distance $R>0$, we set
\begin{multline}\label{eq:def-alphamix}
\tilde\alpha(R,D;V):=\sup\Big\{\alpha\big(\sigma(\{V(x,\cdot)\}_{x\in S_1}),\sigma(\{V(x,\cdot)\}_{x\in S_2})\big):\\S_1,S_2\in\Bc(\R^d),\,\operatorname{dist}(S_1,S_2)\ge R,\,\diam(S_1),\diam(S_2)\le D\Big\},
\end{multline}
where Rosenblatt's $\alpha$-mixing coefficient is defined for any two sub-$\sigma$-algebras $\A_1,\A_2$ as
\[\alpha(\A_1,\A_2)\,:=\,\sup\Big\{\big|\pr{G_1\cap G_2}-\pr{G_1}\pr{G_2}\!\big|:G_1\in\A_1,\,G_2\in\A_2\Big\}.\]
The random field $V$ is said to be $\alpha$-mixing if for any $D<\infty$ there holds $\tilde\alpha(R,D;V)\to0$ as $R\uparrow\infty$.
We may now state the following criterion, which in particular implies Proposition~\ref{th:main0} when restricted to the Gaussian setting.

\begin{prop}\label{th:spectrum-neat}
Let $V$ be a (nonzero) stationary random field with covariance $\Cc$, and let the probability space $(\Omega,\p)$ be endowed with the $\sigma$-algebra generated by~$V$.
\begin{enumerate}[(i)]
\item Assume that one of the following two conditions holds,
\begin{enumerate}[\quad(i)]
\item[\emph{(C$_1$)}] $V$ is Gaussian and $\Cc$ is not periodic in any direction;
\item[\emph{(C$_2$)}] $\Cc$ has exponential decay, that is, $|\Cc(x)|\le Ce^{-\frac1C|x|}$ for all $x$.
\end{enumerate}
Then the spectrum $\sigma(H_{k,0}^\st)$ coincides with $[-|k|^2,\infty)$.
\item Assume that one of the following two conditions holds,
\begin{enumerate}[\quad(i)]
\item[\emph{(C$_3$)}] $V$ is Gaussian and the (nonnegative measure) Fourier transform $\widehat\Cc$ is absolutely continuous (this is the case for instance if $\Cc$ is integrable);
\item[\emph{(C$_4$)}]
$V$ is $\alpha$-mixing and satisfies $\int_{\R^d}\,\tilde\alpha(|x|,D;V)\,dx\,<\,\infty$ for all $D<\infty$.
\end{enumerate}
Then the eigenvalue at $0$ is simple (with eigenspace $\C$) and
\[\qquad\sigma_{\pp}(H_{k,0}^\st)\,=\,\{0\},\quad\sigma_{\sg}(H_{k,0}^\st)=\varnothing,\quad\sigma(H_{k,0}^\st)\,=\,\sigma_{\ac}(H_{k,0}^\st)\,=\,[-|k|^2,\infty).\qedhere\]
\end{enumerate}
\end{prop}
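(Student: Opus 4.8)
The plan is to read the entire statement off Lemma~\ref{th:spectrum-gen}(ii)--(iii), which reduces it to two purely spectral questions about the family $\{\widehat\Cc^\phi:\phi\in\Ld^2(\Omega)\}$, i.e. about the spectrum $\widehat\Omega$ of the probability space. First, $\sigma(H_{k,0}^\st)=[-|k|^2,\infty)$ as soon as $\bigcup_\phi\supp\widehat\Cc^\phi$ is dense in $\R^d$, since $\nu_k^\phi$ is the pushforward of $(2\pi)^{-d}\widehat\Cc^\phi$ under $y\mapsto|y+k|$ and $t\mapsto t^2-|k|^2$ maps $[0,\infty)$ onto $[-|k|^2,\infty)$. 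Second, writing $\phi=c\cdot 1+\phi^\perp$ with $\phi^\perp\perp 1$, the cross term in the covariance vanishes ($\expec{\phi^\perp}=0$), so $\widehat\Cc^\phi=|c|^2(2\pi)^d\delta_0+\widehat\Cc^{\phi^\perp}$; hence, once one knows that $\widehat\Cc^{\phi^\perp}$ is purely absolutely continuous for every $\phi^\perp\perp 1$, one gets: $\widehat\Cc^\phi$ charges a sphere $\partial B_t(-k)$ only when $t=|k|$ and $c\ne0$, so Lemma~\ref{th:spectrum-gen}(ii) gives $\sigma_\pp(H_{k,0}^\st)=\{0\}$; any $\psi$ with $H_{k,0}^\st\psi=0$ and $\psi\perp1$ would have $\widehat\Cc^\psi$ supported on the Lebesgue-null set $\partial B_{|k|}(-k)$, hence $\psi=0$, so $0$ is simple with eigenspace $\C$; a polar-coordinates computation shows $\nu_k^{\phi^\perp}$ is absolutely continuous, so $\nu_k^\phi$ has no singularly continuous part and $\sigma_\sg(H_{k,0}^\st)=\varnothing$; and $\sigma_\ac(H_{k,0}^\st)=[-|k|^2,\infty)$ then follows from $\sigma=\sigma_\pp\cup\sigma_\sg\cup\sigma_\ac$ together with $\sigma(H_{k,0}^\st)=[-|k|^2,\infty)$. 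So it remains to prove, under the respective hypotheses, \emph{(a)} density of $\bigcup_\phi\supp\widehat\Cc^\phi$ in $\R^d$, and \emph{(b)} absolute continuity of $\widehat\Cc^{\phi^\perp}$ for $\phi^\perp\perp1$.

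Under the Gaussian hypotheses (C$_1$) and (C$_3$) I would work with the Wiener chaos decomposition $\Ld^2(\Omega)=\bigoplus_{n\ge0}\Hc_n$, which is preserved by the translation group $\{T_x\}$ since the $\tau_x$ act unitarily on the Gaussian structure. Via the spectral representation of the stationary Gaussian field, elements of the first chaos have spectral measures $|g|^2\widehat\Cc$ with $g\in\Ld^2(\widehat\Cc)$; expanding products of first-chaos elements by Wick's formula, elements of $\Hc_n$ have spectral measures supported in the $n$-fold sumset $\Sigma+\dots+\Sigma$ of $\Sigma:=\supp\widehat\Cc$, while choosing products whose $g$'s are concentrated near prescribed $\zeta_1,\dots,\zeta_n\in\Sigma$ with pairwise disjoint supports (so that all non-diagonal Wick pairings drop out) realizes frequencies near $\zeta_1+\dots+\zeta_n$. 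Hence $\adh(\bigcup_\phi\supp\widehat\Cc^\phi)$ is exactly the closed subgroup of $\R^d$ generated by $\Sigma$ (using $\Sigma=-\Sigma$), and the classification of closed subgroups of $\R^d$ shows this subgroup is proper iff $\Sigma$ lies in a family of equally spaced parallel hyperplanes, i.e. iff $\Cc$ is periodic in some direction; this gives \emph{(a)} under (C$_1$). For \emph{(b)} under (C$_3$): when $\widehat\Cc$ is absolutely continuous, first-chaos spectral measures are absolutely continuous, hence so are their convolutions, so $\Hc_n$ lies in the (closed) absolutely continuous subspace of $\{T_x\}$ for every $n\ge1$; since $\bigoplus_{n\ge1}\Hc_n=\Ld^2(\Omega)\ominus\C$, every $\phi^\perp\perp1$ lies in that subspace. (Note (C$_3$)$\Rightarrow$(C$_1$), so \emph{(a)} is available too.)

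Under (C$_2$), exponential decay of $\Cc$ makes $\widehat\Cc$ the restriction of a function holomorphic in a strip, hence a nonnegative real-analytic function not identically zero, so $\supp\widehat\Cc=\R^d$ and \emph{(a)} already holds with $\phi=V$. Under (C$_4$) I would use a covariance inequality (Davydov's inequality for Rosenblatt's $\alpha$-mixing coefficient): if $\phi$ is a bounded functional measurable with respect to $\sigma(\{V(x)\}_{x\in B_R})$, then $|\Cc^\phi(x)|\lesssim_\phi\tilde\alpha(|x|-2R,2R;V)$ for $|x|$ large, so the integrability hypothesis gives $\Cc^\phi\in\Ld^1(\R^d)$ and thus $\widehat\Cc^\phi$ absolutely continuous with continuous density; since bounded local functionals are dense in $\Ld^2(\Omega)$ and the absolutely continuous subspace of $\{T_x\}$ is closed and contains them after centering, every $\phi^\perp\perp1$ lies in it, which is \emph{(b)}. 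For \emph{(a)} under (C$_4$), the same covariance bound applied to bounded truncations of products $V(x_1)\cdots V(x_n)$, together with the cumulant analogue of the Wick bookkeeping above, shows $\bigcup_\phi\supp\widehat\Cc^\phi$ contains the semigroup generated by $\supp\widehat\Cc$; and $\alpha$-mixing forbids $\Cc$ from being periodic in any direction (along such a direction the identically distributed variables $V(0),V(np)$ would satisfy $\cov{V(0)}{V(np)}=\var{V}\ne0$ for all $n$, contradicting asymptotic independence after a uniform-integrability truncation), so that semigroup is dense.

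The routine part is the reduction via Lemma~\ref{th:spectrum-gen}; the core of the argument, and the main obstacle, is the structural description of the spectrum $\widehat\Omega$ of the probability space. In the Gaussian case this means the careful Wick bookkeeping that identifies $\adh(\bigcup_\phi\supp\widehat\Cc^\phi)$ with the closed group generated by $\supp\widehat\Cc$ — in particular arranging the cancellation of non-diagonal pairings so that prescribed sumset points really are attained — and the clean equivalence between properness of that group and directional periodicity of $\Cc$. In the mixing case the delicate points are propagating the $\Ld^1$-decay of covariances from bounded local functionals to a dense subspace via closedness of the absolutely continuous subspace, and the non-Gaussian cumulant expansion needed to locate $\bigcup_\phi\supp\widehat\Cc^\phi$.
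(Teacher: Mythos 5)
The proposal is correct and follows essentially the same route as the paper's proof: the reduction of everything to the supports and Lebesgue types of the measures $\widehat\Cc^\phi$ via Lemma~\ref{th:spectrum-gen}, Wick's formula to generate the sumsets of $\supp\widehat\Cc$ together with the closed-subgroup/directional-periodicity dichotomy under (C$_1$), analyticity of $\widehat\Cc$ under (C$_2$), and absolute continuity of $\widehat\Cc^\phi$ on a dense class (Wick products under (C$_3$), the $\alpha$-mixing covariance inequality under (C$_4$)) combined with density and closedness of the absolutely continuous subspace. The one place where you go beyond the paper is the full-support claim $\sigma(H_{k,0}^\st)=[-|k|^2,\infty)$ under (C$_4$): your appeal to a ``cumulant analogue of the Wick bookkeeping'' is not justified as stated, since without Gaussianity the higher-cumulant contributions to $\widehat\Cc^\phi$ carry no sign and cancellations could destroy the sumset supports — but note that the paper's own Step~4 is equally silent on this point and only establishes the absolute continuity part, so this is a soft spot shared with (not created relative to) the original argument.
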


\begin{proof}
We split the proof into four steps, separately proving~(i) and~(ii) under conditions~(C$_1$), (C$_2$), (C$_3$), and~(C$_4$).

\medskip
\noindent\step1 Proof of~(i) under condition~{\rm (C$_1$)}.\\
Since $V$ is Gaussian and centered, a repeated use of Wick's formula yields for $n\ge1$,
\[\expec{V(0)^nV(x)^n}\,=\,\sum_{m=0}^{n}m!\,\binom{n}{m}^2\,\expecm{V^{n-m}}^2\,\Cc(x)^{m},\]
hence, taking Fourier transform,
\[\widehat\Cc^{(V^n)}\,=\,\sum_{m=0}^{n}m!\,\binom{n}{m}^2\,\expecm{V^{n-m}}^2\,(\widehat\Cc)^{\ast m},\]
where $(\cdot)^{\ast m}$ denotes the $m$th convolution power.
As all terms in the sum are nonnegative,
the support of $\widehat\Cc^{(V^n)}$ therefore contains the support of $(\widehat\Cc)^{\ast n}$, which coincides with the sum $\sum_{m=1}^n\supp\widehat\Cc$. We conclude
\begin{align}\label{eq:def-S-add}
S\,:=\,\bigcup_{n=1}^\infty\sum_{m=1}^n\supp\widehat\Cc~\subset~\adh\bigg(\bigcup_{\widehat\Cc^\phi\in\widehat\Omega}\supp\widehat\Cc^\phi\bigg)\,=:\,T.
\end{align}
As $\Cc$ is nonzero and even,
we note that $S$ is an additive subgroup of $\R^d$, so that its closure must be of the form $A+B$ for some linear subspace $A$ and some discrete additive subgroup~$B$.
Since $\widehat\Cc$ is supported in $S$, if $A$ is not the whole of $\R^d$, we would deduce that~$\Cc$ is periodic in some direction, which is excluded by assumption. We conclude $A=\R^d$, hence $T=\R^d$.
The definition~\eqref{eq:def-nuk} of $\nu_k^\phi$ then implies
\begin{align*}
\adh\bigg(\bigcup_{\widehat\Cc^\phi\in\widehat\Omega}\supp\nu_k^\phi\bigg)\,=\,[0,\infty),
\end{align*}
and $\sigma(H_{k,0}^\st)=[-|k|^2,\infty)$ follows from Lemma~\ref{th:spectrum-gen}.

\medskip
\noindent\step2 Proof of~(i) under condition~{\rm (C$_2$)}.\\
The exponential decay condition $|\Cc(x)|\le Ce^{-\frac1C|x|}$ entails that the Fourier transform $\widehat\Cc$ extends holomorphically to the complex strip $|\Im z|<\frac1C$, and hence its support coincides with the whole of $\R^d$.
It then follows from~\eqref{eq:def-nuk} that the support of $\nu_k^V$ coincides with the whole interval $[0,\infty)$, and therefore $\sigma(H_{k,0}^\st)=[-|k|^2,\infty)$ by Lemma~\ref{th:spectrum-gen}.

\medskip
\noindent\step3 Proof of~(ii) under condition~{\rm (C$_3$)}.\\
Recall the definition~\eqref{eq:def-Pc} of the set of $V$-polynomials,
\[\Pc(\Omega)\,:=\,\bigg\{\sum_{j=1}^n a_j\prod_{l=1}^{m_j}V(x_{lj},\cdot)\,:\,n\ge1,\, a_j\in\C,\,m_j\ge0,\,x_{lj}\in\R^d\bigg\},\]
and let $\Pc_0(\Omega)$ denote the subset of elements of $\Pc(\Omega)$ with vanishing expectation.
For $\phi\in \Pc_0(\Omega)$, since $V$ is Gaussian, Wick's formula allows to express $\Cc^\phi$ explicitly as a linear combination of products of translated copies of the covariance function $\Cc$, without constant term. As the Fourier transform $\widehat\Cc$ is assumed absolutely continuous and integrable,
we conclude that $\widehat\Cc^\phi$ is absolutely continuous and integrable as well for all $\phi\in \Pc_0(\Omega)$. Lemma~\ref{th:spectrum-gen} then implies that for $\phi\in \Pc_0(\Omega)$ the spectral measure $\mu_{k,0}^{\phi,\phi}$ is absolutely continuous. In other words, the absolutely continuous subspace for $H_{k,0}^\st$ contains $\Pc_0(\Omega)$.

\medskip\noindent
It remains to check that $\Pc_0(\Omega)$ is dense in $\Ld^2(\Omega)\ominus\C$.
Given $\phi\in\Ld^2(\Omega)$, by $\sigma(V)$-measurability, we may approximate $\phi$ by a sequence $\phi_n\to \phi$ in $\Ld^2(\Omega)$ of the form $\phi_n:=h_n(V(x_1,\cdot),\ldots,V(x_n,\cdot))$ for some Borel function $h_n$ on~$\R^n$ and some $(x_j)_j\subset\R^d$.
Truncating $V$ and smoothening the Borel functions $h_n$'s, we find $\phi'_n\to \phi$ in $\Ld^2(\Omega)$ of the form $\phi'_n:=h'_n(V_n(x_1,\cdot),\ldots,V_n(x_n,\cdot))$ for some $h_n'\in C^\infty_c(\R^n)$ and $V_n:=(V\wedge n)\vee(-n)$. For each $n$, Weierstrass' approximation theorem then allows to replace the smooth function $h_n'$ by a polynomial $p_n$ in $n$ variables. This proves that $\Pc(\Omega)$ is dense in $\Ld^2(\Omega)$, hence $\Pc_0(\Omega)$ is dense in $\Ld^2(\Omega)\ominus\C$.

\medskip
\step4 Proof of~(ii) under condition~{\rm (C$_4$)}.\\
Arguing as in Step~3, it is enough to prove that the spectral measure $\widehat\Cc^\phi$ is absolutely continuous for all $\phi\in \Ld^2(\Omega)\ominus\C$ of the form $\phi:=h(V(x_1,\cdot),\ldots,V(x_n,\cdot))$ with $n\ge1$, $h\in C^\infty_c(\R^n)$, and $(x_j)_{j=1}^n\subset\R^d$. Let $R:=\max_j|x_j|$.
Since $\phi(\tau_{-x}\cdot)$ is $\sigma(\{V(y,\cdot)\}_{y\in B_R(x)})$-measurable, the $\alpha$-mixing condition for $V$ yields,
cf.~\cite[Theorem~1.2.3]{Doukhan-94},
\[|\Cc^\phi(x)|=\big|\cov{\bar\phi(\cdot)}{\phi(\tau_{-x}\cdot)}\big|\le8\|h\|_{\Ld^\infty(\R^n)}^2\,\tilde\alpha\big((|x|-2R)\vee0,R;V\big).\]
The assumed integrability of the $\alpha$-mixing coefficient then yields $\Cc^\phi\in\Ld^1(\R^d)$, hence the nonnegative Fourier transform $\widehat\Cc^\phi$ is absolutely continuous and belongs to $\Ld^1\cap\Ld^\infty(\R^d)$.
\end{proof}

\subsection{Perturbed fibered operators}\label{sec:spectrum-fibered}
We turn to the perturbed fibered operators and show that the spectrum of  $H_{k,\lambda}^\st=H_{k,0}^\st+\lambda V$ typically coincides with $[-|k|^2+\lambda\infess V,\infty)$.
The precise statement below is however quite intricate and depends on the structure of level sets of $V$. This is to be compared with~\cite[Theorem~5.33]{Pastur-Figotin} for the almost sure spectrum of $H_{\lambda,\omega}$ on~$\Ld^2(\R^d)$. Combined with Theorem~\ref{th:app-sa}, this implies Proposition~\ref{th:pert-G-sp} in the Gaussian setting.

\begin{prop}[Spectrum of $H_{k,\lambda}^\st$]\label{prop:spectr-perturb}
Let $V$ be a stationary random field.
Define the following two closed subsets of $\R$,
\begin{eqnarray*}
\sigma_1(V)&:=&\big\{r\in\R\,:\, \pr{V\in [r-\epsilon,r+\epsilon]}>0~\forall \epsilon>0\big\},\\
\sigma_2(V)&:=&\big\{r\in\R\,:\, \pr{V(x,\cdot)\in [r-\epsilon,r+\epsilon]~\forall x\in B_R}>0~\forall \epsilon,R>0\big\}.
\end{eqnarray*}
Assume that $V$ satisfies the following weak mixing type condition: for all $r\in\sigma_2(V)$ and $\e,R>0$ the level set $V(\cdot,\omega)^{-1}([r-\epsilon,r+\epsilon])$ admits almost surely a bounded connected component containing a ball of radius $R$.
Then for all $k\in\R^d$ there holds
\begin{eqnarray}\label{eq:incl-spectrum}
\sigma(H_{k,0}^\st)+\sigma_2(V)&\subset&\sigma(H_{k,0}^\st+V)\nonumber\\
&\subset& \Big(\conv\big(\sigma(H_{k,0}^\st)\big)+\sigma_1(V)\Big) \bigcap \Big(\sigma(H_{k,0}^\st)+\conv\big(\sigma_1(V)\big)\Big).
\end{eqnarray}
In particular, in the Gaussian setting $V=b(V_0)$ with $V_0$ a nondegenerate stationary Gaussian field and with $b\in C(\R^d)$,
we find $\sigma_1(V)=\sigma_2(V)=[\infess b,\,\supess b]$, cf.~\cite[Theorem~5.34]{Pastur-Figotin}, hence $\sigma(H_{k,0}^\st+V)=[-|k|^2+\infess b,\,\infty)$.
\end{prop}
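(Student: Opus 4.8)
The plan is to prove the two inclusions in~\eqref{eq:incl-spectrum} separately and then specialize. Throughout I would use that on $\Ld^2(\Omega)$ the potential $V$ acts as multiplication by the random variable $V(0,\cdot)$, hence is self-adjoint with $\sigma(V)=\sigma_1(V)$ (the essential range of the law of $V$), satisfies $\infess V\le V\le\supess V$ as operators, and has $\conv(\sigma_1(V))=[\infess V,\supess V]$. For the \emph{upper bound} I would handle both inclusions $\sigma(H_{k,0}^\st+V)\subseteq\sigma(H_{k,0}^\st)+\conv(\sigma_1(V))$ and $\sigma(H_{k,0}^\st+V)\subseteq\conv(\sigma(H_{k,0}^\st))+\sigma_1(V)$ by one elementary grading argument. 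Take $\mu$ outside, say, $\sigma(H_{k,0}^\st)+[\infess V,\supess V]$, so the interval $[\mu-\supess V,\mu-\infess V]$ lies at positive distance $\rho$ from the closed set $\sigma(H_{k,0}^\st)$; let $P_\pm$ be the spectral projections of $H_{k,0}^\st$ onto $(-\infty,\mu-\supess V-\rho]$ and $[\mu-\infess V+\rho,\infty)$, which add to the identity by the choice of $\rho$, and set $J:=P_+-P_-$ and $B:=H_{k,0}^\st+V-\mu$. On $\operatorname{Ran}P_-$ one has $B\le-\rho$ and on $\operatorname{Ran}P_+$ one has $B\ge\rho$; since $H_{k,0}^\st$ commutes with $P_\pm$, the off-diagonal part of $B$ in this grading reduces to that of $V$ and so enters $\langle J\psi,B\psi\rangle$ only through a purely imaginary term. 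Hence $\operatorname{Re}\langle J\psi,B\psi\rangle=-\langle\psi_-,B\psi_-\rangle+\langle\psi_+,B\psi_+\rangle\ge\rho\|\psi\|^2$, and as $\|J\psi\|=\|\psi\|$ this forces $\|B\psi\|\ge\rho\|\psi\|$, so $\mu$ lies in the resolvent set of $H_{k,0}^\st+V$. The second inclusion is identical, grading instead by the spectral projections of the \emph{multiplication} operator $V$ attached to the interval $[\mu-\sup\sigma(H_{k,0}^\st),\mu-\inf\sigma(H_{k,0}^\st)]$, so that the off-diagonal terms come from $H_{k,0}^\st$. When $V$ is unbounded the same computation runs on the operator domain, using the essential self-adjointness of $H_{k,0}^\st+V$ from Appendix~\ref{sec:self-adj}, and the cases $\infess V=-\infty$ or $\supess V=+\infty$ are even easier, one of the two subspaces being then trivial.

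For the \emph{lower bound} $\sigma(H_{k,0}^\st)+\sigma_2(V)\subseteq\sigma(H_{k,0}^\st+V)$, fix $\lambda_0\in\sigma(H_{k,0}^\st)$ and $t_0\in\sigma_2(V)$; I would build a Weyl sequence for $H_{k,0}^\st+V$ at $\lambda_0+t_0$. By definition of $\sigma_2(V)$, for all $R,\e>0$ the event $G_{R,\e}:=\{|V(x,\cdot)-t_0|\le\e\ \forall x\in B_R\}$ has positive probability, and on it the multiplication operator $V$ stays within $\e$ of $t_0$. On $G_{R,\e}$ the realization of the field is nearly constant on the large ball $B_R$, so $H_{k,0}^\st$ there mimics the frozen operator $-(\nabla^\st+ik)\cdot(\nabla^\st+ik)-|k|^2$, whose spectrum $[-|k|^2,\infty)$ contains $\lambda_0$; the weak mixing hypothesis — a \emph{bounded connected} component of $\{x:|V(x,\cdot)-t_0|\le\e\}$ containing a ball of radius $R$ — is precisely what should let me transplant a compactly supported approximate eigenfunction of that frozen operator at energy $\lambda_0$ into the good region, producing normalized $\psi_n\in\Ld^2(\Omega)$, essentially supported on $G_{R_n,\e_n}$ with $R_n\uparrow\infty$, $\e_n\downarrow0$, such that $\|(H_{k,0}^\st-\lambda_0)\psi_n\|=o(\|\psi_n\|)$ and $\|(V-t_0)\psi_n\|=o(\|\psi_n\|)$ (the cut-off and commutator errors being controlled by the $\sim R_n$ scale of the localization). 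Then $\|(H_{k,0}^\st+V-\lambda_0-t_0)\psi_n\|=o(\|\psi_n\|)$, so $\lambda_0+t_0\in\sigma(H_{k,0}^\st+V)$.

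For the Gaussian specialization with $V=b(V_0)$, $V_0$ nondegenerate and $b$ continuous, I would check as in~\cite[Theorem~5.34]{Pastur-Figotin} that $\sigma_1(V)=\sigma_2(V)=[\infess b,\supess b]$ (nondegeneracy makes the finite-dimensional laws of $V_0$ fully supported, and continuity of $b$ and of the sample paths of $V_0$ gives both the essential ranges and the nearly-constant-on-a-ball events), observe that nondegeneracy excludes periodicity of $\Cc_0$ in any direction (a period would force perfect correlation, hence a rank-deficient covariance matrix), so that Proposition~\ref{th:main0}(i) yields $\sigma(H_{k,0}^\st)=[-|k|^2,\infty)$, and verify that the weak mixing hypothesis holds for $b(V_0)$ by continuity and the almost-sure oscillation of $V_0$. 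Then both sides of~\eqref{eq:incl-spectrum} reduce to $[-|k|^2+\infess b,\infty)$, giving $\sigma(H_{k,0}^\st+V)=[-|k|^2+\infess b,\infty)$.

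The step I expect to be the main obstacle is the transplantation in the lower bound. In contrast with the classical almost-sure-spectrum argument on $\Ld^2(\R^d)$, there is no ambient spatial translation structure on $\Ld^2(\Omega)$ along which to spread out approximate eigenfunctions, so one must genuinely exploit the probabilistic geometry — the positive-probability event $G_{R,\e}$ together with the boundedness and connectedness provided by the weak mixing hypothesis — to manufacture a vector of low $H_{k,0}^\st$-energy that is simultaneously essentially supported where $V\approx t_0$. Keeping its norm bounded below while controlling the cut-off and commutator errors uniformly as $R_n\to\infty$, $\e_n\to0$ will be the delicate point.
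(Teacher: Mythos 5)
Your treatment of the second inclusion is correct but follows a genuinely different route from the paper. The paper argues by a Neumann series: for $E$ outside $\sigma(H_{k,0}^\st)+\conv(\sigma_1(V))$ one has $\|(H_{k,0}^\st-E)^{-1}V\|<1$, hence $H_{k,0}^\st+V-E=(H_{k,0}^\st-E)(\Id+(H_{k,0}^\st-E)^{-1}V)$ is invertible. Your grading argument with $J=P_+-P_-$ and $\operatorname{Re}\langle J\psi,B\psi\rangle\ge\rho\|\psi\|^2$ is a sound alternative (the cross terms are purely imaginary by self-adjointness of $B$, and $\|B\psi\|\ge\rho\|\psi\|$ for a self-adjoint $B$ does put $\mu$ in the resolvent set); it has the mild advantage of treating both halves of the intersection symmetrically, where the paper only writes out one and declares the other similar.

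The lower bound, however, is where the real content of the proposition lies, and there your proposal stops exactly at the point where the proof has to be done. You correctly diagnose the obstacle — there is no spatial translation on $\Ld^2(\Omega)$ along which to spread a localized plane wave, and an indicator of the event $G_{R,\e}$ is useless because its stationary extension is not smooth in $x$ — but you do not supply the mechanism, and "the weak mixing hypothesis is precisely what should let me transplant" is a statement of intent, not a construction. The device the paper uses is the following. For $r\in\sigma_2(V)$ and $\e>0$, the weak mixing hypothesis gives, almost surely, bounded connected components of $\inter(V(\cdot,\omega)^{-1}(r-\e,r+\e))$ containing balls of radius $R$; one selects in each such component the (lexicographically first) center $x_\e^n(\omega)$ of a maximal inscribed ball, obtaining an equivariant (stationary) point process, decimates it to the points with inscribed radius $\ge R$, and sets
\[\phi_{\e,R}(\omega)\,=\,\sum_{n}e^{i\xi\cdot x_\e^n(\omega)}\,\chi\big(\tfrac2Rx_\e^n(\omega)\big)\,\mathds1_{R_\e^n(\omega)\ge R},\qquad |\xi+k|^2=E+|k|^2.\]
Because the surviving points are $2R$-separated, at most one summand is nonzero, so $\|\phi_{\e,R}\|_{\Ld^2(\Omega)}^2\simeq\mu_{\e,R}|B_R|$ with $\mu_{\e,R}>0$ the intensity of the decimated process — this is how the norm is kept bounded below. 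The stationary extension of $\phi_{\e,R}$ is, near each selected center, the plane wave $e^{i\xi\cdot x}$ times a cutoff at scale $R$, supported where $V\in(r-\e,r+\e)$; hence $\|(V-r)\phi_{\e,R}\|\le\e\|\phi_{\e,R}\|$, while $(H_{k,0}^\st-E)\phi_{\e,R}$ only sees derivatives of the cutoff, supported on the event that the selected point lies in the annulus $B_R\setminus B_{R/2}$, giving $\|(H_{k,0}^\st-E)\phi_{\e,R}\|\lesssim_{k,E}R^{-1}\|\phi_{\e,R}\|$. Without this (or an equivalent) construction the Weyl sequence does not exist as a concrete object, and the first inclusion in~\eqref{eq:incl-spectrum} is not proved. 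The Gaussian specialization in your last paragraph is fine once the two inclusions are in place.
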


\begin{rem}
The set $\sigma_1(V)$ is known as the essential range of $V$ and coincides with the spectrum of $V$ as a multiplication operator on $\Ld^2(\Omega)$. The set $\sigma_2(V)$ is a closed subset of $\sigma_1(V)$ and can be much smaller: in the periodic case $\Omega=\T^d$, for instance, there holds $\sigma_2(V)=\varnothing$ unless $V$ is a constant.
\end{rem}

\begin{proof}[Proof of Proposition~\ref{prop:spectr-perturb}]
We split the proof into two steps, separately establishing the first and second inclusions in~\eqref{eq:incl-spectrum}.

\medskip
\step1 Second inclusion in~\eqref{eq:incl-spectrum}.

\noindent
We only prove that $\sigma(H_{k,0}^\st+V)\subset \sigma(H_{k,0}^\st)+\conv(\sigma_1(V))$, while the other inclusion is similar.
If $\conv(\sigma_1(V))=\R$, the inclusion is trivial. It remains to consider the cases when $\conv(\sigma_1(V))$ has the form $[a,\infty)$, $(-\infty,b]$, or $[a,b]$, with $a,b\in\R$. We focus on the case $\conv(\sigma_1(V))=[a,b]$, while the other cases are easier.
Without loss of generality we can assume $a=-b$, so that $b$ coincides with the (finite) operator norm of $V$. Let $E\notin\sigma(H_{k,0}^\st)+[-b,b]$. Since $E\notin \sigma(H_{k,0}^\st)$, we deduce that $H_{k,0}^\st-E$ is invertible and we compute
\[\|(H_{k,0}^\st-E)^{-1}V\|_{\Ld^2(\Omega)\to\Ld^2(\Omega)}<b\|V\|_{\Ld^2(\Omega)\to\Ld^2(\Omega)}=1.\]
Writing
\[H_{k,0}^\st+V-E\,=\,(H_{k,0}^\st-E)\big(\Id+(H_{k,0}^\st-E)^{-1}V\big),\]
and using Neumann series,
we conclude that $H_{k,0}^\st+V-E$ is invertible, which entails that~$E\notin\sigma(H_{k,0}^\st+V)$.

\medskip
\step2 First inclusion in~\eqref{eq:incl-spectrum}.

\nopagebreak\noindent
Given $r\in\sigma_2(V)$ and $E\ge-|k|^2$, we show that there exists a sequence $(\phi_n)_n\subset\Ld^2(\Omega)$ with $\|\phi_n\|_{\Ld^2(\Omega)}=1$ such that $(H_{k,0}^\st-E)\phi_n\to0$ and $(V-r)\phi_n\to0$ in $\Ld^2(\Omega)$, which entails $E+r\in\sigma(H_{k,0}^\st+V)$.
For $\e>0$, consider the open set $O_\e(\omega):=\inter( V(\cdot,\omega)^{-1}(r-\e,r+\e))$ and decompose it into its (at most countable) collection of connected components. Denote by $(O_\e^n(\omega))_n$ the subcollection of bounded connected components.
By assumption, this collection is almost surely nonempty.
For all $n$, we consider the balls included in $O_\e^n(\omega)$ with maximal radius. The maximum radius $R^n_\e(\omega)$ may be attained by different balls and we denote by $(x_\e^{n,m}(\omega))_m$ the collection of their centers. As this collection is a closed bounded set in $\R^d$, we may choose $x_\e^{n}(\omega)$ as the first element in lexicographic order. The set $\{x_\e^n(\omega)\}_n$ defines a (nonempty) stationary point process on~$\R^d$.
Now choose a smooth cut-off function $\chi$ with $\chi(x)=1$ for $|x|\le1$ and $\chi(x)=0$ for $|x|\ge2$, and choose $\xi\in \R^d$ with $|\xi+k|^2=E+|k|^2$. For $R>0$, we define the random variable
\begin{equation}\label{eq:def-varphiR}
\phi_{\e,R}(\omega)=\sum_{n}e^{i \xi\cdot x_\e^n(\omega)}\chi\big(\tfrac2Rx_\e^n(\omega)\big)\,\mathds1_{R_\e^n(\omega)\ge R}.
\end{equation}
\vspace{-0.3cm}\\
By assumption, the decimated stationary point process $\{x_{\e,R}^n(\omega)\}_n:=\{x_\e^n(\omega)\}_{n:R_\e^n(\omega)\ge R}$ is also nonempty and we denote by $\mu_{\e,R}>0$ its intensity.
Since the remaining points in this process are all separated by a distance at least $2R$, the sum~\eqref{eq:def-varphiR} defining $\phi_{\e,R}$ contains at most one non-zero term, and we find
\[2^{-d}\mu_{\e,R}|B_R|\,=\,\p\big[\exists n:x_{\e,R}^n\in B_{\frac12R}\big]\,\le\,\|\phi_{\e,R}\|_{\Ld^2(\Omega)}^2\,\le\,\p\big[\exists n:x_{\e,R}^n\in B_R\big]\,=\,\mu_{\e,R}|B_R|.\]
Next, we estimate
\[|(H_{k,0}^\st-E)\phi_{\e,R}|\,\le\,\tfrac4R\sum_n\Big(|\xi+k|\big|\nabla\chi\big(\tfrac2Rx_{\e,R}^n(\omega)\big)\big|+\tfrac1R\big|\triangle\chi\big(\tfrac2Rx_{\e,R}^n(\omega)\big)\big|\Big),\]
\vspace{-0.5cm}\\
hence,
\vspace{-0.1cm}
\begin{multline*}
\|(H_{k,0}^\st-E)\phi_{\e,R}\|_{\Ld^2(\Omega)}^2\,\lesssim_{k,E}\,R^{-2}\,\p\big[{\exists n:x_{\e,R}^n\in B_R\setminus B_{\frac12R}}\big]\\
\,=\,R^{-2}\mu_{\e,R}|B_R\setminus B_{\frac12R}|\,\lesssim\,R^{-2}\,\|\phi_{\e,R}\|_{\Ld^2(\Omega)}^2.
\end{multline*}
Finally, we compute $\|(V-r) \phi_{R,\e}\|_{\Ld^2(\Omega)}\le\epsilon \|\phi_{R,\e}\|_{\Ld^2(\Omega)}$, and the conclusion follows.
\end{proof}

\subsection{Instability of the bound state}\label{sec:instab}
While the spectrum of the perturbed fibered operators $\{H_{k,\lambda}^\st\}_k$ was easily characterized in the previous section, determining its nature is substantially more involved.
We recall the heuristic prediction from Fermi's Golden Rule, e.g.~\cite[Section~XII.6]{Reed-Simon-78}.
Given a perturbation $H+\lambda W$ of a self-adjoint operator $H$ on $\Hc$, if $H$ admits a simple eigenvalue at $E_0$ with normalized eigenvector~$\psi_0$, and if~$W$ satisfies
\begin{align}\label{eq:Fermi}
\lim_{\e\downarrow0}\,\Im\,\big\langle \bar P_0(W\psi_0),(H-E_0-i\e)^{-1}\bar P_0(W\psi_0)\big\rangle_{\Hc}\,>\,0,
\end{align}
where $\bar P_0$ denotes the orthogonal projection onto $\{\psi_0\}^\bot$,
then the eigenvalue at $E_0$ is expected to dissolve whenever the perturbation is turned on.
The simplest rigorous version of this statement is as follows.

\begin{lem}\label{lem:Fermi}
Let $H,W$ be two self-adjoint operators on a Hilbert space $\Hc$ and let $E_0$ be a simple eigenvalue of $H$ with normalized eigenvector $\psi_0$.
If for some $\delta>0$ there exists a branch $[0,\delta)\to\R\times\Hc:\lambda\mapsto(E_\lambda,\psi_\lambda)$ of class $C^2$ with
\[(H+\lambda W)\psi_\lambda=E_\lambda\psi_\lambda,\qquad(E_\lambda,\psi_\lambda)|_{\lambda=0}=(E_0,\psi_0),\]
then there holds
\begin{align}\label{eq:def-der2-E}
\tfrac{d^2}{d\lambda^2}E_\lambda|_{\lambda=0}\,=\,-2\lim_{\e\downarrow0}\big\langle \bar P_0(W\psi_0),(H-E_0-i\e)^{-1}\bar P_0(W\psi_0)\big\rangle_\Hc,
\end{align}
where $\bar P_0u:=u-\langle\psi_0,u\rangle_\Hc\psi_0$ is the orthogonal projection onto $\{\psi_0\}^\bot$.
In particular, if the right-hand side of~\eqref{eq:def-der2-E} is not real, then there exists no such branch $\lambda\mapsto(E_\lambda,\psi_\lambda)$. This is in particular the case whenever the spectral measure of $H$ associated with $\bar P_0(W\psi_0)$ is absolutely continuous in a neighborhood of $E_0$ and has non-vanishing density at $E_0$.
\end{lem}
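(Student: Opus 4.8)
The plan is to run second-order Rayleigh--Schr\"odinger perturbation theory, but rigorously, using as input only the existence of a $C^2$ branch $\lambda\mapsto(E_\lambda,\psi_\lambda)$. The crucial point I want to exploit is that this existence is itself a nontrivial constraint: it forces $\bar P_0(W\psi_0)$ to lie in the (non-closed) range of $H-E_0$, which is precisely what makes the formal coefficient $\langle\bar P_0(W\psi_0),(H-E_0)^{-1}\bar P_0(W\psi_0)\rangle_\Hc$ a bona fide Hilbert-space quantity. Once this quantity is shown to equal $\tfrac{d^2}{d\lambda^2}E_\lambda|_{\lambda=0}$ on one hand, and the boundary value $\lim_{\e\downarrow0}\langle\bar P_0(W\psi_0),(H-E_0-i\e)^{-1}\bar P_0(W\psi_0)\rangle_\Hc$ on the other, the reality of $E_\lambda$ will clash with the generically nonzero imaginary part of that boundary value, which is the announced obstruction.

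First I would normalize the branch: since $\lambda\mapsto\langle\psi_0,\psi_\lambda\rangle_\Hc$ is $C^2$ and equals $1$ at $\lambda=0$, dividing $\psi_\lambda$ by it we may assume the gauge $\langle\psi_0,\psi_\lambda\rangle_\Hc\equiv1$, so that $\langle\psi_0,\dot\psi_0\rangle_\Hc=\langle\psi_0,\ddot\psi_0\rangle_\Hc=0$, dots denoting derivatives at $\lambda=0$ (when $W$ is unbounded all of this is read with $\psi_\lambda$ assumed $C^2$ for the graph norm of $H$, so that $H\dot\psi_0$, $H\ddot\psi_0$ are meaningful). Differentiating $(H+\lambda W)\psi_\lambda=E_\lambda\psi_\lambda$ once at $\lambda=0$ and pairing with $\psi_0$ gives $\dot E_0=\langle\psi_0,W\psi_0\rangle_\Hc$, hence $(H-E_0)\dot\psi_0=\dot E_0\psi_0-W\psi_0=-\bar P_0(W\psi_0)$; in particular $\bar P_0(W\psi_0)\in\operatorname{ran}(H-E_0)$ and, since $\bar P_0(W\psi_0)\perp\psi_0=\ker(H-E_0)$, one has $\dot\psi_0=-(H-E_0)^{-1}\bar P_0(W\psi_0)$, the inverse taken on $\{\psi_0\}^\bot$ (the gauge kills the homogeneous piece). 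Differentiating twice and pairing with $\psi_0$, the $\ddot\psi_0$- and $\dot E_0$-contributions drop out by $H\psi_0=E_0\psi_0$ and $\langle\psi_0,\dot\psi_0\rangle_\Hc=0$, leaving $\ddot E_0=2\langle W\psi_0,\dot\psi_0\rangle_\Hc=2\langle\bar P_0(W\psi_0),\dot\psi_0\rangle_\Hc=-2\langle\bar P_0(W\psi_0),(H-E_0)^{-1}\bar P_0(W\psi_0)\rangle_\Hc$.

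Next I would pass to the regularized resolvent. Writing $f:=\bar P_0(W\psi_0)$ and $g:=\dot\psi_0$, so $(H-E_0)g=-f$, the identity $(H-E_0-i\e)g=-f-i\e g$ gives $g=-(H-E_0-i\e)^{-1}f-i\e(H-E_0-i\e)^{-1}g$; pairing with $f$, and using $f=-(H-E_0)g$ together with $(H-E_0)(H-E_0-i\e)^{-1}=\Id+i\e(H-E_0-i\e)^{-1}$, the $i\e$-term on the right reduces to $-\e\|g\|_\Hc^2-i\e^2\langle g,(H-E_0-i\e)^{-1}g\rangle_\Hc=O(\e)$, so that $\langle f,(H-E_0-i\e)^{-1}f\rangle_\Hc\to-\langle f,g\rangle_\Hc$ as $\e\downarrow0$ (in particular the limit in~\eqref{eq:def-der2-E} exists and is real whenever the branch exists). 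Combined with the previous step this yields~\eqref{eq:def-der2-E}. Since $E_\lambda\in\R$, its second derivative is real, so no $C^2$ branch can exist when the right-hand side of~\eqref{eq:def-der2-E} is non-real; and if the spectral measure $\mu_f$ of $H$ associated with $f$ is absolutely continuous near $E_0$ with positive density there (say continuous at $E_0$, or $E_0$ a Lebesgue point), the Stone--Plemelj formula gives $\lim_{\e\downarrow0}\langle f,(H-E_0-i\e)^{-1}f\rangle_\Hc=\pv\!\int\frac{d\mu_f(t)}{t-E_0}+i\pi\,\tfrac{d\mu_f}{dt}(E_0)$, with imaginary part $>0$, whence the contradiction.

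The main obstacle is conceptual rather than computational: it is the passage through $(H-E_0)^{-1}\bar P_0(W\psi_0)$ for an eigenvalue $E_0$ \emph{embedded} in continuous spectrum, where $\operatorname{ran}(H-E_0)$ is not closed. One must keep track of the fact that the $C^2$ hypothesis is doing genuine work — it is what places $\bar P_0(W\psi_0)$ in that range and legitimizes the formal second-order coefficient — and the delicate identification of this coefficient with the boundary value of the resolvent (a priori complex, with imaginary part governed by the local density of the spectral measure at $E_0$) is exactly the mechanism that turns Fermi's Golden Rule into a rigorous non-existence statement; the regularization estimate in the third paragraph, where the naive bound $\e\|(H-E_0-i\e)^{-1}g\|_\Hc\lesssim\|g\|_\Hc$ is not small enough and one has to use the structure of $g$, is the one small technical point requiring care.
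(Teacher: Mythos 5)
Your argument is correct and follows essentially the same route as the paper's proof: differentiate the eigenvalue relation twice, identify $E_0''=2\langle\bar P_0(W\psi_0),\psi_0'\rangle_\Hc$, and pass to the $\e$-regularized resolvent. The only (immaterial) differences are that you fix the gauge $\langle\psi_0,\psi_\lambda\rangle_\Hc\equiv 1$ whereas the paper does not need to, and that you kill the $O(\e)$ remainder by exploiting $f=-(H-E_0)\dot\psi_0$, while the paper instead uses the strong limit $i\e(H-E_0-i\e)^{-1}\to-\mathds1_{\{E_0\}}(H)$ together with $\mathds1_{\{E_0\}}(H)\bar P_0(W\psi_0)=0$ (simplicity of $E_0$); both work.
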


\begin{proof}
Assume that there exists a $C^2$ branch $\lambda\mapsto(E_\lambda,\psi_\lambda)$ as in the statement and denote by $(E_0',\psi_0')$ and $(E_0'',\psi_0'')$ the first and second derivatives with respect to $\lambda$ at $\lambda=0$.
Differentiating the eigenvalue relation yields
\[(H-E_0)\psi'_0+W\psi_0=\psi_0E'_0.\]
Taking the scalar product with $\psi_0$, we find
\begin{align}\label{eq:E0'}
E'_0=\langle\psi_0,W\psi_0\rangle_\Hc,
\end{align}
hence
\begin{align*}
(H-E_0)\psi'_0=-\bar P_0(W\psi_0).
\end{align*}
This can be inverted in the form
\begin{align}\label{eq:psi0'-eqn}
\psi'_0=-(H-E_0-i\e)^{-1}\bar P_0(W\psi_0)-i\e(H-E_0-i\e)^{-1}\psi_0'.
\end{align}
Now differentiating the eigenvalue equation twice, we find
\[(H-E_0)\psi''_0+2W\psi'_0=E''_0\psi_0+2E'_0\psi'_0,\]
hence, injecting~\eqref{eq:E0'} and taking the scalar product with $\psi_0$,
\[E''_0=2\big\langle\bar P_0(W\psi_0),\psi'_0\big\rangle_\Hc.\]
Injecting~\eqref{eq:psi0'-eqn} then yields
\begin{equation*}
E''_0=-2\big\langle\bar P_0(W\psi_0),(H-E_0-i\e)^{-1}\bar P_0(W\psi_0)\big\rangle_\Hc
-2\big\langle i\e(H-E_0-i\e)^{-1}\bar P_0(W\psi_0),\psi_0'\big\rangle_\Hc.
\end{equation*}
Since $E_0$ is simple, we find $\mathds1_{\{E_0\}}(H)u=\langle\psi_0,u\rangle_\Hc\psi_0$, hence
\[\lim_{\e\downarrow0}\big\langle i\e(H-E_0-i\e)^{-1}\bar P_0(W\psi_0),\psi_0'\big\rangle_\Hc=\big\langle\mathds1_{\{E_0\}}(H)\bar P_0(W\psi_0),\psi_0'\big\rangle_\Hc=0,\]
and the conclusion follows.
\end{proof}

We apply this result to our setting $H_{k,\lambda}^\st=H_{k,0}^\st+\lambda V$ with $(E_0,\psi_0)=(0,1)$.
The quantity in~\eqref{eq:Fermi} takes the form, for~$k\in\R^d\setminus\{0\}$,
\begin{eqnarray*}
\alpha_k~:=~\lim_{\e\downarrow0}\,\Im\,\expec{V(H_{k,0}^\st-i\e)^{-1}V}&=&
\lim_{\e\downarrow0}\int_{\R^d}\frac{\e\,\widehat\Cc(y)}{(|y+k|^2-|k|^2)^2+\e^2}\dbar y\\
&=&(2\pi)^{-d}\frac{\pi}{2|k|}\lim_{\e\downarrow0}\frac1{2\e}\widehat\Cc\big(B_{|k|+\e}(-k)\setminus B_{|k|-\e}(-k)\big),
\end{eqnarray*}
and Proposition~\ref{prop:Fermi} follows.

%%%%%%%%%%%%%%%%%%%%%
%%%%%%%%%%%%%%%%%%%%%
%%%%%%%%%%%%%%%%%%%%%

\medskip
\section{Perturbative Mourre's commutator approach}\label{sec:Mourre}

This section is devoted to the use of Mourre's theory~\cite{Mourre-80,ABMG-96} to study fibered perturbation problems, in particular proving Theorem~\ref{th:Mourre-pert} and Corollary~\ref{th:mourre-partial}.
We focus on the short-range Gaussian setting, that is, $V=b(V_0)$ for some $b\in C^\infty(\R)$ and some stationary centered Gaussian field~$V_0$ with covariance function $\Cc_0\in\Ld^1\cap\Ld^\infty(\R^d)$, and without loss of generality we assume that the probability space $(\Omega,\p)$ is endowed with the $\sigma$-algebra generated by~$V_0$.

\subsection{Reminder on Mourre's theory}\label{sec:Mourre-intro}
We briefly recall the general purpose of Mourre's theory and its classical application to Schrödinger operators on $\Ld^2(\R^d)$; we refer to~\cite{Mourre-80,ABMG-96} for details.
A self-adjoint operator $H$ with domain $D(H)$ on a Hilbert space $\Hc$ is said to satisfy a \emph{Mourre relation} on an interval $J\subset\R$ with respect to a (self-adjoint) \emph{conjugate operator} $A$ with domain $D(A)\subset \Hc$ if there exists $C_0\ge1$ and a compact operator $K$ such that there holds in the sense of forms,
\begin{equation}\label{eq:Mourre0}
\mathds1_J(H)\,[H,iA]\,\mathds1_J(H)\,\ge\,\tfrac1{C_0}\mathds1_J(H)+K,
\end{equation}
where the commutator $[H,iA]$ is defined as a sesquilinear form on $D(H)\cap D(A)$. The Mourre relation~\eqref{eq:Mourre0} is said to be \emph{strict} if $K=0$.
For technical reasons, one typically requires that the domain of $H$ be invariant under the unitary group $\{e^{itA}\}_{t\in\R}$ generated by~$A$, that is,
\begin{equation}\label{eq:Mourre0-pres}
e^{itA}D(H)\subset D(H),\qquad\forall t\in\R,
\end{equation}
which in particular ensures that $D(H)\cap D(A)$ is dense in $D(H)$, and one further requires $[H,iA]$ to be $H$-bounded. In that case, the sesquilinear form $[H,iA]$ on $D(H)\cap D(A)$ automatically extends to the form of a unique $H$-bounded self-adjoint operator.

\medskip
In a semiclassical perspective, conjugate operators can be viewed as a quantum analogue of escape functions for Hamiltonian dynamical systems.
The main result of Mourre's theory~\cite{Mourre-80,ABMG-96,GGe-99} is that the relation~\eqref{eq:Mourre0} (together with additional regularity assumptions) entails that the eigenvalues of~$H$ in $J$ have finite multiplicity and that $H$ has no singular continuous spectrum in $J$.
In addition, a strict Mourre relation implies that the spectral measure is absolutely continuous on~$J$. This is actually a simple consequence of the virial theorem: if~$\lambda$ was an eigenvalue in~$J$ with normalized eigenvector $\psi$, then a strict Mourre inequality would formally yield
\[0=\langle\psi,[H,iA]\psi\rangle_\Hc\ge \tfrac1{C_0}\|\psi\|^2_\Hc,\]
a contradiction.
Alternatively, this short formal proof can be rewritten by noting that a strict Mourre relation implies ballistic transport for the flow $e^{itH}$ with respect to the conjugate operator $A$: for $\phi\in\mathds1_J(H)\Hc$ there holds
\[\partial_t\langle e^{itH}\phi,(-A)e^{itH}\phi\rangle_\Hc=\langle e^{itH}\phi,[H,iA]e^{itH}\phi\rangle_\Hc\ge \tfrac1{C_0}\|\phi\|_{\Hc}^2,\]
hence $\langle e^{itH}\phi,(-A)e^{itH}\phi\rangle_\Hc\ge \tfrac1{C_0}t\|\phi\|^2_\Hc+\langle \phi,(-A)\phi\rangle_\Hc$, thus prohibiting $\phi$ from being an eigenvector.
In addition to such spectral information, the Mourre relation~\eqref{eq:Mourre0} is further known to yield useful a priori estimates on boundary values of the resolvent in form of limiting absorption principles~\cite{Mourre-80,JMP-84}.

\medskip
We recall the standard construction of a Mourre conjugate operator for Schrödinger operators on $\Ld^2(\R^d)$, e.g.~\cite{Mourre-80}.
Considering the unitary group of dilations $U_tg:=e^{td/2}g(e^t\cdot)$ on $\Ld^2(\R^d)$, and noting that $U_{-t}(-\triangle)U_t=e^{2t}(-\triangle)$, we deduce by differentiation,
\[[-\triangle\,,\,iA]=2(-\triangle),\]
where $iA$ denotes the generator of dilations, that is, $A:=\frac1{2i}(x\cdot\nabla+\nabla\cdot x)$ on $\Ld^2(\R^d)$.
This implies that $-\triangle$ satisfies a strict Mourre inequality on $[\e,\infty)$ for all $\e>0$ with conjugate operator $A$,
\[\mathds1_{[\e,\infty)}(-\triangle)\,[-\triangle\,,\,iA]\,\mathds1_{[\e,\infty)}(-\triangle)\,\ge\,2\e\,\mathds1_{[\e,\infty)}(-\triangle).\]
In a semiclassical perspective, the conjugate operator $A$ corresponds to the escape function $(x,p)\mapsto x\cdot p$ for the free Hamiltonian $H(x,p)=\tfrac12|p|^2$.
Next, given a $\triangle$-bounded potential $W:\R^d\to\R$, we compute
\[[W,iA]=-x\cdot\nabla W,\]
so that the commutator $[W,iA]$ is bounded whenever the function $x\mapsto x\cdot\nabla W(x)$ is bounded.
For $\lambda$ small enough, this easily entails that the Schrödinger operator $-\triangle+\lambda W$ on $\Ld^2(\R^d)$ also satisfies a strict Mourre inequality on $[\e,\infty)$. This follows from the first general property below and illustrates the flexibility of Mourre's theory.

\begin{lem}\label{lem:Mourre-perturb}
Let $H$ be a self-adjoint operator on a Hilbert space $\Hc$, assume that $H$ satisfies a Mourre relation~\eqref{eq:Mourre0} on a bounded interval~$J\subset\R$ with respect to a conjugate operator~$A$, that the domain of $H$ is invariant under $\{e^{itA}\}_{t\in\R}$, cf.~\eqref{eq:Mourre0-pres}, and that $[H,iA]$ is $H$-bounded. Let also $W$ be symmetric and $H$-bounded.
\begin{enumerate}[(i)]
\item \emph{Mourre relation under perturbation:}\\
If the commutator $[W,iA]$ is $H$-bounded, then for all $J'\Subset J$ and $\lambda$ small enough the perturbed operator $H_\lambda:=H+\lambda W$ satisfies a Mourre relation on~$J'$ with respect to~$A$. In addition, if $H$ satisfies a strict Mourre relation, then $H_\lambda$ does too.
\item \emph{Strict relation on orthogonal complement of an eigenspace:}\\
If $H$ has an eigenvalue $E_0\in J$ with eigenprojector $P_0$, then there exists a neighborhood $J'\Subset J$ of $E_0$ such that the restriction $\bar H:=\bar P_0 H\bar P_0$ of $H$ to the range of $\bar P_0:=\Id-P_0$ satisfies a strict Mourre relation on $J'$ with conjugate operator $\bar A:=\bar P_0 A\bar P_0$.
\qedhere
\end{enumerate}
\end{lem}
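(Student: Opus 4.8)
The plan is to treat items~(i) and~(ii) separately, in both cases reducing the Mourre relation for the operator of interest to a statement about a \emph{smoothed} spectral projection. For item~(i), since $W$ is symmetric and $H$-bounded, the Kato--Rellich theorem gives $D(H_\lambda)=D(H)$ for $\lambda$ small; in particular $D(H_\lambda)$ is then automatically invariant under $\{e^{itA}\}_{t}$, cf.~\eqref{eq:Mourre0-pres}, and $[H_\lambda,iA]=[H,iA]+\lambda[W,iA]$ is $H$-bounded by hypothesis. The only genuine subtlety is that the spectral projections $\mathds1_{J'}(H_\lambda)$ need not converge in operator norm as $\lambda\downarrow0$, so I would work instead with a smooth cutoff.

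\emph{Item (i).} Given $J'\Subset J$, fix $f\in C^\infty_c(J)$ with $0\le f\le1$ and $f\equiv1$ on a neighborhood $J''$ of $J'$ with $J'\Subset J''\Subset J$. Multiplying the Mourre relation~\eqref{eq:Mourre0} for $H$ on both sides by $f(H)$ (and using $\supp f\subset J$) yields
\[
f(H)\,[H,iA]\,f(H)\ \ge\ \tfrac1{C_0}\,f(H)^2+f(H)Kf(H).
\]
Next I would transfer this to $H_\lambda$: the second resolvent identity together with the $H$-boundedness of $W$ gives $\|(H_\lambda-i\mu)^{-1}-(H-i\mu)^{-1}\|\lesssim\lambda$ for a fixed large $\mu$, hence norm-resolvent convergence $H_\lambda\to H$ and $g(H_\lambda)\to g(H)$ in operator norm for every $g\in C_0(\R)$. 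Writing $(H-i)f(H_\lambda)=g(H_\lambda)-\lambda\,Wf(H_\lambda)$ with $g(t):=(t-i)f(t)\in C_c(\R)$, and noting that $Wf(H_\lambda)$ stays bounded uniformly for small $\lambda$, the $H$-boundedness of $[H,iA]$ and of $[W,iA]$ (i.e.\ boundedness of $[H,iA](H-i)^{-1}$ and $[W,iA](H-i)^{-1}$) then yields
\[
\big\|f(H_\lambda)[H_\lambda,iA]f(H_\lambda)-f(H)[H,iA]f(H)\big\|\ \longrightarrow\ 0 \qquad(\lambda\downarrow0),
\]
and likewise $\|f(H_\lambda)^2-f(H)^2\|\to0$ and $\|f(H_\lambda)Kf(H_\lambda)-f(H)Kf(H)\|\to0$, with $f(H_\lambda)Kf(H_\lambda)$ compact. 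Combined with the first display, this produces $\eta_\lambda\downarrow0$ with
\[
f(H_\lambda)[H_\lambda,iA]f(H_\lambda)\ \ge\ \tfrac1{C_0}f(H_\lambda)^2+f(H_\lambda)Kf(H_\lambda)-\eta_\lambda\,\Id.
\]
Conjugating this operator inequality by $\mathds1_{J'}(H_\lambda)$ and using $f(H_\lambda)\mathds1_{J'}(H_\lambda)=\mathds1_{J'}(H_\lambda)$ (since $f\equiv1$ on $J'$) finally gives
\[
\mathds1_{J'}(H_\lambda)[H_\lambda,iA]\mathds1_{J'}(H_\lambda)\ \ge\ \big(\tfrac1{C_0}-\eta_\lambda\big)\mathds1_{J'}(H_\lambda)+\mathds1_{J'}(H_\lambda)K\mathds1_{J'}(H_\lambda),
\]
which for $\lambda$ small enough ($\eta_\lambda<\tfrac1{2C_0}$) is a Mourre relation on $J'$ with respect to $A$, with compact error $\mathds1_{J'}(H_\lambda)K\mathds1_{J'}(H_\lambda)$; and if $K=0$ this error vanishes, giving the strict case.

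\emph{Item (ii).} I would first record the algebra coming from $[P_0,H]=0$. On $\operatorname{ran}\bar P_0$ one has $\bar H=H\bar P_0=\bar P_0 H$, whence $[\bar H,i\bar A]=\bar P_0[H,iA]\bar P_0$ as a form (and it is $\bar H$-bounded since $[H,iA]$ is $H$-bounded); moreover $\mathds1_{J'}(\bar H)=\bar P_0\,\mathds1_{J'}(H)$ whenever $E_0\in J'$, and $\mathds1_{J'}(\bar H)\,\mathds1_J(H)=\mathds1_{J'}(\bar H)$ for $J'\subset J$. Sandwiching the Mourre relation~\eqref{eq:Mourre0} for $H$ between two copies of $\mathds1_{J'}(\bar H)$ and using these identities gives
\[
\mathds1_{J'}(\bar H)[\bar H,i\bar A]\mathds1_{J'}(\bar H)\ \ge\ \tfrac1{C_0}\mathds1_{J'}(\bar H)+\mathds1_{J'}(\bar H)K\mathds1_{J'}(\bar H).
\]
As $J'\downarrow\{E_0\}$ one has $\mathds1_{J'}(H)\to P_0$ strongly, so $\mathds1_{J'}(\bar H)=\bar P_0\mathds1_{J'}(H)\to0$ strongly; since $K$ is compact, this forces $\|\mathds1_{J'}(\bar H)K\mathds1_{J'}(\bar H)\|\to0$. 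Choosing $J'\Subset J$ a small enough neighborhood of $E_0$ so that this norm is $<\tfrac1{2C_0}$ turns the display into a strict Mourre relation for $\bar H$ on $J'$ with respect to $\bar A$, with constant $2C_0$. The remaining framework requirements — $\bar A$ self-adjoint and densely defined on $\operatorname{ran}\bar P_0$, and invariance of $D(\bar H)$ under $\{e^{it\bar A}\}_t$ — are inherited from the corresponding properties of $H$ with respect to $A$, together with the standard facts from Mourre theory that $P_0$ is finite-rank and $\operatorname{ran}P_0\subset D(A)$.

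I expect the main obstacle to be the second norm limit in item~(i), namely the norm-continuity of $\lambda\mapsto f(H_\lambda)[H_\lambda,iA]f(H_\lambda)$; everything else is functional-calculus bookkeeping, and in item~(ii) the compact remainder is disposed of essentially for free by the strong convergence $\mathds1_{J'}(\bar H)\to0$.
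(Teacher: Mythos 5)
Your argument is correct and follows essentially the same route as the paper: for (i), a smooth cutoff $h$ with $\mathds1_{J'}\le h\le\mathds1_J$, an $O(\lambda)$ control of $\|h(H)-h(H_\lambda)\|$ via resolvent perturbation together with the $H$-boundedness of $[W,iA]$ and $[H_\lambda,iA]$, and a final sandwich by $\mathds1_{J'}(H_\lambda)$; for (ii), commuting $\bar P_0$ through the Mourre inequality and absorbing the compact remainder using the strong convergence $\mathds1_{J'}(\bar H)\to0$ as $J'\downarrow\{E_0\}$. The only cosmetic difference is the order in which you swap $f(H)\leftrightarrow f(H_\lambda)$ versus $[H,iA]\leftrightarrow[H_\lambda,iA]$, which does not affect the estimate.
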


\begin{proof}
We start with the proof of~(i). As the perturbation $W$ is $H$-bounded, the operator $H_\lambda$ has the same domain  as $H$ for $\lambda$ small enough in view of the Kato-Rellich theorem, hence by assumption its domain $D(H_\lambda)=D(H)$ is invariant under $\{e^{itA}\}_{t\in\R}$. Furthermore, the commutator $[H_\lambda,iA]$ is $H$-bounded, hence $H_\lambda$-bounded. Now, choose $h\in C^\infty_c(\R)$ with $\mathds1_{J'}\le h\le\mathds1_J$. Multiplying by $h(H)$ both sides of the Mourre relation for $H$ yields
\[h(H)[H,iA]h(H)\ge \frac1{C_0}h(H)+h(H)Kh(H).\]
As $[W,iA]$ is $H$-bounded, we deduce
\[h(H)[H_\lambda,iA]h(H)\ge \frac1{C_0}h(H)-C\lambda+h(H)Kh(H).\]
Noting that the $H$-boundedness of $W$ implies $\|h(H)-h(H_\lambda)\|\lesssim\lambda\|h'\|_{\Ld^\infty(\R)}$,
and further using the $H$-boundedness of $[H_\lambda,iA]$, we deduce
\begin{eqnarray*}
h(H_\lambda)[H_\lambda,iA]h(H_\lambda)&\ge& h(H)[H_\lambda,iA]h(H)-C\lambda\\
&\ge& \frac1{C_0}h(H)-2C\lambda+h(H)Kh(H)\\
&\ge& \frac1{C_0}h(H_\lambda)-2C\lambda+h(H)Kh(H).
\end{eqnarray*}
Multiplying both sides by $\mathds1_{J'}(H_\lambda)$, the conclusion~(i) follows for $\frac1{C_0}-2C\lambda\ge\frac1{2C_0}$.

\medskip\noindent
We turn to the proof of~(ii).
As $\bar P_0$ commutes with $H$, multiplying by $\bar P_0$ both sides of the Mourre relation for $H$ yields, on the range of $\bar P_0$,
\[\mathds1_J(\bar H)[\bar H,i\bar A]\mathds1_J(\bar H)\ge\frac1{C_0}\mathds1_J(\bar H)+\bar K,\]
in terms of $\bar H:=\bar P_0 H\bar P_0$, $\bar A:=\bar P_0A\bar P_0$, $\bar K:=\bar P_0K\bar P_0$. Multiplying both sides with $\mathds1_J(\bar H)$, the compact operator is replaced by $\mathds1_J(\bar H)\bar K\mathds1_J(\bar H)$. Since $\mathds1_J(\bar H)$ converges strongly to~$0$ on the range of $\bar P_0$ as $J\to\{E_0\}$, the conclusion~(ii) follows.
\end{proof}

Next, we state the following general result by Cattaneo, Graf, and Hunziker~\cite{CGH-06}, showing how Mourre's theory can be exploited to analyze the instability of embedded bound states in form of an approximate resonance theory; see also~\cite{Orth-90,Hunziker-90,Sofer-Weinstein-98,Merkli-Sigal-99,Costin-Soffer-01}. Although Mourre's theory does not allow to deduce the existence of resonances in any strong sense, it is shown to have essentially the same dynamical consequences.
The proof further allows for asymptotic expansions to finer accuracy in~$\lambda$, as well as for a description of the flow $e^{-iH_\lambda t}\mathds1_{J'}(H_\lambda)\psi_0$ projected on a whole class of ``smooth'' states rather than on~$\psi_0$ only, but such improvements are not pursued here.

\begin{theor}[Dynamical resonances from Mourre's theory;~\cite{CGH-06}]\label{th:CGH}
Let $H$ be a self-adjoint operator on a Hilbert space $\Hc$, let $W$ be symmetric and $H$-bounded, and consider the perturbation $H_\lambda:=H+\lambda W$. Let $E_0$ be a simple eigenvalue of $H$ with normalized eigenvector~$\psi_0$, and assume that the following properties hold:
\begin{enumerate}[\qquad$\bullet$]
\item There is a self-adjoint conjugate operator $A$ and a neighborhood $J\subset\R$ of $E_0$ such that $H$ satisfies a Mourre relation on $J$ with respect to $A$ in the sense of~\eqref{eq:Mourre0}.
In addition, the domain of $H$ is invariant under $\{e^{itA}\}_{t\in\R}$, cf.~\eqref{eq:Mourre0-pres}.
\item The iterated commutators $\operatorname{ad}^{k}_A(H)$ and $\operatorname{ad}^{k}_A(W)$ are $H$-bounded for $0\le k\le6$, where iterated commutators are defined by $\operatorname{ad}_A^{0}(H)=H$ and recursively $\operatorname{ad}_A^{k+1}(H)=[\operatorname{ad}_A^{k}(H),iA]$ for $k\ge0$.
\item Fermi's condition~\eqref{eq:Fermi} is satisfied, that is,
\[\lim_{\e\downarrow0}\Im\big\langle\bar P_0(W\psi_0),(\bar H-E_0-i\e)^{-1}\bar P_0(W\psi_0)\big\rangle_\Hc>0,\]
where $\bar P_0$ denotes the orthogonal projection onto $\{\psi_0\}^\bot$ and where we have set for abbreviation $\bar H:=\bar P_0 H\bar P_0$.
\end{enumerate}
Then there exists $\{z_\lambda\}_{\lambda>0}\subset\C$ with $\Im z_\lambda<0$ such that for all neighborhoods~$J'\Subset J$ of $E_0$ there holds for all $t\ge0$,
\[\Big|\big\langle\psi_0,e^{-iH_\lambda t}\mathds1_{J'}(H_\lambda)\psi_0\big\rangle_\Hc-e^{-iz_\lambda t}\Big|\lesssim_{J,J'}\lambda^2|\!\log\lambda|,\]
where the dynamical resonance $z_\lambda$ satisfies
\[z_\lambda=E_0+\lambda\langle\psi_0,W\psi_0\rangle_\Hc-\lambda^2\lim_{\e\downarrow0}\big\langle\bar P_0(W\psi_0),(\bar H-E_0-i\e)^{-1}\bar P_0(W\psi_0)\big\rangle_\Hc+o(\lambda^2).\qedhere\]
\end{theor}

\begin{proof}[Idea of the proof]
We include a brief summary of the proof for the reader's convenience, and refer to~\cite{CGH-06} for full details. Starting point is the following Feshbach-Schur complement formula for the resolvent, for $\Im z>0$, in terms of the restriction $\bar H_\lambda:=\bar P_0 H_\lambda\bar P_0$,
\begin{multline*}
\big\langle\psi_0,(z-H_\lambda)^{-1}\psi_0\big\rangle_\Hc\\
\,=\,\Big(z-E_0-\lambda \langle\psi_0,W\psi_0\rangle_\Hc-\lambda^2\big\langle \bar P_0(W\psi_0),(z-\bar H_\lambda)^{-1}\bar P_0(W\psi_0)\big\rangle_\Hc\Big)^{-1}.
\end{multline*}
Next, recall that Lemma~\ref{lem:Mourre-perturb} ensures that the restriction $\bar H_\lambda$ on the range of $\bar P_0$ satisfies a strict Mourre relation close to $E_0$. In view of~\cite{JMP-84}, together with the $H$-boundedness of iterated commutators $\operatorname{ad}^{k}_A(H)$ and $\operatorname{ad}^{k}_A(W)$ for $0\le k\le6$, this strict Mourre relation
implies the $C^4$-smoothness of boundary values on $J$ of the resolvent
\[z\mapsto\big\langle \bar P_0(W\psi_0),(z-\bar H_\lambda)^{-1}\bar P_0(W\psi_0)\big\rangle_\Hc,\qquad\Im z>0,\,\Re z\in J.\]
Inserting a Taylor expansion for the latter in the above Feshbach-Schur complement formula, we construct an approximate meromorphic extension for $z\mapsto\langle\psi_0,(z-H_\lambda)^{-1}\psi_0\rangle_\Hc$. The conclusion then follows from complex deformation techniques similarly as for true resonances as in Section~\ref{sec:expdec-conj-pr}.
\end{proof}

\begin{rem}\label{rem:CGH}
As it is clear from the proof, cf.~\cite{CGH-06}, we mention for later reference that a similar result holds if $H=H_\lambda^\circ$ and $W=W_\lambda^\circ$ further depend on $\lambda$. More precisely, assume for all $\lambda$ that $E_0$ is a simple eigenvalue of $H_\lambda^\circ$ with normalized eigenvector $\psi_0$ (independent of $\lambda$, say), that the restriction of $H_\lambda^\circ$ on the range of $\bar P_0$ satisfies a strict Mourre relation on a neighborhood $J$ of $E_0$ with conjugate operator $A$ and constant $C_0$ (independent of~$\lambda$), that the domain of $H_\lambda^\circ$ is invariant under $\{e^{itA}\}_{t\in\R}$, and that iterated commutators $\operatorname{ad}^k_A(H_\lambda^\circ)$ are $H_\lambda^\circ$-bounded by $C_0$ for $0\le k\le 6$.
Next, assume that the perturbation~$W_\lambda$ is bounded in the sense of $\|\langle A\rangle^6\bar P_0(W_\lambda\psi_0)\|_\Hc\le C_0$, and that iterated commutators $\operatorname{ad}_A^k(\lambda W_\lambda)$ are $H_\lambda^\circ$-bounded and small enough in the sense that for $0\le k\le 6$ and $\phi\in\Hc$,
\[\|\operatorname{ad}_A^k(\lambda W_\lambda)\phi\|_\Hc\le \frac1{C_1}(\|H_\lambda^\circ \phi\|_\Hc+\|\phi\|_\Hc),\]
for some large enough constant $C_1$ only depending on $C_0$.
Then the same result holds as in Theorem~\ref{th:CGH} above for the perturbed operator $H_\lambda=H_\lambda^\circ+\lambda W_\lambda^\circ$.
\end{rem}

\subsection{Reminder on Malliavin calculus}\label{chap:Mall}
We recall some notation and tools from Malliavin calculus for the fine analysis of nonlinear functionals of the underlying Gaussian field~$V_0$ with covariance function $\Cc_0\in\Ld^1\cap\Ld^\infty(\R^d)$; we refer to~\cite{Malliavin-97,Nualart,NP-book} for details.
We start by underlining the Hilbert structure associated with the Gaussian field $V_0$.
The random variables $\Vc_0(\zeta):=\int_{\R^d} V_0 \zeta$ with $\zeta\in C^\infty_c(\R^d)$ are centered Gaussians with covariance
\[\langle\Vc_0(\zeta'),\Vc_0(\zeta)\rangle_{\Ld^2(\Omega)}\,=\,
\iint_{\R^d\times\R^d} \Cc_0(x-y)\,\overline{\zeta'(x)}\,\zeta(y)\,dxdy.\]
We consider the completion of $C^\infty_c(\R^d)$ endowed with the (semi)norm
\[\|\zeta\|_{\Hf}:=\langle\zeta,\zeta\rangle_\Hf,\qquad\langle\zeta',\zeta\rangle_\Hf:=\iint_{\R^d\times\R^d} \Cc_0(x-y)\,\overline{\zeta'(x)}\,\zeta(y)\,dxdy,\]
and we denote by $\Hf$ the quotient of this completed space with respect to the kernel of~\mbox{$\|\cdot\|_\Hf$}.
The normed space $\Hf$ is a separable Hilbert space
and the random field $V_0$ satisfies
the isometry relation
\[\langle\Vc_0(\zeta'),\Vc_0(\zeta)\rangle_{\Ld^2(\Omega)}=\langle\zeta',\zeta\rangle_\Hf.\]
The map $\Vc_0:\zeta\mapsto \Vc_0(\zeta)$ then extends as a linear isometric embedding $\Hf\to\Ld^2(\Omega)$ and constitutes a so-called isonormal Gaussian process over $\Hf$.
The structure of $\Hf$ is conveniently characterized as follows:
as $\Cc_0\in\Ld^1(\R^d)$, the (nonnegative measure) Fourier transform $\widehat\Cc_0$ is absolutely continuous,
hence the square root $\widehat\Cc_0^\circ:=(\widehat\Cc_0)^{1/2}$ belongs to~$\Ld^2(\R^d)$ and the linear map
\begin{equation}\label{eq:K-def}
K:C_c^\infty(\R^d)\to\Ld^2(\R^d):\zeta\mapsto \Cc_0^\circ\ast\zeta
\end{equation}
extends into a unitary transformation $K:\Hf\to\Ld^2(\R^d)$.
Note that for all $x$ the Dirac mass~$\delta_x$ is (a representative of) an element of~$\Hf$ with $K\delta_x=\Cc_0^\circ(\cdot-x)$.
By definition, the linear isometric embedding $\Ld^2(\R^d)\to\Ld^2(\Omega):u\mapsto \Vc_0(K^{-1}u)$ is a white noise.

\medskip
As a model dense subspace of $\Ld^2(\Omega)$, instead of considering the linear subspace $\Pc(\Omega)$ of $V_0$-polynomials, cf.~\eqref{eq:def-Pc}, we define the following slightly more convenient subspace,
\begin{multline}\label{eq:def-RcO}
\Rc(\Omega):=\Big\{g\big(\Vc_0(\zeta_1),\ldots,\Vc_0(\zeta_n)\big)\,:\,n\in\N,\,g:\R^n\to\C\text{ polynomial},\\
\,\zeta_1,\ldots,\zeta_n\in C^\infty_c(\R^d;\R)\Big\}.
\end{multline}
Recall that we implicitly assume that the underlying probability space~$(\Omega,\p)$ is endowed with the minimal $\sigma$-algebra generated by $V_0$, thus ensuring that $\Rc(\Omega)$ is indeed dense in $\Ld^2(\Omega)$.
This allows to define operators and prove properties on the simpler subspace $\Rc(\Omega)$ in a concrete way before extending them to $\Ld^2(\Omega)$ by density.

\medskip
For a random variable $\phi\in \Rc(\Omega)$, say $\phi=g(\Vc_0(\zeta_1),\ldots,\Vc_0(\zeta_n))$,
we define its \emph{Malliavin derivative} $D\phi\in \Ld^2(\Omega;\Hf)$ as
\begin{align*}
D\phi\,:=\,\sum_{j=1}^n\zeta_{j}\,(\partial_{j}g)\big(\Vc_0(\zeta_1),\ldots,\Vc_0(\zeta_n)\big),
\end{align*}
and similarly, for all $p\ge1$, its $p$th {Malliavin derivative} $D^p\phi\in \Ld^2(\Omega;\Hf^{\otimes p})$ is given by
\begin{align*}
D^p\phi:=\sum_{1\le j_1,\ldots,j_p\le n}\big(\zeta_{j_1}\otimes\ldots\otimes\zeta_{j_p}\big)\big(\partial_{j_1\ldots j_p}^pg\big)\!\big(\Vc_0(\zeta_1),\ldots,\Vc_0(\zeta_n)\big).
\end{align*}
Note that by definition this belongs to the symmetric tensor product, $D^p\phi\in\Ld^2(\Omega;\Hf^{\odot p})$.
These operators on $\Rc(\Omega)$ are closable on $\Ld^2(\Omega)$.
We then set
\[\|X\|_{\Dm^{p,2}(\Omega)}^2:=\expec{|X|^2}+\sum_{j=1}^p\expec{\|D^jX\|_{\Hf^{\otimes j}}^2},\]
we define the \emph{Malliavin-Sobolev space} $\Dm^{p,2}(\Omega)$ as the closure of $\Rc(\Omega)$ for this norm, and we extend the $p$th Malliavin derivative $D^p$ by density to this space.

\medskip
Next, we define the corresponding \emph{divergence operator} $D^*$ as the adjoint of the Malliavin derivative $D$, and similarly, for all $p\ge1$, the $p$th-order divergence operator $(D^*)^p$ as the adjoint of $D^p$. In other words, this is defined by the following integration by parts formula, for all $\phi'\in \dom (D^*)^p\subset\Ld^2(\Omega;\Hf^{\otimes p})$ and $\phi\in\Rc(\Omega)$,
\[\langle\phi,(D^*)^p\phi'\rangle_{\Ld^2(\Omega)}=\expec{\langle D^p\phi,\phi'\rangle_{\Hf^{\otimes p}}}.\]
The so-called Meyer inequalities ensure that the $p$th divergence operator $(D^*)^p$ extends as a bounded operator $\Dm^{m,2}(\Omega;\Hf^{\otimes p})\to\Dm^{m-p,2}(\Omega)$ for all $m,n\ge p$, hence in particular its domain contains~$\Dm^{p,2}(\Omega;\Hf^{\otimes p})$.
For $\phi\in\Rc(\Omega)$ and $\zeta\in\Hf$, a direct computation yields
\begin{equation*}
D^*(\zeta\phi)\,=\,\Vc_0(\zeta)\phi-\langle\zeta,D\phi\rangle_\Hf.
\end{equation*}
Due to this relation, with in particular $D^*\zeta=\Vc_0(\zeta)$, the divergence operator $D^*$ is sometimes referred to as the \emph{Skorokhod integral}; see also the notion of multiple integrals below.

\medskip
With the Malliavin derivative and the divergence operator at hand, we may construct the corresponding \emph{Ornstein--Uhlenbeck operator} (or infinite-dimensional Laplacian)
\begin{equation}\label{eq:Orn-Uhl}
\Lc:=D^* D,
\end{equation}
as a self-adjoint operator acting on $\Ld^2(\Omega)$ with domain $\Dm^{2,2}(\Omega)$.
The spectrum of $\Lc$
is~$\sigma(\Lc)=\N$ and its kernel coincides with constants.
In particular, the following Poincaré inequality holds: for all $\phi\in \Dm^{1,2}(\Omega)$ with $\expec{\phi}=0$,
\[\|\phi\|_{\Ld^2(\Omega)}^2\,\le\,\expec{\bar\phi\Lc \phi}\,=\,\expec{\|D\phi\|_{\Hf}^2}.\]
This ensures the invertibility of the restriction of $\Lc$ to $\Ld^2(\Omega)\ominus\C$,
and allows to define a pseudo-inverse $\Lc^{-1}:=\Pi\Lc^{-1}\Pi$ on $\Ld^2(\Omega)$ in terms of the projection $\Pi:=\Id-\E$.

\medskip
We turn to a spectral decomposition of $\Lc$.
For that purpose, for $p\ge0$, we first define the $p$th \emph{multiple integral} $I_p$ as the bounded linear operator $\Hf^{\odot p}\to\Ld^2(\Omega)$ given by the restriction of the $p$th divergence operator, that is, $I_p(\zeta):=(D^*)^p\zeta$ for all $\zeta\in\Hf^{\odot p}$. Alternatively, $I_p$ can be characterized as follows: for all $\zeta\in C^\infty_c(\R^d;\R)$ with $\|\zeta\|_{\Hf}=1$ there holds
\[I_p(\zeta^{\otimes p})=H_p(\Vc_0(\zeta)),\]
where $H_p$ denotes the $p$th Hermite polynomial, that is, $H_p(t):=e^{\frac12t^2}(-\frac{d}{dt})^pe^{-\frac12t^2}$.
The image of $I_p$ is known as the $p$th \emph{Wiener chaos} $\Hc_p\subset \Ld^2(\Omega)$.
Properties of Hermite polynomials easily imply the following orthogonality property: for all $p,q\ge0$ and $\zeta\in\Hf^{\odot p},\zeta'\in\Hf^{\odot q}$,
\begin{equation}\label{eq:orth-Ip}
\langle I_q(\zeta'),I_p(\zeta)\rangle_{\Ld^2(\Omega)}\,=\,\delta_{pq}\,p!\,\langle \zeta',\zeta\rangle_{\Hf^{\otimes p}},
\end{equation}
which in particular entails that $I_p$ is a unitary transformation $\Hf^{\odot p}\to\Hc_p$, where the symmetric tensor product~$\Hf^{\odot p}$ is endowed with the norm
\begin{equation*}
\|\zeta\|_{\Hf^{\odot p}}:=\sqrt{p!}\,\|\zeta\|_{\Hf^{\otimes p}}.
\end{equation*}
In view of~\eqref{eq:K-def}, recall that $\Hf^{\odot p}$ is further isometric to $\Ld^2(\R^d)^{\odot p}=\Ld^2_\Sym((\R^d)^p)$, endowed with the norm
\begin{equation*}
\|u_p\|_{\Ld^2_\Sym((\R^d)^p)}:=\sqrt{p!}\,\|u_p\|_{\Ld^2((\R^d)^p)},
\end{equation*}
so that we are led to the following unitary transformations,
\begin{equation}\label{eq:isom-L2-Wiener}
\begin{array}{ccccc}
\Ld^2_\Sym((\R^d)^p)&\xrightarrow\sim&\Hf^{\odot p}&\xrightarrow\sim&\Hc^p\\
u_p&\mapsto& (K^{-1})^{\otimes p}u_p&\mapsto& I_p((K^{-1})^{\otimes p}u_p),
\end{array}
\end{equation}
and we write for abbreviation
\[J_p(u_p):=I_p((K^{-1})^{\otimes p}u_p).\]
As a consequence of the orthogonality property~\eqref{eq:orth-Ip}, the following \emph{Wiener chaos expansion} holds in form of a (bosonic) Fock space decomposition,
\begin{equation}\label{eq:wiener-chaos}
\Ld^2(\Omega)\,=\,\bigoplus_{p=0}^\infty\Hc_p\,\cong\,\bigoplus_{p=0}^\infty\Ld^2_\Sym((\R^d)^p).
\end{equation}
More precisely, for all $\phi\in\Ld^2(\Omega)$, we can expand
\[\phi=\sum_{p=0}^\infty I_p(\phi_p)=\sum_{p=0}^\infty J_p(u_p),\]
for some unique collection of kernels $\phi_p\in\Hf^{\odot p}$ or $u_p\in\Ld^2_\Sym((\R^d)^p)$, where the expansion is converging in $\Ld^2(\Omega)$. The Stroock formula asserts
\[\phi_p=\frac1{p!}\expec{D^pF},\qquad\text{provided $\phi\in\Dm^{p,2}(\Omega)$}.\]
It can be checked that the $p$th Wiener chaos $\Hc_p$ coincides with the eigenspace of the Ornstein--Uhlenbeck operator $\Lc$ associated with the eigenvalue $p$, so that the Wiener chaos expansion~\eqref{eq:wiener-chaos} coincides with the spectral decomposition of $\Lc$.

\medskip
Intuitively, higher chaoses can be viewed as characterizing higher complexity of randomness. In our study of random Schrödinger operators, the use of Wiener chaos decomposition is reminiscent of cumulant expansions for interacting particle systems, e.g.~\cite{BGSRS-20,D-19}.

\subsection{A new class of operators on $\Ld^2(\Omega)$}\label{sec:oper-L2}
This section is devoted to a general construction allowing to transfer operators on $\Ld^2(\R^d)$ into operators on $\Ld^2(\Omega)$, which will be a key tool in the sequel and is analogous to second quantization in quantum field theory.
Given a bounded operator $T$ on $\Ld^2(\R^d)$, for all $p\ge0$, we denote by $\Op^\circ_p(T)$ the bounded operator on $\Ld^2(\R^d)^{\odot p}=\Ld^2_\Sym((\R^d)^p)$ given by
\[\Op^\circ_p(T)\, g^{\otimes p}=\sum_{j=0}^{p-1}g^{\otimes j}\otimes Tg\otimes g^{\otimes (p-j-1)},\qquad g\in\Ld^2(\R^d).\]
Via the isomorphism~\eqref{eq:isom-L2-Wiener},
we can then construct a {bounded} operator $\Op_p(T)$ on the $p$th Wiener chaos $\Hc_p$ via
\[\Op_p(T)J_p(u_p):=J_p\big(\,\overline{\Op_p^\circ(T^*)\,\overline{u_p}}\,\big),\qquad u_p\in \Ld^2_\Sym((\R^d)^p),\]
where $T^*$ is the adjoint of $T$ on $\Ld^2(\R^d)$.
In particular, on the first chaos, this definition formally yields $\Op_1(T)\int_{\R^d}V_0\zeta=\int_{\R^d}(TK^{-1}V_0)(K\zeta)$.
Via the Wiener chaos decomposition~\eqref{eq:wiener-chaos}, we then let $\Op(T)$ denote the densely defined operator on $\Ld^2(\Omega)$ given by the direct sum
\[\Op(T)=\bigoplus_{p=0}^\infty\Op_p(T).\]
As $\Op(T)$ is obviously $\Lc$-bounded for bounded $T$,
the map $\Op$ provides a linear embedding $\Bc(\Ld^2(\R^d);\Ld^2(\R^d))\to\Bc(\Dm^{2,2}(\Omega);\Ld^2(\Omega))$, but this is however not a group homomorphism as in particular $\Op(\Id)=\Lc$.

\medskip
If~$T$ is bounded and self-adjoint, then $\Op(T)$ is self-adjoint on $\Dm^{2,2}(\Omega)$. More generally, if $T$ is unbounded on $\Ld^2(\R^d)$ and essentially self-adjoint on some subset $\Cc$, then $\Op^\circ_p(T)$ defines an essentially self-adjoint operator on $\cal C^{\odot p}$, hence $\Op_p(T)$ is
essentially self-adjoint on $J_p \Cc^{\odot p}$, cf.~\cite[Theorem VIII.33]{Reed-Simon-72}, and in turn $\Op(T)$ defines an
essentially self-adjoint operator on
\begin{displaymath}
\bigg\{\phi=\sum_{p=0}^\infty J_p(u_p)\,:\, u_p\in \cal C^{\odot p}~\forall p,\text{ and }u_p=0\text{ for $p$ large enough}\bigg\}.
\end{displaymath}
In particular, noting that the definition~\eqref{eq:def-RcO} of $\Rc(\Omega)$ can be reformulated as
\[\Rc(\Omega)\,=\,\bigg\{\phi=\sum_{p=0}^\infty J_p(u_p)\,:\, u_p\in C^\infty_c(\R^d)^{\odot p}~\forall p,\text{ and }u_p=0\text{ for $p$ large enough}\bigg\},\]
we deduce that if $T$ is essentially self-adjoint on $C^\infty_c(\R^d)$, then $\Op(T)$ is essentially self-adjoint on $\Rc(\Omega)$. Similarly, if $T$ leaves $C^\infty_c(\R^d)$ invariant, then $\Op(T)$ leaves $\Rc(\Omega)$ invariant. Also note that the operators $\Op(T)$ and $\cal L$ strongly commute since $\Lc$ acts as $p\Id$ on~$\Hc_p$ and since $\Op(T)$ preserves the chaos decomposition.

\medskip
Next, the following shows that the stationary gradient $\nabla^\st$ corresponds to the spatial gradient $\nabla$ via this embedding $\Op$.

\begin{lem}\label{lem:nabla-Op}
There holds $\nabla^\st=\Op(\nabla)$ on $\Ld^2(\Omega)$. In particular, $\nabla^\st$ preserves the chaos decomposition and commutes strongly with $\cal L$.
\end{lem}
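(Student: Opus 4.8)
The plan is to realize the unitary translation group $\{T_x\}_{x\in\R^d}$ on $\Ld^2(\Omega)$, whose infinitesimal generator is $\nabla^\st$, as the ``second quantization'' through the map $\Op$ of the translation group on $\Ld^2(\R^d)$, and then to differentiate in $x$ at $x=0$. I would work throughout on the dense subspace $\Ec:=\operatorname{span}\{I_p(\zeta^{\otimes p}):p\ge0,\ \zeta\in C^\infty_c(\R^d)\}\subset\Ld^2(\Omega)$, which is dense because $C^\infty_c(\R^d)$ is dense in $\Hf$ and, by polarization, the vectors $\zeta^{\otimes p}$ span a dense subspace of $\Hf^{\odot p}$. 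First I would describe the action of $T_x$ on the Wiener chaos decomposition~\eqref{eq:wiener-chaos}. Since $V_0$ is $\tau$-stationary, a change of variables gives $T_x\Vc_0(\zeta)=\Vc_0(\zeta(\cdot-x))$ for $\zeta\in C^\infty_c(\R^d)$; as $\Cc_0$ depends only on the difference variable, the map $\zeta\mapsto\zeta(\cdot-x)$ is an isometry of $\Hf$ which, under the unitary $K:\Hf\to\Ld^2(\R^d)$ of~\eqref{eq:K-def} (convolution with $\Cc_0^\circ$, hence commuting with translations), corresponds to the translation $\tau_x:u\mapsto u(\cdot-x)$ on $\Ld^2(\R^d)$. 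Because $T_x$ acts on random variables by composition with the measure-preserving map $\tau_{-x}$ it is an algebra homomorphism, so it sends $I_p(\zeta^{\otimes p})=H_p(\Vc_0(\zeta))$ (first for $\|\zeta\|_\Hf=1$, then for all $\zeta$ by linearity of $I_p$) to $H_p(\Vc_0(\zeta(\cdot-x)))=I_p((\zeta(\cdot-x))^{\otimes p})$. Hence each chaos $\Hc_p$ is $T_x$-invariant and, through the unitary $J_p:\Ld^2_\Sym((\R^d)^p)\to\Hc_p$, one has
\[T_x\,J_p(u_p)\,=\,J_p\big(\tau_x^{\otimes p}u_p\big),\qquad u_p\in\Ld^2_\Sym((\R^d)^p).\]

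Next I would differentiate this identity at $x=0$ on $\Ec$. For $u_p=v^{\otimes p}$ with $v=K\zeta\in H^1(\R^d)$ (smoothness of $v$ follows from $\widehat\Cc_0\in\Ld^\infty$), the map $x\mapsto\tau_x^{\otimes p}v^{\otimes p}$ is differentiable in $\Ld^2((\R^d)^p)$, with $\partial_{x_j}$-derivative at $0$ equal to $-\sum_{m=0}^{p-1}v^{\otimes m}\otimes(\partial_jv)\otimes v^{\otimes(p-1-m)}=-\Op^\circ_p(\partial_j)v^{\otimes p}$. Since $J_p$ is, up to the factor $\sqrt{p!}$, an isometry from $\Ld^2((\R^d)^p)$ onto $\Hc_p$, the map $x\mapsto T_xJ_p(u_p)$ is then differentiable in $\Ld^2(\Omega)$ and $\partial^\st_jJ_p(u_p)=-J_p\big(\Op^\circ_p(\partial_j)u_p\big)$ for each component $\partial^\st_j$ of $\nabla^\st$. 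On the other hand, by the definition of $\Op_p$ together with $(\partial_j)^*=-\partial_j$ and the fact that $\Op^\circ_p(\partial_j)$ has real coefficients,
\[\Op_p(\partial_j)\,J_p(u_p)\,=\,J_p\big(\overline{\Op^\circ_p(-\partial_j)\,\overline{u_p}}\big)\,=\,-J_p\big(\Op^\circ_p(\partial_j)u_p\big).\]
Thus $\partial^\st_j$ and $\Op(\partial_j)$ agree on $\Ec$ for every $j$.

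Finally I would upgrade this to an identity of closed operators. The subspace $\Ec$ is invariant under every $T_x$ (as $\zeta(\cdot-x)\in C^\infty_c(\R^d)$) and is contained in $H^1(\Omega)=\dom(\nabla^\st)$, so it is a core for each skew-adjoint component $\partial^\st_j$ of $\nabla^\st$; and since $K(C^\infty_c(\R^d))$ is a dense, translation-invariant subspace of $\Ld^2(\R^d)$ on which $i\partial_j$ is essentially self-adjoint, the properties of $\Op$ established above in this section show that $i\,\Op(\partial_j)$ is essentially self-adjoint on $\Ec$, so that $\Ec$ is also a core for $\Op(\partial_j)$. Two closed operators that agree on a common core coincide, whence $\partial^\st_j=\Op(\partial_j)$ for all $j$, i.e.\ $\nabla^\st=\Op(\nabla)$. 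The remaining two assertions then follow at once: $\Op(\nabla)=\bigoplus_p\Op_p(\nabla)$ with $\Op_p(\nabla)$ mapping $\Hc_p$ into itself, so $\nabla^\st$ preserves the chaos decomposition, and it commutes strongly with $\Lc$ by the general observation already made above that every $\Op(T)$ does, since $\Lc$ acts as $p\,\Id$ on $\Hc_p$.

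I expect the only genuine content to lie in the first step — identifying the $T_x$-action on the first chaos with a translation and propagating it to the higher chaoses by multiplicativity — while the remaining difficulties are bookkeeping: tracking the conjugations in the definition of $\Op_p$, the sign of the generator of translations, the (routine) differentiability in $\Ld^2(\Omega)$ that legitimizes passing to the generator, and the standard core arguments. I anticipate no serious obstacle.
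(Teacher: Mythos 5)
Your proof is correct, but it takes a genuinely different route from the paper's. The paper argues purely algebraically at the level of a single chaos: it computes $\nabla^\st I_p(\zeta^{\otimes p})=H_p'(\Vc_0(\zeta))\,\nabla^\st\Vc_0(\zeta)=-pI_1(\nabla\zeta)I_{p-1}(\zeta^{\otimes(p-1)})$ using the Hermite recursion $H_p'=pH_{p-1}$, then applies the product formula~\eqref{eq:prod-form} for multiple integrals and observes that the contraction term drops out because $\langle\zeta,\nabla\zeta\rangle_\Hf=\langle K\zeta,\nabla K\zeta\rangle_{\Ld^2(\R^d)}=0$ for real $\zeta$. You instead work at the level of the unitary groups: you identify the translation group $\{T_x\}$ on $\Ld^2(\Omega)$ as the second quantization of spatial translations (using that $K$ is a convolution, hence commutes with translations, and that composition with $\tau_{-x}$ is multiplicative), and then differentiate at $x=0$ and close the argument with a standard invariant-core lemma. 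Your route buys a cleaner conceptual picture — it exhibits $\nabla^\st=\Op(\nabla)$ as the infinitesimal version of "$T_x$ is the $\Op$-image of translation", exactly parallel to the later construction of $A^\st$ from dilations, and it is in effect the special case $T=\frac1i\partial_j$ of Lemma~\ref{lem:explic-eitT} (whose proof does not rely on the present lemma, so no circularity); it also delivers the identity of closed operators on their full domains rather than only on the spanning vectors $I_p(\zeta^{\otimes p})$. The paper's computation, on the other hand, is shorter, avoids the generator/core machinery, and rehearses precisely the product-formula manipulations that are reused repeatedly afterwards (Lemma~\ref{lem:V-decomp}, Proposition~\ref{prop:commut}). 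The only points worth making explicit in your write-up are the polarization step showing that $\operatorname{span}\{\zeta^{\otimes p}\}$ exhausts $C^\infty_c(\R^d)^{\odot p}$, and the fact that $K\zeta\in H^1(\R^d)$ (immediate from $\widehat\Cc_0\in\Ld^\infty$ in the short-range setting assumed throughout Section~\ref{sec:Mourre}); both are routine, as you anticipate.
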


\begin{proof}
Given $\zeta\in C^\infty_c(\R^d;\R)$, we compute
\[\nabla^\st I_p(\zeta^{\otimes p})=\nabla^\st H_p(\Vc_0(\zeta))=H_p'(\Vc_0(\zeta))\nabla^\st\Vc_0(\zeta).\]
Noting that $\nabla^\st\Vc_0(\zeta)=-\Vc_0(\nabla\zeta)$ and recalling that Hermite polynomials satisfy $H_p'=pH_{p-1}$, we deduce
\begin{equation}\label{eq:partial-res-nabla-Ip}
\nabla^\st I_p(\zeta^{\otimes p})=-p\Vc_0(\nabla\zeta)H_{p-1}(\Vc_0(\zeta))=-pI_1(\nabla\zeta)I_{p-1}(\zeta^{\otimes(p-1)}).
\end{equation}
Next, we appeal to the following useful product formula for multiple integrals (see e.g.~\cite[Section~2.7.3]{NP-book} for a more general statement):
for all $q\ge1$ and $\zeta_0,\zeta_1\in C^\infty_c(\R^d;\R)$,
\begin{equation}\label{eq:prod-form}
I_1(\zeta_1)I_q(\zeta_0^{\otimes q})=I_{q+1}(\zeta_1\widetilde\otimes \zeta_0^{\otimes q})+qI_{q-1}(\zeta_1\widetilde\otimes_1\zeta_0^{\otimes q}),
\end{equation}
where we have set
\begin{eqnarray*}
\zeta_1\widetilde\otimes\zeta_0^{\otimes q}&:=&\frac1{q+1}\sum_{j=0}^{q}\zeta_0^{\otimes j}\otimes\zeta_1\otimes\zeta_0^{\otimes(q-j)},\\
\zeta_1\widetilde\otimes_1\zeta_0^{\otimes q}&:=&\langle\zeta_1,\zeta_0\rangle_\Hf\,\zeta_0^{\otimes(q-1)}.
\end{eqnarray*}
Inserting this formula into~\eqref{eq:partial-res-nabla-Ip}, we find
\[\nabla^\st I_p(\zeta^{\otimes p})=-pI_p(\nabla\zeta\widetilde\otimes\zeta^{\otimes(p-1)})-p(p-1)I_{p-2}(\nabla\zeta\widetilde\otimes_1\zeta^{\otimes(p-1)}).\]
Since $\langle\zeta,\nabla\zeta\rangle_\Hf=\langle K\zeta,\nabla K\zeta\rangle_{\Ld^2(\R^d)}=0$ for real-valued $\zeta$, the second right-hand side term vanishes and we are led to
\[\nabla^\st I_p(\zeta^{\otimes p})=-pI_p(\nabla\zeta\widetilde\otimes\zeta^{\otimes(p-1)})=\Op(\nabla)I_p(\zeta^{\otimes p}).\]
In addition, this formula ensures that $\nabla^\st$ preserves the chaos decomposition.
\end{proof}

Given a self-adjoint operator $T$ on $\Ld^2(\R^d)$, the operator $\Op(T)$ on $\Ld^2(\Omega)$ is also self-adjoint and we may consider the corresponding unitary $C_0$-groups.
If $iT$ preserves the real part, then the group $\{e^{it\Op(T)}\}_{t\in\R}$ on $\Ld^2(\Omega)$ is shown to admit an explicit description.

\begin{lem}\label{lem:explic-eitT}
Let $T$ be essentially self-adjoint on $C^\infty_c(\R^d)$, and assume that the subset of real-valued functions $\Ld^2(\R^d;\R)$ is invariant under $\{e^{itT}\}_{t\in\R}$.
Then the operator $\Op(T)$ generates a unitary $C_0$-group $\{e^{it\Op(T)}\}_{t\in\R}$ on $\Ld^2(\Omega)$, which has the following explicit action: for all $\phi\in\Rc(\Omega)$, say $\phi=g(\Vc_0(\zeta_1),\ldots,\Vc_0(\zeta_n))$,
\[e^{it\Op(T)}\phi\,=\,g\big(\Vc_0(K^{-1}e^{-itT}K\zeta_1),\ldots,\Vc_0(K^{-1}e^{-itT}K\zeta_n)\big).\]
In particular, this entails that $e^{it\Op(T)}$ is multiplicative, that is, for all $\phi,\psi\in\Rc(\Omega)$,
\[e^{it\Op(T)}(\phi\psi)=(e^{it\Op(T)}\phi)(e^{it\Op(T)}\psi),\]
which implies that $\Op(T)$ is a derivation, that is, for all $\phi,\psi\in\Rc(\Omega)$,
\[\Op(T)(\phi \psi)=\psi\Op(T)\phi+\phi\Op(T)\psi.\qedhere\]
\end{lem}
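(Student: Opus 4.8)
The plan is to construct the claimed group by hand and identify it with $e^{it\Op(T)}$ via Stone's theorem. Since $T$ is essentially self-adjoint on $C^\infty_c(\R^d)$, the operator $\Op(T)$ is essentially self-adjoint on $\Rc(\Omega)$ by the discussion preceding Lemma~\ref{lem:nabla-Op}, so $\overline{\Op(T)}$ generates a unitary $C_0$-group, and it remains only to show this group is given by the stated formula. First I would record that $K$ maps real-valued functions to real-valued functions, since $\Cc_0^\circ$ is real (its Fourier transform $(\widehat\Cc_0)^{1/2}$ being a nonnegative even function); hence the assumed invariance of $\Ld^2(\R^d;\R)$ under $\{e^{itT}\}_{t\in\R}$ transfers to invariance of the closed real subspace $\Hf_\R\subset\Hf$ spanned by real test functions under the strongly continuous unitary group $W_t:=K^{-1}e^{-itT}K$ on $\Hf$. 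For $\phi=g(\Vc_0(\zeta_1),\dots,\Vc_0(\zeta_n))\in\Rc(\Omega)$ I then set $U_t\phi:=g\big(\Vc_0(W_t\zeta_1),\dots,\Vc_0(W_t\zeta_n)\big)$; since $W_t$ preserves $\Hf_\R$, this lands in the (still dense and product-closed) space of polynomials in $\{\Vc_0(\zeta):\zeta\in\Hf_\R\}$.

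The core of the argument is to check that $\{U_t\}_{t\in\R}$ is a strongly continuous unitary group with generator $\Op(T)$. Well-definedness and the isometry $\|U_t\phi\|_{\Ld^2(\Omega)}=\|\phi\|_{\Ld^2(\Omega)}$ follow because $\big(\Vc_0(W_t\zeta_1),\dots,\Vc_0(W_t\zeta_n)\big)$ is a centered Gaussian vector whose covariance $\langle W_t\zeta_i,W_t\zeta_j\rangle_{\Hf}=\langle\zeta_i,\zeta_j\rangle_{\Hf}$ does not depend on $t$, hence has the same law as $\big(\Vc_0(\zeta_1),\dots,\Vc_0(\zeta_n)\big)$; in particular two polynomial representations of the same $\phi$ yield the same $U_t\phi$ almost surely. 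Extending by density, the identity $U_{-t}=U_t^{-1}$ gives unitarity, $W_sW_t=W_{s+t}$ gives the group law, and $\Vc_0(W_t\zeta)\to\Vc_0(\zeta)$ in $\Ld^2(\Omega)$ as $t\to0$ together with continuity of $g$ and uniform integrability of polynomial Gaussian functionals (hypercontractivity) gives strong continuity on $\Rc(\Omega)$, hence on $\Ld^2(\Omega)$ since $\|U_t\|=1$. Writing $U_t=e^{itB}$ with $B$ self-adjoint, one identifies $B$ on a core: for $\phi=g(\Vc_0(\zeta_1),\dots,\Vc_0(\zeta_n))$ with each $K\zeta_j$ in a core for $T$, one differentiates using $\tfrac{d}{dt}W_t\zeta_j\big|_{t=0}=K^{-1}(-iT)K\zeta_j$ in $\Hf$ and the chain rule (valid in every $\Ld^p(\Omega)$ since $g$ is a polynomial), obtaining $\lim_{t\to0}t^{-1}(U_t\phi-\phi)$ in closed form; comparing with the action of $\Op(T)$ on the first Wiener chaos, which is read directly off the definition of $\Op_1(T)$, and using that $\Op(T)$ preserves the chaos decomposition, one gets $B=\Op(T)$ on that core. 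As the core is dense in the graph norm of $\Op(T)$ and $B$ is a self-adjoint extension of the essentially self-adjoint operator $\Op(T)$, we conclude $B=\overline{\Op(T)}$, i.e. $U_t=e^{it\Op(T)}$ with the asserted explicit action.

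Multiplicativity is then immediate from the formula: for $\phi=g(\Vc_0(\zeta_1),\dots,\Vc_0(\zeta_n))$ and $\psi=h(\Vc_0(\eta_1),\dots,\Vc_0(\eta_m))$ one writes $\phi\psi=(g\otimes h)\big(\Vc_0(\zeta_1),\dots,\Vc_0(\zeta_n),\Vc_0(\eta_1),\dots,\Vc_0(\eta_m)\big)$, where $g\otimes h$ denotes $(x,y)\mapsto g(x)h(y)$, and applies the formula coordinate by coordinate to get $e^{it\Op(T)}(\phi\psi)=\big(e^{it\Op(T)}\phi\big)\big(e^{it\Op(T)}\psi\big)$. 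Differentiating this identity at $t=0$ on $\Rc(\Omega)$ — all terms being polynomial Gaussian functionals, hence $C^1$ in $t$ with values in every $\Ld^p(\Omega)$ — yields $i\Op(T)(\phi\psi)=\big(i\Op(T)\phi\big)\psi+\phi\big(i\Op(T)\psi\big)$, which is the asserted derivation property. The main obstacle is the generator identification: one must handle the domain subtlety that $K\zeta_j$ need not lie in $D(T)$ (so the differentiation is first performed on a suitably chosen core and then transferred using essential self-adjointness of $\Op(T)$ on $\Rc(\Omega)$) and carry out cleanly the chain-rule expansion of a polynomial in jointly Gaussian variables with remainder controlled in $\Ld^2(\Omega)$; the remaining steps are routine Gaussian bookkeeping.
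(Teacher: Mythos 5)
Your proposal is correct and follows essentially the same route as the paper: define the candidate group by the explicit formula on $\Rc(\Omega)$, verify it is a well-defined unitary $C_0$-group by invariance of the underlying Gaussian law (using that $e^{-itT}$ preserves real-valued functions), identify its generator with $\Op(T)$ on $\Rc(\Omega)$ by differentiation, and conclude by essential self-adjointness of $\Op(T)$ on $\Rc(\Omega)$; multiplicativity and the derivation property then follow by inspection and differentiation at $t=0$. Your extra care about whether $K\zeta_j$ lies in $D(T)$ is a technical point the paper glosses over, and your workaround via a graph-norm-dense core is a legitimate way to close it.
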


\begin{proof}
Denote by $\{\widetilde U_t\}_{t\in\R}$ the group of operators defined on $\Rc(\Omega)$ as in the statement: for all $\phi\in\Rc(\Omega)$, say $\phi=g(\Vc_0(\zeta_1),\ldots,\Vc_0(\zeta_n))$ with $n\ge1$ and $\zeta_1,\ldots,\zeta_n\in C^\infty_c(\R^d;\R)$,
\[\widetilde U_t\phi\,:=\,g\big(\Vc_0(K^{-1}e^{-itT}K\zeta_1),\ldots,\Vc_0(K^{-1}e^{-itT}K\zeta_n)\big),\]
hence in particular, for all $\zeta\in C^\infty_c(\R^d;\R)$,
\begin{equation}\label{eq:redef-Ut-til}
\widetilde U_tI_p(\zeta^{\otimes p})=\widetilde U_tH_p(\Vc_0(\zeta))=H_p\big(\Vc_0(K^{-1}e^{-itT}K\zeta)\big)=I_p\big((K^{-1}e^{-itT}K\zeta)^{\otimes p}\big).
\end{equation}
This is well-defined since $e^{-itT}$ is assumed to preserve $\Ld^2(\R^d;\R)$.
(Note that $K^{-1}e^{-itT}K\zeta$ may of course no longer have any representative in $C^\infty_c(\R^d;\R)$ in its equivalence class in~$\Hf$.)
Noting that
\begin{multline*}
\big\langle\Vc_0(K^{-1}e^{-itT}K\zeta_j),\Vc_0(K^{-1}e^{-itT}K\zeta_l)\big\rangle_{\Ld^2(\Omega)}\,=\,\big\langle K^{-1}e^{-itT}K\zeta_j,K^{-1}e^{-itT}K\zeta_l\big\rangle_{\Hf}\\
\,=\,\big\langle e^{-itT}K\zeta_j,e^{-itT}K\zeta_l\big\rangle_{\Ld^2(\R^d)}\,=\,\langle K\zeta_j,K\zeta_l\rangle_{\Ld^2(\R^d)}\,=\,\langle \zeta_j,\zeta_l\rangle_{\Hf}\,=\,\langle\Vc_0(\zeta_j),\Vc_0(\zeta_l)\rangle_{\Ld^2(\Omega)},
\end{multline*}
and using again the assumption that $e^{-itT}$ preserves $\Ld^2(\R^d;\R)$,
we deduce that the (Gaussian) law of $\big(\Vc_0(K^{-1}e^{-itT}K\zeta_1),\ldots,\Vc_0(K^{-1}e^{-itT}K\zeta_n)\big)$ is invariant with respect to~$t$, hence for all $\phi\in\Rc(\Omega)$ and $t\in\R$,
\[\|\widetilde U_t\phi\|_{\Ld^2(\Omega)}=\|\phi\|_{\Ld^2(\Omega)}.\]
This allows to extend $\{\widetilde U_t\}_{t\in\R}$ by density as a unitary group on $\Ld^2(\Omega)$.
In addition, as $\{e^{-itT}\}_{t\in\R}$ is strongly continuous on $\Ld^2(\R^d)$, it is easily deduced that $\{\widetilde U_t\}_{t\in\R}$ is strongly continuous on $\Ld^2(\Omega)$. We denote by $i\widetilde T$ its skew-adjoint generator on $\Ld^2(\Omega)$.
Differentiating~\eqref{eq:redef-Ut-til} with respect to $t$ shows that the domain of $\widetilde{T}$ contains $\cal R(\Omega)$ and that $\widetilde T=\Op(T)$ on $\cal R(\Omega)$. Since $T$ is essentially self-adjoint on $C^\infty_c(\R^d)$, $\Op(T)$ is essentially self-adjoint on $\cal R(\Omega)$, and we conclude $\Op(T)=\widetilde T$, hence $\widetilde U_t=e^{it\Op(T)}$.
\end{proof}

In view of the application to Mourre's theory for Schrödinger operators, cf.~Section~\ref{sec:Mourre-intro}, we recall the definition of the unitary $C_0$-group of dilations $U_tg:=e^{td/2}g(e^t\cdot)$ on $\Ld^2(\R^d)$, and its generator $iA:=\frac1{2}\big(x\cdot\nabla+\nabla\cdot x\big)$.
We then define the self-adjoint operator
\begin{equation}\label{eq:defin-Ast}
A^\st:=\Op(A),\qquad\text{on $\Ld^2(\Omega)$,}
\end{equation}
and the associated unitary $C_0$-group $\{U_t^\st\}_{t\in\R}$ given by $U_t^\st:=e^{itA^\st}$.
Due to Lemma~\ref{lem:explic-eitT}, this satisfies, for all $\phi\in\Rc(\Omega)$, say $\phi=g(\Vc_0(\zeta_1),\ldots,\Vc_0(\zeta_n))$,
\begin{equation}\label{eq:Utst-action}
U_t^\st\phi\,=\,g\big(\Vc_0(K^{-1}U_{-t}K\zeta_1),\ldots,\Vc_0(K^{-1}U_{-t}K\zeta_n)\big),
\end{equation}
which entails in particular that the spaces $H^s(\Omega)$ are invariant under $\{U_t^{\st}\}_{t\in\R}$ for all $s\ge0$.

\subsection{Chaos decomposition of fibered operators}
While the unperturbed operators $\{H_{k,0}^\st\}_k$ preserve the chaos decomposition, cf.~Lemma~\ref{lem:nabla-Op}, we show that the random potential amounts to shifting the chaoses, thus playing the role of annihilation and creation operators on the Fock space decomposition.
This structure of the perturbed operators~$\{H_{k,\lambda}^\st\}_k$ can be viewed as drawing some surprising link between random Schrödinger operators and multi-particle quantum systems.

\begin{lem}\label{lem:V-decomp}
Assume for simplicity that $V=V_0$ is itself a Gaussian field.
Via the Wiener chaos decomposition~\eqref{eq:wiener-chaos}, for all $k\in\R^d$, the perturbed fibered operator $H_{k,\lambda}^\st=H_{k,0}^\st+\lambda V$ on $\Ld^2(\Omega)$ is unitarily equivalent to $T_k+\lambda(a+a^*)$ in terms of
\[T_k:=\bigoplus_{p=0}^\infty T_{k,p},\qquad a:=\bigoplus_{p=0}^\infty a_p,\qquad a^*:=\bigoplus_{p=0}^\infty a^*_p,\qquad\text{on}~~\bigoplus_{p=0}^\infty\Ld^2_\Sym((\R^d)^p),\]
in terms of
\[T_{k,p}:=-\Op^\circ_p(\nabla+\tfrac{ik}p)\cdot\Op^\circ_p(\nabla+\tfrac{ik}p)-|k|^2,\qquad\text{on}~~\Ld^2_\Sym((\R^d)^p),\]
and in terms of the annihilation and creation operators
\begin{eqnarray*}
a_p&:&\Ld^2_\Sym((\R^d)^{p+1})~\to~\Ld^2_\Sym((\R^d)^p),\\
a_p^*&:&\Ld^2_\Sym((\R^d)^p)~\to~\Ld^2_\Sym((\R^d)^{p+1}),
\end{eqnarray*}
which are defined as follows for all $u_{p+1}\in\Ld^2_\Sym((\R^d)^{p+1})$ and $u_{p}\in\Ld^2_\Sym((\R^d)^{p})$,
\begin{eqnarray*}
(a_pu_{p+1})(x_1,\ldots,x_p)&:=&\sum_{j=1}^{p+1}\int_{\R^d}\Cc_0^\circ(z)\,u_{p+1}(x_1,\ldots,x_{j-1},z,x_{j},\ldots,x_p)\,dz,\\
(a_p^*u_{p})(x_1,\ldots,x_{p+1})&:=&\frac1{p+1}\sum_{j=1}^{p+1}\Cc_0^\circ(x_j)\,u_{p}(x_1,\ldots,x_{j-1},x_{j+1},\ldots,x_{p+1}).
\end{eqnarray*}
For all $p$, the operators $a_p$ and $a_p^*$ are bounded and adjoint, with
\[\|a_p\|\,\le\,(p+1)^\frac12\,\expecm{|V_0|^2},\qquad\|a_p^*\|\,\le\,(p+1)^\frac12\,\expecm{|V_0|^2},\]
and thus the operators $a$ and $a^*$ are $\Lc^\frac12$-bounded and are adjoint in particular on $\Dm^{1,2}(\Omega)$. In addition, they satisfy the commutator relation $[a,a^*]=\expec{|V_0|^2}$ on $\Dm^{2,2}(\Omega)$.
\end{lem}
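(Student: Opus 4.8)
The plan is to run every computation through the bosonic Fock space picture furnished by the Wiener chaos decomposition~\eqref{eq:wiener-chaos}, i.e.\ through the ($k$-independent) unitary $\bigoplus_{p\ge0}J_p$ from $\bigoplus_p\Ld^2_\Sym((\R^d)^p)$ onto $\bigoplus_p\Hc_p=\Ld^2(\Omega)$, and to read off the pieces of the statement one at a time. First I would treat the unperturbed part: by Lemma~\ref{lem:nabla-Op}, $\nabla^\st=\Op(\nabla)$ preserves the chaos decomposition and is intertwined on $\Hc_p$ by $J_p$ with $\Op^\circ_p(\nabla)$. Using the linearity of $T\mapsto\Op^\circ_p(T)$ together with the identity $\Op^\circ_p(\Id)=p\,\Id$ on $\Ld^2_\Sym((\R^d)^p)$ — which turns multiplication by the constant vector $ik$ into $\Op^\circ_p(\tfrac{ik}p)$ — one gets that $\nabla^\st+ik$ corresponds on $\Hc_p$ to $\Op^\circ_p(\nabla+\tfrac{ik}p)$, hence, after squaring, that $H^\st_{k,0}=-(\nabla^\st+ik)\cdot(\nabla^\st+ik)-|k|^2$ corresponds to $T_{k,p}$. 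Since the algebraic direct sum of the chaoses is $\Rc(\Omega)$, on which $H^\st_{k,0}$ is essentially self-adjoint, this yields $H^\st_{k,0}\cong T_k$.

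The core step is the potential term, for which I would use the product formula~\eqref{eq:prod-form} for multiple integrals. Recall $V_0=\Vc_0(\delta_0)=I_1(\delta_0)$ with $\delta_0\in\Hf$, $\|\delta_0\|_\Hf^2=\Cc_0(0)=\expec{|V_0|^2}$, and $K\delta_0=\Cc_0^\circ$. For $\phi=I_p(\zeta^{\otimes p})$ with $\zeta\in C^\infty_c(\R^d;\R)$, applying~\eqref{eq:prod-form} after approximating $\delta_0$ in $\Hf$ by a sequence in $C^\infty_c(\R^d;\R)$ (and using continuity of the contractions $\widetilde\otimes,\widetilde\otimes_1$ and of the maps $I_{p\pm1}$) gives
\[V_0\,I_p(\zeta^{\otimes p})\,=\,I_{p+1}\big(\delta_0\,\widetilde\otimes\,\zeta^{\otimes p}\big)+p\,I_{p-1}\big(\langle\delta_0,\zeta\rangle_\Hf\,\zeta^{\otimes(p-1)}\big),\]
so that multiplication by $V_0$ shifts the chaos index by $\pm1$. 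Rewriting both terms in the $\Ld^2_\Sym$ coordinates of~\eqref{eq:isom-L2-Wiener} ($K\delta_0=\Cc_0^\circ$, $\langle\delta_0,\zeta\rangle_\Hf=\int\Cc_0^\circ(z)(K\zeta)(z)\,dz$) and extending from the rank-one tensors $(K\zeta)^{\otimes p}$ to general symmetric tensors by polarization and density, one identifies the raising part with the stated $a^*$ and the lowering part with the stated $a$; the factor $\tfrac1{p+1}$ in $a_p^*$ and the sum of $p{+}1$ insertions in $a_p$ are precisely the normalizations relating $\|\cdot\|_{\Hf^{\odot p}}=\sqrt{p!}\,\|\cdot\|_{\Hf^{\otimes p}}$ to the plain tensor norm. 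This gives $V_0\cong a+a^*$ under the same unitary, hence $H^\st_{k,\lambda}=H^\st_{k,0}+\lambda V_0\cong T_k+\lambda(a+a^*)$.

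It then remains to collect the quantitative claims. The norm bounds $\|a_p\|,\|a_p^*\|\lesssim(p+1)^{1/2}$ (with the stated constants) follow either from the standard creation/annihilation estimates on the symmetric Fock space over $\Hf$ with the single mode $\delta_0$, using $\|\delta_0\|_\Hf^2=\expec{|V_0|^2}$, or from a direct Cauchy--Schwarz estimate on the explicit kernels; mutual adjointness of $a_p$ and $a_p^*$ on each $\Ld^2_\Sym((\R^d)^p)$ is immediate from the kernels (or from the symmetry of multiplication by the real field $V_0$). Since $\Lc$ acts as $p\,\Id$ on $\Hc_p$ and $\Dm^{1,2}(\Omega)=\dom(\Lc^{1/2})$, summing the chaos-wise estimates $\|a_p\phi_{p+1}\|\lesssim(p+1)^{1/2}\|\phi_{p+1}\|$ shows that $a,a^*$ are $\Lc^{1/2}$-bounded and adjoint on $\Dm^{1,2}(\Omega)$. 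Finally, $[a,a^*]=\expec{|V_0|^2}$ on $\Dm^{2,2}(\Omega)$ is the canonical commutation relation $[a(\delta_0),a^*(\delta_0)]=\|\delta_0\|_\Hf^2\,\Id$ for one Fock mode, which I would also check by hand by composing the kernels of $a_p,a_p^*$ (the diagonal contribution producing $\int(\Cc_0^\circ)^2=\Cc_0(0)=\expec{|V_0|^2}$).

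The hard part will be Step~2: invoking the product formula with the non-smooth kernel $\delta_0$ forces the $\Hf$-density argument — this is exactly where $\Cc_0\in\Ld^\infty$, i.e.\ $\expec{|V_0|^2}<\infty$, is used — and one must then match the two contraction terms to the explicit kernels of $a_p$ and $a_p^*$ while keeping the symmetrization normalizations (between the $\odot$- and $\otimes$-norms) and the complex-conjugation convention built into $\Op_p$ straight. Steps~1 and~3 should reduce to routine bookkeeping once the Fock space dictionary is set up.
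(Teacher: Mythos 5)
Your proposal is correct and follows essentially the same route as the paper's proof: expand along the Wiener chaos decomposition, handle the kinetic part via Lemma~\ref{lem:nabla-Op}, write $V_0=I_1(\delta_0)$ and apply the product formula~\eqref{eq:prod-form} (extended from $C^\infty_c$ to $\delta_0\in\Hf$ by density) to identify the raising/lowering parts with $a^*$ and $a$, then verify adjointness, the norm bounds, and the commutation relation by direct computation on the explicit kernels. The only caveat is a sign bookkeeping point you flag yourself: since Lemma~\ref{lem:nabla-Op} gives $\nabla^\st J_p(u_p)=J_p(-\Op_p^\circ(\nabla)u_p)$, the operator $\nabla^\st+ik$ is intertwined with $\Op_p^\circ(-\nabla+\tfrac{ik}{p})$ rather than $\Op_p^\circ(\nabla+\tfrac{ik}{p})$, but this only flips the sign of the first-order term and yields a unitarily equivalent (indeed spectrally identical) $T_{k,p}$, so nothing downstream is affected.
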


\begin{proof}
Given $\phi\in\Ld^2(\Omega)$, we consider its Wiener chaos expansion $\phi=\sum_{p=0}^\infty J_p(u_p)$, with $u_p \in\Ld^2_\Sym((\R^d)^p)$. In the case $V=V_0$, we can write $V=I_1(\delta_0)$, hence
\begin{equation*}
H_{k,\lambda}^\st\phi\,=\,\sum_{p=0}^\infty H_{k,0}J_p(u_p)+\lambda\sum_{p=0}^\infty I_1(\delta_0)J_p(u_p),
\end{equation*}
and the conclusion follows from Lemma~\ref{lem:nabla-Op}
and the product formula~\eqref{eq:prod-form}.
Finally, a direct computation ensures that $a$ and $a^*$ satisfy the usual properties of annihilation and creation operators,
\begin{gather*}
\langle u_{p},a_{p}u_{p+1}\rangle_{\Ld^2_\Sym((\R^d)^{p})}=\langle a_{p}^*u_{p},u_{p+1}\rangle_{\Ld^2_\Sym((\R^d)^{p+1})},\\
(a_{p}a_{p}^*-a_{p-1}^*a_{p-1})u_p=\Big(\int_{\R^d}(\Cc_0^\circ)^2\Big)u_p= \Cc_0(0)\,u_p= \expec{|V_0|^2}u_p.\qedhere
\end{gather*}
\end{proof}

\subsection{Some Mourre relations on $\Ld^2(\Omega)$}\label{sec:Mourre-rel}
Drawing inspiration from the construction of Mourre conjugates for Schrödinger operators on $\Ld^2(\R^d)$, cf.~Section~\ref{sec:Mourre-intro}, we show in item~(i) below that the generator of dilations $A^\st$ on $\Ld^2(\Omega)$ as constructed in Section~\ref{sec:oper-L2} is a conjugate for the stationary Laplacian $-\triangle^\st$.
Nevertheless, item~(iii) indicates that the perturbation by the random potential $V$ is never compatible in the sense of Mourre's theory for this conjugate operator, which prohibits to deduce any Mourre relation for perturbed operators of the form $-\triangle^\st+\lambda V$.
In spite of this, the incompatibility is shown to be comparable to the lack of boundedness of the underlying Gaussian field in the sense that it is bounded on any fixed chaos and $\Lc^{1/2}$-bounded on $\Ld^2(\Omega)$.
Finally, in item~(iv), we show that the action of the random potential~$V$ as described in Lemma~\ref{lem:V-decomp} on the Fock space allows to associate a natural conjugate. In other words, the stationary Laplacian~$-\triangle^\st$ describes diffusion on each chaos and the random potential $V$ describes shifts between chaoses: the transport properties of both parts are well understood and natural conjugates can be constructed for both, cf.~items~(i) and~(iv), but the construction of a conjugate for \mbox{$-\triangle^\st+\lambda V$} appears particularly difficult and is left as an open problem. In a semiclassical perspective, this is related to the construction of escape functions for the random acceleration model~\cite{Kesten-Papa-80,Komorowski-Ryzhik-06}.

\begin{prop}[Some Mourre relations]\label{prop:commut}$ $
\begin{enumerate}[(i)]
\item \emph{Conjugate operator for $-\triangle^\st$:}\\
The generator of dilations $A^\st$ on $\Ld^2(\Omega)$, cf.~\eqref{eq:defin-Ast}, satisfies
\[[-\triangle^\st,\tfrac1iA^\st]=2\,(-\triangle^\st),\]
and the domain $H^2(\Omega)$ of $-\triangle^\st$ is invariant under $\{U_t^\st=e^{itA^\st}\}_{t\in\R}$.
\item \emph{Conjugate operators for $\frac1i\nabla^\st_1$:}
\[[i\nabla_1^\st,\tfrac1iA^\st]=i\nabla_1^\st,\qquad
[i\nabla_1^\st,\Op(ix_1)]=\Lc.\]
\item \emph{Incompatibility of the perturbation:}\\
The commutator $[V,iA^\st]$ is well-defined and essentially self-adjoint on $\Rc(\Omega)$, but only $\Lc^{1/2}$-bounded provided $A\Cc_0^\circ\in\Ld^2(\R^d)$.
If~$V=V_0$ is itself Gaussian, then similarly $[V,\Op(ix_1)]$ is well-defined and essentially self-adjoint on $\Rc(\Omega)$, but only $\Lc^{1/2}$-bounded provided $x_1\Cc_0^\circ\in\Ld^2(\R^d)$.
\item \emph{Conjugate operator for the perturbation:}\\
If $V=V_0$ is itself Gaussian, then
there holds
\[[V,[\Lc,V]]=2\,\expec{|V|^2}.\qedhere\]
\end{enumerate}
\end{prop}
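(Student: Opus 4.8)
The plan is to treat the four items separately; all of them hinge on the identifications $\nabla^\st=\Op(\nabla)$, $A^\st=\Op(A)$, $\Op(\Id)=\Lc$ from Section~\ref{sec:oper-L2}, on the explicit action~\eqref{eq:Utst-action}, on Lemma~\ref{lem:explic-eitT}, on the product formula~\eqref{eq:prod-form}, and on the chaos decomposition of Lemma~\ref{lem:V-decomp}. The one general fact I would isolate at the outset is the commutation rule
\[
[\Op(S),\Op(T)]=-\Op([S,T])\qquad\text{on }\Rc(\Omega),
\]
for any $S,T$ leaving $C^\infty_c(\R^d)$ invariant: on each Wiener chaos $\Hc_p$ the map $\Op(\,\cdot\,)$ is, up to the conjugation in its definition, the second quantization $B\mapsto\sum_{j=1}^pB_{x_j}$ on $\Ld^2_\Sym((\R^d)^p)$, operators on distinct tensor slots commute so that $[\sum_jS_{x_j},\sum_lT_{x_l}]=\sum_j[S,T]_{x_j}$, and the two conjugations produce the global sign (equivalently, $\Op(S)$ is minus the $d\Gamma$ of the $K$-conjugate of $S$ and $d\Gamma$ is a Lie homomorphism). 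Here $\Rc(\Omega)$, cf.~\eqref{eq:def-RcO}, serves as a common core, left invariant by $\Op(T)$ whenever $T$ preserves $C^\infty_c(\R^d)$.

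For~(i), I would write $-\triangle^\st=-\Op(\nabla)\cdot\Op(\nabla)$ and combine the Leibniz rule for commutators with the rule above and the elementary identity $[\nabla,A]=-i\nabla$ on $\Ld^2(\R^d)$ (immediate from $A=\tfrac1{2i}(x\cdot\nabla+\nabla\cdot x)=-i(x\cdot\nabla+\tfrac d2)$): since $[\Op(\nabla),\tfrac1iA^\st]=-\tfrac1i\Op([\nabla,A])=\Op(\nabla)=\nabla^\st$, the Leibniz rule gives $[-\triangle^\st,\tfrac1iA^\st]=-2\,\nabla^\st\cdot\nabla^\st=2(-\triangle^\st)$ on $\Rc(\Omega)$, which extends to the claimed form identity (the right-hand side $2(-\triangle^\st)$ being trivially $(-\triangle^\st)$-bounded). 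Invariance of $H^2(\Omega)$ under $\{U_t^\st\}$ was already noted after~\eqref{eq:Utst-action}, since $U_t^\st$ acts by replacing each $\zeta_j$ by $K^{-1}U_{-t}K\zeta_j$ with $U_{-t}$ unitary, hence preserves every $H^s(\Omega)$. For~(ii), the same rule yields directly $[i\nabla_1^\st,\tfrac1iA^\st]=-\Op([i\partial_1,\tfrac1iA])=-\Op(-i\partial_1)=i\nabla_1^\st$ and $[i\nabla_1^\st,\Op(ix_1)]=-\Op([i\partial_1,ix_1])=-\Op(-\Id)=\Op(\Id)=\Lc$, both on $\Rc(\Omega)$.

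For~(iii), the incompatibility statement, I would argue as follows. Because $iA$ preserves real-valued functions, Lemma~\ref{lem:explic-eitT} shows $iA^\st$ is a derivation on $\Rc(\Omega)$, so that $[V,iA^\st]$ is multiplication by the random variable $-iA^\st V$. Using~\eqref{eq:Utst-action} and $U_t^\st b(V_0)=b(U_t^\st V_0)$ one finds $A^\st V=b'(V_0)\,A^\st V_0=-b'(V_0)\,\Vc_0(K^{-1}A\Cc_0^\circ)$, whence $[V,iA^\st]$ is multiplication by the real random variable $b'(V_0)\,\Vc_0\big(K^{-1}(x\cdot\nabla\Cc_0^\circ+\tfrac d2\Cc_0^\circ)\big)$, which lies in $\Ld^2(\Omega)$ exactly when $A\Cc_0^\circ\in\Ld^2(\R^d)$; multiplication by a real random variable is symmetric and essentially self-adjoint on the core $\Rc(\Omega)$, and since this variable has a nonzero first-chaos component the product formula~\eqref{eq:prod-form} bounds it by $\Lc^{1/2}$ but by no constant (the $\Lc^{1/2}$-bound requiring, say, $b'$ bounded, and being cleanest for $V=V_0$). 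For $\Op(ix_1)$ — which, generating a non-unitary group, is \emph{not} a derivation — a direct chaos-by-chaos computation via~\eqref{eq:prod-form} is needed instead: when $V=V_0$ it gives $[V_0,\Op(ix_1)]=-\Op(ix_1)V_0+R$ with $\Op(ix_1)V_0=i\,\Vc_0(K^{-1}(x_1\Cc_0^\circ))$ and $R$ a lower-order ($\Lc^{1/2}$-bounded) correction, so the same conclusions follow provided $x_1\Cc_0^\circ\in\Ld^2(\R^d)$.

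For~(iv) I would use Lemma~\ref{lem:V-decomp}: through the Fock decomposition $V=V_0$ is unitarily equivalent to $a+a^*$ while $\Lc$ acts as $p\,\Id$ on the $p$th chaos, hence $[\Lc,a^*]=a^*$, $[\Lc,a]=-a$, so $[\Lc,V_0]=a^*-a$ and $[V_0,[\Lc,V_0]]=[a+a^*,a^*-a]=2[a,a^*]=2\,\expec{|V_0|^2}$ by the commutator relation of Lemma~\ref{lem:V-decomp}; alternatively, from $\Lc=D^*D$ with $DV_0=\delta_0$ constant, the Leibniz rule for $D$ and the divergence product rules give $[\Lc,V_0]=V_0-2\langle\delta_0,D\,\cdot\,\rangle_\Hf$, and the double commutator then collapses to $2\langle\delta_0,\delta_0\rangle_\Hf=2\Cc_0(0)=2\expec{|V_0|^2}$. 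The main obstacle is~(iii): making the commutators well-defined when $V$ is unbounded, establishing essential self-adjointness of the resulting possibly-unbounded multiplication operators on $\Rc(\Omega)$, and — the conceptual content — checking that they are $\Lc^{1/2}$-bounded but not bounded, which is precisely what obstructs a direct use of perturbative Mourre theory (Lemma~\ref{lem:Mourre-perturb}(i)), since $\Lc^{1/2}$ is not $H^\st_{k,0}$-bounded; by contrast (i)--(ii) and~(iv) are essentially bookkeeping once the rule $[\Op(S),\Op(T)]=-\Op([S,T])$ and Lemma~\ref{lem:V-decomp} are in hand.
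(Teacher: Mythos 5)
Your argument is correct and follows essentially the same route as the paper: the rule $[\Op(S),\Op(T)]=-\Op([S,T])$ is a compact repackaging of the paper's chaos-wise computations for (i)--(ii), your treatment of $[V,iA^\st]$ via the derivation property of Lemma~\ref{lem:explic-eitT} and of $[V,\Op(ix_1)]$ via the product formula~\eqref{eq:prod-form} matches the paper's Steps~4--5, and your Fock-space computation of (iv) (with the alternative derivation from $\Lc=D^*D$ as a bonus) is exactly the paper's Step~3. Two small points to repair: the obstruction to $\Op(x_1)$ being a derivation is not that its group is non-unitary ($e^{it\Op(x_1)}$ is unitary since $\Op(x_1)$ is self-adjoint) but that $e^{itx_1}$ does not preserve real-valued functions, so Lemma~\ref{lem:explic-eitT} does not apply; and the unboundedness of $[V,iA^\st]$ cannot in general be read off from a nonzero first-chaos component, which vanishes whenever $\E[b'(V_0)]=0$ (e.g.\ $b(x)=x^2$) --- for general $b$ one argues as the paper does, testing against $(V_0')^p$ and using the independence of $V_0,V_0'$ and the nondegeneracy of $V_0'$.
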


\begin{proof}
Since the operators $-\triangle^\st$, $i\nabla_1^\st$, $\Lc$, $A^\st$, $\Op(x_1)$ are essentially self-adjoint on $\Rc(\Omega)$, and since $\Rc(\Omega)$ is invariant under these operators, the commutators $[-\triangle^\st,\frac1iA^\st]$, $[i\nabla_1^\st,\frac1iA^\st]$, $[i\nabla_1^\st,\Op(ix_1)]$, $[V,iA^\st]$, $[V,\Op(ix_1)]$ are clearly well-defined on~$\Rc(\Omega)$ and are explicitly computed below on that linear subspace.
We split the proof into five steps.

\medskip
\step1 Proof of~(i).\\
Given $p\ge1$ and $u_p\in C^\infty_c(\R^d)^{\odot p}$, recalling the notation $U_t^\st:=e^{itA^\st}$, Lemmas~\ref{lem:nabla-Op} and~\ref{lem:explic-eitT} lead to
\[U_{-t}^\st \nabla^\st U_t^\st J_p(u_p)\,=\,J_p\big(U_{t}^{\otimes p}\Op_p^\circ(-\nabla) U_{-t}^{\otimes p} u_p\big),\]
and hence, by definition of $\Op_p^\circ(\nabla)$ and of $U_t$,
\[U_{-t}^\st\nabla^\st U_t^\st J_p(u_p)
\,=\,e^{-t}J_p\big(\!\Op_p^\circ(-\nabla)u_p\big)
\,=\,e^{-t}\nabla^\st J_p(u_p),\]
so that differentiating in $t$ yields
\[[\nabla^\st,iA^\st]J_p(u_p)=-\nabla^\st J_p(u_p),\]
and similarly
\[[\triangle^\st,iA^\st] J_p(u_p)=-2\triangle^\st J_p(u_p).\]

\medskip
\step2 Proof of~(ii).\\
The computation of the commutator $[i\nabla_1^\st,\frac1iA^\st]$ follows from Step~1, and it remains to compute the other one.
For $u_p\in C^\infty_c(\R^d)^{\odot p}$, Lemma~\ref{lem:nabla-Op} yields
\[[i\nabla^\st_1,\Op(ix_1)]J_p(u_p)=J_p\big([\Op_p^\circ(\nabla_1),\Op_p^\circ(x_1)]u_p\big),\]
while the definition of $\Op_p^\circ$ leads to $[\Op_p^\circ(\nabla_1),\Op_p^\circ(x_1)]=p$, hence
\[[i\nabla^\st_1,\Op(ix_1)]J_p(u_p)=pJ_p(u_p)=\Lc J_p(u_p).\]

\medskip
\step3 Proof of~(iv).\\
Given $V=V_0$, for $p\ge1$ and $u_p\in\Ld^2_\Sym((\R^d)^p)$, Lemma~\ref{lem:V-decomp} yields
\begin{equation}\label{eq:V-decomp-re}
VJ_p(u_p)=J_{p+1}(a_{p}^*u_p)+J_{p-1}(a_{p-1}u_p),
\end{equation}
so that the commutator with $\Lc$ takes the form,
\[[\Lc,V]J_p(u_p)=J_{p+1}(a_{p}^*u_p)-J_{p-1}(a_{p-1}u_p).\]
A direct computation of the commutator of these two operators then yields after straightforward simplifications,
\[[V,[\Lc,V]]J_p(u_p)=2\,J_p\big((a_{p}a_{p}^*-a_{p-1}^*a_{p-1})u_p\big),\]
hence, in view of Lemma~\ref{lem:V-decomp},
\[[V,[\Lc,V]]J_p(u_p)=2\,\expec{|V_0|^2}\!J_p(u_p).\]

\medskip
\step4 Proof of~(iii) for $[V,iA^\st]$.\\
In view of~\eqref{eq:Utst-action}, with $V=b(V_0)=b(J_1(K\delta_0))$, we find
\begin{equation*}
U_{-t}^\st VU_t^\st\phi=b\big(J_1(U_{t}K\delta_0)\big)\,\phi,
\end{equation*}
and differentiating in $t$ yields
\[[V,iA^\st]=b'(V_0)V_0',\qquad V_0':=J_1\big((iA)K\delta_0\big).\]
Noting that $iA$ preserves the real part and that
\[\langle V_0,V_0'\rangle_{\Ld^2(\Omega)}=\big\langle J_1(K\delta_0),J_1\big((iA)K\delta_0\big)\big\rangle_{\Ld^2(\Omega)}=\big\langle K\delta_0,(iA)K\delta_0\big\rangle_{\Ld^2(\R^d)}=0,\]
we deduce that $V_0$ and $V_0'$ are independent Gaussian random variables. Further note that $V_0'$ cannot be degenerate: indeed, by definition of $A$ as generator of dilations, $AK\delta_0$ can only vanish if $\Cc_0(x)=K\delta_0(x)\propto|x|^{-d/2}$, which is not compatible with $\Cc_0(0)=\expec{|V_0|^2}<\infty$. We may then deduce
\[\dfrac{\|[V,iA^\st](V'_0)^p\|}{\|(V'_0)^p\|_{\Ld^2(\Omega)}}=\frac{\|b'(V_0)(V'_0)^{p+1}\|_{\Ld^2(\Omega)}}{\|(V'_0)^p\|_{\Ld^2(\Omega)}}=\|b'(V_0)\|_{\Ld^2(\Omega)}\|V_0'\|_{\Ld^2(\Omega)}\sqrt{2p+1}\xrightarrow{p\uparrow\infty}\infty,\]
proving that $[V,iA^\st]$ is unbounded.

\medskip\noindent
Next, we show that $[V,iA^\st]$ is $\Lc^{1/2}$-bounded. For that purpose, for $\phi_p\in C^\infty_c(\R^d)^{\otimes p}$, we use the product formula~\eqref{eq:prod-form} to compute
\begin{equation*}
V_0'I_p(\phi_p)=I_{p+1}\big((K^{-1}(iA)K\delta_0)\widetilde\otimes\phi_p\big)+pI_{p-1}\big((K^{-1}(iA)K\delta_0)\widetilde\otimes_1\phi_p\big),
\end{equation*}
where the isomorphism~\eqref{eq:isom-L2-Wiener} allows to compute
\begin{eqnarray*}
\big\|I_{p+1}\big((K^{-1}(iA)K\delta_0)\widetilde\otimes\phi_p\big)\big\|_{\Ld^2(\Omega)}&=&\sqrt{(p+1)!}\,\big\|(K^{-1}(iA)K\delta_0)\widetilde\otimes\phi_p\big\|_{\Hf^{\otimes (p+1)}}\\
&\le&\sqrt{(p+1)!}\,\|K^{-1}(iA)K\delta_0\|_{\Hf}\|\phi_p\|_{\Hf^{\otimes p}}\\
&=&\sqrt{p+1}\,\|K^{-1}(iA)K\delta_0\|_{\Hf}\|I_p(\phi_p)\|_{\Ld^2(\Omega)},
\end{eqnarray*}
and similarly
\begin{eqnarray*}
\big\|pI_{p-1}\big((K^{-1}(iA)K\delta_0)\widetilde\otimes_1\phi_p\big)\big\|_{\Ld^2(\Omega)}&=&p\sqrt{(p-1)!}\,\big\|(K^{-1}(iA)K\delta_0)\widetilde\otimes_1\phi_p\big\|_{\Hf^{\otimes(p-1)}}\\
&\le& p\sqrt{(p-1)!}\,\|K^{-1}(iA)K\delta_0\|_{\Hf}\|\phi_p\|_{\Hf^{\otimes p}}\\
&=&\sqrt p\,\|K^{-1}(iA)K\delta_0\|_{\Hf}\|I_p(\phi_p)\|_{\Ld^2(\Omega)}.
\end{eqnarray*}
For $\phi\in\Ld^2(\Omega)$ in a finite union of chaoses, the Wiener chaos decomposition~\eqref{eq:wiener-chaos} then leads to
\begin{eqnarray*}
\|[V,iA^\st]\phi\|_{\Ld^2(\Omega)}\,\lesssim\, \|V_0'\phi\|_{\Ld^2(\Omega)}&\le&\|K^{-1}(iA)K\delta_0\|_{\Hf}\bigg(\sum_{p=0}^\infty (2p+1)\|I_p(\phi_p)\|_{\Ld^2(\Omega)}^2\bigg)^\frac12\\
&\lesssim&\|A\Cc_0^\circ\|_{\Ld^2(\R^d)}\|(\Lc+1)^{\frac12}\phi\|_{\Ld^2(\Omega)},
\end{eqnarray*}
and the claim follows.

\medskip
\step5 Proof of~(iii) for $[V,\Op(ix_1)]$.\\
In view of the Fock space decomposition~\eqref{eq:V-decomp-re} for multiplication by $V=V_0$, cf.~Lemma~\ref{lem:V-decomp}, a direct computation yields for all $p\ge1$ and $\phi_p\in C^\infty_c(\R^d)^{\otimes p}$,
\[[V,\Op(ix_1)]I_p(\phi_p)=-I_{p+1}\big((K^{-1}ix_1K\delta_0)\widetilde\otimes\phi_p\big)+pI_{p-1}\big((K^{-1}ix_1K\delta_0)\widetilde\otimes_1\phi_p\big),\]
and the conclusion follows as in~Step~4.
\end{proof}

\subsection{Mourre relations for fibered operators}\label{sec:Mourre-fiber}
This section is devoted to the construction of conjugates for the unperturbed fibered operators $\{H_{k,0}^\st\}_k$.
This appears to be surprisingly more involved than for $k=0$, as the group of dilations $\{U_t^\st\}_{t\in\R}$ is no longer adapted and must be suitably deformed.
Noting that bounds on iterated commutators are obtained similarly and that the dense subspaces $\Pc(\Omega)$ and $\Rc(\Omega)$ are exchangeable, the conclusion of Theorem~\ref{th:Mourre-pert} directly follows upon truncation.

\begin{theor}[Mourre relations for fibered operators]\label{th:Mourre}$ $
\begin{enumerate}[(i)]
\item \emph{Conjugate operator for $\{H_{k,0}^\st\}_k$:}\\
For all $k\in\R^d$, there exists a self-adjoint operator $C_k^\st$ on $\Ld^2(\Omega)$, essentially self-adjoint on $\Rc(\Omega)$, such that the commutator $[H_{k,0}^\st,\tfrac1iC_k^\st]$ is well-defined and essentially self-adjoint on $\Rc(\Omega)$, is~$H_{k,0}^\st$-bounded, and satisfies
\[\qquad[H_{k,0}^\st,\tfrac1iC_k^\st]\ge \Pi\big(H_{k,0}^\st+\tfrac34|k|^2\big)\Pi.\]
Hence, the fibered operator $H_{k,0}^\st$ satisfies a Mourre relation on $J_\e:=[\e-\frac34|k|^2,\infty)$ with respect to $C_k^\st$, for all $\e>0$,
\[\qquad\mathds1_{J_\e}(H_{k,0}^\st)\,[H_{k,0}^\st,\tfrac1iC_k^\st]\,\mathds1_{J_\e}(H_{k,0}^\st)\,\ge\,\e\,\mathds1_{J_\e}(H_{k,0}^\st)-\tfrac34|k|^2\E.\]
In addition, the domain $H^2(\Omega)$ of $H_{k,0}^\st$ is invariant under $\{e^{it C_k^\st}\}_{t\in\R}$.
\item
\emph{Incompatibility of the perturbation:}\\
If~$V=V_0$ is itself Gaussian,
then the commutator $[V,iC_k^\st]$ is well-defined and essentially self-adjoint on $\Rc(\Omega)$, but only $\Lc^{1/2}$-bounded provided that $A\Cc_0^\circ,x\Cc_0^\circ\in\Ld^2(\R^d)$.
\end{enumerate}
In addition,
the full commutator $[H_{k,\lambda}^\st,\frac1iC_k^\st]$ is well-defined and essentially self-adjoint on~$\Rc(\Omega)$ for all~$\lambda$, but it only satisfies the lower bound
\[[H_{k,\lambda}^\st,\tfrac1iC_k^\st]\ge H_{k,\lambda}^\st+\tfrac34|k|^2-C\lambda\Lc^\frac12-\tfrac34|k|^2\E,\]
which does not yield any Mourre relation.
\end{theor}

Before turning to the proof, we briefly underline the difficulty and explain the idea behind the construction.
Proposition~\ref{prop:commut}~(i)--(ii) leads to
\[[H_{k,0}^\st,\tfrac1iA^\st]=2H_{k/2,0}^\st,\]
which shows that the generator of dilations $A^\st$ should be properly modified.
Again drawing inspiration from the situation on the physical space, noting that on~$\Ld^2(\R^d)$ there holds
\begin{gather*}
[-(\nabla+ik)\cdot(\nabla+ik),iA_k]=2\big(-(\nabla+ik)\cdot(\nabla+ik)\big),
\end{gather*}
with $A_k:=\frac1{2i}\big(x\cdot(\nabla+ik)+(\nabla+ik)\cdot x\big)$,
we consider
\begin{equation}\label{eq:Akst}
A_k^\st:=\Op(A_k)=A^\st+\Op(k\cdot x),
\end{equation}
and a similar computation as in Proposition~\ref{prop:commut}~(i)--(ii) yields on $\Rc(\Omega)$,
\begin{gather*}
[H_{k,0}^\st,\tfrac1iA_k^\st]=2\big(-\triangle^\st-(1+\Lc)ik\cdot\nabla^\st+|k|^2\Lc\big).
\end{gather*}
(Recall that $\Lc$ and $\nabla^\st$ commute, cf.~Lemma~\ref{lem:nabla-Op}.)
In order to counter the apparition of factors $\Lc$ in this relation and
obtain a proper Mourre relation, a further modification of~$A_k^\st$ is thus needed.
More precisely, in the definition~\eqref{eq:Akst} of $A_k^\st$, the generator of dilations $A^\st$ is a suitable conjugate for the stationary Laplacian~$-\triangle^\st$, while $\Op(k\cdot x)$ is supposed to take into account the additional first-order contribution~$-2ik\cdot \nabla^\st$ in the fibered operator~$H_{k,0}^\st:=-\triangle^\st-2ik\cdot\nabla^\st$. The core of the problem then lies in the factor~$\Lc$ that appears in the commutator
\begin{equation}\label{eq:commut-L-too}
[ik\cdot\nabla^\st,i\Op(k\cdot x)]=|k|^2\Lc,
\end{equation}
which is related to the infinite dimensionality of the probability space.
The simplest way to solve this problem would be defining
\begin{equation}\label{eq:def-Bkst}
B_k^\st:=A^\st+\Lc^{-1/2}\Op(k\cdot x)\Lc^{-1/2},
\end{equation}
where we recall that $\Lc^{-1}:=\Pi\Lc^{-1}\Pi$ denotes the pseudo-inverse of $\Lc$. This indeed leads to the desired Mourre relation,
\begin{equation}\label{eq:Mourre-fibr-pre}
[H_{k,0}^\st,\tfrac1iB_k^\st]=2(H_{k,0}^\st+|k|^2)-2|k|^2\E.
\end{equation}
However, the perturbation $V$ behaves particularly badly with respect to this conjugate operator $B_k^\st$ in the sense that the commutator $[V,iB_k^\st]$ is not even bounded when restricted to any fixed Wiener chaos, thus excluding any meaningful use of such a relation.
While on the $p$th Wiener chaos $\Hc_p$ the operator $\Op(k\cdot x)$ amounts to the sum $\sum_{j=1}^pk\cdot x_j$, the choice~\eqref{eq:def-Bkst} consists of rather considering the algebraic mean $\frac1p\sum_{j=1}^pk\cdot x_j$. Another possible choice to avoid the factor $\Lc$ in the commutator~\eqref{eq:commut-L-too} is to use an $\ell^\infty$-norm of $\{k\cdot x_j\}_{j=1}^p$. We show in the following paragraphs that the latter choice has all the desired properties claimed in Theorem~\ref{th:Mourre}: it still yields a similar Mourre relation as in~\eqref{eq:Mourre-fibr-pre} and its commutator with the perturbation $V$ is $\Lc^{1/2}$-bounded.

\medskip
We construct the desired conjugate operator $C_k^\st$ via its action on the Fock space decomposition~\eqref{eq:wiener-chaos}.
As we are concerned with the suitable treatment of the first-order operator $ik\cdot\nabla^\st$, that is, the stationary derivative in the direction $k$, we set $z:=k\cdot x$ and first focus on the case of dimension $d=1$.
For all $p\ge0$, define the function $m_p:\R^p\to\R$,
\begin{eqnarray*}
m_p(z_1,\ldots,z_p)&:=&\textstyle\big(\!\max_j|z_j|\big)\sgn r_p(z_1,\ldots,z_p),\\
r_p(z_1,\ldots,z_p)&:=&\textstyle\max_j z_j+\min_j z_j,
\end{eqnarray*}
with the convention $\sgn(0)=0$. This function is clearly symmetric with respect to the variables $z_1,\dots,z_p$ and has the following main properties.

\begin{lem}
For all $p\ge0$, the function $m_p$ is well-defined and is continuous on $\R^p\setminus S_p$, where $S_p$ denotes the hypersurface
\[S_p:=r_p^{-1}\{0\}=\big\{z\in\R^p:\exists j\neq k\text{ such that $z_j=-z_k$ and $|z_j|=\textstyle\max_l|z_l|$}\big\}.\]
In addition, there exists a continuous function $g_p:S_p\to[0,\infty)$ such that
\begin{equation}\label{eq:comm-mp0}
[\Op_p^\circ(\partial),m_p]\,=\,1+g_p\,\delta_{S_p}\,\ge\,1.\qedhere
\end{equation}
\end{lem}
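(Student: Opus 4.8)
The plan is to give an explicit piecewise-linear formula for $m_p$ away from $S_p$, read off continuity from it, compute $\sum_{j=1}^p\partial_{z_j}m_p$ classically there, and then capture the distributional effect of the jump of $m_p$ across $S_p$ by a Gauss--Green argument. Throughout, I would use that $\Op_p^\circ(\partial)$ acts on $\Ld^2_\Sym(\R^p)$ as the total derivative $\partial_{z_1}+\cdots+\partial_{z_p}$, so that $[\Op_p^\circ(\partial),m_p]$ is the form of multiplication by the distribution $\sum_{j=1}^p\partial_{z_j}m_p$, and~\eqref{eq:comm-mp0} is to be read as an identity of nonnegative measures. Abbreviating $M^+:=\max_jz_j$ and $M^-:=\min_jz_j$, two globally defined $1$-Lipschitz functions with $r_p=M^++M^-$ and $M^+-M^-\ge0$, one checks that on $\{r_p>0\}$ there holds $M^+>-M^-$, which forces $\max_j|z_j|=M^+$ and $\sgn(r_p)=1$, hence $m_p=M^+$; symmetrically $m_p=M^-$ on $\{r_p<0\}$. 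Equivalently, on all of $\R^p$,
\[m_p\,=\,\tfrac12\,r_p+\tfrac12\,\sgn(r_p)\,(M^+-M^-).\]
Since $\max_j|z_j|$ is continuous and $\sgn(r_p)$ is locally constant on $\R^p\setminus S_p$, this shows that $m_p$ is continuous there (indeed locally Lipschitz) and bounded on bounded sets; this settles well-definedness and continuity. For $p=1$ one has $m_1(z_1)=z_1$ and $S_1=\varnothing$, so~\eqref{eq:comm-mp0} is trivial, and I take $p\ge2$ from now on.

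Next I would split $\sum_{j=1}^p\partial_{z_j}m_p$ into an absolutely continuous part and a part supported on $S_p$. At a point where the index attaining $M^+$ (resp.\ $M^-$) is unique---a set of full Lebesgue measure---exactly one partial $\partial_{z_j}M^+$ (resp.\ $\partial_{z_j}M^-$) equals $1$ and the others vanish, so $\sum_j\partial_{z_j}M^\pm=1$ a.e., whence $\sum_j\partial_{z_j}r_p=2$ and $\sum_j\partial_{z_j}(M^+-M^-)=0$ a.e.; inserting this into the formula for $m_p$ gives that the absolutely continuous part of $\sum_j\partial_{z_j}m_p$ is the constant $1$. For the singular part, the only contribution comes from $\nabla\sgn(r_p)=2\,\nabla\mathds1_{\{r_p>0\}}$: the topological boundary of $\{r_p>0\}$ is, outside a negligible lower-dimensional subset, the piecewise-linear hypersurface $S_p$, with unit normal $\nu=|\nabla r_p|^{-1}\nabla r_p=\tfrac1{\sqrt2}(e_{j_+}+e_{j_-})$, where $j_+\ne j_-$ are the extremal indices (a.e.\ unique and distinct on $S_p\setminus\{0\}$), so that $(1,\dots,1)\cdot\nu=\sqrt2$. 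Since the Lipschitz---hence continuous---factor $M^+-M^-$ has trace $2\max_j|z_j|$ on $S_p$ (because $M^+=-M^-=\max_j|z_j|$ there), the Leibniz rule for the $BV$ function $\sgn(r_p)$ times this factor, together with the Gauss--Green formula, yields
\[\sum_{j=1}^p\partial_{z_j}m_p\,=\,1+(M^+-M^-)|_{S_p}\,\big((1,\dots,1)\cdot\nu\big)\,\delta_{S_p}\,=\,1+2\sqrt2\,(\max_j|z_j|)\,\delta_{S_p},\]
where $\delta_{S_p}$ denotes the $(p-1)$-dimensional Hausdorff measure on $S_p$.

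This is precisely~\eqref{eq:comm-mp0} with $g_p$ a fixed positive multiple of $\max_j|z_j|$ (the exact constant depending on the normalization chosen for $\delta_{S_p}$), which is continuous and nonnegative on $S_p$; the lower bound $[\Op_p^\circ(\partial),m_p]\ge1$ then follows at once since $g_p\,\delta_{S_p}\ge0$.

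The main obstacle is the geometric-measure-theoretic bookkeeping hidden in the second paragraph: one must verify that $\{r_p>0\}$ is a set of locally finite perimeter whose reduced boundary agrees with $S_p$ outside a $\delta_{S_p}$-null set, and that the lower-dimensional strata of $S_p$---the points where $r_p$ fails to be differentiable, where the extremal indices coincide, or the origin---carry no $(p-1)$-dimensional measure, so that the $BV$ Leibniz rule and the Gauss--Green formula produce no spurious singular terms. Given the explicit polyhedral structure of $r_p$, $M^+$ and $M^-$ this is routine, and all the remaining steps are elementary computations.
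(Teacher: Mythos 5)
Your proof is correct and follows essentially the same route as the paper: the absolutely continuous part of $\sum_j\partial_{z_j}m_p$ is the constant $1$ because off $S_p$ one has $m_p=z_j$ for the extremal index $j$, and the singular part is the nonnegative jump $2\max_j|z_j|$ of $m_p$ across $S_p$. The only difference is how the jump term is justified: where you invoke the $BV$ Leibniz rule and the Gauss--Green formula for the polyhedral set $\{r_p>0\}$ (the bookkeeping you flag as the main obstacle), the paper observes that $\sum_j\partial_{z_j}$ is the directional derivative along $(1,\dots,1)$ and that every line in that direction meets $S_p$ exactly once, so a one-dimensional slicing argument yields the same conclusion without any geometric-measure-theoretic apparatus.
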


\begin{proof}
The continuity of $m_p$ is clear outside the zero locus $S_p$ of $r_p$, and we turn to the second part of the statement.
On $\R^p\setminus S_p$ there holds $m_p(z_1,\ldots,z_p)=z_j$ with $|z_j|=\max_i|z_i|$, hence $\sum_{j=1}^p\partial_jm_p=1$.
It remains to examine the jump of $m_p$ on $S_p$.
We claim that every line directed by the vector $(1,\dots,1)$ in $\R^p$ intersects the hypersurface~$S_p$ at a single point, and this would yield the conclusion.
Indeed, given a point $z:=(z_1,\dots,z_p)\in\R^p$, say $z_1=\min_j z_j$ and $z_2=\max_j z_j$, we can write $z=z'+s(1,\ldots,1)$ with $s=\frac12(z_1+z_2)$ and $z':=(z_1-s,\dots,z_p-s)\in S_p$, and $z'+t(1,\ldots,1)$ belongs to $S_p$ only if $t=s$.
\end{proof}

Next, in order to get a proper Mourre relation on the Fock space, we regularize the functions $\{m_p\}_p$ so as to
replace the Dirac part in~\eqref{eq:comm-mp0} by a positive bump function that is $p$-uniformly bounded on $\R^p$. For that purpose, it is not enough to regularize the sign function in the definition of $m_p$
in a fixed neighborhood of $S_p$, as the derivative would still produce an unbounded term due to the multiplication by $\max_j|z_j|$.
A suitable choice of the regularization is rather defined as follows.
First rewrite 
\[m_p(z_1,\ldots,z_p)=\textstyle\frac{1}{2}(\max_j z_j+\min_j z_j)+\frac12(\max_j z_j-\min_j z_j)\sgn(\max z_j+\min_j z_j),\]
where only the last sign function needs to be regularized.
Choose a smooth odd function $\chi:\R\to[-1,1]$ such that $\chi(s)=-1$ for $s\le -1$, $\chi(s)=1$ for $s\ge1$, $0\le\chi'\le2$ pointwise, $\chi(s)\le s$ for $-1\le s\le 0$, and $\chi(s)\ge s$ for $0\le s\le 1$.
We then set
\begin{equation*}
\widetilde m_p(z_1,\ldots,z_p):=\textstyle\frac12(\max_j z_j+\min_j z_j)+\frac12(\max_j z_j-\min_j z_j)\,\chi\big(\frac{\max_j z_j+\min_j z_j}{\max_jz_j-\min_jz_j}\big),
\end{equation*}
which is globally well-defined and continuous.
Note that
\begin{gather}
1\le\sum_{j=1}^p\partial_j\widetilde m_p\le3,\qquad\bigg|\Big(\sum_{j=1}^p\partial_{j}\Big)^r\widetilde m_p\bigg|\lesssim_{\chi,r}1,~~\text{for all $r\ge0$},\label{eq:prepre-mourre-mp}\\
|\widetilde m_p(z_1,\ldots,z_p)|\le |m_p(z_1,\dots,z_p)|=\textstyle\max_j|z_j|.\label{eq:prepre-trivial-bound}
\end{gather}
We also establish the following important property.
\begin{lem}\label{prop:m-function}
For all $p\ge0$, there holds for all $z,z_1,\dots,z_p\in\R$,
\begin{displaymath}
\big|\widetilde m_{p+1}(z,z_1,\dots,z_p)-\widetilde m_p(z_1,\dots,z_p)\big|\leq 2|z|.\qedhere
\end{displaymath}
\end{lem}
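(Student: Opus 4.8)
\medskip
\noindent\textbf{Proof strategy.}
The plan is to reduce the statement, using the evident oddness of $\widetilde m_p$ together with a monotonicity property, to a single regime in which the precise shape of $\chi$ is genuinely needed.

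First I would write $\widetilde m_p(z_1,\dots,z_p)=f(M_p,m_p)$ with $M_p:=\max_jz_j$, $m_p:=\min_jz_j$ and
\[
f(M,m):=\tfrac12(M+m)+\tfrac12(M-m)\,\chi\Big(\tfrac{M+m}{M-m}\Big)\quad(M>m),\qquad f(c,c):=c,
\]
and record three elementary facts. \emph{Sandwich}: $m\le f(M,m)\le M$, from $|\chi|\le1$ and $M-m\ge0$. \emph{Monotonicity}: $f(\cdot,m)$ is non-decreasing on $[m,\infty)$; indeed, for $M>m$, $\partial_Mf=\tfrac12+\tfrac12\chi(u)-\tfrac{m\,\chi'(u)}{M-m}$ with $u:=\tfrac{M+m}{M-m}$, and if $m\ge0$ then $u\ge1$, so $\chi(u)=1$ and $m\,\chi'(u)=0$, giving $\partial_Mf=1$, whereas if $m<0$ the two remaining summands are both $\ge0$. \emph{Oddness}: since $\chi$ is odd, $f(-m,-M)=-f(M,m)$, and since replacing all arguments by their opposites exchanges $M_p$ and $-m_p$, one gets $\widetilde m_p(-z_1,\dots,-z_p)=-\widetilde m_p(z_1,\dots,z_p)$.

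Next I would split according to the position of the new entry $z$, with $p\ge1$ (the case $p=0$ being immediate, with the convention $\widetilde m_0=0$ and since $\widetilde m_1(z)=z$). If $m_p\le z\le M_p$, the maximum and the minimum of the enlarged tuple are unchanged, so $\widetilde m_{p+1}(z,z_1,\dots,z_p)=\widetilde m_p(z_1,\dots,z_p)$ and there is nothing to prove. If $z\le m_p$, I would apply the asserted bound to $(-z,-z_1,\dots,-z_p)$, for which $-z\ge\max_j(-z_j)$, and conclude by oddness. It remains to treat $z\ge M_p\ge m_p$, where $\widetilde m_{p+1}(z,z_1,\dots,z_p)=f(z,m_p)$ and $\widetilde m_p(z_1,\dots,z_p)=f(M_p,m_p)$; here one may assume $z>m_p$, since $z=m_p$ forces $M_p=m_p=z$ and a zero difference. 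By monotonicity, $f(z,m_p)-f(M_p,m_p)\ge0$, and by the sandwich, $f(z,m_p)-f(M_p,m_p)\le f(z,m_p)-m_p=\tfrac12(z-m_p)\bigl(1+\chi(u)\bigr)$ with $u:=\tfrac{z+m_p}{z-m_p}$. Then: if $z\le0$ one has $u\le-1$, so $1+\chi(u)=0$ and the difference vanishes; if $z>0$ and $u\ge0$ then $1+\chi(u)\le2$ while $z+m_p\ge0$ forces $z-m_p\le2z$, so the difference is $\le z-m_p\le2|z|$; if $z>0$ and $-1<u<0$ then $\chi(u)\le u$ (the property $\chi(s)\le s$ on $[-1,0]$, equivalent by oddness to $\chi(s)\ge s$ on $[0,1]$), whence $1+\chi(u)\le1+u=\tfrac{2z}{z-m_p}$ and the difference is $\le z\le2|z|$. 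This exhausts all cases.

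I expect the only real obstacle to be the sub-case $z>0$ with $m_p$ very negative (so $u$ close to $-1$): there, both the crude Lipschitz estimate for $f$ — whose constant is not $\le1$ — and the crude sandwiching of $\widetilde m_{p+1}$ in $[m_p,z]$ are too weak, and one must exploit that $1+\chi(u)$ is of size $O\bigl(z/(z-m_p)\bigr)$, that is, that $\chi$ lies below the diagonal on $[-1,0]$ and equals $-1$ below it. Everything else is routine bookkeeping about maxima and minima.
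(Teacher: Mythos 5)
Your proof is correct; every case is covered and each estimate checks out (in particular the derivative computation $\partial_Mf=\tfrac12+\tfrac12\chi(u)-\tfrac{m\chi'(u)}{M-m}$ and the three sub-cases $u\le-1$, $-1<u<0$, $u\ge0$ in the regime $z\ge M_p$). The overall skeleton is the same as the paper's --- trichotomy on the position of $z$ relative to $[\min_jz_j,\max_jz_j]$, trivial middle case, and exploitation of $\chi\equiv-1$ on $(-\infty,-1]$ together with $\chi(s)\le s$ on $[-1,0]$ in the remaining case --- but your treatment of the hard case is genuinely different and arguably cleaner. The paper first dispatches $\max_j|z_j|\le|z|$ via the crude bound $|\widetilde m_p|\le\max_j|z_j|$, then in the case $z_1\le z_2\le z$ expands the difference into two explicit terms and bounds each by sign considerations on $z_2$ and $z$, reaching $\tfrac32|z|$; you instead introduce the monotonicity of $f(\cdot,m)$ on $[m,\infty)$ together with the sandwich $m\le f(M,m)\le M$, which collapses the difference to the single quantity $\tfrac12(z-m_p)\bigl(1+\chi(u)\bigr)$ and yields $2|z|$ (slightly weaker constant, but exactly what the lemma asserts). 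You also make the symmetric case $z\le\min_jz_j$ rigorous via the oddness identity $\widetilde m_p(-z_1,\dots,-z_p)=-\widetilde m_p(z_1,\dots,z_p)$, which the paper merely declares ``similar''. The monotonicity observation is the one ingredient not present in the paper's argument, and it buys you a shorter computation at the cost of the small extra lemma.
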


\begin{proof}
The conclusion follows from~\eqref{eq:prepre-trivial-bound} if $\max_j|z_j|\le |z|$, so that we can henceforth assume $\max_j|z_j|> |z|$.
By symmetry we can assume $z_1=\min_jz_j$ and $z_2=\max_jz_j$.
In the case $z_1\le z\le z_2$, we find
\[\widetilde m_{p+1}(z,z_1,\dots,z_p)=\widetilde m_p(z_1,\dots,z_p),\]
and the conclusion follows.
It remains to treat the case $z_1\le z_2\le z$, while the symmetric case $z\le z_1\le z_2$ is similar.
Given $z_1\le z_2\le z$, the assumption~$\max_j|z_j|>|z|$ implies $z_1<-|z|$. As $\frac{z_2+z_1}{z_2-z_1}\le \frac{z+z_1}{z-z_1}$, we compute
\begin{eqnarray*}
\lefteqn{\big|\widetilde m_{p+1}(z,z_1,\ldots,z_p)-\widetilde m_p(z_1,\ldots,z_p)\big|}\\
&\le&\textstyle\frac12(z-z_2)\Big(1+\chi\big(\frac{z_2+z_1}{z_2-z_1}\big)\Big)+\frac12(z-z_1)\Big(\chi\big(\frac{z+z_1}{z-z_1}\big)-\chi\big(\frac{z_2+z_1}{z_2-z_1}\big)\Big)\\
&\le&\textstyle\frac12(z-z_2)\Big(1+\chi\big(\frac{z_2+z_1}{z_2-z_1}\big)\Big)+\frac12(z-z_1)\Big(1+\chi\big(\frac{z+z_1}{z-z_1}\big)\Big).
\end{eqnarray*}
Noting that $\frac{z_2+z_1}{z_2-z_1}\le \frac{z+z_1}{z-z_1}\le0$ and that there holds $\chi(\frac{y+z_1}{y-z_1})=-1$ whenever $y\le0$,
the above becomes, in view of the properties of $\chi$,
\begin{eqnarray*}
\big|\widetilde m_{p+1}(z,z_1,\ldots,z_p)-\widetilde m_p(z_1,\ldots,z_p)\big|
&\le&\textstyle\frac12(z-z_2)\mathds1_{z_2\ge0}+\frac12(z-z_1)\big(1+\frac{z+z_1}{z-z_1}\big)\mathds1_{z\ge0}\\
&\le&\tfrac32|z|,
\end{eqnarray*}
as claimed.
\end{proof}

We now turn to the construction of the suitable conjugate operator for $ik\cdot \nabla^{\st}$. For all $p\ge0$, we define an operator $M_{k,p}$ on $\Ld^2_\Sym((\R^d)^p)$ as the multiplication by the function $(x_1,\ldots,x_p)\mapsto\widetilde m_p(k\cdot x_1,\ldots,k\cdot x_p)$, and we denote by $M_k=\bigoplus_{p=0}^\infty M_{k,p}$ the corresponding operator on the Fock space. Next, we define the operator $M_{k,p}^\st$ on the $p$th Wiener chaos~$\Hc_p$ by
\[M_{k,p}^\st J_p(u_p):=J_p(M_{k,p}u_p)=J_p\big(\widetilde m_p(k\cdot x_1,\ldots,k\cdot x_p)\,u_p\big),\]
and via the Wiener chaos decomposition~\eqref{eq:wiener-chaos} we set $M_k^\st:=\bigoplus_{p=0}^\infty M_{k,p}^\st$ on $\Ld^2(\Omega)$.
We then consider the following operator on $\Ld^2(\Omega)$,
\begin{equation}\label{eq:def-Ckst}
C_k^\st:=A^\st+\tfrac{1}{2}M_k^\st,
\end{equation}
which is clearly essentially self-adjoint on $\Rc(\Omega)$ given its action on Wiener chaoses; see also Lemma~\ref{lem:C_k^st} below.

\begin{rem}
The reader may wonder why this definition of $C_k^\st$ is chosen instead of $A^{\st}+M_k^\st$, which would seem more natural in view of~\eqref{eq:Akst}. The computation of the relevant commutators involves $[\nabla^\st,M_k^\st]$, hence the derivative $\sum_{j=1}^p\partial_j\widetilde m_p$, which in view of the regularization $\widetilde m_p$ of $m_p$ is not uniformly equal to $1$ but can vary in the whole interval $[1,3]$ (or at best in $[1,2+\delta]$ for some smaller $\delta>0$ if the cut-off function $\chi$ is chosen with $\chi'$ closer to $\mathds1_{[-1,1]}$). Due to this modification, symbols are deformed in the commutator computation, and the choice $A^\st+M_k^\st$ would fail at providing a Mourre relation close to~$0$. This is precisely corrected by the above choice~\eqref{eq:def-Ckst}.
\end{rem}

We first show that the operator $C_k^\st$ generates an explicit unitary $C_0$-group, which preserves $H^s(\Omega)$.

\begin{lem}\label{lem:C_k^st}
The operator $C_k^\st$ is essentially self-adjoint on $\Rc(\Omega)$ and its closure generates a unitary $C_0$-group $\{e^{itC_k^\st}\}_{t\in\R}$ on $\Ld^2(\Omega)$, which has the following explicit action on chaoses: for all $p\ge1$ and $u_p\in\Ld^2_\Sym((\R^d)^p)$,
\[e^{itC_k^\st}J_p(u_p)\,=\,J_p(U_{t,k,p}u_p),\]
where $U_{t,k,p}$ is defined by
\[(U_{t,k,p}u_p)(x_1,\ldots,x_p):=e^{t\frac {dp}2}\exp\Big(\frac i2\int_0^t \widetilde m_p(k\cdot e^sx_1,\ldots,k\cdot e^sx_p)\,ds\Big)\,u_p(e^tx_1,\ldots,e^tx_p).\]
In particular, in view of~\eqref{eq:prepre-mourre-mp}, for all $s\ge0$, the subspace $H^s(\Omega)$ is invariant under this group action $\{e^{itC_k^\st}\}_{t\in\R}$.
\end{lem}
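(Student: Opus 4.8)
The plan is to work chaos by chaos through the Wiener--Fock decomposition~\eqref{eq:wiener-chaos}, using that $C_k^\st=A^\st+\tfrac12M_k^\st$ acts diagonally on it: via the unitary $J_p:\Ld^2_\Sym((\R^d)^p)\to\Hc_p$ its restriction to $\Hc_p$ is carried to $C_{k,p}:=\Op^\circ_p(A)+\tfrac12M_{k,p}$, where $M_{k,p}$ is multiplication by $(x_1,\dots,x_p)\mapsto\widetilde m_p(k\cdot x_1,\dots,k\cdot x_p)$. First I would \emph{define} the candidate group $\{U_{t,k,p}\}_{t\in\R}$ on $\Ld^2_\Sym((\R^d)^p)$ by the formula of the statement, setting
\[\theta_{t,p}(x):=\tfrac12\int_0^t\widetilde m_p(k\cdot e^sx_1,\dots,k\cdot e^sx_p)\,ds,\qquad(U_{t,k,p}u_p)(x):=e^{t\frac{dp}2}\,e^{i\theta_{t,p}(x)}\,u_p(e^tx_1,\dots,e^tx_p),\]
and check directly that $\{U_{t,k,p}\}_t$ is a strongly continuous unitary group: unitarity follows from the change of variables $y=e^tx$, whose Jacobian cancels the prefactor, together with $|e^{i\theta_{t,p}}|\equiv1$; the group law $U_{t,k,p}U_{s,k,p}=U_{t+s,k,p}$ reduces, after composing the dilations, to the additive cocycle identity $\theta_{t,p}(x)+\theta_{s,p}(e^tx)=\theta_{t+s,p}(x)$, which is a change of variables in the defining integral; and strong continuity follows from dominated convergence on the dense subspace of compactly supported functions (using that $\widetilde m_p$ is continuous, so $\theta_{t,p}\to0$ locally uniformly and stays locally bounded) together with the uniform bound $\|U_{t,k,p}\|=1$. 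Assembling, $U^\st_{t,k}:=\bigoplus_pJ_pU_{t,k,p}J_p^{-1}$ is a strongly continuous unitary group on $\Ld^2(\Omega)$ (strong continuity by dominated convergence in the chaos index, each summand being bounded uniformly in $t$ by $4\|\phi_p\|_{\Ld^2(\Omega)}^2$), so by Stone's theorem its generator has the form $i\widetilde C$ with $\widetilde C$ self-adjoint.

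Next I would identify $\widetilde C$ on $\Rc(\Omega)$. For $u_p\in C^\infty_c(\R^d)^{\odot p}$ the difference quotient $\tfrac1t(U_{t,k,p}u_p-u_p)$ has an a.e.\ pointwise limit, obtained by differentiating the explicit formula at $t=0$, and this limit is dominated for small $t$ by a fixed $\Ld^2$ function supported in a fixed compact set; hence the convergence takes place in $\Ld^2$ and the limit equals $iC_{k,p}u_p$. Tracking the conventions of Section~\ref{sec:oper-L2}, this is precisely $i$ times the action of $A^\st+\tfrac12M_k^\st$ on $\Hc_p$, cf.~\eqref{eq:def-Ckst}, so summing over $p$ gives $\Rc(\Omega)\subset\dom\widetilde C$ with $\widetilde C=C_k^\st$ there.

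The heart of the matter, and the step I expect to be the main obstacle, is to upgrade this to essential self-adjointness, i.e.\ to show that $\Rc(\Omega)$ is a core for $\widetilde C$; the difficulty is that, in contrast with the situation of Lemma~\ref{lem:explic-eitT}, the group $U^\st_{t,k}$ does \emph{not} preserve $\Rc(\Omega)$, because the multiplier $\widetilde m_p$ --- built from $\max$ and $\min$ --- is merely Lipschitz, not smooth. My plan is to interpose the larger space $\mathcal F_p$ of symmetric, compactly supported Lipschitz functions on $(\R^d)^p$. One checks that $\mathcal F_p$ is dense, that it is $U_{t,k,p}$-invariant (the product of a compactly supported Lipschitz function with the bounded, globally Lipschitz function $e^{i\theta_{t,p}}$, precomposed with a dilation, is again in $\mathcal F_p$ --- here $\theta_{t,p}$ is globally Lipschitz because $\nabla\widetilde m_p\in\Ld^\infty$ with a bound independent of $p$, which follows from the explicit form of $\widetilde m_p$), and that $\mathcal F_p\subset\dom\widetilde C$ with $\widetilde C=C_{k,p}$ on it (the difference-quotient computation above still applies, the needed domination being furnished by the Lipschitz bound). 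Hence $\mathcal F_p$ is a core for the $p$-th summand of $\widetilde C$, since a dense subspace invariant under a unitary group and contained in the domain of its generator is always a core. Finally, $C^\infty_c(\R^d)^{\odot p}$ is dense in $\mathcal F_p$ for the graph norm of $C_{k,p}$: mollification converges in $H^1_{\loc}$ with support in a fixed compact set, and this controls the graph norm of the first-order operator $C_{k,p}=\Op^\circ_p(A)+\tfrac12M_{k,p}$ (a first-order differential operator plus a multiplication operator that is bounded on that compact set). Therefore $C^\infty_c(\R^d)^{\odot p}$ is a core for the $p$-th summand too, and taking finite sums over $p$ shows that $\Rc(\Omega)$ is a core for $\widetilde C$. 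Consequently $\overline{C_k^\st|_{\Rc(\Omega)}}=\widetilde C$, which is the asserted essential self-adjointness, and the unitary group generated by this closure is $U^\st_{t,k}$, whose explicit chaos action is the displayed formula.

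It remains to prove the invariance of $H^s(\Omega)$. Since $\nabla^\st=\Op(\nabla)$, cf.~Lemma~\ref{lem:nabla-Op}, on $\Hc_p$ the stationary derivative $\nabla^\st_\ell$ is (up to sign, via $J_p$) the \emph{diagonal} derivative $\sum_j\partial_{x_j^\ell}$, so that for integer $s$ the space $H^s(\Omega)$ is the domain of products of at most $s$ such operators. Applying $\nabla^\st_{\ell_1}\cdots\nabla^\st_{\ell_r}$ to $U_{t,k,p}\phi_p=e^{t\frac{dp}2}e^{i\theta_{t,p}}\phi_p(e^t\cdot)$ and expanding by the Leibniz rule, each resulting term is a product of a bounded factor built from $e^{i\theta_{t,p}}$ and iterated diagonal derivatives of $\theta_{t,p}$ --- and the crucial point is that applying $\sum_j\partial_{x_j^\ell}$ to $\widetilde m_p(k\cdot x_1,\dots,k\cdot x_p)$ reduces to $k_\ell\,(\sum_j\partial_{z_j})\widetilde m_p$, so that iterated diagonal derivatives of $\theta_{t,p}$ are bounded uniformly in $x$ (and in $p$) by the bound on $(\sum_j\partial_j)^r\widetilde m_p$ recorded in~\eqref{eq:prepre-mourre-mp} --- times a dilated lower-order diagonal derivative of $\phi_p$, which lies in $\Ld^2$ whenever $\phi\in H^s(\Omega)$. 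Hence $U^\st_{t,k}$ maps $H^s(\Omega)$ into itself for every integer $s$, with a $t$-dependent bound, and the general case follows by interpolation. In summary, everything but the core argument of the third paragraph is a (careful) unwinding of the explicit formula; that core argument is where the real work lies.
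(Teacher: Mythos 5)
Your proposal is correct and follows the same route as the paper: define $U_{t,k,p}$ by the displayed formula, verify that it is a unitary $C_0$-group on each $\Ld^2_\Sym((\R^d)^p)$ (unitarity by change of variables, the group law by the cocycle identity for the phase), and identify its generator on $C^\infty_c(\R^d)^{\odot p}$ by differentiating at $t=0$; the paper's proof consists of exactly this computation. Where you go beyond the paper is on essential self-adjointness: the paper concludes with ``the claim easily follows'', but identifying the generator of a unitary group on a dense subspace does not by itself make that subspace a core, and --- as you rightly observe --- $\Rc(\Omega)$ is \emph{not} invariant under the group because $\widetilde m_p$ is only Lipschitz, so the standard invariant-domain criterion does not apply directly. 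Your interposition of the group-invariant space of compactly supported symmetric Lipschitz functions, followed by graph-norm density of $C^\infty_c(\R^d)^{\odot p}$ in it via mollification (the graph norm of the first-order operator $C_{k,p}$ being controlled by the $H^1$-norm on a fixed compact set), is a clean way to close this step that the paper compresses away. Your Leibniz-rule argument for the $H^s(\Omega)$-invariance, resting on the fact that stationary derivatives act as diagonal derivatives on chaoses so that only the bounds~\eqref{eq:prepre-mourre-mp} on $(\sum_j\partial_j)^r\widetilde m_p$ are needed, is likewise the intended reading of the paper's one-line assertion.
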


\begin{proof}
In view of the chaos decomposition~\eqref{eq:wiener-chaos}, it suffices to check that 
for all $p\ge1$ the family $\{U_{t,k,p}\}_{t\in\R}$ defines a unitary $C_0$-group on $\Ld^2_\Sym((\R^d)^p)$ and that its self-adjoint generator is given by $C_{k,p}:=\Op_p(A)+\frac12M_{k,p}$ on $C^\infty_c(\R^d)^{\odot p}$.
First note that the family $\{U_{t,k,p}\}_{t\in\R}$ clearly defines a unitary group on $\Ld^2_\Sym((\R^d)^p)$. Next,
for all $\epsilon>0$, we decompose
\begin{multline*}
\frac1{i\e}\big(U_{\e,k,p}u_p-u_p)(x_1,\ldots,x_p)
\,=\,\frac1{i\e}\Big(e^{\e\frac{dp}2}u_p(e^\e x_1,\ldots,e^\e x_p)-u_p(x_1,\ldots,x_p)\Big)\\
+\frac1{i\e}\bigg(\exp\Big(\frac i2\int_0^\e \widetilde m_p(k\cdot e^sx_1,\ldots,k\cdot e^sx_p)\,ds\Big)-1\bigg)e^{\e\frac{dp}2}\,u_p(e^\e x_1,\ldots,e^\e x_p).
\end{multline*}
As $\e\downarrow0$, for $u_p\in C^\infty_c(\R^d)^{\odot p}$, the first right-hand side term converges to $\Op_p^\circ(A)u_p$ in~$\Ld^2_\Sym((\R^d)^p)$, while the second one converges to $\frac12M_{k,p}u_p$, and the claim easily follows.
\end{proof}

Next, we show that $C_k^\st$ is a conjugate for the fibered operator $H_{k,0}^\st$ away from the bottom of the spectrum. This completes the proof of Theorem~\ref{th:Mourre}(i). Choosing the cut-off function $\chi$ with $\chi'$ closer to $\mathds1_{[-1,1]}$, and suitably increasing the factor $\frac12$ in definition~\eqref{eq:def-Ckst}, the term~$\frac34|k|^2$ in the lower bound~\eqref{eq:Hk0-Mourre-pre} below could be improved into almost $\frac89|k|^2$, but the present construction does not allow to reach a value any closer to $|k|^2$.

\begin{lem}
The commutator $[H_{k,0}^\st,\frac1iC_k^\st]$ is well-defined and essentially self-adjoint on~$\Rc(\Omega)$, is $H_{k,0}^\st$-bounded, and satisfies the lower bound
\begin{equation}\label{eq:Hk0-Mourre-pre}
[H_{k,0}^\st,\tfrac1iC_k^\st]\,\ge\,\textstyle\Pi\big(H_{k,0}^\st+\frac34|k|^2\big)\Pi,
\end{equation}
which entails the following Mourre relation on $J_\e:=[\e-\frac34|k|^2,\infty)$, for all $\e>0$,
\[\mathds1_{J_\e}(H_{k,0}^\st)[H_{k,0}^\st,\tfrac1iC_k^\st]\mathds1_{J_\e}(H_{k,0}^\st)\,\ge\,\e\mathds1_{J_\e}(H_{k,0}^\st)-\tfrac34|k|^2\E.\qedhere\]
\end{lem}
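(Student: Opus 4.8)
The plan is to diagonalize everything over the Wiener chaos decomposition $\Ld^2(\Omega)=\bigoplus_{p\ge0}\Hc_p$ and reduce the asserted positivity to an elementary one-dimensional ODE estimate, which is the point where the construction of $\widetilde m_p$ does its job. By Lemma~\ref{lem:nabla-Op} the operator $H_{k,0}^\st$ respects the chaos decomposition, and so does $C_k^\st=A^\st+\tfrac12 M_k^\st$ by construction. Identifying $\Hc_p$ with $\Ld^2_\Sym((\R^d)^p)$ through $J_p$, under which $\nabla^\st$ becomes $D:=\sum_{j=1}^p\partial_{x_j}$, one has $H_{k,0}^\st|_{\Hc_p}=-D\cdot D-2ik\cdot D$, $\Op_p(A)$ generates the dilation group $u_p\mapsto e^{dpt/2}u_p(e^t\,\cdot)$, and $M_{k,p}$ is multiplication by $\widetilde m_p(k\cdot x_1,\dots,k\cdot x_p)$. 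Differentiating $t\mapsto e^{-itC_k^\st}H_{k,0}^\st e^{itC_k^\st}$ at $t=0$ — legitimate since, by Lemma~\ref{lem:C_k^st}, $\{e^{itC_k^\st}\}_{t}$ preserves $H^2(\Omega)=\dom(H_{k,0}^\st)$ — and using Proposition~\ref{prop:commut}(i)--(ii) for the $A^\st$-part, one obtains on each $\Hc_p$
\[
[H_{k,0}^\st,\tfrac1iC_k^\st]\big|_{\Hc_p}\;=\;H_{k,0}^\st\big|_{\Hc_p}\;+\;(-\triangle^\st)\big|_{\Hc_p}\;+\;\tfrac i2\{\mu_p,k\cdot\nabla^\st\}\;+\;|k|^2\mu_p,
\]
where $\mu_p$ denotes $\sum_j\partial_{z_j}\widetilde m_p$ evaluated at $z_j=k\cdot x_j$, so that $1\le\mu_p\le3$ by~\eqref{eq:prepre-mourre-mp}. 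The $M_{k,p}$-contribution is genuinely only first order (no factor $\Lc$) precisely because the $\ell^\infty$-type function $\widetilde m_p$ was designed so that $[\nabla^\st,M_k^\st]$ sees only the total derivative $\sum_j\partial_{z_j}\widetilde m_p$ and not each $k\cdot x_j$ separately.

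Kept in this manifestly symmetric form, the three correction terms are $H_{k,0}^\st|_{\Hc_p}$-bounded, uniformly in $p$: $(-\triangle^\st)|_{\Hc_p}$ by comparison of Fourier symbols; $\tfrac i2\{\mu_p,k\cdot\nabla^\st\}$ because its coefficients are $p$-uniformly bounded by~\eqref{eq:prepre-mourre-mp} and Lemma~\ref{prop:m-function} while a first-order operator is infinitesimally bounded relative to $-\triangle^\st$; and $|k|^2\mu_p$ trivially. Summing over $p$ gives that $[H_{k,0}^\st,\tfrac1iC_k^\st]$ is $H_{k,0}^\st$-bounded, and essential self-adjointness on $\Rc(\Omega)$ follows chaos by chaos (each block being a symmetric second-order operator with principal part $2(-D\cdot D)$ and $p$-uniformly controlled lower-order terms, essentially self-adjoint on $C^\infty_c(\R^d)^{\odot p}$ by a standard elliptic argument, and $\Rc(\Omega)$ being the algebraic sum of these cores).

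For the lower bound, both sides of~\eqref{eq:Hk0-Mourre-pre} vanish on $\Hc_0$, while on $\Hc_p$ with $p\ge1$ one has $\Pi=\Id$, so it suffices to show $(-\triangle^\st)|_{\Hc_p}+\tfrac i2\{\mu_p,k\cdot\nabla^\st\}+|k|^2\mu_p-\tfrac34|k|^2\ge0$ on $\Ld^2_\Sym((\R^d)^p)$. In center-of-mass coordinates $x_j=\bar x+y_j$ with $\sum_j y_j=0$ one has $D=\nabla_{\bar x}$, $-D\cdot D=-\triangle_{\bar x}=-\partial_s^2-\triangle_{\bar x_\perp}$ with $s:=\tfrac{k}{|k|}\cdot\bar x$, $k\cdot\nabla^\st=|k|\partial_s$, and $\mu_p$ depends on $\bar x$ only through $s$. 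Dropping $-\triangle_{\bar x_\perp}\ge0$, which commutes with everything else, and fibering over $y$, the claim reduces to the one-dimensional inequality $-\partial_s^2+\tfrac{i|k|}2\{\mu,\partial_s\}+|k|^2\mu-\tfrac34|k|^2\ge0$ on $\Ld^2(\R)$ for any measurable $\mu:\R\to[1,3]$. This follows at once from the algebraic identity
\[
-\partial_s^2+\tfrac{i|k|}2\{\mu,\partial_s\}+|k|^2\mu-\tfrac34|k|^2\;=\;\Bigl(i\partial_s+\tfrac{|k|}2\mu\Bigr)^2-\tfrac{|k|^2}4(\mu-1)(\mu-3),
\]
since the square is nonnegative and $(\mu-1)(\mu-3)\le0$ on $[1,3]$ — which also makes clear that the constant $\tfrac34$ is sharp for this range of $\mu_p$.

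Finally, since $H_{k,0}^\st\E=\E H_{k,0}^\st=0$ one has $\Pi(H_{k,0}^\st+\tfrac34|k|^2)\Pi=H_{k,0}^\st+\tfrac34|k|^2-\tfrac34|k|^2\E$; sandwiching~\eqref{eq:Hk0-Mourre-pre} between $\mathds1_{J_\e}(H_{k,0}^\st)$, using $H_{k,0}^\st\ge\e-\tfrac34|k|^2$ on the range of that projection and $\mathds1_{J_\e}(H_{k,0}^\st)\,\E\,\mathds1_{J_\e}(H_{k,0}^\st)\le\E$, yields the asserted Mourre relation on $J_\e$. The main obstacle is not the positivity, which is the short robust completion of the square above, but the explicit commutator formula together with the $p$-uniform control of all error terms: this is exactly where the delicate properties~\eqref{eq:prepre-mourre-mp} and Lemma~\ref{prop:m-function} of the $\ell^\infty$-type construction of $\widetilde m_p$ are indispensable, and where the degenerate ellipticity of $H_{k,0}^\st$ on each chaos has to be handled with care.
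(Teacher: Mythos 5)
Your proof is correct, and for the key positivity estimate it takes a genuinely different route from the paper. The commutator formula you arrive at is the same as the paper's identity~\eqref{eq:commut-HC} (your sign on the anticommutator term reflects your convention $\nabla^\st\leftrightarrow+\sum_j\partial_{x_j}$ rather than the paper's $-\sum_j\partial_{x_j}$; this is immaterial for positivity), and your treatment of well-definedness, $H_{k,0}^\st$-boundedness and essential self-adjointness is in substance the paper's: the correction terms are infinitesimally $H_{k,0}^\st$-bounded, and Kato--Rellich applied to $2(H_{k,0}^\st+|k|^2)$ closes the argument. Where you diverge is the lower bound~\eqref{eq:Hk0-Mourre-pre}. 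The paper stays at the operator level: writing the correction as $-\tfrac i2 k\cdot\{(\nabla^\st+ik),(M_k^{\prime\st}-2)\}$, it applies Cauchy--Schwarz with $|M_k^{\prime\st}-2|\le1$ on $\Ld^2(\Omega)\ominus\C$ and Young's inequality, losing exactly $\tfrac14|k|^2$. You instead descend to the kernels on each chaos, pass to center-of-mass coordinates so that the degenerate Laplacian becomes $-\triangle_{\bar x}$, fiber over the relative variables, and complete the square $(i\partial_s+\tfrac{|k|}2\mu)^2-\tfrac{|k|^2}4(\mu-1)(\mu-3)$ for $\mu\in[1,3]$. Both yield the constant $\tfrac34$; your version makes transparent that the loss comes precisely from the range $[1,3]$ of $\sum_j\partial_j\widetilde m_p$ (and hence why sharpening $\chi$ improves the constant, as the paper remarks), at the price of the coordinate change, while the paper's two-line Cauchy--Schwarz is shorter and never leaves $\Ld^2(\Omega)$.

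Two cosmetic points. First, the $p$-uniform control of the anticommutator's coefficients comes from~\eqref{eq:prepre-mourre-mp} alone (the bounds on $(\sum_j\partial_j)^r\widetilde m_p$); Lemma~\ref{prop:m-function} is not relevant here --- it governs $\widetilde m_{p+1}-\widetilde m_p$ and is used for the commutator with $V$. Second, ``by a standard elliptic argument'' is a misnomer: the unperturbed block on $\Ld^2_\Sym((\R^d)^p)$ is a constant-coefficient Fourier multiplier that is degenerate elliptic for $p\ge2$; its essential self-adjointness on $C^\infty_c(\R^d)^{\odot p}$ comes from the multiplier structure, and the perturbed block is then handled by the relative bound you already established together with Kato--Rellich, not by ellipticity.
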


\begin{proof}
For all $p\ge0$, we define the operator $M'_{k,p}$ on $\Ld^2_\Sym((\R^d)^p)$ as the multiplication by the function $(x_1,\ldots,x_p)\mapsto (\sum_{j=1}^p\partial_j\widetilde m_p)(k\cdot x_1,\ldots,k\cdot x_p)$, we denote by $M_{k,p}^{\prime\st}$ the corresponding operator defined on the $p$th Wiener chaos $\Hc_p$ by $M_{k,p}^{\prime\st}J_p(u_p)=J_p(M'_{k,p}u_p)$, and we set $M_k^{\prime\st}:=\bigoplus_{p=0}^\infty M_{k,p}^{\prime\st}$ on $\Ld^2(\Omega)$. A direct computation on Wiener chaoses yields
\[[\nabla^\st,M_{k}^\st]=-kM_{k}^{\prime\st}.\]
Combining this with Proposition~\ref{prop:commut}~(i)--(ii), we easily find on $\Rc(\Omega)$,
\begin{eqnarray}
[H_{k,0}^\st,\tfrac1iC_k^\st]&=&\textstyle2(-\triangle^\st)-2ik\cdot\nabla^\st-ik\cdot\frac12(\nabla^\st M_k^{\prime\st}+M_k^{\prime\st} \nabla^\st)+|k|^2M_k^{\prime\st}\nonumber\\
&=&\textstyle2(H_{k,0}^\st+|k|^2)\nonumber\\
&&-ik\cdot\tfrac12\big((\nabla^\st+ik) (M_k^{\prime\st}-2)+(M_k^{\prime\st}-2) (\nabla^\st+ik)\big),\label{eq:commut-HC}
\end{eqnarray}
where the right-hand side is well-defined and symmetric on $\Rc(\Omega)$. We split the rest of the proof into two steps.

\medskip
\step1 Proof of the lower bound~\eqref{eq:Hk0-Mourre-pre} on $\Rc(\Omega)$.\\
Note that constants belong to the kernel of the commutator $[H_{k,0}^\st,\tfrac1iC_k^\st]$.
The above expression~\eqref{eq:commut-HC} for the latter yields for all $\phi\in\Rc(\Omega)$ with $\expec\phi=0$,
\[\big\langle \phi,[H_{k,0}^\st,\tfrac1iC_k^\st]\phi\big\rangle_{\Ld^2(\Omega)}\,\ge\,2\|(\nabla^\st+ik)\phi\|_{\Ld^2(\Omega)}^2-|k|\|(\nabla^\st+ik)\phi\|_{\Ld^2(\Omega)}\|(M_k^{\prime\st}-2)\phi\|_{\Ld^2(\Omega)}.\]
The bound~\eqref{eq:prepre-mourre-mp} implies $1\le M_k^{\prime\st}\le3$ on $\Ld^2(\Omega)\ominus\C$, hence
\[\big\langle \phi,[H_{k,0}^\st,\tfrac1iC_k^\st]\phi\big\rangle_{\Ld^2(\Omega)}\,\ge\,2\|(\nabla^\st+ik)\phi\|_{\Ld^2(\Omega)}^2-|k|\|(\nabla^\st+ik)\phi\|_{\Ld^2(\Omega)}\|\phi\|_{\Ld^2(\Omega)},\]
and we are led to
\begin{eqnarray*}
\big\langle \phi,[H_{k,0}^\st,\tfrac1iC_k^\st]\phi\big\rangle_{\Ld^2(\Omega)}&\ge&\|(\nabla^\st+ik)\phi\|_{\Ld^2(\Omega)}^2-\tfrac14|k|^2\|\phi\|_{\Ld^2(\Omega)}^2\\
&=&\big\langle \phi,(H_{k,0}^\st+\tfrac34|k|^2)\phi\big\rangle_{\Ld^2(\Omega)},
\end{eqnarray*}
that is, the lower bound~\eqref{eq:Hk0-Mourre-pre}.

\medskip
\step2 Proof that the commutator $[H_{k,0}^\st,\tfrac1iC_k^\st]$ is essentially self-adjoint on $\Rc(\Omega)$ and that its closure is $H_{k,0}^\st$-bounded and self-adjoint on its domain $H^2(\Omega)$.

\medskip\noindent
For all $p\ge0$, we define the operator $M''_{k,p}$ on $\Ld^2_\Sym((\R^d)^p)$ as the multiplication by the function $(x_1,\ldots,x_p)\mapsto (\sum_{j,l=1}^p\partial_{jl}^2\widetilde m_p)(k\cdot x_1,\ldots,k\cdot x_p)$, we denote by $M_{k,p}^{\prime\prime\st}$ the corresponding operator defined on the $p$th Wiener chaos $\Hc_p$ by $M_{k,p}^{\prime\prime\st} J_p(u_p)=J_p(M''_{k,p}u_p)$, and we set $M_k^{\prime\prime\st}:=\bigoplus_{p=0}^\infty M_{k,p}^{\prime\prime\st}$ on $\Ld^2(\Omega)$.
A direct computation on Wiener chaoses yields
\[[\nabla^\st,M_k^{\prime\st}]=-kM_{k}^{\prime\prime\st},\]
hence the expression~\eqref{eq:commut-HC} can be rewritten as
\begin{equation}\label{eq:rewrite-commut-second}
[H_{k,0}^\st,\tfrac1iC_k^\st]
\,=\,\textstyle2(H_{k,0}^\st+|k|^2)-ik\cdot(M_k^{\prime\st}-2) (\nabla^{\st}+ik)
+\frac i2|k|^2M_k^{\prime\prime\st}.
\end{equation}
Note that the bound~\eqref{eq:prepre-mourre-mp} ensures that $M_k^{\prime\st}-2$ is bounded by $2$ and that $M_k^{\prime\prime\st}$ is bounded on $\Ld^2(\Omega)$, hence
\begin{eqnarray*}
\lefteqn{\big\|\big(-ik\cdot(M_k^{\prime\st}-2) (\nabla^{\st}+ik)
+\tfrac i2|k|^2M_k^{\prime\prime\st}\big)\phi\big\|_{\Ld^2(\Omega)}}\\
&\lesssim&|k|\|(\nabla^\st+ik)\phi\|_{\Ld^2(\Omega)}+|k|^2\|\phi\|_{\Ld^2(\Omega)}\\
&=&|k|\|(H_{k,0}^\st+|k|^2)^\frac12\phi\|_{\Ld^2(\Omega)}+|k|^2\|\phi\|_{\Ld^2(\Omega)}.
\end{eqnarray*}
Together with~\eqref{eq:rewrite-commut-second}, this shows that the commutator $[H_{k,0}^\st,\tfrac1iC_k^\st]$ is an infinitesimal perturbation of $2H_{k,0}^\st$, and the conclusion follows from the Kato-Rellich theorem.
\end{proof}

We turn to the proof of Theorem~\ref{th:Mourre}(ii), that is, the incompatibility of the perturbation~$V$ with respect to the above-constructed conjugate operator $C_k^\st$. In view of Proposition~\ref{prop:commut}(iii), it remains to establish the following.

\begin{lem}\label{prop:com-L1/2-bounded}
If~$V=V_0$ is itself Gaussian, the commutator $[V,iM_k^{\st}]$ is well-defined and essentially self-adjoint on $\Rc(\Omega)$, but is only $\cal L^{1/2}$-bounded provided that $x\Cc_0^\circ\in\Ld^2(\R^d)$.
\end{lem}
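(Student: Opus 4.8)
The plan is to read the commutator off the bosonic Fock space picture. By Lemma~\ref{lem:V-decomp} applied with $V=V_0$, multiplication by $V$ acts on the Wiener chaos decomposition~\eqref{eq:wiener-chaos} as the sum $a+a^*$ of annihilation and creation operators shifting chaoses, cf.~\eqref{eq:V-decomp-re}, whereas $M_k^\st$ is diagonal, acting on the $p$th chaos as multiplication by the symmetric kernel $(x_1,\dots,x_p)\mapsto\widetilde m_p(k\cdot x_1,\dots,k\cdot x_p)$. So for $\phi=\sum_pJ_p(u_p)\in\Rc(\Omega)$ (a finite sum with $u_p\in C^\infty_c(\R^d)^{\odot p}$) I would compute on chaoses
\[[V,iM_k^\st]J_p(u_p)\,=\,J_{p+1}\big(i\,\mathrm{def}^+_p u_p\big)\,+\,J_{p-1}\big(i\,\mathrm{def}^-_p u_p\big),\]
with defect operators $\mathrm{def}^+_p:=a_p^*M_{k,p}-M_{k,p+1}a_p^*$ and $\mathrm{def}^-_p:=a_{p-1}M_{k,p}-M_{k,p-1}a_{p-1}=-(\mathrm{def}^+_{p-1})^*$. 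Using the explicit formula for $a_p^*$ from Lemma~\ref{lem:V-decomp}, the first defect acts by
\[(\mathrm{def}^+_pu_p)(x_1,\dots,x_{p+1})\,=\,\frac1{p+1}\sum_{j=1}^{p+1}\Cc_0^\circ(x_j)\big[\widetilde m_p(k\cdot x_i:i\neq j)-\widetilde m_{p+1}(k\cdot x_1,\dots,k\cdot x_{p+1})\big]\,u_p(x_i:i\neq j),\]
so that the whole statement reduces to estimating these two defects.

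The decisive input will be Lemma~\ref{prop:m-function}: the bracketed difference above, being the difference of $\widetilde m$ evaluated on $p$ and on $p+1$ variables where the extra variable is $x_j$, is bounded by $2|k\cdot x_j|\le2|k|\,|x_j|$, \emph{uniformly in $p$}. Hence each summand is dominated pointwise by $2|k|\,\big(|x_j|\,\Cc_0^\circ(x_j)\big)\,\big|u_p(x_i:i\neq j)\big|$, i.e.\ $\mathrm{def}^+_p$ has exactly the shape of the creation operator $a_p^*$ analysed in Lemma~\ref{lem:V-decomp} with the kernel $\Cc_0^\circ$ replaced by $x\mapsto2|k|\,|x\,\Cc_0^\circ(x)|$, which lies in $\Ld^2(\R^d)$ precisely under the hypothesis $x\Cc_0^\circ\in\Ld^2(\R^d)$. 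Repeating the norm estimate of Lemma~\ref{lem:V-decomp} (with the usual $\Ld^2$ versus $\Ld^2_\Sym$ conversions) then gives
\[\|\mathrm{def}^+_p\|_{\Ld^2_\Sym((\R^d)^p)\to\Ld^2_\Sym((\R^d)^{p+1})}\,\lesssim\,|k|\,(p+1)^{1/2}\,\|x\Cc_0^\circ\|_{\Ld^2(\R^d)},\]
and likewise for $\mathrm{def}^-_p$ by duality. Since distinct chaoses are orthogonal, summing these block bounds along~\eqref{eq:wiener-chaos}, exactly as at the end of the proof of Proposition~\ref{prop:commut}(iii), will yield for $\phi$ in a finite union of chaoses
\[\|[V,iM_k^\st]\phi\|_{\Ld^2(\Omega)}\,\lesssim\,|k|\,\|x\Cc_0^\circ\|_{\Ld^2(\R^d)}\,\|(\Lc+1)^{1/2}\phi\|_{\Ld^2(\Omega)},\]
which is the asserted $\Lc^{1/2}$-boundedness; in particular $[V,iM_k^\st]$ maps $\Rc(\Omega)$ into $\Ld^2(\Omega)$.

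Essential self-adjointness on $\Rc(\Omega)$ I would deduce from Nelson's commutator theorem with comparison operator $\Lc+1$, for which $\Rc(\Omega)$ is a core: the only extra point is that the double commutator $[\Lc,[V,iM_k^\st]]$ be again $(\Lc+1)$-bounded, and this is immediate from the chaos representation above, since $\Lc$ acts by $p\,\Id$ on $\Hc_p$ while $[V,iM_k^\st]$ shifts the chaos index by $\pm1$, so $[\Lc,[V,iM_k^\st]]$ equals $\mathrm{def}^+$-parts minus $\mathrm{def}^-$-parts and obeys the same estimate. Finally, to see that $\Lc^{1/2}$-control is optimal — i.e.\ that $[V,iM_k^\st]$ is genuinely unbounded — I would argue as in Proposition~\ref{prop:commut}(iii) with a test sequence spread over high chaoses: taking $u_n=\zeta^{\otimes n}$ for a fixed real $\zeta\in C^\infty_c(\R^d)$ whose support, together with that of $\Cc_0^\circ$, is chosen so that the bracketed difference in the formula for $\mathrm{def}^+_n$ stays bounded below by a fixed positive constant on a set of positive measure, the chaos-$(n{+}1)$ component of $[V,iM_k^\st]J_n(u_n)$ has $\Ld^2_\Sym$-norm $\gtrsim n^{1/2}\|J_n(u_n)\|_{\Ld^2(\Omega)}$, so the operator is not bounded.

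The hard part will be the uniform-in-$p$ nature of the bound from Lemma~\ref{prop:m-function}: this is exactly the property that the naive conjugate $B_k^\st$ of~\eqref{eq:def-Bkst} lacks — there the analogous defect is a multiplication by a symbol of the form $\tfrac1{p(p+1)}\sum_i k\cdot x_i$ spread over \emph{all} variables, hence an unbounded multiplication operator already on each fixed chaos — and it is precisely what makes the $\ell^\infty$-type regularization $\widetilde m_p$ succeed; once Lemma~\ref{prop:m-function} is invoked, the $\Lc^{1/2}$-bound is a routine transcription of the $a_p^*$ estimate. The only other genuinely delicate point is arranging the non-degeneracy needed for the lower bound so that $\mathrm{def}^+_n$ does not vanish identically on the chosen support (which amounts to checking that $\Cc_0^\circ$ does not vanish identically on an appropriate cone-shaped region relative to $k$); the remainder is bookkeeping with chaos kernels.
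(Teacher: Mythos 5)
Your proposal is correct and follows essentially the same route as the paper: compute $[V,M_k^\st]$ blockwise on the Wiener chaos decomposition via Lemma~\ref{lem:V-decomp}, observe that the resulting kernels carry the difference $\widetilde m_p-\widetilde m_{p\pm1}$, bound that difference by $2|k|\,|x_j|$ using Lemma~\ref{prop:m-function}, and conclude by the same norm estimate as in Proposition~\ref{prop:commut}(iii) with $\Cc_0^\circ$ replaced by $x\mapsto|k|\,|x|\,\Cc_0^\circ(x)$. The additional details you supply on essential self-adjointness and on the unboundedness lower bound are consistent with (and slightly more explicit than) what the paper records.
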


\begin{proof}
In view of the Fock space decomposition~\eqref{eq:V-decomp-re} for multiplication by $V=V_0$,
cf.~Lemma~\ref{lem:V-decomp}, we find
\begin{multline*}
[V,M_k^\st]J_p(u_p)
\,=\,J_{p+1}\bigg(\frac1{p+1}\sum_{j=1}^{p+1}\Cc_0^\circ(x_j)\,u_p(x_1,\ldots,x_{j-1},x_{j+1},\ldots,x_{p+1})\\
\times\big(\widetilde m_p(k\cdot x_1,\ldots,k\cdot x_{j-1},k\cdot x_{j+1},\ldots,k\cdot x_{p+1})-\widetilde m_{p+1}(k\cdot x_1,\ldots,k\cdot x_{p+1})\big)\bigg)\\
\hspace{-3cm}+pJ_{p-1}\bigg(\int_{\R^d}\Cc_0^\circ(z)\,u_p(x_1,\ldots,x_{p-1},z)\\
\times\big(\widetilde m_p(k\cdot x_1,\ldots,k\cdot x_{p-1},k\cdot z)-\widetilde m_{p-1}(k\cdot x_1,\ldots,k\cdot x_{p-1})\big)dz\bigg).
\end{multline*}
In view of Lemma~\ref{prop:m-function} and of the Wiener chaos decomposition~\eqref{eq:wiener-chaos}, arguing similarly as in the proof of Proposition~\ref{prop:commut}(iii), we easily deduce for all $\phi\in\Ld^2(\Omega)$,
\[\|[V,M_k^\st]\phi\|_{\Ld^2(\Omega)}
\,\lesssim\,|k|\Big(\int_{\R^d}|x|^2|\Cc_0^\circ(x)|^2dx\Big)^\frac12\|(\Lc+1)^\frac12\phi\|_{\Ld^2(\Omega)},\]
and the conclusion follows.
\end{proof}

Finally, we argue that, while well-defined and symmetric on $\Rc(\Omega)$, the full commutator~$[H_{k,\lambda}^\st,\frac1iC_k^\st]$ is essentially self-adjoint.

\begin{lem}\label{lem:commut-full-sa}
If~$V=V_0$ is itself Gaussian, the commutator~$[H_{k,\lambda}^\st,\frac1iC_k^\st]$ is well-defined and essentially self-adjoint on $\Rc(\Omega)$.
\end{lem}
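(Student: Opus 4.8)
The plan is to obtain essential self\nobreakdash-adjointness from \emph{Nelson's commutator theorem} (see e.g.~\cite{ABMG-96}), using the self-adjoint comparison operator
\[
N\,:=\,\Lc+H_{k,0}^\st+c,\qquad c:=|k|^2+1,
\]
which is $\ge1$ (since $\Lc\ge0$ and $H_{k,0}^\st\ge-|k|^2$), commutes with both $\Lc$ and $H_{k,0}^\st$ (cf.~Lemma~\ref{lem:nabla-Op}), preserves the Wiener chaos decomposition, and is essentially self-adjoint on $\Rc(\Omega)$ — the latter because $N$ is the direct sum over $p$ of constant-coefficient elliptic operators on $\Ld^2_\Sym((\R^d)^p)$, each essentially self-adjoint on $C^\infty_c(\R^d)^{\odot p}$, and $\Rc(\Omega)$ is the algebraic direct sum of these cores. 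First I would record the splitting
\[
[H_{k,\lambda}^\st,\tfrac1iC_k^\st]\,=\,S_0+\lambda\widetilde S_1,\qquad S_0:=[H_{k,0}^\st,\tfrac1iC_k^\st],\quad \widetilde S_1:=[V,\tfrac1iA^\st]+\tfrac12[V,\tfrac1iM_k^\st],
\]
which is well-defined and symmetric on $\Rc(\Omega)$: for $S_0$ this is the content of the preceding lemma, which moreover gives that $\overline{S_0}$ is self-adjoint on $H^2(\Omega)$ and $H_{k,0}^\st$-bounded, with the explicit expression~\eqref{eq:rewrite-commut-second} $S_0=2H_{k,0}^\st+2|k|^2+R$ where $R:=-ik\cdot(M_k^{\prime\st}-2)(\nabla^\st+ik)+\tfrac i2|k|^2M_k^{\prime\prime\st}$; for $\widetilde S_1$ it is Proposition~\ref{prop:commut}(iii) together with Lemma~\ref{prop:com-L1/2-bounded}, which also show $\widetilde S_1$ to be $\Lc^{1/2}$-bounded.

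Next I would check the operator bound of Nelson's theorem, $\|[H_{k,\lambda}^\st,\tfrac1iC_k^\st]\phi\|_{\Ld^2(\Omega)}\lesssim\|N\phi\|_{\Ld^2(\Omega)}$ for $\phi\in\Rc(\Omega)$. Since $\Lc$ and $H_{k,0}^\st+c$ are \emph{commuting} nonnegative self-adjoint operators both dominated by $N$, one has $\|\Lc\phi\|\le\|N\phi\|$ and $\|(H_{k,0}^\st+c)\phi\|\le\|N\phi\|$, hence $\|H_{k,0}^\st\phi\|\lesssim\|N\phi\|$; the $H_{k,0}^\st$-boundedness of $S_0$ then yields $\|S_0\phi\|\lesssim\|N\phi\|$, while $\|\widetilde S_1\phi\|\lesssim\|(\Lc+1)^{1/2}\phi\|\le\|N^{1/2}\phi\|\le\|N\phi\|$ as $N\ge1$.

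The heart of the proof — and the step I expect to be the main obstacle — is the form bound $\big|\langle\phi,[\,[H_{k,\lambda}^\st,\tfrac1iC_k^\st],N\,]\phi\rangle_{\Ld^2(\Omega)}\big|\lesssim\langle\phi,N\phi\rangle_{\Ld^2(\Omega)}$ on $\Rc(\Omega)$. I would split $[\,\cdot\,,N]=[\,\cdot\,,\Lc]+[\,\cdot\,,H_{k,0}^\st]$. For the $\Lc$\nobreakdash-part, $[S_0,\Lc]=0$ because $S_0$ preserves the chaos decomposition (being a commutator of chaos-preserving operators), while $[\widetilde S_1,\Lc]$ equals, up to a sign, the difference of the chaos-raising and chaos-lowering parts of $\widetilde S_1$ (recall $V$ shifts chaoses by $\pm1$, cf.~\eqref{eq:V-decomp-re}), each of which is $\Lc^{1/2}$-bounded by orthogonality of the chaoses; hence $[\widetilde S_1,\Lc]$ is $\Lc^{1/2}$-bounded, a fortiori form-bounded by $\Lc+1\le N$. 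For the $H_{k,0}^\st$\nobreakdash-part one must handle genuine double commutators. Using $S_0=2H_{k,0}^\st+2|k|^2+R$ one has $[S_0,H_{k,0}^\st]=-[H_{k,0}^\st,R]$; commuting the second-order operator $H_{k,0}^\st=-(\nabla^\st+ik)\cdot(\nabla^\st+ik)-|k|^2$ past the bounded multiplication-type operators $M_k^{\prime\st},M_k^{\prime\prime\st}$ and the factor $\nabla^\st+ik$, and invoking the $p$-uniform derivative bounds~\eqref{eq:prepre-mourre-mp} on $\widetilde m_p$, produces a sum of terms of order at most two in $\nabla^\st+ik$ with bounded coefficients, each form-bounded by $H_{k,0}^\st+c\le N$. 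For $[\,\lambda\widetilde S_1,H_{k,0}^\st]$, using the Fock-space formulas of Lemma~\ref{lem:V-decomp} and of Proposition~\ref{prop:commut}(iii)/Lemma~\ref{prop:com-L1/2-bounded}, the new ingredient is the identity
\[
[H_{k,0}^\st,V]\,=\,-\triangle^\st V_0-2(\nabla^\st V_0)\cdot\nabla^\st-2ik\cdot\nabla^\st V_0
\]
(and its analogue with $V_0$ replaced by $V_0':=J_1((iA)\Cc_0^\circ)$), where $\nabla^\st V_0$, $\triangle^\st V_0$ are first-chaos elements, hence $\Lc^{1/2}$-bounded multiplication operators — here one uses the standing smoothness assumption on $\Cc_0$ (e.g.\ $\Cc_0\in C^\infty_c(\R^d)$ as in Theorem~\ref{th:Mourre-pert}), which forces $\widehat\Cc_0$ to be Schwartz and therefore $\Cc_0^\circ$ and $A\Cc_0^\circ$ to have all the needed derivatives in $\Ld^2(\R^d)$. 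Carrying out the commutation, every resulting term is a product of bounded operators, $\Lc^{1/2}$-bounded multiplications, and at most one factor $\nabla^\st+ik\lesssim(H_{k,0}^\st+c)^{1/2}$; by Cauchy--Schwarz each is form-bounded by a multiple of $(\Lc+1)+(H_{k,0}^\st+c)\le2N$. Collecting the four pieces gives the required form bound, and Nelson's commutator theorem then yields that $[H_{k,\lambda}^\st,\tfrac1iC_k^\st]$ is essentially self-adjoint on the core $\Rc(\Omega)$ of $N$. The delicate point throughout is the bookkeeping of $[\,\cdot\,,H_{k,0}^\st]$: one must verify that no term carries an uncontrolled factor such as $\Lc$ or $(H_{k,0}^\st)^{3/2}$, which is guaranteed precisely because the perturbation shifts chaoses by at most one and commutes with $\Lc$ up to bounded corrections, and because the regularized multipliers $\widetilde m_p$ have $p$-uniform derivative bounds.
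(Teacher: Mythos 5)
Your argument is sound in outline, but it takes a genuinely different route from the paper's, and the difference is instructive. The paper does not apply the plain Nelson commutator theorem with the ``natural'' comparison operator $N=\Lc+H_{k,0}^\st+c$; it uses the tailored reformulation in Proposition~\ref{prop:nelson}, whose comparison operator is $H^\kappa=H_1+H_2+\Lc+\kappa$, i.e.\ it contains the operator $H=[H_{k,\lambda}^\st,\frac1iC_k^\st]$ \emph{itself}. Since $[H_1,\Lc]=0$, this collapses the commutator condition to $[H,H^\kappa]=[H_2,\Lc]$, so the only commutator ever estimated is $[H_2,\Lc]$, which is handled by the chaos-shifting structure of $V$ exactly as in your ``$\Lc$-part''; the price is just checking that $H_2$ is a Kato perturbation of $H_1+\Lc$, which follows from the $H_1^{1/2}$- and $\Lc^{1/2}$-bounds already established. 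Your route, by contrast, must control the double commutators $[S_0,H_{k,0}^\st]=[R,H_{k,0}^\st]$ and $[\widetilde S_1,H_{k,0}^\st]$, which is precisely the computation the paper's formulation is designed to avoid. That computation does appear to go through: $[R,H_{k,0}^\st]$ is second order in $\nabla^\st+ik$ with coefficients controlled by the $p$-uniform derivative bounds~\eqref{eq:prepre-mourre-mp}, and an integration by parts makes it form-bounded by $H_{k,0}^\st+c$; but be aware of two thin spots. First, the $M_k^\st$-piece of $[\widetilde S_1,H_{k,0}^\st]$ requires $p$-uniform Lipschitz bounds on the \emph{differences} $\widetilde m_{p+1}(z,\cdot)-\widetilde m_p(\cdot)$, not just on $\widetilde m_p$ itself; these hold but are not recorded in the paper and need a short separate verification. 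Second, your claim that $\Cc_0\in C^\infty_c(\R^d)$ forces $A\Cc_0^\circ$, $\nabla(A\Cc_0^\circ)$, $x\nabla\Cc_0^\circ$, etc.\ to lie in $\Ld^2(\R^d)$ is not automatic: $\widehat{\Cc_0^\circ}=(\widehat\Cc_0)^{1/2}$ is the square root of a nonnegative Schwartz function and need not be smooth at its zeros, so polynomially weighted derivatives of $\Cc_0^\circ$ require a separate argument (a Glaeser-type inequality gives one derivative of $(\widehat\Cc_0)^{1/2}$ with rapid decay, which suffices here, but this should be said). Neither point is fatal; the net comparison is that your approach is more self-contained (standard Nelson theorem, off the shelf) but demands strictly more computation and slightly more regularity of the kernel, whereas the paper's choice of comparison operator buys the result from estimates already proved in Proposition~\ref{prop:commut}(iii) and Lemma~\ref{prop:com-L1/2-bounded} with no new commutator analysis involving $H_{k,0}^\st$.
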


This is obtained as a particular case of the following abstract result, which is a convenient reformulation of Nelson's theorem~\cite{Nelson-72}. Note that this result also ensures that $H_{k,\lambda}^\st$ is essentially self-adjoint on $\Rc(\Omega)$ when $V=V_0$ is Gaussian; a more general criterion for essential self-adjointness of $H_{k,\lambda}^\st$ in case of an unbounded potential without particular Gaussian structure is included in Appendix~\ref{sec:self-adj}.

\begin{prop}\label{prop:nelson}
Let $H_1$ and $L$ be self-adjoint operators on their respective domains $D(H_1)$ and $D(L)$ on a Hilbert space $\Hc$, and let $H_2$ be a symmetric operator defined on some dense linear subspace $\Dc\subset \Hc$, such that
\begin{enumerate}[\qquad$\bullet$]
\item $H_1$ and $L$ are nonnegative and commute strongly, hence $H_1+L$ is self-adjoint on $D(H_1)\cap D(L)$;
\item $2H_2$ is a Kato perturbation of $H_1+L$, that is, $\Dc$ is a core for $H_1+L$ and  there is $\alpha<1$ and $C\ge1$ such that there holds, for all $u\in\Dc$,
\[2\|H_2u\|_\Hc\le \alpha\|(H_1+L)u\|_{\Hc}+C\|u\|_\Hc;\]
\item $\pm i[H_2,L]\lesssim H_1+L+1$, that is, for all $u\in\Dc$,
\[\qquad\big|\langle H_2u,Lu\rangle_\Hc-\langle Lu,H_2u\rangle_\Hc\big|\,\lesssim\,\langle u,(H_1+L+1)u\rangle_\Hc.\]
\end{enumerate}
Then the operator $H:=H_1+H_2$ is essentially self-adjoint on $\Dc$.
\end{prop}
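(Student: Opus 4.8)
The plan is to reduce the statement to Nelson's commutator theorem \cite{Nelson-72}, the delicate point being the choice of the comparison operator. Since the hypotheses control $[H_2,L]$ but say nothing about $[H_2,H_1]$, the naive choice $H_1+L+c$ would fail: the commutator of $H:=H_1+H_2$ with it would contain the uncontrolled term $[H_2,H_1]$. Instead I would take
\[
N\,:=\,H_1+L+H_2+c\,=\,H+L+c,
\]
with $c$ a large constant to be fixed, so that $H$ and $N$ differ by the self-adjoint operator $L+c$ and $[H,N]$ reduces to $[H,L]=[H_1,L]+[H_2,L]$, i.e.\ to the controlled piece. By the second hypothesis $H_2$ has relative bound $\tfrac\alpha2<1$ with respect to $H_1+L$ on the core $\Dc$, so the Kato--Rellich theorem makes $\tilde H:=H_1+L+H_2$ self-adjoint on $D(H_1)\cap D(L)$ with $\Dc$ a core, and bounded below; for $c$ large, $N=\tilde H+c$ is then self-adjoint with core $\Dc$ and satisfies $N\ge 1$.

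Before invoking Nelson's theorem I would record two preliminary facts. First, from nonnegativity and strong commutativity of $H_1$ and $L$, the joint spectral calculus gives $\langle H_1u,Lu\rangle_\Hc\ge 0$ for $u\in D(H_1)\cap D(L)$; in particular this quantity is real, so $[H_1,L]$ vanishes as a form on $D(H_1)\cap D(L)$, and $\|H_1u\|_\Hc,\|Lu\|_\Hc\le\|(H_1+L)u\|_\Hc$. Second, and crucially, I would establish the quantitative lower bound $N\gtrsim H_1+L+1$: fixing $\delta\in(0,1-\tfrac\alpha2)$, the operator $H_2$ has relative bound $\tfrac{\alpha/2}{1-\delta}<1$ with respect to $(1-\delta)(H_1+L)\ge 0$, so Kato--Rellich yields $(1-\delta)(H_1+L)+H_2\ge -D_\delta$ for some constant $D_\delta$; hence $N=\tilde H+c\ge\delta(H_1+L)+(c-D_\delta)$, and enlarging $c$ if necessary we obtain both $N\ge 1$ and $N\ge\delta(H_1+L+1)$.

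It then remains to verify the two conditions of Nelson's commutator theorem for $H$ relative to $N$ on the core $\Dc$ (after extending $H_2$, hence $H$, from $\Dc$ to $D(H_1+L)=D(N)$ by continuity, which is legitimate by the relative bound). For relative boundedness, $\|Hu\|_\Hc\le\|(H_1+L)u\|_\Hc+\|H_2u\|_\Hc\lesssim\|(H_1+L)u\|_\Hc+\|u\|_\Hc\lesssim\|\tilde Hu\|_\Hc+\|u\|_\Hc\lesssim\|Nu\|_\Hc$, where I used the Kato bound to dominate $\|(H_1+L)u\|_\Hc$ by $\|\tilde Hu\|_\Hc+\|u\|_\Hc$ and $\|u\|_\Hc\le\|Nu\|_\Hc$ from $N\ge 1$. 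For the commutator bound, for $u\in\Dc$ the $\langle Hu,Hu\rangle_\Hc$ and the constant contributions cancel (the latter since $H$ is symmetric on $\Dc$) and the $[H_1,L]$ contribution vanishes by the first preliminary fact, so that
\[
\langle Hu,Nu\rangle_\Hc-\langle Nu,Hu\rangle_\Hc\,=\,\langle H_2u,Lu\rangle_\Hc-\langle Lu,H_2u\rangle_\Hc,
\]
which by the third hypothesis is $\lesssim\langle u,(H_1+L+1)u\rangle_\Hc\le\delta^{-1}\langle u,Nu\rangle_\Hc=\delta^{-1}\|N^{1/2}u\|_\Hc^2$. Nelson's commutator theorem \cite{Nelson-72} then gives that $H$ is essentially self-adjoint on every core for $N$, in particular on $\Dc$, which is the claim.

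The main obstacle is conceptual rather than computational: one must recognize that the comparison operator has to be $H_1+L+H_2+c$, not the obvious $H_1+L+c$, precisely so that the uncontrolled commutator $[H_2,H_1]$ never appears; once this is done, the only genuine work is the lower bound $N\gtrsim H_1+L+1$ required by Nelson's form condition, which is where the strict inequality $\alpha<1$ (not merely $\alpha\le 1$) is used, through the Kato--Rellich argument applied to $(1-\delta)(H_1+L)+H_2$.
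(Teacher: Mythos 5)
Your proof is correct and follows essentially the same route as the paper: the comparison operator $N=H+L+c$ is exactly the paper's $H^\kappa=H+L+\kappa$, the self-adjointness of $N$ and the bound $N\ge 1$ are obtained by the same Kato--Rellich argument, and the verification of Nelson's two conditions (relative boundedness of $H$, and reduction of $[H,N]$ to the controlled commutator $[H_2,L]$ via strong commutation of $H_1$ and $L$) matches the paper's Step~2. The only cosmetic difference is your $\delta$-splitting derivation of $N\gtrsim H_1+L+1$, where the paper instead uses the factor~$2$ in the Kato hypothesis directly to get $(H_1+L)+2H_2+\kappa\ge0$.
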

 
\begin{proof}
We split the proof into two steps.

\medskip
\step1 Proof that for $\kappa$ large enough the operator $H^\kappa:=H+L+\kappa$ is essentially self-adjoint on $\Dc$ and satisfies $H^\kappa\ge1$.

\medskip\noindent
As $H_1$ and $L$ are self-adjoint on their respective domains and nonnegative and as they commute strongly, their sum $H_1^\kappa:=H_1+L+\kappa$ is self-adjoint on $D(H_1)\cap D(L)$, cf.~\cite[Lemma~4.15.1]{Putnam-67}. Next, as $H_2$ is a Kato perturbation of $H_1+L$, it is also a Kato perturbation of $H_1^\kappa$ for all $\kappa\ge0$,
hence the Kato-Rellich theorem entails that $H^\kappa:=H_1^\kappa+H_2$ is essentially self-adjoint on $\Dc$.
In addition, as $H_2$ is a Kato perturbation of $H_1+L$ and as $H_1+L$ is nonnegative, it is easily deduced that $H_1+H_2+L$ is bounded from below, hence $H^\kappa\ge1$ for $\kappa$ large enough.

\medskip
\step2 Conclusion.\\
In view of Nelson's theorem~\cite{Nelson-72} in form of~\cite[Corollary 1.1]{Faris-Lavine-74}, together with the result of Step~1, the conclusion follows provided that we can check the following two additional properties, for $\kappa$ large enough,
\begin{enumerate}[(i)]
\item $H$ is $H^\kappa$-bounded on $\Dc$;
\item $\pm i[H,H^\kappa]\lesssim_\kappa H^\kappa$, that is, for all $u\in\Dc$,
\[\big|\langle Hu,H^\kappa u\rangle_\Hc-\langle H^\kappa u,Hu\rangle_\Hc\big|\,\lesssim_\kappa\,\langle u,H^\kappa u\rangle_\Hc.\]
\end{enumerate}
We start with the proof of condition~(i).
On the one hand, since $H_1$ and $L$ are nonnegative and commute strongly, we can deduce $\|Lu\|_\Hc\le\|H_1^\kappa u\|_\Hc$ for all $u\in D(H_1)\cap D(L)$.
On the other hand, since $H_2$ is a Kato perturbation of $H_1+L$, we find for all $u\in \Dc$,
\begin{equation*}
\|H_2u\|_\Hc\le \alpha\|(H_1+L)u\|_\Hc+C\|u\|_\Hc\le \alpha\|H_1^\kappa u\|_\Hc+C\|u\|_\Hc,
\end{equation*}
hence
\[\|H_1^\kappa u\|_\Hc\le\|H^\kappa u\|_\Hc+\|H_2 u\|_\Hc\le\|H^\kappa u\|_\Hc+ \alpha\|H_1^\kappa u\|_\Hc+C\|u\|_\Hc,\]
which leads to
\[\|H_1^\kappa u\|_\Hc\le\tfrac1{1-\alpha}\|H^\kappa u\|_\Hc+\tfrac C{1-\alpha}\|u\|_\Hc.\]
Combined with the above, this yields
\[\|Lu\|_\Hc\le\|H_1^\kappa u\|_\Hc\lesssim \|H^\kappa u\|_\Hc+\|u\|_\Hc,\]
hence in particular,
\begin{eqnarray*}
\|H u\|_\Hc\,\le\,\|H^\kappa u\|_\Hc+\|Lu\|_\Hc+\kappa\|u\|_\Hc
\,\lesssim\,\|H^\kappa u\|_\Hc+(1+\kappa)\|u\|_\Hc,
\end{eqnarray*}
that is,~(i).
It remains to establish condition~(ii). As $H_1$ and $L$ commute, we can write for all $u\in \Dc$,
\[\langle Hu,H^\kappa u\rangle_\Hc-\langle H^\kappa u,Hu\rangle_\Hc=\langle Hu,L u\rangle_\Hc-\langle L u,Hu\rangle_\Hc=\langle H_2u,L u\rangle_\Hc-\langle L u,H_2u\rangle_\Hc,\]
hence, by assumption,
\[\big|\langle Hu,H^\kappa u\rangle_\Hc-\langle H^\kappa u,Hu\rangle_\Hc\big|\lesssim\langle u,(H_1+L+1)u\rangle_\Hc.\]
Again noting as in Step~1 that the nonnegativity of $H_1+L$ and the fact that $2H_2$ is a Kato perturbation of $H_1+L$ imply that $(H_1+L)+2H_2+\kappa$ is nonnegative for $\kappa$ large enough, the claim~(ii) follows.
\end{proof}

With the above abstract result at hand, we quickly indicate how Lemma~\ref{lem:commut-full-sa} is a simple consequence.

\begin{proof}[Proof of Lemma~\ref{lem:commut-full-sa}]
In view of~\eqref{eq:commut-HC}, we can decompose $[H_{k,\lambda}^\st,\tfrac1iC_k^\st]=H_1+H_2$ on~$\Rc(\Omega)$,
in terms of
\begin{eqnarray*}
H_1&:=&2(H_{k,0}^\st+|k|^2),\\
H_2&:=&H_{2,1}+H_{2,2},\\
H_{2,1}&:=&-ik\cdot\tfrac12\big((\nabla^\st+ik) (M_k^{\prime\st}-2)+(M_k^{\prime\st}-2) (\nabla^\st+ik)\big),\\
H_{2,2}&:=&-\lambda\big([V,iA^\st]+\tfrac12[V,iM_k^\st]\big).
\end{eqnarray*}
We shall appeal to Proposition~\ref{prop:nelson} with $L:=\Lc$, and it remains to check the different assumptions.
First, $H_1=-2(\nabla^\st+ik)\cdot(\nabla^\st+ik)$ and $\Lc$ are both essentially self-adjoint on $\Rc(\Omega)$, as discussed in Sections~\ref{sec:FB-fibr} and~\ref{chap:Mall}, respectively, they are clearly nonnegative, and Lemma~\ref{lem:nabla-Op} ensures that they commute strongly.
Also note that $H_1$ and $\Lc$ leave the linear subspace $\Rc(\Omega)$ invariant and that spectral projections of $\Lc$ also leave $\Rc(\Omega)$ invariant.
Using projections onto a finite number of chaoses, one then easily sees that $H_1+L$ is essentially self-adjoint on $\Rc(\Omega)$.
Next, we may rewrite as in~\eqref{eq:rewrite-commut-second},
\begin{equation}\label{eq:rewr-H21}
H_{2,1}\,=\,\textstyle-ik\cdot(M_k^{\prime\st}-2) (\nabla^{\st}+ik)
+\frac i2|k|^2M_k^{\prime\prime\st},
\end{equation}
hence the boundedness of $M_k^{\prime\st}$ and $M_k^{\prime\prime\st}$ leads to
\begin{equation*}
\|H_{2,1}\phi\|_{\Ld^2(\Omega)}\,\lesssim\,\textstyle|k|\|(\nabla^{\st}+ik)\phi\|_{\Ld^2(\Omega)}
+|k|^2\|\phi\|_{\Ld^2(\Omega)}
\,\lesssim\,\textstyle|k|\|H_1^\frac12\phi\|_{\Ld^2(\Omega)}+|k|^2\|\phi\|_{\Ld^2(\Omega)},
\end{equation*}
showing that $H_{2,1}$ is $H_1^{1/2}$-bounded.
Proposition~\ref{prop:commut}(iii) and Lemma~\ref{prop:com-L1/2-bounded} also ensure that $[V,iA^\st]$ and $[V,iM_k^\st]$ are $\Lc^{1/2}$-bounded. This proves that $H_2$ is $(H_1+\Lc)^{1/2}$-bounded, hence $(H_1+\Lc)$-infinitesimal.
Finally, it remains to consider the commutator $[H_2,\Lc]$. Since by definition $M_k^{\prime\st}$ and $M_k^{\prime\prime\st}$ preserve chaoses, identity~\eqref{eq:rewr-H21} implies $[H_{2,1},\Lc]=0$. In view of the explicit description of $[V,iA^\st]$ and $[V,iM_k^\st]$ in Proposition~\ref{prop:commut}(iii) and in Lemma~\ref{prop:com-L1/2-bounded}, respectively, as these commutators have a similar structure as $V$ itself on the Wiener chaos decomposition, a similar computation as in the proof of Proposition~\ref{prop:commut}(iv) easily shows that $[H_{2,2},\Lc]$ is $\Lc^{1/2}$-bounded, and the conclusion follows.
\end{proof}

\subsection{Consequences of Mourre's relations}\label{sec:consequ-Mourre}
This section is devoted to the proof of Corollary~\ref{th:mourre-partial}.
Let~$V=V_0$ be a stationary Gaussian field.
Given some $L_0>0$, consider the projection $Q_\lambda := 1_{[0,(L_0\lambda)^{-2}]}(\Lc)$ onto Wiener chaoses of order~$\le(L_0\lambda)^{-2}$. We split the proof into three steps.

\medskip
\step{1} Preliminary on spectral trunctations: For all $\phi\in\Ld^2(\Omega)$, $L\ge1$, and $h\in\Ld^\infty(\R)$ supported in $\R\setminus\frac12[-\frac1L,\frac1L]$, there holds
\begin{equation}\label{eq:truncate-sp}
\big|\expecm{\bar\phi\,e^{-itH_{k,\lambda}^\st}h(H_{k,\lambda}^\st)1}\!\big|\,\lesssim\,\lambda L\,
\|h\|_{\Ld^\infty(\R)}\|\phi\|_{\Ld^2(\Omega)},
\end{equation}
where the factor $\lambda L$ can be replaced by $(\lambda L)^2$ if $\phi=1$.

\medskip\noindent
Choose $h_0\in C^\infty(\R)$ with $h_0(y)=1$ for $|y|\ge\frac1{2L}$ and $h_0(y)=0$ for $|y|\le\frac1{4L}$ such that $h_0\ge0$ and $|h'_0|\le8L$ pointwise.
By definition, we find $|h|\le\|h\|_{\Ld^\infty(\R)}|h_0|$, hence
\[\|h(H_{k,\lambda}^\st)1\|_{\Ld^2(\Omega)}\le\|h\|_{\Ld^\infty(\R)}\|h_0(H_{k,\lambda}^\st)1\|_{\Ld^2(\Omega)}.\]
Using spectral calculus with $h_0(0)=0$ and $H_{k,\lambda}^\st1=\lambda V$, this leads to
\begin{eqnarray*}
\|h(H_{k,\lambda}^\st)1\|_{\Ld^2(\Omega)}&\le&\|h\|_{\Ld^\infty(\R)}\Big\|\int_0^1h_0'(sH_{k,\lambda}^\st)H_{k,\lambda}^\st1\,ds\Big\|_{\Ld^2(\Omega)}\\
&\le&8\,\lambda L\,\|h\|_{\Ld^\infty(\R)}\|V\|_{\Ld^2(\Omega)}.
\end{eqnarray*}
If $\phi$ is nonconstant, the conclusion directly follows from the Cauchy--Schwarz inequality. If $\phi=1$, decomposing $h$ into its positive and negative parts $h_+$ and $h_-$, we can rather estimate
\begin{eqnarray*}
\big|\expecm{e^{-itH_{k,\lambda}^\st}h(H_{k,\lambda}^\st)1}\!\big|&\le&\big\|(h_+)^\frac12(H_{k,\lambda}^\st)1\big\|_{\Ld^2(\Omega)}^2+\big\|(h_-)^\frac12(H_{k,\lambda}^\st)1\big\|_{\Ld^2(\Omega)}^2\\
&\le&128\,(\lambda L)^2\|h\|_{\Ld^\infty(\R)}\|V\|_{\Ld^2(\Omega)}^2.
\end{eqnarray*}

\medskip
\step2 Proof that for all $k\in\R^d\setminus\{0\}$ the flow for the truncated fibered Schrödinger operator $Q_\lambda H_{k,\lambda}^\st Q_\lambda$ satisfies for all $s\ge0$,
\begin{equation}\label{eq:preconcm-res}
\lim_{\lambda\downarrow0}\Big|\expecm{e^{-i\lambda^{-2}s\,Q_\lambda H_{k,\lambda}^\st Q_\lambda}1}-e^{-s(\alpha_k+i\beta_k)}\Big|\,=\,0.
\end{equation}
We wish to apply Theorem~\ref{th:CGH} in form of Remark~\ref{rem:CGH} with $H_\lambda^\circ:=Q_\lambda H_{k,0}^\st Q_\lambda$, $W_\lambda^\circ:=Q_\lambda VQ_\lambda$, $E_0:=0$, $\psi_0:=1$, and it suffices to check the different assumptions.
First, as the conjugate operator $C_k^\st$ commutes with the Wiener chaos decomposition, Theorem~\ref{th:Mourre}(i) ensures that the truncated operator $Q_\lambda H_{k,0}^\st Q_\lambda$ restricted to $\Ld^2(\Omega)\ominus\C$ satisfies a strict Mourre relation on $J_\e:=[\e-\frac34|k|^2,\infty)$ for all $\e>0$: we find
\[\big[ Q_\lambda H_{k,0}^\st Q_\lambda,\tfrac1iC_k^\st\big]\ge \Pi Q_\lambda\big( H_{k,0}^\st +\tfrac34|k|^2\big)Q_\lambda\Pi,\]
hence, on $\Ld^2(\Omega)\ominus\C$,
\[\mathds1_{J_\e}(Q_\lambda H_{k,0}^\st Q_\lambda)\big[ Q_\lambda H_{k,0}^\st Q_\lambda,\tfrac1iC_k^\st\big]\mathds1_{J_\e}(Q_\lambda H_{k,0}^\st Q_\lambda)\ge \e\,\mathds1_{J_\e}( Q_\lambda H_{k,0}^\st Q_\lambda).\]
Similar computations as in the proof of Theorem~\ref{th:Mourre}(i) show the $H_{k,0}^\st$-boundedness of iterated commutators $\operatorname{ad}^k_{C_k^\st}(H_{k,0}^\st)$,
hence the $\lambda$-uniform $Q_\lambda H_{k,0}^\st Q_\lambda$-boundedness of
\[\operatorname{ad}^k_{C_k^\st}(Q_\lambda H_{k,0}^\st Q_\lambda)=Q_\lambda\operatorname{ad}^k_{C_k^\st}( H_{k,0}^\st )\,Q_\lambda.\]
In addition, as the domain of $H_{k,0}^\st$ is $H^2(\Omega)$, as Theorem~\ref{th:Mourre}(i) ensures that $H^2(\Omega)$ is invariant under the unitary group generated by $C_k^\st$, and as the latter further preserves the chaos decomposition,
we deduce that the domain of $Q_\lambda H_{k,0}^\st Q_\lambda$ is also invariant under the unitary group generated by $C_k^\st$.

\medskip\noindent
It remains to check assumptions on the perturbation $Q_\lambda VQ_\lambda$. Note that $\Pi(Q_\lambda VQ_\lambda1)=V$, which clearly satisfies $\|\langle C_k^\st\rangle^6V\|_{\Ld^2(\Omega)}\lesssim1$. Further, iterating the proof of Theorem~\ref{th:Mourre}(ii), we find that for all $k\ge0$ iterated commutators $\operatorname{ad}^k_{C_k^\st}(V)$ are $\Lc^{1/2}$-bounded, hence for all~$\phi\in\Ld^2(\Omega)$,
\[\big\|\operatorname{ad}^k_{C_k^\st}(\lambda Q_\lambda VQ_\lambda)\phi\big\|_{\Ld^2(\Omega)}\lesssim\lambda\big\|Q_\lambda\Lc^{1/2}\phi\big\|_{\Ld^2(\Omega)}+\lambda\|\phi\|_{\Ld^2(\Omega)},\]
which entails, by definition of $Q_\lambda$,
\[\big\|\operatorname{ad}^k_{C_k^\st}(\lambda Q_\lambda VQ_\lambda)\phi\big\|_{\Ld^2(\Omega)}\lesssim\big(\lambda+\tfrac1{L_0}\big)\|\phi\|_{\Ld^2(\Omega)}.\]
Choosing $L_0\simeq1$ large enough, we may then apply Theorem~\ref{th:CGH} in form of Remark~\ref{rem:CGH},
to the effect of
\[\expec{e^{-iQ_\lambda H_{k,\lambda}^\st Q_\lambda t}\,\mathds1_{J_\e}(Q_\lambda H_{k,\lambda}^\st Q_\lambda)\,1}=e^{-\lambda^2t(\alpha_k+i\beta_k)}+o(1).\]
In view of Step~1, the spectral truncation $\mathds1_{J_\e}(Q_\lambda H_{k,\lambda}^\st Q_\lambda)$ can be removed up to a further $O(\lambda^2)$ error, and the claim~\eqref{eq:preconcm-res} follows.

\medskip
\step3 Conclusion.\\
In view of the result of Step~1, it remains to prove for all $0\le s\le (e^2\Cc_0(0)^\frac12L_0)^{-1}$,
\begin{equation}\label{eq:remains-dec}
\lim_{\lambda\downarrow0}\Big\|e^{-i\lambda^{-2}s\,Q_\lambda H_{k,\lambda}^\st Q_\lambda}1-e^{-i\lambda^{-2}s H_{k,\lambda}^\st }1\Big\|_{\Ld^2(\Omega)}\,=\,0,
\end{equation}
while the conclusion of Corollary~\ref{th:mourre-partial} then follows from the fibration~\eqref{eq:flow-decomp0}.
Set for abbreviation $u_{k,\lambda}^t:=e^{-itH_{k,\lambda}^\st}1$ and $\tilde u_{k,\lambda}^t:=e^{-itQ_\lambda H_{k,\lambda}^\st Q_\lambda}1$. Since the flow $u_{k,\lambda}$ satisfies the equation $i\partial_tu_{k,\lambda}=(H_{k,0}^\st+\lambda V)u_{k,\lambda}$, an iterative use of Duhamel's formula allows to decompose, for all~$N\ge1$,
\[u_{k,\lambda}^t\,=\,\sum_{p=0}^N(-i\lambda)^pu_{k}^{p;t}+(-i\lambda)^{N+1}E_{k,\lambda}^{N;t},\]
in terms of
\begin{eqnarray*}
u_{k}^{p;t}&:=&\int_{(\R^+)^{p+1}}\delta\Big(t-\sum_{j=1}^{p+1}s_j\Big)\, e^{-is_1H_{k,0}^\st}Ve^{-is_2H_{k,0}^\st}\ldots Ve^{-is_pH_{k,0}^\st}V\,ds_1\ldots ds_{p+1},\\
E_{k,\lambda}^{N;t}&:=&\int_{(\R^+)^{N+2}}\delta\Big(t-\sum_{j=1}^{N+2}s_j\Big)\, e^{-is_1H_{k,\lambda}^\st}Ve^{-is_2H_{k,0}^\st}\ldots Ve^{-is_{N+1}H_{k,0}^\st}V\,ds_1\ldots ds_{N+2}.
\end{eqnarray*}
Noting that $u_k^{p;t}\in\Hc_{p}$ for all $p\ge0$, we deduce that the truncation error can be represented as follows, for any $N\le (L_0\lambda)^{-2}$,
\begin{equation}\label{eq:trunc-err0}
u_{k,\lambda}^t-\tilde u_{k,\lambda}^t=(-i\lambda)^{N+1}(E_{k,\lambda}^{N;t}-\tilde E_{k,\lambda}^{N;t}),
\end{equation}
where $\tilde E_{k,\lambda}^{N;t}$ is defined similarly as $E_{k,\lambda}^{N;t}$ with $V$ and $H_{k,\lambda}^\st$ replaced by $Q_\lambda V Q_\lambda$ and $Q_\lambda H_{k,\lambda}^\st Q_\lambda$, respectively.
A direct estimate yields
\begin{equation*}
\|E_{k,\lambda}^{N;t}\|_{\Ld^2(\Omega)}\,\le\,\int_{(\R^+)^{N+2}}\delta\Big(t-\sum_{j=1}^{N+2}s_j\Big)\,\big\|Ve^{-is_2H_{k,0}^\st}\ldots Ve^{-is_{N+1}H_{k,0}^\st}V\big\|_{\Ld^2(\Omega)}\,ds_1\ldots ds_{N+2},
\end{equation*}
hence, noting that $V$ is bounded by $\Cc_0(0)^\frac12(2p+1)^\frac12$ on $\bigcup_{n\le p}\Hc_n$, cf.~Lemma~\ref{lem:V-decomp},
\begin{eqnarray*}
\|E_{k,\lambda}^{N;t}\|_{\Ld^2(\Omega)}&\le&\Cc_0(0)^{\frac12(N+1)}(2N+1)!!^\frac12\int_{(\R^+)^{N+2}}\delta\Big(t-\sum_{j=1}^{N+2}s_j\Big)\,ds_1\ldots ds_{N+2}\\
&\le&\frac{(2N+1)!!^\frac12}{(N+1)!} (\Cc_0(0)^\frac12t)^{N+1}\,\le\,\frac{(e\Cc_0(0)^\frac12t)^{N+1}}{(N+1)!^\frac12}.
\end{eqnarray*}
Similarly estimating $\tilde E_{k,\lambda}^{N;t}$ and inserting this into~\eqref{eq:trunc-err0}, we find for all $N\le (L_0\lambda)^{-2}$,
\begin{equation*}
\|u_{k,\lambda}^t-\tilde u_{k,\lambda}^t\|_{\Ld^2(\Omega)}\le\frac{2(e\Cc_0(0)^\frac12\lambda t)^{N+1}}{(N+1)!^\frac12}.
\end{equation*}
Setting $t=\lambda^{-2}s$ and choosing $N=\lfloor(L_0\lambda)^{-2}\rfloor$, we easily deduce for $s\le (e^2\Cc_0(0)^\frac12 L_0)^{-1}$,
\begin{equation*}
\|u_{k,\lambda}^{\lambda^{-2}s}-\tilde u_{k,\lambda}^{\lambda^{-2}s}\|_{\Ld^2(\Omega)}\lesssim \lambda^{-1}e^{-\frac12(L_0\lambda)^{-2}},
\end{equation*}
and the claim~\eqref{eq:remains-dec} follows.
\qed

%%%%%%%%%%%%%%%%%%%%%
%%%%%%%%%%%%%%%%%%%%%
%%%%%%%%%%%%%%%%%%%%%

\medskip
\section{Exact resonance conjectures and consequences}\label{sec:resconj}
This section is devoted to the proof of Corollary~\ref{th:res-spectr} and Proposition~\ref{prop:pert-res} as consequences of the resonance conjectures~\ref{C1} and~\ref{C2}.

\subsection{Resonant-mode expansion}\label{sec:expdec-conj-pr}
We start with the proof of Corollary~\ref{th:res-spectr}.
For $\phi\in \Pc(\Omega)$, the Floquet--Bloch fibration~\eqref{eq:flow-decomp0} takes the form
\[\expec{\bar\phi\, u_\lambda^t(x)}
\,=\,\int_{\R^d}\widehat u^\circ(k)\,e^{ik\cdot x-it|k|^2}\,\expecm{\overline{\phi(\tau_x\cdot)}e^{-itH_{k,\lambda}^\st}1}\,\dbar k,\]
hence it suffices analyze $\expecm{\bar\phi\,e^{-itH_{k,\lambda}^\st}1}$ for fixed $k$ and $\phi\in \Pc(\Omega)$.
We split the proof into three steps, separately establishing items~(i) and~(ii).

\medskip
\step1 Meromorphic extension of the spectral measure: Under~\ref{C1}, for all $\phi\in \Pc(\Omega)$, the spectral measure $\mu_{k,\lambda}^{\phi,1}$ is analytic on $[-\frac1M,\frac1M]$ and admits a local meromorphic extension $\nu_{k,\lambda}^{\phi,1}$ on the complex neighborhood $\frac1MB$,
\begin{equation}\label{eq:meas-ext-0}
\nu_{k,\lambda}^{\phi,1}(z)\,=\,\frac{1}{2i\pi}\bigg(\frac{\langle\phi,\Pi_{k,\lambda}1\rangle_{\Pc(\Omega),\Pc'(\Omega)}}{z_{k,\lambda}-z}-\frac{\langle\phi,\Pi_{k,\lambda}^*1\rangle_{\Pc(\Omega),\Pc'(\Omega)}}{\overline{z_{k,\lambda}}-z}\bigg)
+\frac1{2i\pi}\big(\zeta_{k,\lambda}^{\phi,1}(z)-\overline{\zeta_{k,\lambda}^{1,\phi}}(z)\big),
\end{equation}
which can alternatively be expressed as
\begin{eqnarray}\label{eq:meas-ext-1}
\lefteqn{\nu_{k,\lambda}^{\phi,1}(z)\,=\,\frac\lambda z\nu_{k,\lambda}^{\phi,V}(z)}\\
&=&\frac{\lambda}{2i\pi z}\bigg(\frac{\langle\phi,\Pi_{k,\lambda}V\rangle_{\Pc(\Omega),\Pc'(\Omega)}}{z_{k,\lambda}-z}-\frac{\langle\phi,\Pi_{k,\lambda}^*V\rangle_{\Pc(\Omega),\Pc'(\Omega)}}{\overline{z_{k,\lambda}}-z}\bigg)
+\frac\lambda{2i\pi z}\big(\zeta_{k,\lambda}^{\phi,V}(z)-\overline{\zeta_{k,\lambda}^{V,\phi}}(z)\big),\nonumber
\end{eqnarray}
and moreover, in case $\phi=1$,
\begin{eqnarray}\label{eq:meas-ext-11}
\nu_{k,\lambda}^{1,1}(z)&=&\frac{\lambda^2}{z^2}\nu_{k,\lambda}^{V,V}(z)\\
&=&\frac{\lambda^2}{2i\pi z^2}\bigg(\frac{\langle V,\Pi_{k,\lambda}V\rangle_{\Pc(\Omega),\Pc'(\Omega)}}{z_{k,\lambda}-z}-\frac{\langle V,\Pi_{k,\lambda}^*V\rangle_{\Pc(\Omega),\Pc'(\Omega)}}{\overline{z_{k,\lambda}}-z}\bigg)
+\frac{\lambda^2}{\pi z^2}\big(\Im\zeta_{k,\lambda}^{V,V}\big)(z),\nonumber
\end{eqnarray}
where we write for abbreviation
\begin{eqnarray}\label{eq:not-Pi}
\langle\phi',\Pi_{k,\lambda}\phi\rangle_{\Pc(\Omega),\Pc'(\Omega)}&=&\overline{\langle\Psi_{k,\lambda}^+,\phi'\rangle_{\Pc'(\Omega),\Pc(\Omega)}}\langle\Psi_{k,\lambda}^-,\phi\rangle_{\Pc'(\Omega),\Pc(\Omega)},\\
\langle\phi',\Pi^*_{k,\lambda}\phi\rangle_{\Pc(\Omega),\Pc'(\Omega)}&=&\overline{\langle\Psi_{k,\lambda}^-,\phi'\rangle_{\Pc'(\Omega),\Pc(\Omega)}}\langle\Psi_{k,\lambda}^+,\phi\rangle_{\Pc'(\Omega),\Pc(\Omega)}.\nonumber
\end{eqnarray}

\medskip\noindent
Indeed, Stone's formula together with Conjecture~\ref{C1} yields for $|y|<\frac1M$,
\begin{eqnarray*}
\lefteqn{\mu_{k,\lambda}^{\phi',\phi}(y)
\,=\,\lim_{\e\downarrow0}\frac{1}{2i\pi}\Big(\big\langle\phi',(H^\st_{k,\lambda}-y-i\e)^{-1}\phi\big\rangle_{\Ld^2(\Omega)}-\big\langle\phi',(H^\st_{k,\lambda}-y+i\e)^{-1}\phi\big\rangle_{\Ld^2(\Omega)}\Big)}\\
&=&\lim_{\e\downarrow0}\frac{1}{2i\pi}\Big(\big\langle\phi',(H^\st_{k,\lambda}-y-i\e)^{-1}\phi\big\rangle_{\Ld^2(\Omega)}-\overline{\big\langle\phi,(H^\st_{k,\lambda}-y-i\e)^{-1}\phi'\big\rangle_{\Ld^2(\Omega)}}\Big)\\
&=&\frac{1}{2i\pi}\bigg(\frac{\langle\phi',\Pi_{k,\lambda}\phi\rangle_{\Pc(\Omega),\Pc'(\Omega)}}{z_{k,\lambda}-y}-\frac{\langle\phi',\Pi_{k,\lambda}^*\phi\rangle_{\Pc(\Omega),\Pc'(\Omega)}}{\overline{z_{k,\lambda}}-y}\bigg)+\frac{1}{2i\pi}\big(\zeta_{k,\lambda}^{\phi',\phi}(y)-\overline{\zeta_{k,\lambda}^{\phi,\phi'}}(y)\big),
\end{eqnarray*}
and~\eqref{eq:meas-ext-0} follows. Identities~\eqref{eq:meas-ext-1} and~\eqref{eq:meas-ext-11} are obvious consequences as $H_{k,\lambda}^\st1=\lambda V$.

\medskip
\step2 Proof of~(i): Under~\ref{C1}, for all $k\in K$, $0\le\lambda<\lambda_0$, $\phi\in \Pc(\Omega)$, $n\ge0$, and $g\in C^\infty_c(\R)$ supported in $[-\frac1M,\frac1M]$ with $g=1$ in $\frac12[-\frac1M,\frac1M]$, there holds
\[\Big|\expecm{\bar\phi\,e^{-itH_{k,\lambda}^\st}g(H_{k,\lambda}^\st)1}-e^{-itz_{k,\lambda}}\langle\phi,\Pi_{k,\lambda}1\rangle_{\Pc(\Omega),\Pc'(\Omega)}\Big|\,\lesssim_{n,\phi,g,M}\,\lambda(1+t)^{-n},\]
where the factor $\lambda$ in the right-hand side can be replaced by $\lambda^2$ in case $\phi=1$. Combined with~\eqref{eq:truncate-sp} to remove the spectral truncation, this indeed yields~(i).

\medskip\noindent
Starting from
\begin{equation*}
\expecm{\bar\phi\,e^{-itH_{k,\lambda}^\st}g(H_{k,\lambda}^\st)1}
\,=\,\int_{[-\frac1M,\frac1M]} e^{-ity}g(y)\,\mu_{k,\lambda}^{\phi,1}(y)\,dy,
\end{equation*}
and using formula~\eqref{eq:meas-ext-0} for the meromorphic extension $\nu_{k,\lambda}^{\phi,1}$ of the spectral measure $\mu_{k,\lambda}^{\phi,1}$,
we obtain by contour deformation,
\begin{equation*}
\expecm{\bar\phi\,e^{-itH_{k,\lambda}^\st}g(H_{k,\lambda}^\st)1}
\,=\,e^{-itz_{k,\lambda}}\langle\phi,\Pi_{k,\lambda}1\rangle_{\Pc(\Omega),\Pc'(\Omega)}+\int_\gamma e^{-itz}g(\Re z)\,\nu_{k,\lambda}^{\phi,1}(z)\,dz,
\end{equation*}
where the smooth path $\gamma$ is a deformation of the real interval $[-\frac1M,\frac1M]$ in the lower half-plane such that $\gamma$ remains on the real axis on $[-\frac1M,\frac1M]\setminus\frac12[-\frac1M,\frac1M]$ while the part on $\frac12 [-\frac1M,\frac1M]$ is deformed into a path in $\{z:\Im z\le0,|\Re z|\le\frac1{2M}\}\bigcap\frac1MB$ that stays pointwise at a distance $\frac1{4M}$ from the origin.
Using the identity $e^{-itz}\,=\,(1+t)^{-1}(1+i\tfrac{d}{dz})e^{-itz}$ and integrating by parts, we find iteratively for all $n\ge0$,
\begin{multline*}
\Big|\expecm{\bar\phi\,e^{-itH_{k,\lambda}^\st}g(H_{k,\lambda}^\st)1}-e^{-itz_{k,\lambda}}\langle\phi,\Pi_{k,\lambda}1\rangle_{\Pc(\Omega),\Pc'(\Omega)}\Big|\\
\,\le\,(1+t)^{-n}\int_\gamma \big|(1-i\tfrac d{dz})^n\big(g(\Re z)\,\nu_{k,\lambda}^{\phi,1}(z)\big)\big|\,d|z|.
\end{multline*}
As the remainder $\zeta_{k,\lambda}^{\phi,1}$ is holomorphic on $\frac1MB$ and has continuous dependence on $\lambda$ for $0\le\lambda<\lambda_0$, cf.~\ref{C1}, we deduce that all its derivatives are bounded on $\frac1{2M}B$ uniformly with respect to $\lambda$.
Hence, it follows from~\eqref{eq:meas-ext-1} that all derivatives of $\nu_{k,\lambda}^{\phi,1}$ are uniformly bounded by $O(\lambda)$ on the path $\gamma$, and thus
\begin{equation}\label{eq:concl-step3}
\Big|\expecm{\bar\phi\,e^{-itH_{k,\lambda}^\st}g(H_{k,\lambda}^\st)1}-e^{-itz_{k,\lambda}}\langle\phi,\Pi_{k,\lambda}1\rangle_{\Pc(\Omega),\Pc'(\Omega)}\Big|\,\lesssim_{n,\phi,g,M}\,\lambda(1+t)^{-n},
\end{equation}
where in view of~\eqref{eq:meas-ext-11} the factor $\lambda$ can be replaced by $\lambda^2$ in case $\phi=1$.

\medskip
\step3 Proof of~(ii): Under~\ref{C2}, for all $k\in K$, $0\le \lambda<\lambda_0$, and $\phi\in \Pc(\Omega)$,
\[\Big|\expecm{\bar\phi\,e^{-itH_{k,\lambda}^\st}1}-e^{-itz_{k,\lambda}}\langle\phi,\Pi_{k,\lambda}1\rangle_{\Pc(\Omega),\Pc'(\Omega)}\Big|\,\lesssim_{\phi}\,\lambda e^{-\frac t{8M}\lambda^\rho},\]
where the factor $\lambda$ in the right-hand side can be replaced by $\lambda^2$ in case $\phi=1$.

\medskip\noindent
Similarly as in Step~2, applying formula~\eqref{eq:meas-ext-0} and contour deformation, we find
\begin{equation*}
\expecm{\bar\phi\,e^{-itH_{k,\lambda}^\st}1}
\,=\,e^{-itz_{k,\lambda}}\langle\phi,\Pi_{k,\lambda}1\rangle_{\Pc(\Omega),\Pc'(\Omega)}+\int_\gamma e^{-itz}\nu_{k,\lambda}^{\phi,1}(z)\,dz,
\end{equation*}
where the smooth path $\gamma$ is
given by $\gamma:=\gamma_3^-\cup\gamma_2^-\cup\gamma_1^-\cup\gamma_0\cup\gamma_1^+\cup\gamma_2^+\cup\gamma_3^+$ with
\begin{enumerate}[$\bullet$]
\item $\gamma_0:=\gamma\cap\{z:|\Re z|\le\frac1{2M}\}$ connects $\frac1{2M}(-1-i\lambda^\rho)$ and $\frac1{2M}(1-i\lambda^\rho)$, does not exit the ball of radius $\frac1{M}$, and always stays at a distance $\frac1{4M}$ from the origin;
\item $\gamma_1^-:=-\frac{i}{2M}\lambda^\rho+[-L,-\frac1{2M}]$ and $\gamma_1^+:=-\frac{i}{2M}\lambda^\rho+[\frac1{2M},L]$;
\item $\gamma_2^\pm$ connects $\pm L-\frac{i}{2M}\lambda^\rho$ and $\pm(L+1)$ without crossing the real axis;
\item $\gamma_3^-:=(-\infty,-L-1]$ and $\gamma_3^+:=[L+1,\infty)$;
\end{enumerate}
where $L\ge2$ is to be fixed later.
Inserting formula~\eqref{eq:meas-ext-1} for $\nu_{k,\lambda}^{\phi,1}$, using the uniform bound assumed to hold on $\gamma$, cf.~\ref{C2},
and setting $\gamma_3:=\gamma_3^-\cup\gamma_3^+$, the above turns into
\begin{multline*}
\Big|\expecm{\bar\phi\,e^{-itH_{k,\lambda}^\st}g(H_{k,\lambda}^\st)1}
-e^{-itz_{k,\lambda}}\langle\phi,\Pi_{k,\lambda}1\rangle_{\Pc(\Omega),\Pc'(\Omega)}\Big|\\
\,\lesssim_{\phi,g}\,\lambda^{1-M}\int_{\gamma\setminus\gamma_3}\frac{|e^{-itz}|}{|z|}\,d|z|+\lambda\,\Big|\int_{\gamma_3}\frac{e^{-ity}}{y}\mu_{k,\lambda}^{\phi,V}(y)\,dy\Big|.
\end{multline*}
Hence, by definition of $\gamma$,
\begin{multline*}
\Big|\expecm{\bar\phi\,e^{-itH_{k,\lambda}^\st}g(H_{k,\lambda}^\st)1}
-e^{-itz_{k,\lambda}}\langle\phi,\Pi_{k,\lambda}1\rangle_{\Pc(\Omega),\Pc'(\Omega)}\Big|\\
\,\lesssim_{\phi,g}\,\lambda^{1-M}L^{-1}+\lambda^{1-M} e^{-\frac t{2M}\lambda^\rho}\bigg(1+\int_\frac1M^L\frac{dy}y\bigg)
\,\lesssim_{\phi,g}\,\lambda^{1-M}\Big(L^{-1}+e^{-\frac t{2M}\lambda^\rho}\log L\Big).
\end{multline*}
Optimization in $L\ge2$ yields the bound $\lambda^{1-M}e^{-\frac{t}{4M}\lambda^\rho}$. Interpolating this with the result~\eqref{eq:concl-step3} of Step~3 under~\ref{C1}, the conclusion follows.
\qed

\subsection{Computing resonances}\label{sec:RS}
We turn to the proof of Proposition~\ref{prop:pert-res}.
Assuming that for all $\phi,\phi'\in\Pc(\Omega)$ the map
\[[0,\lambda_0)\to \C\times\Pc'(\Omega)\times \Pc'(\Omega)\times\Ld^\infty_\loc(\tfrac1MB):\lambda\mapsto\big(z_{k,\lambda},\Psi_{k,\lambda}^+,\Psi_{k,\lambda}^-,\zeta_{k,\lambda}^{\phi,\phi'}\big)\]
is of class $C^2$, we iteratively compute the first two derivatives,
\begin{eqnarray*}
\big(z_{k,0},\Psi_{k,0}^{+},\Psi_{k,0}^{-}\big)&:=&\big(z_{k,\lambda},\Psi_{k,\lambda}^+,\Psi_{k,\lambda}^-\big)\big|_{\lambda=0},\\
\big(z_{k,0}',\Psi_{k,0}^{+\prime},\Psi_{k,0}^{-\prime}\big)&:=&\tfrac{d}{d\lambda}\big(z_{k,\lambda},\Psi_{k,\lambda}^+,\Psi_{k,\lambda}^-\big)\big|_{\lambda=0},\\
\big(z_{k,0}'',\Psi_{k,0}^{+\prime\prime},\Psi_{k,0}^{-\prime\prime}\big)&:=&\big(\tfrac{d}{d\lambda}\big)^2\big(z_{k,\lambda},\Psi_{k,\lambda}^+,\Psi_{k,\lambda}^-\big)\big|_{\lambda=0}.
\end{eqnarray*}
Note that the resonant and co-resonant states $(\Psi^+_{k,\lambda},\Psi^-_{k,\lambda})$ are only defined up to multiplication by $(a_\lambda,\bar a_\lambda^{-1})$ for any complex-valued function $\lambda\mapsto a_\lambda$, cf.~\eqref{eq:dec-res}.
When differentiating, this gauge invariance implies that $(\Psi^+_{k,0},\Psi^-_{k,0})$ is only defined up to multiplication by $(\alpha_0,\bar \alpha_0^{-1})$ for any $\alpha_0\in\C$, next $(\Psi^{+\prime}_{k,0},\Psi^{-\prime}_{k,0})$ is defined up to addition of $(\alpha_1\Psi^{+}_{k,0},-\bar \alpha_1\Psi^-_{k,0})$ for any $\alpha_1\in\C$, and next $(\Psi^{+\prime\prime}_{k,0},\Psi^{-\prime\prime}_{k,0})$ is defined up to addition of $(\alpha_2\Psi^{+}_{k,0},-\bar \alpha_2\Psi^-_{k,0})$ for any $\alpha_2\in\C$.

\medskip\noindent
We first compute $z_{k,0},\Psi_{k,0}^+,\Psi_{k,0}^-$.
The resonance conjecture~\ref{C1} yields for $\Im z>0$,
\begin{equation}\label{eq:LRC-re}
\big\langle{\phi'},(H_{k,\lambda}^\st-z)^{-1}\phi\big\rangle_{\Ld^2(\Omega)}=\frac{\overline{\langle\Psi_{k,\lambda}^+,\phi'\rangle_{\Pc'(\Omega),\Pc(\Omega)}}\langle\Psi_{k,\lambda}^-,\phi\rangle_{\Pc'(\Omega,\Pc(\Omega)}}{z_{k,\lambda}-z}+\zeta_{k,\lambda}^{\phi',\phi}(z),
\end{equation}
with $\zeta_{k,\lambda}^{\phi',\phi}$ holomorphic on $\{z:\Im z>0\}\bigcup\frac1MB$.
Setting $\lambda=0$ and $\phi'=1$, we find for~$\Im z>0$,
\[-\frac1z\expec{\phi}=\expec{(H_{k,0}^\st-z)^{-1}\phi}=\frac{\overline{\langle\Psi_{k,0}^+,1\rangle_{\Pc'(\Omega),\Pc(\Omega)}}\langle\Psi_{k,0}^-,\phi\rangle_{\Pc'(\Omega,\Pc(\Omega)}}{z_{k,0}-z}+\zeta_{k,0}^{1,\phi}(z),\]
and similarly, exchanging the roles of $\phi$ and $\phi'$,
\[-\frac1{\bar z}\expec{\phi}=\frac{\overline{\langle\Psi_{k,0}^-,1\rangle_{\Pc'(\Omega),\Pc(\Omega)}}\langle\Psi_{k,0}^+,\phi\rangle_{\Pc'(\Omega,\Pc(\Omega)}}{\bar z_{k,0}-\bar z}+\overline{\zeta_{k,0}^{\phi,1}(z)}.\]
For $z\to0$, we deduce
\[z_{k,0}=0,\qquad\Psi_{k,0}^+=\alpha,\qquad\Psi_{k,0}^-=\bar \alpha^{-1},\qquad\text{for some $\alpha\in\C$}.\]
By gauge symmetry, as explained above, we can e.g.\@ choose $\alpha=1$,
\begin{equation}\label{eq:res-RS-0}
z_{k,0}=0,\qquad\Psi_{k,0}^+=\Psi_{k,0}^-=1.
\end{equation}

\medskip\noindent
Next, we compute $z_{k,0}',\Psi^{+\prime}_{k,0},\Psi^{-\prime}_{k,0}$.
Differentiating identity~\eqref{eq:LRC-re} at $\lambda=0$, using~\eqref{eq:res-RS-0}, and choosing $\phi'=1$,
we find for $\Im z>0$,
\begin{multline*}
\expec{V(H_{k,0}^\st-z)^{-1}\phi}\\
=-\frac{1}{z}z_{k,0}'\expec\phi
-\overline{\langle\Psi_{k,0}^{+\prime},1\rangle_{\Pc'(\Omega),\Pc(\Omega)}}\expec{\phi}
-\langle\Psi_{k,0}^{-\prime},\phi\rangle_{\Pc'(\Omega),\Pc(\Omega)}
+z\partial_\lambda\zeta_{k,\lambda}^{1,\phi}(z)\big|_{\lambda=0},
\end{multline*}
and similarly, exchanging the roles of $\phi$ and $\phi'$,
\begin{multline*}
\expec{V(H_{k,0}^\st-\bar z)^{-1}\phi}\\
=-\frac1{\bar z}\bar z_{k,0}'\expecm{\phi}
-\overline{\langle\Psi_{k,0}^{-\prime},1\rangle_{\Pc'(\Omega),\Pc(\Omega)}}\expecm{\phi}
-\langle\Psi_{k,0}^{+\prime},\phi\rangle_{\Pc'(\Omega),\Pc(\Omega)}
+\overline{z\partial_\lambda\zeta_{k,\lambda}^{\phi,1}(z)}\big|_{\lambda=0}.
\end{multline*}
Choosing $z=i\e$ with $\e\downarrow0$, we easily deduce $z_{k,0}'=0$ and
\begin{eqnarray*}
\expec{V(H_{k,0}^\st\mp i0)^{-1}\phi}
=-\overline{\langle\Psi_{k,0}^{\pm\prime},1\rangle_{\Pc'(\Omega),\Pc(\Omega)}}\,\expec{\phi}
-\langle\Psi_{k,0}^{\mp\prime},\phi\rangle_{\Pc'(\Omega),\Pc(\Omega)}.
\end{eqnarray*}
Noting that the left-hand side vanishes for $\phi=1$, we are led to
\begin{eqnarray*}
\langle\Psi_{k,0}^{-\prime},\phi\rangle_{\Pc'(\Omega),\Pc(\Omega)}&=&-\expecm{V(H_{k,0}^\st-i0)^{-1}\phi}+\beta\expec\phi,\\
\langle\Psi_{k,0}^{+\prime},\phi\rangle_{\Pc'(\Omega),\Pc(\Omega)}&=&-\expecm{V(H_{k,0}^\st+i0)^{-1}\phi}-\bar\beta\expec\phi,
\end{eqnarray*}
for some $\beta\in\C$. By gauge symmetry, as explained above, we can e.g.\@ choose $\beta=0$,
\begin{equation}\label{eq:res-RS-1}
z_{k,0}'=0,\qquad\Psi_{k,0}^{\pm\prime}=-(H_{k,0}^\st\mp i0)^{-1}V.
\end{equation}

\medskip\noindent
Finally, we turn to the second derivatives $z_{k,0}'',\Psi_{k,0}^{+\prime\prime},\Psi_{k,0}^{-\prime\prime}$. Differentiating identity~\eqref{eq:LRC-re} twice at $\lambda=0$, using~\eqref{eq:res-RS-0} and~\eqref{eq:res-RS-1}, and choosing $\phi'=1$, we find for $\Im z>0$,
\begin{multline*}
2\,\expecm{V(H_{k,0}^\st-z)^{-1}V(H_{k,0}^\st-z)^{-1}\phi}\\
=\frac{1}{z}z_{k,0}''\,\expec{\phi}
+\overline{\langle\Psi_{k,0}^{+\prime\prime},1\rangle_{\Pc'(\Omega),\Pc(\Omega)}}\expec{\phi}
+\langle\Psi_{k,0}^{-\prime\prime},\phi\rangle_{\Pc'(\Omega),\Pc(\Omega)}
-z\partial_\lambda^2\zeta_{k,\lambda}^{1,\phi}(z)\big|_{\lambda=0},
\end{multline*}
and similarly, exchanging the roles of $\phi$ and $\phi'$,
\begin{multline*}
2\,\expecm{V(H_{k,0}^\st-\bar z)^{-1}V(H_{k,0}^\st-\bar z)^{-1}\phi}\\
=\frac{1}{\bar z}\bar z_{k,0}''\expecm{\phi}
+\overline{\langle\Psi_{k,0}^{-\prime\prime},1\rangle_{\Pc'(\Omega),\Pc(\Omega)}}\expec{\phi}
+\langle\Psi_{k,0}^{+\prime\prime},\phi\rangle_{\Pc'(\Omega),\Pc(\Omega)}
-\overline{z\partial_\lambda^2\zeta_{k,\lambda}^{\phi,1}(z)}\big|_{\lambda=0}.
\end{multline*}
Choosing $z=i\e$ with $\e\downarrow0$, and using again the gauge invariance, we easily deduce
\begin{eqnarray*}
z_{k,0}''&=&-2\,\expecm{V(H_{k,0}^\st-i0)^{-1}V},\\
\Psi_{k,0}^{\pm\prime\prime}&=&2(H_{k,0}^\st\mp i0)^{-1}\Pi V(H_{k,0}^\st\mp i0)^{-1}V,
\end{eqnarray*}
in terms of the projection $\Pi\phi:=\phi-\expec\phi$ onto $\Ld^2(\Omega)\ominus\C$. This completes the proof.
\qed

%%%%%%%%%%%%%%%%%%%%%
%%%%%%%%%%%%%%%%%%%%%
%%%%%%%%%%%%%%%%%%%%%

\medskip
\section{An illustrative toy model}\label{sec:toy}
In this last section, we display a toy model that shares many spectral features of Schrödinger operators, but is explicitly solvable and allows for a rigorous study of its spectrum and resonances, illustrating the relevance of the resonance conjectures~\ref{C1} and~\ref{C2}.
More precisely, we replace the free Schrödinger operator $H_0=-\triangle$ by
\[\widetilde H_0:=\tfrac1i\nabla_1:=\tfrac1i\ee_1\cdot\nabla,\]
and we consider the corresponding perturbed operator
\[\widetilde H_\lambda:=\tfrac1i\nabla_1+\lambda V\]
on \mbox{$\Ld^2(\R^d\times\Omega)$}.
Via the Floquet--Bloch fibration~\eqref{eq:fibration}, this operator is decomposed as
\begin{equation}\label{eq:fibration-toy}
\big(\widetilde H_\lambda, \Ld^2(\R^d\times\Omega)\big)\,=\, \int_{\R^d}^{\oplus} \big(\widetilde H_{\lambda}^\st+k_1,\Ld^2(\Omega)\big)\,\mathfrak e_k\,\dbar k,\qquad\mathfrak e_k(x):=e^{ik\cdot x},
\end{equation}
in terms of the following (centered) fibered operator on the stationary space $\Ld^2(\Omega)$,
\[\widetilde H_{\lambda}^\st\,:=\,\widetilde H_{0}^\st+\lambda V,\qquad \widetilde H_{0}^\st\,:=\,\tfrac1i\nabla^\st_1\,:=\,\tfrac1i\ee_1\cdot\nabla^\st.\]
For this toy model, we establish the following detailed spectral properties, which are in perfect analogy with the expected situation in the Schrödinger case. Note however that the energy transport remains ballistic, cf.~item~(v) below, in stark contrast with the quantum diffusion in the Schrödinger case: this could be related to the fact that the centered fibered operator $\widetilde H_\lambda^\st$ in this toy model is not deformed under the fibration parameter $k$.

\begin{theor}[Toy model]\label{th:toy}
Assume for simplicity that $V=V_0$ is a stationary Gaussian field on with covariance $\Cc_0\in C^\infty_c(\R^d)$.
\begin{enumerate}[(i)]
\item \emph{Spectral decomposition of $\widetilde H_0^\st$:}\\
The eigenvalue at $0$ is simple (with eigenspace $\C$) and
\[\qquad\sigma_\pp(\widetilde H_0^\st)\,=\,\{0\},\qquad\sigma_\sg(\widetilde H_0^\st)\,=\,\varnothing,\qquad\sigma(\widetilde H_0^\st)\,=\,\sigma_\ac(\widetilde H_0^\st)\,=\,\R.\]
\item \emph{Spectral decomposition of $\widetilde H_\lambda^\st$:}\\
For $\lambda>0$, the eigenvalue at $0$ is fully absorbed in the absolutely continuous spectrum,
\[\qquad\sigma_\pp(\widetilde H_\lambda^\st)\,=\,\sigma_\sg(\widetilde H_\lambda^\st)\,=\,\varnothing,\qquad\sigma(\widetilde H_\lambda^\st)\,=\,\sigma_\ac(\widetilde H_\lambda^\st)\,=\,\R.\]
\item \emph{Fibered resonances:}\\
For all $\lambda>0$, the resolvent $z\mapsto (\widetilde H^\st_\lambda-z)^{-1}$ defined on $\Im z>0$ as a family of operators $\Pc(\Omega)\to\Pc'(\Omega)$ can be extended meromorphically to the whole complex plane with a unique simple pole at
\[\qquad z=-i\lambda^2 \alpha_\circ,\qquad \alpha_\circ:=\int_{0}^\infty\Cc_0(se_1)\,ds.\]
In other words, for all $\phi,\phi'\in \Pc(\Omega)$, we can write for $\Im z>0$,
\[\qquad\expecm{\bar\phi'(\widetilde H^\st_\lambda-z)^{-1}\phi}\,=\,\frac{\overline{\langle\widetilde\Psi_\lambda^+,\phi'\rangle_{\Pc'(\Omega),\Pc(\Omega)}}\langle\widetilde\Psi_\lambda^-,\phi\rangle_{\Pc'(\Omega),\Pc(\Omega)}}{-i\lambda^2\alpha_\circ-z}+\widetilde\zeta_\lambda^{\phi',\phi}(z),\]
where the remainder $\widetilde\zeta_\lambda^{\phi',\phi}$ is entire, has a continuous dependence on $\lambda\ge0$, and satisfies the uniform bound
\[\qquad|\widetilde\zeta_\lambda^{\phi',\phi}(z)|\,\lesssim_{\phi,\phi'}\,
\begin{cases}
1,&\text{if neither $\phi$ nor $\phi'$ is constant};\\
\lambda,&\text{if $\phi$ or $\phi'$ is constant};\\
\lambda^2,&\text{if $\phi$ and $\phi'$ are constant};
\end{cases}\]
and where the so-called resonant and co-resonant states $\widetilde\Psi_\lambda^+,\widetilde\Psi_\lambda^-\in \Pc'(\Omega)$ are explicitly defined, cf.~Remark~\ref{rem:def-Gamma}(a) below.
\item \emph{Continuous resonant spectrum:}\\
For $\lambda>0$ small enough, the resolvent $z\mapsto(\widetilde H_\lambda-z)^{-1}$
defined on $\Im z>0$ as a family of operators $\Ld^2(\R^d)\otimes\Pc(\Omega)\to\Ld^2(\R^d)\otimes\Pc'(\Omega)$ can be extended holomorphically to $\Im z>-\lambda^2\alpha_\circ$,
and we denote the extension by $R_{\lambda}(z)$.
For $\phi\in \Pc(\Omega)$,
this extension has the following discontinuity, as $\Im z\downarrow-\lambda^2\alpha_\circ$,
\begin{align*}
\qquad\sup_{\|g\|_{\Ld^2(\R^d)}=1}\big\langle \phi g,R_{\lambda}(z)\phi g\big\rangle_{\Ld^2(\R^d\times\Omega)}\,\sim\,i\frac{\overline{\langle\widetilde\Psi_\lambda^+,\phi\rangle_{\Pc'(\Omega),\Pc(\Omega)}}\langle\widetilde\Psi_\lambda^-,\phi\rangle_{\Pc'(\Omega),\Pc(\Omega)}}{\lambda^2\alpha_\circ+\Im z}.
\end{align*}
Next, viewed as a family of operators $\Ld^2_\comp(\R^d)\otimes\Pc(\Omega)\to\Ld^2_\loc(\R^d)\otimes\Pc'(\Omega)$, the extended resolvent $R_{\lambda}$ can be further extended to all $\C$ as an entire function.
\item \emph{Ballistic transport:}\\
For all $u^\circ\in\Ld^2(\R^d)$ with $\|xu^\circ\|_{\Ld^2(\R^d)}<\infty$, there holds
\[\qquad\lim_{t\uparrow\infty}\frac1t\,\expec{\|xu_\lambda^t\|_{\Ld^2(\R^d)}^2}^\frac12\,=\,\|u^\circ\|_{\Ld^2(\R^d)}.\qedhere\]
\end{enumerate}
\end{theor}

\begin{rems}$ $\label{rem:def-Gamma}
\begin{enumerate}[(a)]
\item\emph{Explicit formula for resonant state:}\\
Up to a gauge transformation, the resonant and co-resonant states $\widetilde\Psi_\lambda^+,\widetilde\Psi_\lambda^-\in \Pc'(\Omega)$ in item~(iii) take the form
\[\widetilde\Psi_\lambda^\pm\,=\,e^{\frac12\lambda^2\int_0^\infty s\,\Cc_0(s\ee_1)\,ds}\,\widetilde\Psi_{\lambda}^{\circ,\pm},\]
where $\widetilde\Psi_{\lambda}^{\circ,\pm} \in \Pc'(\Omega)$ is formally defined as
\begin{equation}\label{eq:def0-Gamma}
\widetilde\Psi_{\lambda}^{\circ,\pm}~~\text{``$=$''}~~\frac{e^{\pm\frac\lambda i\int_0^\infty V(\mp s\ee_1,\cdot)\,ds}}{\expecm{e^{\pm\frac\lambda i\int_0^\infty V(\mp s\ee_1,\cdot)\,ds}}}.
\end{equation}
More precisely, the action of $\widetilde\Psi_{\lambda}^{\circ,\pm}$ on $\Pc(\Omega)$ is defined inductively on monomials of increasing degree: we set $\langle\widetilde\Psi_{\lambda}^{\circ,\pm},1\rangle_{\Pc'(\Omega),\Pc(\Omega)}=1$, and for all $n\ge1$ and $x_1,\ldots,x_n\in\R^d$,
\begingroup\allowdisplaybreaks
\begin{multline}\label{eq:def-Gamma}
\qquad\bigg\langle\widetilde\Psi_{\lambda}^{\circ,\pm},\prod_{j=1}^nV(x_j,\cdot)\bigg\rangle_{\Pc'(\Omega),\Pc(\Omega)}\,=\,\sum_{l=2}^n\Cc_0(x_1-x_l)\,\bigg\langle\widetilde\Psi_{\lambda}^{\circ,\pm},\prod_{2\le j\le n\atop j\ne l}V(x_j,\cdot)\bigg\rangle_{\Pc'(\Omega),\Pc(\Omega)}\\
\mp\frac\lambda i\Big(\int_{0}^\infty\Cc_0(x_1\pm se_1)\,ds\Big)\,\bigg\langle\widetilde\Psi_{\lambda}^{\circ,\pm},\prod_{j=2}^n V(x_j,\cdot)\bigg\rangle_{\Pc'(\Omega),\Pc(\Omega)},
\end{multline}
\endgroup
while the formal representation~\eqref{eq:def0-Gamma} is understood in view of Wick's theorem.
In particular, there holds $\langle\widetilde\Psi_\lambda^\pm,\phi\rangle_{\Pc'(\Omega),\Pc(\Omega)}=\expec{\phi}\pm i\lambda\int_0^\infty\expec{V(\mp se_1,\cdot)\,\phi}ds+O_\phi(\lambda^2)$.
\item\emph{Mourre's approach:}\\
In view of Proposition~\ref{prop:commut}(ii)--(iii), with the notation of Section~\ref{sec:oper-L2},
the commutator $[\widetilde H_\lambda,\tfrac1i\Op(x_1)]$ is $\Lc$-bounded and satisfies the lower bound
\[[\widetilde H_\lambda^\st,\tfrac1i\Op(x_1)]\,\ge\,\Lc-C\lambda\Lc^{\frac12}.\]
In other words, $\tilde H_\lambda^\st$ satisfies a Mourre relation with conjugate $\Op(x_1)$ ``up to $\Lc$''.
Much spectral information can be inferred from such a property, and in particular another proof of Theorem~\ref{th:toy} above could essentially be deduced.
This Mourre approach would be particularly useful in the discrete setting, that is, for the discrete operator $\widetilde H_\lambda^\st=\frac1i \nabla_1+\lambda V$ on $\ell^2(\Z^d)$ with an i.i.d.\@ Gaussian field $V$ on $\Z^d$: the proof below can indeed not be adapted to that case as the flow is not explicit.
\qedhere
\end{enumerate}
\end{rems}

\begin{proof}[Proof of Theorem~\ref{th:toy}]$ $
Item~(i) is obtained similarly as in the Schrödinger case, cf.~Section~\ref{sec:spectrum-Hk0}, and the proof is omitted. Item~(ii) is a direct consequence of~(iii). It remains to establish items~(iii), (iv), and~(v). Without loss of generality, we assume that $\Cc_0$ is supported in $B$.

\medskip
\step1 Proof of~(iii).\\
It suffices to show for all $\phi,\phi'\in \Pc(\Omega)$ and $t\ge0$,
\begin{align}\label{eq:explicit-evol-toy}
\Big|\expecm{\bar\phi'\,e^{-it\widetilde H_{\lambda}^\st}\phi}-\overline{\langle\widetilde\Psi_\lambda^+,\phi'\rangle_{\Pc'(\Omega),\Pc(\Omega)}}\langle\widetilde\Psi_\lambda^-,\phi\rangle_{\Pc'(\Omega),\Pc(\Omega)}\,e^{-t\lambda^2\alpha_\circ}\Big|\,\lesssim_{\phi,\phi'}\,\mathds1_{|t|\le C_{\phi,\phi'}},
\end{align}
where we gain a factor $\lambda$ in the right-hand side if $\phi$ or $\phi'$ is constant, and a factor $\lambda^2$ if both are constants.
Indeed, for $\Im z>0$, we can write
\[\expecm{\bar\phi'(\widetilde H_\lambda^\st-z)^{-1}\phi}\,=\,i\int_0^\infty e^{itz}\,\expecm{\bar\phi'\,e^{-it\widetilde H_\lambda^\st}\phi}\,dt,\]
so that the conclusion~(iii) would follow from~\eqref{eq:explicit-evol-toy} after integration.
By linearity, it suffices to establish~\eqref{eq:explicit-evol-toy} for monomials
\[\phi=\prod_{j=1}^nV(x_j,\cdot),\qquad\phi'=\prod_{j=1}^mV(y_j,\cdot).\]
Noting that the fibered evolution $\phi_\lambda^t:=e^{-it\widetilde H_\lambda^\st}\phi$ can be explicitly computed as
\[\phi_\lambda^t(\omega)\,=\,\phi(-t\ee_1,\omega)\,\psi_\lambda^t(\omega),\qquad \psi_\lambda^t(\omega)\,:=\,\exp\Big(\frac \lambda i \int_0^tV(-s\ee_1,\omega)\,ds\Big),\]
we find
\[\expecm{\phi'\,e^{-it\widetilde H_{\lambda}^\st}\phi}=\E\bigg[{\Big(\prod_{j=1}^mV(y_j,\cdot)\Big)\Big(\prod_{j=1}^nV(x_j-t\ee_1,\cdot)\Big)\,\psi_\lambda^t}\bigg].\]
By Wick's formula, for $m\ge1$, we compute
\begin{multline*}
\expecm{\phi'\,e^{-it\widetilde H_{\lambda}^\st}\phi}=\sum_{l=2}^m\Cc_0(y_1-y_l)\,\E\bigg[{\Big(\prod_{2\le j\le m\atop j\ne l}V(y_j,\cdot)\Big)\Big(\prod_{j=1}^nV(x_j-t\ee_1,\cdot)\Big)\,\psi_\lambda^t}\bigg]\\
+\sum_{l=2}^n\Cc_0(y_1-x_l+t\ee_1)\,\E\bigg[{\Big(\prod_{j=2}^mV(y_j,\cdot)\Big)\Big(\prod_{1\le j\le n\atop j\ne l}V(x_j-t\ee_1,\cdot)\Big)\,\psi_\lambda^t}\bigg]\\
+\frac\lambda i\Big(\int_{0}^{t}\Cc_0(y_1+s\ee_1)\,ds\Big)\,\E\bigg[{\Big(\prod_{j=2}^mV(y_j,\cdot)\Big)\Big(\prod_{j=1}^nV(x_j-te_1,\cdot)\Big)\,\psi_\lambda^t}\bigg].
\end{multline*}
Since by assumption
\[|\Cc_0(y_1-x_l+te_1)|\lesssim\mathds1_{|t|\le 1+|y_1|+|x_l|},\]
and similarly $\big|\int_{t}^\infty\Cc_0(y_1+s\ee_1)\,ds\big|\lesssim\mathds1_{|t|\le 1+|y_1|}$, we deduce
\begin{multline*}
\expecm{\phi'\,e^{-it\widetilde H_{\lambda}^\st}\phi}=\sum_{l=2}^m\Cc_0(y_1-y_l)\,\E\bigg[{\Big(\prod_{2\le j\le m\atop j\ne l}V(y_j,\cdot)\Big)\Big(\prod_{j=1}^nV(x_j-t\ee_1,\cdot)\Big)\,\psi_\lambda^t}\bigg]\\
+\frac\lambda i\Big(\int_{0}^{\infty}\Cc_0(y_1+s\ee_1)\,ds\Big)\,\E\bigg[{\Big(\prod_{j=2}^mV(y_j,\cdot)\Big)\Big(\prod_{j=1}^{n}V(x_j-t\ee_1,\cdot)\Big)\,\psi_\lambda^t}\bigg]
\,+\,O_{\phi,\phi'}(\mathds1_{|t|\le C_{\phi,\phi'}}).
\end{multline*}
We recognize here the inductive definition~\eqref{eq:def-Gamma} of the resonant state $\widetilde\Psi_\lambda^+$, so that the above becomes
\[\expecm{\phi'\,e^{-it\widetilde H_{\lambda}^\st}\phi}=\overline{\langle\widetilde\Psi_{\lambda,\circ}^+,\phi'\rangle_{\Pc'(\Omega),\Pc(\Omega)}}\,\E\bigg[{\Big(\prod_{j=1}^nV(x_j-t\ee_1,\cdot)\Big)\,\psi_\lambda^t}\bigg]+\,O_{\phi,\phi'}(\mathds1_{|t|\le C_{\phi,\phi'}}),\]
and a similar computation yields
\[\expecm{\phi'\,e^{-it\widetilde H_{\lambda}^\st}\phi}=\overline{\langle\widetilde\Psi_{\lambda,\circ}^+,\phi'\rangle_{\Pc'(\Omega),\Pc(\Omega)}}\langle\widetilde\Psi_{\lambda,\circ}^-,\phi\rangle_{\Pc'(\Omega),\Pc(\Omega)}\,\expec{\psi_\lambda^t}+\,O_{\phi,\phi'}(\mathds1_{|t|\le C_{\phi,\phi'}}).\]
Finally, since $\int_0^tV(-s\ee_1,\cdot)\,ds$ is Gaussian, we compute
\begin{eqnarray*}
\expecm{\psi_\lambda^t}\,=\,e^{-\frac12\lambda^2\var{\int_0^tV(-s\ee_1,\cdot)ds}}&=&e^{-\lambda^2\int_0^t(t-s)\Cc_0(s\ee_1)ds}\\
&=&e^{-\lambda^2t\int_0^\infty\Cc_0(s\ee_1)ds}e^{\lambda^2\int_0^\infty s\,\Cc_0(s\ee_1)ds}+O_{M}(\lambda^2\mathds1_{|t|\le1}),
\end{eqnarray*}
and the conclusion~\eqref{eq:explicit-evol-toy} follows.

\medskip
\step2 Proof of~(iv).\\
For $g,g'\in\Ld^2(\R^d)$ and $\phi,\phi'\in \Pc(\Omega)$, the Floquet--Bloch fibration~\eqref{eq:fibration-toy} yields
\[\big\langle \phi'g',(\widetilde H_\lambda-z)^{-1}(\phi g)\big\rangle_{\Ld^2(\R^d\times\Omega)}\,=\,\int_{\R^d} \overline{\widehat g'(k)}\widehat g(k)\, \big\langle\phi',(\widetilde H_{\lambda}^\st+k_1-z)^{-1}\phi\big\rangle_{\Ld^2(\Omega)}\,\dbar k,\]
and thus, inserting the result of item~(iii), for $\Im z>0$,
\begin{multline*}
\big\langle \phi'g',(\widetilde H_\lambda-z)^{-1}(\phi g)\big\rangle_{\Ld^2(\R^d\times\Omega)}
\,=\,\int_{\R^d} \overline{\widehat g'(k)}\widehat g(k)\,\widetilde\zeta_\lambda^{\phi',\phi}(z-k_1)\,\dbar k\\
+\overline{\langle\widetilde\Psi_\lambda^+,\phi'\rangle_{\Pc'(\Omega),\Pc(\Omega)}}\langle\widetilde\Psi_\lambda^-,\phi\rangle_{\Pc'(\Omega),\Pc(\Omega)}\int_{\R^d} \frac{\overline{\widehat g'(k)}\widehat g(k)}{k_1-i\lambda^2\alpha_\circ-z}\dbar k,
\end{multline*}
where the first right-hand side term is entire and the second is holomorphic on $\Im z>-\lambda^2\alpha_\circ$.
Next, for $g,g'\in \Ld^2_\comp(\R^d)$, the Fourier transforms $\widehat g,\widehat g'$ are entire functions, which allows to extend the second right-hand side term holomorphically to the whole complex plane. Indeed, for $y\in\R$, the Sokhotski--Plemelj formula yields
\[\lim_{\e\downarrow0}\int_{\R^d}\frac{\overline{\widehat g'(k)}\widehat g(k)}{k_1-y\mp i\e}\dbar k\,=\,\pv\int_{\R^d}\frac{\overline{\widehat g'(k)}\widehat g(k)}{k_1-y}\dbar k\pm i\pi\int_{\R^{d-1}}\overline{\widehat g'(y,k')}\widehat g(y,k')\,\dbar k',\]
so that the function $T_\lambda(z)$ defined for $\Im z>-\lambda^2\alpha_\circ$ by
\[T_\lambda(z)=\int_{\R^d} \frac{\overline{\widehat g'(k)}\widehat g(k)}{k_1-i\lambda^2\alpha_\circ-z}\dbar k,\]
and defined for $\Im z<-\lambda^2\alpha_\circ$ by
\[T_\lambda(z)=\int_{\R^d} \frac{\overline{\widehat g'(k)}\widehat g(k)}{k_1-i\lambda^2\alpha_\circ-z}\dbar k+2i\pi\int_{\R^{d-1}}\overline{\widehat g'(z+i\lambda^2\alpha_\circ,k')}\widehat g(z+i\lambda^2\alpha_\circ,k')\,\dbar k',\]
is entire. This proves~(iv).

\medskip
\step3 Proof of~(v).\\
The flow can be explicitly integrated,
\[u_\lambda^t(x)=u^\circ(x-t\ee_1)\,e^{-i\lambda\int_0^tV(x-s\ee_1)ds},\]
and is seen to satisfy ballistic transport,
\[\frac1t\,\expec{\|xu_\lambda^t\|_{\Ld^2(\R^d)}^2}^\frac12\,=\,\frac1t\,\expec{\int_{\R^d}|x|^2|u^\circ(x-t\ee_1)|^2dx}^\frac12\,\xrightarrow{t\uparrow\infty}\,\|u^\circ\|_{\Ld^2(\R^d)}.\qedhere\]
\end{proof}

%%%%%%%%%%%%%%%%%%%%%
%%%%%%%%%%%%%%%%%%%%%
%%%%%%%%%%%%%%%%%%%%%

\medskip
\appendix
\section{Self-adjointness with unbounded potentials}\label{sec:self-adj}
For a {bounded} random potential $V$,
the Schrödinger operator $H_\lambda=-\triangle+\lambda V$ on \mbox{$\Ld^2(\R^d\times\Omega)$} is clearly self-adjoint on $\Ld^2(\Omega;H^2(\R^d))$
and the fibered operators $\{H_{k,\lambda}^\st\}_k$ are self-adjoint on $H^2(\Omega)$ just as for~$\lambda=0$.
The present appendix is concerned with the corresponding self-adjointness statement in the unbounded setting.
More precisely, we show that essential self-adjointness still holds provided that $V\in\Ld^2(\Omega)$ has negative part $V_-\in\Ld^p(\Omega)$ for some $p>\frac d2$. This condition is essentially optimal and applies in particular to the case when $V=V_0$ is a stationary Gaussian field. (Note however that this Gaussian case is much simpler in view of Malliavin calculus and can be obtained as a consequence of Nelson's theorem in form of Proposition~\ref{prop:nelson} with $L=\Lc$.)

\medskip
A random potential $V\in\Ld^2(\Omega)$ defines (densely) a multiplicative operator on \mbox{$\Ld^2(\R^d\times\Omega)$}.
If $V$ is not uniformly bounded, this operator is unbounded, so that the self-adjointness of $-\triangle+\lambda V$ is a subtle question and may fail, cf.~\cite{Kato-72}.
Whenever realizations $V_\omega=V(\cdot,\omega)$ are quadratically controlled from below, in the sense of $V(x,\omega)\ge -M(\omega)\,(1+|x|^2)$ for some random variable $M\in\Ld^2(\Omega)$, the Faris-Lavine argument~\cite{Faris-Lavine-74} ensures that the Schr\"{o}dinger operator \mbox{$-\triangle+\lambda V$} on $\Ld^2(\R^d\times\Omega)$ is essentially self-adjoint on $C^\infty_c(\R^d;\Ld^\infty(\Omega))$.
By a Borel-Cantelli argument, the quadratic lower bound holds whenever
the negative part~$V_-$ belongs to $\Ld^p(\Omega)$ for some~$p>\frac d2$ and satisfies more precisely $\sup_{|x|\le1}V_-(x,\cdot)\in\Ld^p(\Omega)$.
In this setting, since $-\triangle+\lambda V$ is essentially self-adjoint on $C^\infty_c(\R^d;\Ld^\infty(\Omega))$, we may repeat the direct integral decomposition~\eqref{eq:fibration-re+} and the fibered operators $H^\st_{k,\lambda}$ on $\Ld^2(\Omega)$ are necessarily essentially self-adjoint on $H^2\cap\Ld^\infty(\Omega)$ for almost all $k\in\R^d$, e.g.~\cite[p.280]{Reed-Simon-78}.
In order to conclude for all $k$, some continuity would be needed, which typically requires smoothness of $V$. In order to avoid such spurious assumptions, we provide another argument below. While the usual Faris-Lavine argument is of no use in the stationary space $\Ld^2(\Omega)$, we draw inspiration from an earlier work by Kato~\cite{Kato-72}.

\begin{theor}[Essential self-adjointness]\label{th:app-sa}
Assume that the potential $V\in\Ld^2(\Omega)$ satisfies $\sup_{|x|\le1} V_-(\tau_x\cdot)\in\Ld^{p}(\Omega)$ for some $p>\frac d2$. Then for all $\lambda\ge0$ and $k\in\R^d$ the operator $H_{k,\lambda}^\st$ is essentially self-adjoint on $H^2\cap\Ld^\infty(\Omega)$.
\end{theor}

\begin{proof}
Let $k\in\R^d$ and $\lambda\ge0$ be fixed.
For $V\in\Ld^2(\Omega)$, we note that the operator $H_{k,\lambda}^\st$ is well-defined on the whole of $\Ld^2(\Omega)$ with values in the space $\Dc':=\Ld^1(\Omega)+H^{-2}(\Omega)$ (cf.~Section~\ref{sec:stat-calc} for notation), and it is obviously continuous $\Ld^2(\Omega)\to\Dc'$.
Let $\dot H_{k,\lambda}^\st$ denote the restriction of $H_{k,\lambda}^\st$ with domain $\Dc:=H^2\cap\Ld^\infty(\Omega)$. Since $H_{k,\lambda}^\st\Dc\subset\Ld^2(\Omega)$, the operator $\dot H_{k,\lambda}^\st$ can be viewed as a densely defined operator on $\Ld^2(\Omega)$ and it is clearly symmetric. Its adjoint $(\dot H_{k,\lambda}^\st)^*$ is easily seen as the restriction of $H_{k,\lambda}^\st$ to $\Ld^2(\Omega)$, that is, defined whenever $u\in \Ld^2(\Omega)$ and $H_{k,\lambda}^\st\in \Ld^2(\Omega)$.
In this context, the following conditions are equivalent:
\begin{enumerate}
\item[(E1)] $\dot H_{k,\lambda}^\st$ is essentially self-adjoint.
\item[(E2)] There exist two complex numbers $z_\pm$ with $\Im z_\pm\gtrless0$ such that $H_{k,\lambda}^\st+z_\pm$ is an injection of $\Ld^2(\Omega)$ into $\Dc'$.
\item[(E3)] The restriction of $H_{k,\lambda}^\st$ to $\Ld^2(\Omega)$ is the strong closure of $\dot H_{k,\lambda}^\st$, that is, for all $\phi\in\Ld^2(\Omega)$ with $ H_{k,\lambda}^\st\phi\in\Ld^2(\Omega)$ there exists a sequence $(\phi_n)_n\subset\Dc$ such that $\phi_n\to\phi$ and $H_{k,\lambda}^\st\phi_n\to H_{k,\lambda}^\st\phi$ in $\Ld^2(\Omega)$.
\end{enumerate}
We proceed by truncation: we define the truncated operator $H_{k,\lambda;R}^\st:=H_{k,0}^\st+\lambda V\mathds1_{V\ge-R}$ for $R\ge1$, and we split the proof into four steps.

\medskip
\step1
Proof that $H_{k,\lambda;R}^\st$ is essentially self-adjoint.\\
Assume that $\phi\in\Ld^2(\Omega)$ satisfies $(H_{k,\lambda;R}^\st+z)\phi=0$ in $\Dc'$.
Applying the differential inequality of~\cite[Lemma~A]{Kato-72} in the form
\[-\triangle^\st|\phi|\,\le\,-\Re\big(\tfrac{\bar \phi}{|\phi|}(\nabla^\st+ik)\cdot(\nabla^\st+ik)\phi\big)\,=\,\Re\big(\tfrac{\bar \phi}{|\phi|}(H_{k,0}^\st+|k|^2)\phi\big),\]
we deduce
\begin{eqnarray*}
-\triangle^\st|\phi|\,\le\,\Re\big(|k|^2-z-\lambda V\mathds1_{V\ge-R}\big)|\phi|
\,\le\,(|k|^2+\lambda R-\Re z)|\phi|.
\end{eqnarray*}
For $\Re z$ large enough, we have $c:=\Re z-|k|^2-\lambda R\ge1$ and 
\begin{eqnarray*}
(c-\triangle^\st)|\phi|&\le&0.
\end{eqnarray*}
Since the operator $-\triangle^\st$ is nonnegative, this implies $\phi=0$.
Hence, the operator $H_{k,\lambda;R}^\st+z$ is an injection of $\Ld^2(\Omega)$ into $\Dc'$, and the claim follows from the equivalence between (E1) and (E2).

\medskip
\step2
For all $\alpha\ge0$ and $R\ge1$, there exists a cut-off function $\chi_R^\alpha\in\Ld^2(\Omega;[0,1])$ with the following properties:
\begin{enumerate}[(i)]
\item $\chi_R^\alpha=0$ on $E_R:=\{\omega:\exists y\in 4B~\text{such that}~V(y,\omega)\le-R\}$;
\item $\chi_R^\alpha=1$ outside $E_R^\alpha:=\{\omega:\exists\, y\in(R^\alpha+7)B~\text{with}~V(y,\omega)\le-R\}$;
\item $|\nabla^\st\chi_R^\alpha|\lesssim R^{-\alpha}$, $|(\nabla^\st)^2\chi_R^\alpha|\lesssim R^{-\alpha}$;
\item there exists $\e>0$ such that $\chi_R^\alpha\to1$ almost surely as $R\uparrow\infty$ whenever $\alpha\le\frac12+\e$.
\end{enumerate}
Define $E_{R,\omega}:=\{x\in\R^d:\exists y\in B_4(x)~\text{such that}~V(y,\omega)\le-R\}$. Choose $\rho\in C^\infty_c(\R^d)$ with $\int_{\R^d}\rho=1$, $\rho\ge0$, $\rho=0$ outside $2B$, $|\nabla\rho|\lesssim1$, and $|\nabla^2\rho|\lesssim1$, and choose an even function $\chi\in C^\infty(\R)$ with $\chi(0)=0$, $\chi(s)=1$ for $|s|\ge1$, $|\chi'|\lesssim1$, and $|\chi''|\lesssim1$. We then construct the stationary function
\[\chi_R^\alpha(x,\omega):=\chi\Big(\frac1{R^\alpha}\int_{\R^d}\rho(x-y)\,\operatorname{dist}(y,E_{R,\omega}+2B)\,dy\Big).\]
Properties~(i)--(iii) easily follow for this choice. We turn to~(iv). The definition of $\chi_R^\alpha$, a union bound, and Markov's inequality yield
\begin{eqnarray*}
\pr{\chi_R^\alpha<1}&\le&\p\Big\{\omega:\inf_{y\in 2B}\operatorname{dist}(y,E_{R,\omega}+2B)\le R^\alpha\big\}\\
&\le&\p\big\{\omega:\operatorname{dist}(0,E_{R,\omega})\le R^\alpha+4\big\}\\
&\le&\p\big\{\omega:\exists y\in B_{R^\alpha+8},\,V(y,\omega)\le -R\big\}\\
&\lesssim& R^{\alpha d}\,\prm{\textstyle\inf_BV\le-R}\\
&\le& R^{\alpha d-p}\,\expecm{\textstyle(\inf_BV)_-^p}.
\end{eqnarray*}
Since by assumption $\expecm{(\textstyle\inf_BV)_-^p}<\infty$ for $p=\frac d2+\e$ for some $\e>0$, we deduce $\chi_R^\alpha\to1$ in measure whenever $\alpha<\frac12+\frac \e d$.
In order to establish almost sure convergence, we similarly compute, noting that $E_{R,\omega}$ is decreasing in $R$,
\begin{eqnarray*}
\pr{\chi_R^\alpha\not\to1}&\le&\lim_{R_0\uparrow\infty}\p\Big\{\omega:\exists R\ge R_0~\text{such that}~\operatorname{dist}(0,E_{R,\omega})\le R^\alpha+4\Big\}\\
&\le&\lim_{R_0\uparrow\infty}\sum_{n=0}^\infty\p\Big\{\omega:\operatorname{dist}(0,E_{2^nR_0,\omega})\le (2^{n+1}R_0)^\alpha+4\Big\}\\
&\lesssim&\lim_{R_0\uparrow\infty}\sum_{n=0}^\infty(2^nR_0)^{\alpha d-p}\,\expecm{(\textstyle\inf_BV)_-^p},
\end{eqnarray*}
and almost sure convergence $\chi_R^\alpha\to1$ follows under the same condition on $\alpha$.

\medskip
\step3 G\aa rding inequalities:
\begin{enumerate}[\quad(G1)]
\item For all $\phi\in\Ld^2(\Omega)$ with $H_{k,\lambda;R}^\st\phi\in H^{-1}(\Omega)$, there holds $\phi\in H^1(\Omega)$ and
\[\|\nabla^\st \phi\|_{\Ld^2(\Omega)}^2\,\le\,4\|H_{k,\lambda;R}^\st\phi\|_{H^{-1}(\Omega)}^2+(1+8|k|^2+4\lambda R)\|\phi\|_{\Ld^2(\Omega)}^2.\]
\item For all $\phi\in\Ld^2(\Omega)$ with $H_{k,\lambda}^\st\phi\in \Ld^2(\Omega)$, there holds
\[\|\mathds1_{\Omega\setminus E_R}\nabla^\st\phi\|_{\Ld^2(\Omega)}^2\,\lesssim\,\|H_{k,\lambda}^\st \phi\|_{\Ld^2(\Omega)}^2+(1+|k|^2+\lambda R)\|\phi\|_{\Ld^2(\Omega)}^2,\]
where as in Step~2 we have set $E_R:=\{\omega:\exists\, y\in4B~\text{with}~V(y,\omega)\le-R\}$.
\end{enumerate}
By density, it suffices to argue for $\phi\in\Dc$.
We start with the G\aa rding inequality~(G1) for the truncated operator~$H_{k,\lambda;R}^\st$.
For $\phi\in\Dc$, we compute
\begin{eqnarray*}
\Re\langle\phi,H_{k,\lambda;R}^\st\phi\rangle_{\Ld^2(\Omega)}&\ge&\Re\langle\phi,H_{k,0}^\st\phi\rangle_{\Ld^2(\Omega)}-\lambda R\|\phi\|_{\Ld^2(\Omega)}^2\\
&\ge&\frac12\|\nabla^\st\phi\|_{\Ld^2(\Omega)}^2-(2|k|^2+\lambda R)\|\phi\|_{\Ld^2(\Omega)}^2,
\end{eqnarray*}
hence,
\begin{equation*}
\frac12\|\nabla^\st\phi\|_{\Ld^2(\Omega)}^2
\,\le\,\|H_{k,\lambda;R}^\st\phi\|_{H^{-1}(\Omega)}^2+\frac14\|\phi\|_{H^1(\Omega)}^2+(2|k|^2+\lambda R)\|\phi\|_{\Ld^2(\Omega)}^2,
\end{equation*}
and the claim~(G1) follows.

\medskip\noindent
We turn to~(G2).
Similarly as in Step~2, we may construct a cut-off function $\chi_R'$ with the following properties
\begin{enumerate}[(i')]
\item $\chi_R'=0$ on $E_R':=\{\omega:V(\omega)\le-R\}$;
\item $\chi_R'=1$ outside $E_R$;
\item $|\nabla^\st\chi_R'|\lesssim1$, $|(\nabla^\st)^2\chi_R'|\lesssim1$.
\end{enumerate}
Noting that $H_{k,\lambda}^\st(\phi\chi_R')=H_{k,\lambda;R}^\st(\phi\chi_R')$, the result~(G1) yields
\[\|\nabla^\st(\phi\chi_R')\|_{\Ld^2(\Omega)}^2\,\le\,4\|H_{k,\lambda}^\st(\phi\chi_R')\|_{H^{-1}(\Omega)}^2+(1+8|k|^2+4\lambda R)\|\phi\|_{\Ld^2(\Omega)}^2.\]
Computing
\[H_{k,\lambda}^\st(\phi\chi_R')\,=\,\chi_R'H_{k,\lambda}^\st\phi+\phi H_{k,0}^\st \chi_R'-2\nabla^\st\chi_R'\cdot\nabla^\st\phi,\]
and noting that $|H_{k,0}^\st\chi_R'|\lesssim1+|k|$, we deduce
\[\|\nabla^\st(\phi\chi_R')\|_{\Ld^2(\Omega)}^2\,\lesssim\,\|H_{k,\lambda}^\st \phi\|_{\Ld^2(\Omega)}^2+\|\nabla^\st\chi_R'\cdot\nabla^\st \phi\|_{H^{-1}(\Omega)}^2+(1+|k|^2+\lambda R)\|\phi\|_{\Ld^2(\Omega)}^2.\]
Since for $\phi'\in H^1(\Omega)$ integration by parts yields
\[\big|\big\langle \phi',\nabla^\st\chi_R'\cdot\nabla^\st \phi\big\rangle_{\Ld^2(\Omega)}\big|\,=\,\big|\big\langle \nabla^\st \phi'\cdot \nabla^\st\chi_R'+\phi'\triangle^\st\chi_R'\,,\,\phi\big\rangle_{\Ld^2(\Omega)}\big|\,\lesssim\,\|\phi'\|_{H^1(\Omega)}\|\phi\|_{\Ld^2(\Omega)},\]
we find
\[\|\nabla^\st\chi_R'\cdot\nabla^\st \phi\|_{H^{-1}(\Omega)}\,\lesssim\,\|\phi\|_{\Ld^2(\Omega)},\]
hence,
\[\|\nabla^\st(\phi\chi_R')\|_{\Ld^2(\Omega)}^2\,\le\,\|H_{k,\lambda}^\st\phi\|_{\Ld^2(\Omega)}^2+(1+|k|^2+\lambda R)\|\phi\|_{\Ld^2(\Omega)}^2.\]
Since $\chi_R'=1$ outside $E_R$, the claim~(G2) follows.

\medskip
\step4 Conclusion.
\\
Let $\phi\in\Ld^2(\Omega)$ with $H_{k,\lambda}^\st\phi\in\Ld^2(\Omega)$.
In view of the equivalence between properties~(E1) and~(E3), it suffices to construct a sequence $(\phi_n)_n\subset\Dc$ such that $\phi_n\to\phi$ and $H_{k,\lambda}^\st\phi_n\to H_{k,\lambda}^\st\phi$ in $\Ld^2(\Omega)$ as $n\uparrow\infty$.
We argue by truncation.
Let $\chi_R^\alpha$ be the cut-off function defined in Step~2 and choose $\alpha>\frac12$ such that property~(iv) is satisfied. We show that for all $R\ge1$ there exists a sequence $(\phi_{n,R})_n\subset\Dc$ such that
\begin {samepage}
\begin{enumerate}[\qquad$\bullet$]
\item $\phi_{n,R}\chi_R^\alpha\to \phi(\chi_R^\alpha)^2$ and $H_{k,\lambda}^\st(\phi_{n,R}\chi_R^\alpha)\to H_{k,\lambda}^\st (\phi(\chi_R^\alpha)^2)$ in $\Ld^2(\Omega)$ as~$n\uparrow\infty$;
\item $\phi(\chi_R^\alpha)^2\to\phi$ and $H_{k,\lambda}^\st (\phi(\chi_R^\alpha)^2)\to H_{k,\lambda}^\st\phi$ in $\Ld^2(\Omega)$ as $R\uparrow\infty$.
\end{enumerate}
\end{samepage}
We split the proof into three further substeps.

\medskip
\substep{4.1} Proof that for all $R\ge1$ there exists a sequence $(\phi_{n,R})_n\subset\Dc$ such that $\phi_{n,R}\to \phi\chi_R^\alpha$ and $H_{k,\lambda;R}^\st \phi_{n,R}\to H_{k,\lambda}^\st(\phi\chi_R^\alpha)$ in $\Ld^2(\Omega)$ as $n\uparrow\infty$.

\medskip\noindent
For all $R\ge1$, since by Step~1 the operator $H_{k,\lambda;R}^\st$ is essentially self-adjoint, the equivalence between properties~(E1) and~(E3) implies that there exists a sequence $(\phi_{n,R})_n$ such that $\phi_{n,R}\to \phi\chi_R^\alpha$ and $H_{k,\lambda;R}^\st\phi_{n,R}\to H_{k,\lambda;R}^\st(\phi\chi_R^\alpha)$ in $\Ld^2(\Omega)$ as $n\uparrow\infty$.
By definition of $\chi_R^\alpha$, there holds $H_{k,\lambda}^\st(\phi\chi_R^\alpha)=H_{k,\lambda;R}^\st(\phi\chi_R^\alpha)$, and the claim follows.

\medskip
\substep{4.2} Proof that for all $R\ge1$ there holds $\phi_{n,R}\chi_R^\alpha\to \phi(\chi_R^\alpha)^2$ and $H_{k,\lambda}^\st(\phi_{n,R}\chi_R^\alpha)\to H_{k,\lambda}^\st(\phi(\chi_R^\alpha)^2)$ in $\Ld^2(\Omega)$ as $n\uparrow\infty$.

\medskip\noindent
We start from the identity
\begin{multline*}
H_{k,\lambda}^\st\big(\phi_{n,R}\chi_R^\alpha\big)-H_{k,\lambda}^\st\big(\phi(\chi_R^\alpha)^2\big)=\chi_R^\alpha H_{k,\lambda}^\st (\phi_{n,R}-\phi\chi_R^\alpha)+(\phi_{n,R}-\phi\chi_R^\alpha)H_{k,0}^\st\chi_R^\alpha\\
-2\nabla^\st\chi_R^\alpha\cdot \nabla^\st (\phi_{n,R}-\phi\chi_R^\alpha),
\end{multline*}
and note that the convergence of $(\phi_{n,R})_n$ (cf.~Substep~4.1) implies $\phi_{n,R}\chi_R^\alpha\to \phi(\chi_R^\alpha)^2$ and
\[\limsup_{n\uparrow\infty}\big\|H_{k,\lambda}^\st\big(\phi_{n,R}\chi_R^\alpha\big)-H_{k,\lambda}^\st\big(\phi(\chi_R^\alpha)^2\big)\big\|_{\Ld^2(\Omega)}\,\lesssim\,
\|\nabla^\st (\phi_{n,R}-\phi\chi_R^\alpha)\|_{\Ld^2(\Omega)}.\]
Combining this with the G\aa rding inequality~(G1) of Step~3 and with the convergence properties of $(\phi_{n,R})_n$, the claim follows.

\medskip
\substep{4.3} Proof that $\phi(\chi_R^\alpha)^2\to\phi$ and $H_{k,\lambda}^\st(\phi(\chi_R^\alpha)^2)\to H_{k,\lambda}^\st\phi$ in $\Ld^2(\Omega)$ as $R\uparrow\infty$.

\medskip\noindent
We start from the identity
\[H_{k,\lambda}^\st\big(\phi(\chi_R^\alpha)^2\big)=(\chi_R^\alpha)^2 H_{k,\lambda}^\st\phi+\phi H_{k,0}^\st(\chi_R^\alpha)^2-2\nabla^\st(\chi_R^\alpha)^2\cdot \nabla^\st\phi,\]
and note that the properties of $\chi_R^\alpha$ and the dominated convergence theorem lead to $\phi(\chi_R^\alpha)^2\to \phi$, $(\chi_R^\alpha)^2 H_{k,\lambda}^\st\phi\to H_{k,\lambda}^\st\phi$, and $\phi H_{k,0}^\st(\chi_R^\alpha)^2\to0$ in $\Ld^2(\Omega)$, hence
\[\limsup_{R\uparrow\infty}\big\|H_{k,\lambda}^\st\big(\phi(\chi_R^\alpha)^2\big)-H_{k,\lambda}^\st\phi\big\|_{\Ld^2(\Omega)}\,\lesssim\,\limsup_{R\uparrow\infty}R^{-\alpha}\|\mathds1_{\Omega\setminus E_R}\nabla^\st\phi\|_{\Ld^2(\Omega)}.\]
Combining this with the G\aa rding inequality~(G2) of Step~3 yields
\[\limsup_{R\uparrow\infty}\big\|H_{k,\lambda}^\st\big(\phi(\chi_R^\alpha)^2\big)-H_{k,\lambda}^\st\phi\big\|_{\Ld^2(\Omega)}\,\lesssim_{k,\lambda}\,\limsup_{R\uparrow\infty}R^{-\alpha}\big(\|H_{k,\lambda}^\st\phi\|_{\Ld^2(\Omega)}+R^\frac12\|\phi\|_{\Ld^2(\Omega)}\big)\,=\,0,\]
and the claim follows.
\end{proof}

%%%%%%%%%%%%%%%%%%%%%
%%%%%%%%%%%%%%%%%%%%%
%%%%%%%%%%%%%%%%%%%%%

\section*{Acknowledgements}
The authors wish to thank Antoine Gloria, Sylvain Golenia, Felipe Hernandez, Laure Saint-Raymond, Johannes Sjöstrand, and Martin Vogel for motivating discussions at different stages of this work.
Financial support is acknowledged from the CNRS-Momentum program.

\section*{Data availability statement}
Data sharing is not applicable to this article as no new data were created or analyzed in this study.

\bigskip
\bibliographystyle{plain}

\def\cprime{$'$} \def\cprime{$'$} \def\cprime{$'$}

\end{document}